\newtheorem{theorem}{Theorem}[section]
\newtheorem*{theorem*}{Theorem}
\newtheorem{lemma}[theorem]{Lemma}
\newtheorem{claim}[theorem]{Claim}
\newtheorem{corollary}[theorem]{Corollary}
\newtheorem{conj*}[theorem]{Conjecture}
\theoremstyle{remark}
\newtheorem{remark}[theorem]{Remark}
\theoremstyle{definition}
\newtheorem{definition}[theorem]{Definition}
\newtheorem{example}[theorem]{Example}
\newcommand{\beq}{\begin{eqnarray}}
\newcommand{\eeq}{\end{eqnarray}}
\newcommand{\ket}[1]{|#1\rangle}
\newcommand{\bra}[1]{\langle#1|}
\newcommand{\braket}[2]{\langle #1 | #2 \rangle}
\newcommand{\proj}[1]{\ket{#1}\!\bra{#1}}
\newcommand{\tr}{\mbox{\rm tr}}
\newcommand{\Id}{\ensuremath{\mathop{\rm Id}\nolimits}}
\newcommand{\Es}[1]{\ensuremath{\mathop{\textsc{E}}_{#1}}}
\newcommand{\pild}{\pi_{\rm ld}}
\newcommand{\reg}[1]{{\textsf{#1}}}
\newcommand{\C}{\ensuremath{\mathbb{C}}}
\newcommand{\F}{\ensuremath{\mathbb{F}}}
\newcommand{\R}{\ensuremath{\mathbb{R}}}
\newcommand{\Z}{\ensuremath{\mathbb{Z}}}
\newcommand{\Fp}{ \ensuremath{{\mathbb{F}}}_p}
\newcommand{\Fq}{\ensuremath{\mathbb{F}_q}}
\newcommand{\Zp}{\ensuremath{\mathbb{Z}_p}}
\newcommand{\mC}{\mathcal{C}}
\newcommand{\mX}{\mathcal{X}}
\newcommand{\mA}{\mathcal{A}}
\newcommand{\cA}{\mathcal{A}}
\newcommand{\cM}{\mathcal{M}}
\newcommand{\cX}{\mathcal{X}}
\newcommand{\bij}{\pi}
\newcommand{\qp}{\tau}
\newcommand{\mH}{\mathcal{H}}
\newcommand{\setft}[1]{\mathrm{#1}}
\newcommand{\Unitary}{\setft{U}}
\newcommand{\Lin}{\setft{L}}
\DeclareMathOperator{\poly}{poly}
\newcommand{\eps}{\varepsilon}
\newcommand{\epr}{\textsc{EPR}}
\newcommand{\aux}{\textsc{aux}}
\newcommand{\MS}{\textsc{MS}}
\newcommand{\COM}{\textsc{COM}}
\newcommand{\ot}{\otimes}
\newcommand{\ol}[1]{\overline{#1}}
\newcommand{\ul}[1]{\underline{#1}}
\newcommand{\comp}[1]{{\color{MidnightBlue} \bm{#1}}}
\newcommand{\hS}{\hat{S}}
\newcommand{\al}{\alpha}
\newcommand{\be}{\beta}
\newcommand{\NEXP}{\textsc{NEXP}}
\newcommand{\MIP}{\textsc{MIP}}
\newcommand{\QMA}{\textsc{QMA}}
\newcommand{\NP}{\textsc{NP}}
\newcommand{\cld}{\textsc{c-lowdeg}}
\newcommand{\qld}{\textsc{q-lowdeg}}
\newcommand{\eval}{\textsc{eval}}
\newcommand{\code}{\textsc{code-check}}
\newcommand{\energy}{\textsc{energy}}
\newcommand{\XZ}{\textsc{XZ}}
\newcommand{\sumgame}{\textsc{sum}}
\newcommand{\lin}{\textsc{lin}}
\DeclareMathOperator{\ima}{Im}
\newcommand{\thmref}[1]{Theorem~\ref{thm:#1}}
\newcommand{\lemref}[1]{Lemma~\ref{lem:#1}}
\newcommand{\figref}[1]{Figure~\ref{fig:#1}}
\newif\ifnotes\notesfalse
\definecolor{mygrey}{gray}{0.50}
\newcommand{\notename}[2]{{\textcolor{mygrey}{\footnotesize{\bf (#1:} {#2}{\bf ) }}}}
\newcommand{\noteswarning}{{\begin{center} {\Large WARNING: NOTES ON}\endnote{Warning: notes on}\end{center}}}
\newcommand{\notesendofpaper}{{\theendnotes}}
\newcommand{\pnote}[1]{{\endnote{#1}}}
\newcommand{\tnote}[1]{\textcolor{magenta}{\small {\textbf{(Thomas:} #1\textbf{)
      }}}}
\newcommand{\anote}[1]{\textcolor{red}{\small {\textbf{(Anand:} #1\textbf{) }}}}
\newcommand{\notename}[2]{{}}
\newcommand{\noteswarning}{{}}
\newcommand{\notesendofpaper}{}
\newcommand{\pnote}[1]{}
\newcommand{\anote}[1]{}
\newcommand{\tnote}[1]{}
\begin{document}

\title{Low-degree testing for quantum states,\\[1mm] and a quantum entangled games PCP for QMA}
\author{Anand Natarajan\thanks{Center for Theoretical Physics,
    MIT, Cambridge, USA. email:\texttt{anandn@mit.edu}.} \qquad Thomas
  Vidick\thanks{Department of Computing and Mathematical Sciences,
    California Institute of Technology, Pasadena, USA. email:
    \texttt{vidick@cms.caltech.edu}.}}
\date{}
\maketitle

\begin{abstract}
We show that given an explicit description of a multiplayer game, with a classical verifier and a constant number of players, it is $\QMA$-hard, under randomized reductions, to distinguish between the cases when the players have a strategy using entanglement that succeeds with probability $1$ in the game, or when no such strategy succeeds with probability larger than $\frac{1}{2}$. This proves the ``games quantum  PCP conjecture'' of Fitzsimons and the second author (ITCS'15), albeit under randomized reductions. 

The core component in our reduction is a construction of a family of two-player
games for testing $n$-qubit maximally entangled states. For any integer $n\geq
2$, we give such a game in which questions from the verifier are $O(\log n)$
bits long, and answers are $\poly(\log\log n)$ bits long. We show that for any
constant $\eps\geq 0$, any strategy that succeeds with probability at least $1-\eps$ in
the test must use a state that is within distance $\delta(\eps) =
O(\eps^c)$ from a state that is locally equivalent to a
maximally entangled state on $n$ qubits, for some universal constant
$c>0$. The construction is based on the classical plane-vs-point test for multivariate low-degree polynomials of Raz and
Safra (STOC'97). We extend the classical test to the quantum regime by executing independent copies of the test in the generalized Pauli $X$ and $Z$ bases over $\F_q$, where $q$ is a sufficiently large prime power, and combine the two through a test for the Pauli twisted commutation relations. 

Our main complexity-theoretic result is obtained by combining this family of games with techniques from the classical PCP literature. More specifically, we use constructions of PCPs of proximity introduced by Ben-Sasson et al. (CCC'05), and crucially rely on a linear property of such PCPs. Another consequence of our results is a deterministic reduction from the
 games quantum PCP conjecture to a suitable formulation of the constraint satisfaction quantum
PCP conjecture. 

\end{abstract}

\noteswarning

\vfill
\thispagestyle{empty}
\pagebreak
\setcounter{page}{1}

\section{Introduction}

The PCP theorem~\cite{AroLunMotSudSze98JACM,AroSaf98JACM} makes a remarkable statement: any language that admits efficiently verifiable proofs of membership, i.e. any problem in $\NP$, also admits proofs that can be verified by reading only a \emph{constant} number of bits of the proof. Do similar encodings exist for problems that admit \emph{quantum} proofs? Consider the local Hamiltonian problem. Is there a way to encode a witness for the minimal energy of a Hamiltonian in a way that the energy can be estimated to within inverse polynomial accuracy while accessing only a constant number of bits, or qubits, from the witness? 
The pursuit of this question, which, broadly speaking, asks for quantum extensions of the PCP theorem, has been one of the most fruitful and challenging problems animating quantum complexity theory in the past decade: it ties in to the theory of quantum error-correcting codes, has applications to quantum cryptography, and promises insights into the study of entanglement in ground states of local Hamiltonians~\cite{AharonovAV13qpcp}. 

The question can be formalized in multiple ways. A first formulation, the
``constraint satisfaction'' variant of the quantum PCP (QPCP)
conjecture~\cite{AharonovILV09detectability}, asks for the complexity of constant-factor approximations to the minimal energy of a local Hamiltonian $H$, normalized so that $\|H\|=1$. Despite considerable attention progress on the
conjecture has been difficult~\cite{AharonovE15commuting,BrandaoH13product,eldar2015local}.  

More recently a second formulation has been put forward. The ``multiplayer
games'' variant of the QPCP conjecture, introduced in~\cite{FV14}, asks for the
complexity of estimating, to within constant accuracy, the maximum success
probability of provers (we use the terminology ``provers'' and ``players'' interchangeably)  sharing entanglement in a multiplayer game, a quantity referred to as the \emph{entangled value} of the game. The
conjecture is a natural analogue of the ``oracularized'' formulation of the PCP
theorem, which states that the maximum success probability of \emph{classical}
provers in a multiplayer game is $\NP$-hard to approximate to within constant
factors. (This can be thought of as a ``scaled down'' formulation of the equality $\MIP = \NEXP$~\cite{BabForLun91CC}.) 

 In~\cite{Vidick13xor}, building on~\cite{IV12} it was shown that the approximation problem for the entangled value of a multiplayer game remains $\NP$-hard, provided there are at
least three provers. This was extended to games with two provers only in~\cite{NatarajanV17twoprover} (this result will be used as a building block in the present paper). In~\cite{FV14,ji2015classical} it was shown that
inverse-polynomial approximations are $\QMA$-hard (provided there are at least
five provers), a result that is akin to a ``quantum Cook-Levin
theorem for entangled games.'' These results motivate the following conjecture, first made in~\cite{FV14}:

\begin{conj*}[Games QPCP conjecture (informal)]\label{conj:qpcp}
Suppose given as input an explicit description of a classical multiplayer
game. Then it is $\QMA$-hard to determine whether provers sharing quantum
entanglement (of arbitrary dimension) have optimal success probability at least $\frac{2}{3}$ or at most $\frac{1}{3}$ in the game.  
\end{conj*}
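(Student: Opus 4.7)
The plan is to obtain the conjecture as the culmination of a sequence of reductions, starting from the $\QMA$-hardness of inverse-polynomial approximation to the entangled value of a multiplayer game established by Fitzsimons--Vidick and Ji, and amplifying the gap from $1/\poly(n)$ to a constant while preserving the $\QMA$ running time bound. Concretely, I would first reformulate an arbitrary $\QMA$ verification circuit as a local Hamiltonian whose ground energy decides the language, following Kitaev; a yes-instance admits a history state $\ket{\psi}$ of polynomial description complexity encoding the computation, and the verifier's task reduces to certifying that two (or more) entangled provers share an encoding of such a state and answer questions about its Pauli measurements consistently. The key to getting a \emph{constant} gap with \emph{short} questions is to have the provers hold the state in encoded form in a low-degree polynomial code over a large field $\Fq$, so that local checks from the verifier can test both the code structure and the Hamiltonian energy with polylogarithmic communication.

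The central new ingredient, which I would develop next, is the family of two-prover games testing $n$-qubit maximally entangled states promised in the abstract. Starting from the classical Raz--Safra plane-vs-point low-degree test for multivariate polynomials over $\Fq$, I would lift it to the quantum setting by running two independent copies of the test, one in the generalized Pauli $X$ basis and one in the generalized Pauli $Z$ basis of a qudit of dimension $q$. Each copy on its own forces, by the classical soundness of Raz--Safra together with a Pauli-braiding-style rigidity argument, that the provers' measurements in that basis act as commuting low-degree evaluation measurements on a state close to $n$ EPR pairs over $\Fq$. To tie the two bases together I would add a Pauli \emph{twisted commutation} subtest that checks the anticommutation relation $X^a Z^b = \omega^{ab} Z^b X^a$ on a random pair of indices, so that successful provers are forced to hold operators that jointly generate a representation of the Heisenberg--Weyl group and hence a state locally isometric to $\ket{\epr}^{\ot n}$. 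Soundness $\delta(\eps)=O(\eps^c)$ would follow by combining the classical low-degree soundness with a robust rigidity analysis in the style of the Magic Square / Pauli braiding tests.

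With this rigid low-degree game in hand, the reduction proceeds as follows. I would encode the claimed $\QMA$ witness as the encoding under a low-degree (Reed--Muller-like) quantum code of the Kitaev history state, and design a verifier that, with constant probability, runs one of three subgames: (i) the quantum low-degree test, which certifies that the shared state is close to encoded EPR pairs and that both provers' $X$- and $Z$-basis answers are consistent evaluations of low-degree polynomials; (ii) an energy test, which samples a local Hamiltonian term and queries the provers on the corresponding qubits to estimate its expectation; and (iii) a code-consistency test ensuring that the measurement outcomes in each basis lie in the appropriate code. The locality and linearity of the energy test are where I would invoke the Ben-Sasson et al.\ PCPs of proximity (PCPPs) with a linear structure: the PCPP lets the verifier check that the $O(1)$ queried symbols satisfy an efficiently-describable constraint by asking a further $O(\log n)$-bit auxiliary query, and linearity is crucial because it lets the quantum prover produce the PCPP symbols by additional low-degree evaluations, compatible with the Pauli measurements already tested for rigidity. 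Completeness is immediate from honest provers holding the encoded history state; soundness combines the rigidity guarantee $\delta(\eps) = O(\eps^c)$ of the low-degree game, the classical PCPP soundness, and the Kitaev gap to conclude that any strategy succeeding with probability $\geq 2/3$ induces a witness certifying a yes-instance.

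The main obstacle, and what I expect to occupy the bulk of the technical work, is the rigidity analysis that produces the polynomial soundness $\delta(\eps) = O(\eps^c)$ for the quantum low-degree test in a way that composes with the PCPP step. Quantum soundness analyses of low-degree tests are substantially more delicate than their classical counterparts: one must extract, from approximate success in both Pauli bases together with the twisted commutation check, a near-isometry identifying the provers' Hilbert space with $(\C^q)^{\ot n} \ot \C^{\ot n}$ up to error $O(\eps^c)$, and this requires carefully propagating classical approximate-agreement guarantees through noncommutative operators without losing polynomial dependence on $\eps$. Randomization in the reduction enters precisely where classical PCP-style sampling is used to choose the auxiliary PCPP query string, which is why the resulting $\QMA$-hardness is under randomized rather than deterministic reductions.
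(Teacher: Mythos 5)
Your high-level skeleton matches the paper's: encode the witness across provers with a self-dual CSS code, certify Pauli strategies via a low-degree test run independently in the $X$ and $Z$ bases tied together by a twisted-commutation subtest, and use linear PCPPs from Ben-Sasson et al.\ to make the energy check fit into a few local queries. But you have the \emph{gap amplification} story and the source of randomness wrong, and these are the crux of why the reduction is only randomized. The starting point is a QMA-hard local Hamiltonian with an inverse-polynomial gap (the paper uses the XZ model of Cubitt--Montanaro rather than raw Kitaev, precisely because the energy test can only handle Hamiltonians in $Y$-free form). Amplifying that gap to a constant is done by taking an $a$-fold tensor product $\Id - (\Id - (H - a^{-1}\Id))^{\otimes a}$ for $a = \poly(n)$ (Lemma~\ref{lem:amplify}); expanding this yields a sum over \emph{exponentially many} terms, each of which is a tensor-product Pauli acting on polynomially many qudits. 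The verifier cannot even write down one of these terms with $\poly(n)$ bits unless the number of terms is reduced, so a random polynomial-size subset is drawn, and the matrix Chernoff bound is used to argue the gap is preserved with high probability. That subsampling --- and not the PCPP query string, as you claim --- is where the randomization of the reduction enters (the paper states this explicitly); the PCPP queries are just ordinary verifier coins within a single game and don't cost anything in the complexity of the reduction.

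You also underplay \emph{why} the PCPP is needed. Your description says the energy test "samples a local Hamiltonian term and queries the provers on the corresponding qubits," but after amplification the Hamiltonian terms are no longer local: a sampled term is a tensor product of Paulis on $\Theta(n \cdot \poly(n))$ qudits. The low-degree/codeword test alone only certifies single-qudit Pauli measurements, not nonlocal observables. The verifier's workaround is to ask a prover to measure all of its qudits in a fixed basis to get an outcome string $a$, report the inner product $c = b\cdot a$ for the Pauli pattern $b$, and then provide a PCP of proximity $\Pi_{a,b}$ certifying that $c$ was computed correctly from $a$; the PCPP (and, after a further composition step, its low-degree encoding) can be spot-checked with $O(1)$ queries, and linearity of the PCPP over $\Fq$ is essential because it lets the verifier cross-check the composite prover's linear combination of answers against the special prover's. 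Without recognizing the nonlocality of the amplified terms, it is unclear what problem the PCPP is solving in your sketch. Finally, a smaller point: you conflate the qudit-by-qudit CSS encoding of the state across provers with the Reed--Muller encoding of a prover's measurement-outcome string; these are two distinct encodings doing different jobs (the former protects against a cheating prover, the latter enables sublinear verification of the outcome string), and keeping them separate is important for the analysis to go through.
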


We show that the conjecture holds, under randomized reductions.

\begin{theorem}[Games QPCP under randomized reductions]\label{conj:weak_qpcp}
 Suppose given as input an explicit description of a classical multiplayer
game. Then it is $\QMA$-hard, under randomized reductions, to determine whether provers sharing quantum entanglement (of arbitrary dimension) have optimal success probability at least $1$ or at most $\frac{1}{2}$ in the game.  
\end{theorem}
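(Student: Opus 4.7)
The plan is to reduce from the local Hamiltonian problem---which is $\QMA$-complete---to the entangled-value problem for a classical multiplayer game with a constant completeness-soundness gap. The starting point I would use is the inverse-polynomial $\QMA$-hardness of the games problem already established in~\cite{FV14,ji2015classical}: there, a $\QMA$ verifier's measurement of an $n$-qubit witness is simulated by a constant-prover game whose entangled value is $1$ in the YES case and at most $1-1/\poly(n)$ in the NO case. The task is therefore to amplify this inverse-polynomial gap to a constant gap without invoking parallel repetition, whose behavior for entangled games is poorly understood.

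The main new tool I would introduce is the family of quantum low-degree tests advertised in the abstract. For any $n$, this is a two-prover game with $O(\log n)$-bit questions and $\poly(\log\log n)$-bit answers such that any strategy succeeding with probability $1-\eps$ uses a state $O(\eps^c)$-close to $n$ EPR pairs (up to a local isometry), and whose measurements are close to products of the generalized Pauli $X$ and $Z$ observables indexed by evaluations of low-degree polynomials over $\F_q$. I would construct this game by running the Raz--Safra plane-vs-point low-degree test independently in the two Pauli bases over a large prime-power $\F_q$ and then tying them together with a test for the Pauli twisted commutation relations, forcing the two bases to behave as anti-commuting Paulis on the same logical qubits.

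To simulate the $\QMA$ verifier inside such a rigidity game, the plan is to invoke a PCP of proximity (PCPP) in the style of Ben-Sasson et al., selected so that both the encoding of the classical witness and the PCPP proof are \emph{linear} functions of the witness. Each local Hamiltonian check then becomes a small classical PCPP predicate reading a few symbols of a linear witness encoding and a few PCPP proof symbols. The linearity is what permits quantum access: the provers' $n$-EPR-pair state ``commits'' to a quantum witness in the computational basis, the Pauli-basis measurement outcomes supply queries to this linear encoding, and further low-degree queries can be interpreted as queries to the PCPP proof. Composing the PCPP predicate with the quantum low-degree game then yields a multiplayer game with constant completeness-soundness gap.

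The hard part will be the gluing step: showing that if a strategy wins the combined game with probability $1-\eps$, then one can extract, up to local isometry, an $n$-qubit state $\ket{\density}$ which the $\QMA$ verifier accepts with probability $1-O(\eps^c)$. Linearity of the PCPP is essential here, since it ensures that PCPP proof queries correspond to measurements that are exact linear combinations of the honest Pauli observables, so that the rigidity of the low-degree test transfers directly into approximate classical consistency of PCPP proof symbols with a well-defined witness obtained from measurements on $\ket{\density}$. The randomization in the reduction enters through the probabilistic choice of the field $\F_q$ and the evaluation points used in the low-degree test and the PCPP. Once a constant-gap game with a constant number of provers is obtained, one can if desired reduce to two provers via the oracularization technique of~\cite{NatarajanV17twoprover}.
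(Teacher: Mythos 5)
Your proposal captures the two broad technical ingredients — the quantum low-degree test as a logarithmic-communication entanglement test, and linear PCPPs of proximity as a way to let provers report expectation values of non-local observables — and these are indeed the main machinery of the paper. But it has a genuine gap in exactly the place where the theorem earns the phrase \emph{under randomized reductions}: you acknowledge that the inverse-polynomial promise gap must be amplified to a constant, but you never actually say how, and the source of randomness you name is wrong.

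The paper does \emph{not} randomize over the choice of $\F_q$ or over evaluation points — those are either fixed deterministically by the input size, or chosen by the verifier as internal coin flips of the game and so are not part of the Karp reduction at all. The amplification in the paper's proof of Corollary~\ref{cor:randomized} works as follows. One starts from a $\QMA$-hard $XZ$-model local Hamiltonian $H$ with inverse-polynomial gap, applies the tensor-power amplification of Lemma~\ref{lem:amplify} to get a (non-local) Hamiltonian $H'$ with constant gap, but $H'$, when expanded, is an average of exponentially many Pauli terms — far too many to specify to a prover with $O(\log n)$-bit messages. The reduction then uniformly subsamples $\poly(n)$ of these terms, and the matrix Chernoff bound of Ahlswede--Winter~\cite{AW02} guarantees that with overwhelming probability the subsampled Hamiltonian $H'''$ still has a constant gap. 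That subsampling step is the \emph{only} randomness in the reduction, and it is also the reason the theorem is only proved under randomized reductions: a deterministic way to sparsify the amplified Hamiltonian while preserving the gap is not known. The resulting $H'''$ has polynomially many Pauli terms in $Y$-free form, and the paper's $\energy$ test (which wraps $\code$, $\sumgame$, and $\eval$ around the quantum low-degree test and the linear PCPP) is then applied directly to $H'''$, not to a pre-existing game from~\cite{FV14,ji2015classical}. Without the tensor-power-plus-Chernoff step your plan has no mechanism for producing the constant gap, and the stated source of randomness does not fill that role.
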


Theorem~\ref{conj:weak_qpcp} is stated and proved as
Corollary~\ref{cor:randomized} in the body of the paper. 
The choice of constant $\frac{1}{2}$ in Theorem~\ref{conj:weak_qpcp} is arbitrary, as for the kind of games we consider soundness amplification can be performed efficiently in parallel~\cite{bavarian2017hardness}.

We explain the need for a
randomized reduction. Informally, the reason is that we do not
know of a strong enough $\QMA$-hardness result for the local Hamiltonian problem
to initiate our reduction. In fact, we give two alternate formulations of
Theorem~\ref{conj:weak_qpcp} that would also establish the same $\QMA$-hardness
result, under deterministic reductions, provided that either:
\begin{enumerate}
  \item[(i)] it is
$\QMA$-hard to approximate the minimum energy of a local Hamiltonian in $Y$-free form (Definition~\ref{def:gen-h}) to within constant accuracy (this is a variant of the quantum PCP conjecture for
local Hamiltonians), or 
\item[(ii)] it is $\QMA$ hard to approximate the ground energy
of (not necessarily local) frustration-free Hamiltonian whose every term is a
tensor product of generalized Pauli 
$\qp_X$ or $\qp_Z$ observables. 
\end{enumerate}
Note that point (i) amounts to a deterministic reduction from
Conjecture~\ref{conj:weak_qpcp} to the constraint satisfaction quantum PCP conjecture, and establishes the first proven relation between the two
conjectures (see~\cite{grilo2016pointer} for an incomparable result that relates stronger variants of both conjectures). Point (ii) is arguably a weaker assumption, as the gap is not required to be a constant and the terms of the Hamiltonian are not required to be local. However, due to the restriction that the Hamiltonian is frustration-free, it is currently not known whether the problem is $\QMA$-hard (or even $\QMA_1$-hard --- though the frustration-free assumption can be relaxed to having exponentially small ground state energy). 
 
Our results build on two main tools: a framework for protocols to test
  ground states, introduced in~\cite{FV14} and
further developed in~\cite{ji2015classical,NV17}, and a new proof of
soundness of the classical low-degree test of Raz and Safra against two
entangled provers~\cite{NatarajanV17twoprover}. The main result that underlies the complexity-theoretic applications is a two-prover test for $n$-qudit maximally entangled states, where each   qudit has dimension $q=p^t = \poly\log(n)$ for a prime $p$ and integer $t$, that
  has inverse robustness independent of $n$ (for all $\eps$ that are at least inverse polylogarithmic in $n$) and in which the verifier sends only
  $O(\log(n))$ bits to the provers, who reply with $O(\log\log n)$ bits each
  (Theorem~\ref{thm:qld}). This is an exponential improvement over all previous
  results, and provides the first robust entanglement test with  sub-linear
  communication. While the ability to ``test'' structured objects with sub-linear efficiency has become customary in classical computer science, we find it remarkable that the framework for such tests may be extended to test such a complex object as quantum entanglement.
	
	We first describe this test in more detail, before expanding on the complexity-theoretic consequences. 
	
\paragraph{Efficient, robust entanglement tests.} The driving question
behind our work is the following: ``Is it possible to verify a quantum
state using an amount of resources that scales sub-linearly in the
number of qubits of the state?'' We start with the ``simplest'' such
state---the maximally entangled state. Results in self-testing have
yielded increasingly efficient and robust tests for this state and
other, more general  families of highly entangled states. Here we
loosely refer to the ``efficiency'' of a test as a measure of the
total number of bits of communication involved in an execution of the
test. The ``robustness'' of the test indicates how tightly success in
the test characterizes the desired state: a test is
$\delta(\eps)$-robust if for all $\eps\geq 0$, any strategy for the
provers that succeeds with probability at least $1-\eps$ in the test
must use an entangled state that is within distance $\delta(\eps)$
from the tested state (see Definition~\ref{def:self-test}). Using
these measures, the best prior self-tests for a maximally entangled
state of $n$ qubits  are a test with communication $O(\log n)$ and
robustness $O(n^{5/2}\sqrt{\eps})$~\cite{chao2017test} and a test with
communication $O(n)$ and robustness
$O(\sqrt{\eps})$~\cite{NV17}. Other recent results in this direction
include~\cite{OV16,CN16,Coladangelo16,ColadangeloS17MS}.

Our test is the first to combine robustness $\delta(\eps) = \poly(\eps)$ that is independent of $n$, and logarithmic communication. Achieving both simultaneously is crucial to applications: constant (in $n$) robustness allows us to achieve gap-preserving reductions; logarithmic communication allows us to achieve efficient reductions. 

As in previous results, the test is designed to constrain successful provers to
use  observables satisfying suitable relations; a statement about the entangled
state follows by using that the state is stabilized by (a subset of) these
observables. In the case of the maximally entangled state, the observables are
all $n$-fold tensor products of Pauli observables. For reasons to be discussed
below we test for qudits of dimension $q=p^t$ a prime power of order
$q=\poly\log(n)$. This leads us to consider tensor products of single-qudit
Pauli observables defined over the prime power field $\Fq$, which we denote
using the symbol $\qp$:
\begin{equation}\label{eq:gen-pauli}
 \qp_X(a) = \sum_{j \in \Fq} \ket{j+a}\bra{j}\qquad\text{and}\qquad \qp_Z(b)= \sum_{j \in \Fq}
  \omega^{\tr(b j)} \ket{j}\bra{j}\;, 
	\end{equation}
where $a,b\in\Fq$, $\omega = e^\frac{2i\pi}{p}$, addition and multiplication are
over $\Fq$, and $\tr(\cdot)$ denotes the trace of $\Fq$ over $\Fp$. The main
difficulty we face is that there are $2\cdot q^n$ such observables,
$\qp_X(a)=\qp_X(a_1)\otimes \cdots\otimes  \qp_X(a_n)$ and
$\qp_Z(b)=\qp_Z(b_1)\otimes \cdots \otimes \qp_Z(b_n)$ for $a,b\in \Fq^n$, an
exponentially larger number than any test with polylogarithmic communication
gives us direct access to. It is then natural to consider a test that certifies
observables $\qp_X(a)$ and $\qp_Z(b)$ for $a,b\in T \subseteq \Fq^n $, where
$|T|=\poly(n)$, and attempt to construct observables for all $a,b\in
\Fq^n $ in an inductive fashion, as is done in e.g.~\cite{chao2017test}, where
$T$ is the set of all strings of Hamming weight at most $2$. Unfortunately, any
na\"ive procedure will induce an error accumulation at each step of the
induction, eventually resulting in a robustness parameter that depends
polynomially on $n$ (as is the case in~\cite{chao2017test}). 

It is thus crucial to choose the set $T$ carefully --- informally, it
seems natural to require that this set behave in a ``pseudorandom''
way. We take direct inspiration from the classical proof of the PCP
theorem, and use a set $T$ specified as the set of all codewords of a
suitably chosen Reed-Muller code; this is the reason for using a
sufficiently large qudit dimension $q$. Our proof eventually reduces
the analysis to the soundness of the entangled-prover classical
low-degree test~\cite{NatarajanV17twoprover}. We explain the test, and
its analysis, in more detail in Section~\ref{sec:techniques} below.

\paragraph{Testing ground states and a ``gap preserving'' reduction.}
We sketch how our test for entanglement is applied to obtain results on the complexity of multiplayer entangled games. In the classical case, the
proof that the value of a multiplayer game is at least as hard to approximate as
the maximum fraction of constraints simultaneously satisfiable in a local constraint
satisfaction problem proceeds via the technique of oracularization: the verifier
selects a constraint at random and asks one prover for an assignment to all
variables in the constraint and the other for an assignment to a
single one of the
variables. Given the provers' answers, the verifier checks the natural
satisfaction and consistency constraints. In the quantum case the analogous idea
would require each prover to hold a copy of the ground state of a
$\QMA$-complete local Hamiltonian, and return qubits as requested by the
verifier. This reduction
does not work: it is not possible in general to check for ``consistency''
between the same qubit taken from two copies of an entangled
state. In~\cite{FV14} the idea was introduced of encoding the ground state using an error-correcting
code and distributing a share to each prover. Subsequent
work~\cite{ji2015classical} showed that this idea can be used to show $\QMA$-hardness of
inverse-polynomial approximations to the entangled value of a multiplayer
game. Unfortunately the reduction in~\cite{ji2015classical} is not
``gap-preserving'': a large promised energy gap in the starting
instance of the local Hamiltonian problem does not lead to a large
completeness-soundness gap in the resulting game. As a result, even assuming the ``constraint satisfaction'' QPCP does not lead to hardness for
approximation factors larger than a fixed inverse polynomial. In~\cite{NV17} we
leveraged an entanglement test with  constant robustness to achieve a
gap-preserving reduction; unfortunately communication in the test is linear, resulting in a game with
exponential size, so that no new complexity-theoretic consequence is
obtained.   

Armed with an exponentially more efficient entanglement test we are able to
provide a much more effective reduction, yielding games of polynomial size from instances of the local
Hamiltonian problem. The reduction follows
similar lines as previous work, but with a new difficulty. Our entanglement
test only certifies a specific family of observables: tensor products of
generalized Pauli observables~\eqref{eq:gen-pauli} over $\F_q$, for $q$ a
sufficiently large prime power. This requires us to initiate any direct reduction with a
specific class of Hamiltonians, in so-called $Y$-free form (see
Definition~\ref{def:gen-h}); informally, these are local Hamiltonians such that
each local term is a tensor product of generalized $\qp_X$ and $\qp_Z$ observables.  In the absence of general
gap-preserving reductions between different variants of the local Hamiltonian problem (perturbation
techniques~\cite{CM13} do not generally preserve the promise gap) we obtain a
reduction to the hardness of constant-factor approximations to the ground
energy of local Hamiltonian of this form only. Nevertheless, even though the
entanglement test requires a qudit dimension that scales (poly-logarithmically)
with $n$, we show that any qubit Hamiltonian in $Y$-free form can be embedded in a
Hamiltonian in $Y$-free form over qudits of dimension $2^t$ for any
$t\geq 1$. As a result, we immediately obtain point (i) discussed earlier: that
Conjecture~\ref{conj:weak_qpcp} would follow from $\QMA$-hardness of
constant-factor approximations to local Hamiltonian whose every local term is a
tensor product of $\qp_X$ and $\qp_Z$ Pauli observables (signed weights of up to
poly-logarithmic size are allowed).   

\paragraph{Composition and PCP.}
To obtain strong results we develop more elaborate reductions, with the  aim of removing the assumption on \emph{locality} of the Hamiltonian whose ground state energy is being
tested. As our entanglement test has direct access only to local Pauli observables, it cannot be used to evaluate the expectation value of non-local observables (acting on more than a constant number of qudits). We get around this as follows. Say the verifier would like to estimate the expectation value of a nonlocal tensor product observable such as $\qp_{X}(b)$, for some $b\in\Fq^n$.
The verifier asks each prover to measure all its qudits in the $X$ basis, obtaining an outcome
$a \in \Fq^n$, and report the value of the inner product $c = b \cdot a$. This provides the verifier with an estimate of the energy of $\qp_{X}(b)$. However, it remains to ensure that the outcome reported by the prover was obtained honestly, i.e.\ by measuring all qudits on which the observable acts, without having the ability to ``read'' all the single-qubit outcomes obtained. This sounds very similar to the kind of NP statements that PCPs are designed to allow efficient verification of, and indeed we employ classical PCP techniques, more specifically the notion of \emph{PCP of proximity} (PCPP). 

In order to verify that a prover honestly computed the inner product $c = b \cdot a$, the verifier asks it to provide PCPP of this fact. A PCPP for a language is a proof that a given
input is in the language, which can be verified by making only a few queries to
both the proof and the input.
In our setting, the verifier asks each prover to compute
a PCPP $\Pi$ for the claim that
the measurement outcome string $a$ is in the language $L = \{x : b \cdot x = c\}$.  This proof can be
verified by making constantly-many queries to $\Pi$, together with constantly
many queries to $a$. Both of these correspond to \emph{local} measurements,
either of the shared quantum state, or the proof string $\Pi$ generated from
the measurement outcomes, and can thus be certified using our entanglement test.

There are two subtleties that arise. First, a PCPP (viewed as a
nonlocal game) that is classically sound need not be sound against entangled provers. To address this, we perform a further layer of
composition, encoding the PCPP proof $\Pi$ in a low-degree polynomial and
querying this polynomial. Secondly, in our
setting \emph{completeness} does not automatically hold either. This is because
each prover $j$ only has access to one share of the shared state, which is a qudit-by-qudit encoding of the actual QMA witness. The prover can thus only supply bits from a
proof $\Pi_j$ computed from its share. As a result the usual method of
transforming a PCP into a game, namely by querying multiple provers for
locations in the proof and checking consistency between them, fails since even
honest provers do not know each other's measurement outcomes and thus cannot
answer consistently. To surmount this
obstacle, we exploit the linearity of the error correcting code, together with a
linear PCPP construction from~\cite{BGHSV05}, for which the proof $\Pi$
 is a linear function of the input $a$; the linearity holds as long as the language $L$ is
itself specified by a set of linear equations, i.e. $L = \{x: Ax = b\}$. The
linearity of the PCPP allows the verifier to check consistency between one prover's
answers and the appropriate linear combination of answers returned by the other
provers.\footnote{We note that, just as
in~\cite{BGHSV05}, we require linearity of the PCP in order for it to interface with
a linear error correcting code.}.

With this PCPP-based protocol for measuring nonlocal Pauli observables in place, the proof of Theorem~\ref{conj:weak_qpcp} follows: starting with a
$\QMA$-hard instance of the local Hamiltonian problem with
inverse-polynomial promise gap, we amplify the gap by taking a large tensor product,
and then randomly sample a polynomial subset of the exponentially many terms in
the tensor product. By the matrix Chernoff bound~\cite{AW02}, with high
probability this sampling preserves the promise gap, and the resulting
nonlocal Hamiltonian can be tested using our protocol. (This random sampling is the source of the ``randomized reductions'' in Theorem~\ref{conj:weak_qpcp}.)

Finally, our PCPP-based protocol enables us to
check not just one nonlocal term but also many terms at once, provided that they
are all tensor products of Paulis in the same basis. This allows us to obtain a
protocol that accommodates an inverse-polynomial promise gap for the ground energy,
provided the Hamilton is frustration free (all of its terms are
simultaneously satisfied in the ground state), and each of it terms can be expressed as a tensor product of generalized $\qp_X$ or
  $\qp_Z$ observables, acting on an arbitrary number of qudits (see
  Definition~\ref{def:linear-xz}). This shows point (ii) discussed earlier.

\subsection{Techniques}
\label{sec:techniques}

Our main result, a robust entanglement test with logarithmic communication, can be stated informally as follows. For a formal statement, we refer to Theorem~\ref{thm:qld} in Section~\ref{sec:qld}.

\begin{theorem*}
Let $n$ be an integer and $q = p^t$ a prime power such that $q = \Theta(\frac{\log^2 n}{\log \log n})$.
	Then there exists a two-prover test $\qld$ in which the
        verifier sends questions of length $\poly(\log n,\log q)$ and
        receives answers of length 
        $O(\poly\log\log(n) \cdot \log q)$ such that the following hold: 
	\begin{enumerate}[nolistsep]
	\item \emph{(Completeness:)} There exists a strategy for the provers based on sharing an $n$-qudit maximally entangled state, with qudits of local dimension $q$, and making measurements in the eigenbasis of tensor products of generalized $\qp_X$ or $\qp_Z$ observables over $\Fq$;
	\item \emph{(Soundness:)} For any $\eps \geq 0$, any strategy that is accepted
    with probability at least $1-\eps$ in the test must use an entangled state
    that is (up to local isometries) within distance $\delta =
    \poly(\poly(p)\cdot\poly(\eps) )$ from an $n$-qudit maximally entangled state.\footnote{Here and
      throughout we use the notation $f(X)=\poly(h(X)))$ as an abbreviation for ``there exists a universal constant $c>0$ such that $f(X) = O(h(X)^c)$ as $X\to 0$ (if $X=\eps$) or as $X\to\infty$ (if $X=n$); in the theorem $p,t$ and $q$ are all allowed to be implicitly functions of $n$, but not $\eps$.}
\end{enumerate}
\end{theorem*}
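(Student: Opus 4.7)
The plan is to instantiate $\qld$ as two parallel copies of the Raz--Safra plane-versus-point low-degree test over $\Fq$, one run in the generalized Pauli $X$ basis and one in the $Z$ basis, glued together by a twisted commutation subtest that checks the relation $\qp_X(a)\qp_Z(b) = \omega^{\tr(a\cdot b)}\qp_Z(b)\qp_X(a)$ across the two bases. I would identify the $n$ qudits with an evaluation grid $S \subseteq \Fq^m$, fix a total-degree parameter $d$, and designate the testable tensor Paulis to be those $\qp_X(a),\qp_Z(b)$ for which $a$, $b$ are the $S$-evaluations of polynomials of total degree $\leq d$; the parameters $m,d,q$ are then tuned so that $|S|=n$, questions describing a plane or a point fit in $\poly(\log n, \log q)$ bits, and a plane restriction (a bivariate polynomial of degree $d$) fits in $\poly(\log\log n)\cdot\log q$ bits. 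In each round, the verifier samples a basis $W\in\{X,Z\}$ and either a plane--point pair for an intra-basis low-degree consistency check, or a correlated pair of queries across the two bases for the commutation check.

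\textbf{Completeness.} The honest strategy is for the two provers to share the $n$-qudit maximally entangled state over $\Fq$ and, on a $W$-basis question, measure the relevant qudits in the eigenbasis of $\qp_W$, returning either the outcome at the queried point or the restriction of the length-$n$ outcome string --- viewed as the evaluation of a degree-$d$ polynomial --- to the queried plane. Perfect acceptance on the intra-basis low-degree checks is immediate from the Reed--Muller encoding; for the cross-basis commutation check a direct calculation on the maximally entangled state shows that the joint $X$/$Z$-outcome statistics reproduce the phase $\omega^{\tr(a\cdot b)}$ exactly.

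\textbf{Soundness.} I would proceed in three steps. \emph{(i)} Apply the entangled-prover soundness of the plane-versus-point low-degree test from~\cite{NatarajanV17twoprover} separately within each basis, obtaining --- with $\poly(\eps)$ loss --- POVMs $\{M^a_W\}_{a\in T}$ on each prover and each basis $W\in\{X,Z\}$, whose outcomes $a\in T$ range over Reed--Muller codewords. \emph{(ii)} Feed these POVMs into the twisted commutation subtest to certify that the induced observables satisfy $\qp_X(a)\qp_Z(b)\approx \omega^{\tr(a\cdot b)}\qp_Z(b)\qp_X(a)$ on the shared state for all $a,b\in T$, with error $\poly(p,\eps)$. \emph{(iii)} Exploit the linearity of $T$ and its Schwartz--Zippel-style pseudorandomness to upgrade approximate Pauli relations on $T$ to an approximate representation of the full $n$-qudit Pauli group, with the loss remaining independent of $n$; then invoke the rigidity machinery for the maximally entangled state (as in~\cite{NV17}) to produce a local isometry sending the provers' state within $\poly(\poly(p)\cdot\poly(\eps))$ of the ideal $n$-qudit maximally entangled state.

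\textbf{Main obstacle.} The heart of the argument is step \emph{(iii)}: extending approximate Pauli relations from $T$ to all of $\Fq^n$ without any $n$-dependent error blow-up. Naive inductive strategies that build $n$-qudit Paulis out of a few low-weight ones (as in~\cite{chao2017test}) necessarily accumulate error at each step and produce robustness bounds that scale polynomially with $n$. The point of taking $T$ to be a Reed--Muller code is to replace this induction by a one-shot pseudorandomness argument: $T$ must be small enough to be indexed by $\poly(\log n,\log q)$-bit questions, yet linear and well-separated enough that approximate commutation on $T$ forces approximate commutation everywhere. Making this work simultaneously for the $X$- and $Z$-observables, in a way that interlocks with the twisted commutation subtest and preserves the intra-basis low-degree structure already certified by step \emph{(i)}, is the main technical difficulty and is what pins down the qudit dimension to $q=\Theta(\log^2 n/\log\log n)$.
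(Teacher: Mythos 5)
Your high-level architecture — two copies of the Raz--Safra plane-vs-point test run in the $X$ and $Z$ bases, glued by a twisted-commutation subtest, then analyzed via the entangled-prover low-degree test of~\cite{NatarajanV17twoprover} — matches the paper's, and your completeness argument is exactly right. You also correctly identify step (iii), extending Pauli relations beyond the testable set $T$ of Reed--Muller codewords to all of $\Fq^n$ without $n$-dependent blowup, as the heart of the matter. However, your description of that step is where a genuine gap opens up.

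You propose to ``upgrade approximate Pauli relations on $T$ to an approximate representation of the full $n$-qudit Pauli group'' by exploiting linearity and Schwartz--Zippel pseudorandomness of $T$. The difficulty is that $T$ is a strict subspace of $\Fq^n$ (its dimension is $q^m = \poly(n)$, far below $q^n$), so on its face there is no observable $X(a)$ to speak of for $a\notin T$. No amount of pseudorandomness lets you directly infer approximate commutation of observables that have not been defined, and defining them by ad hoc extension from $T$ reintroduces the error-accumulation problem you want to avoid. The paper sidesteps this entirely with a different trick: it adjoins to each prover a pair of ancilla registers initialized in a maximally entangled state, tensors the prover's $X(x)$ and $Z(z)$ observables with \emph{exact} Paulis on the first ancilla to get commuting $\hat{X},\hat{Z}$ observables, uses the linearity test to fuse them into a joint measurement $\hat{Q}_{xz}$, and then runs the \emph{classical} low-degree test over $\Fq^{2m+2}$ with this joint strategy. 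This yields a single global measurement $\{\hat{S}^{g_1,g_2}\}$ whose outcome is a pair of degree-$d$ polynomials encoding complete strings in $\Fq^n$. From $\hat{S}$ one then defines, for \emph{every} $a\in\Fq^n$, an observable $\tilde{X}_\ell(a)$ by taking the inner product of the measured polynomial with $a$ and tensoring with the honest Pauli on the second ancilla; since $\hat{S}$ is projective and the ancilla Paulis are exact, the $\tilde{X}_\ell(a),\tilde{Z}_\ell(b)$ satisfy the full Pauli group relations \emph{exactly}, with no inductive error accumulation. Consistency of $\tilde{X}$ with the original $X$ at codeword points then suffices to stabilize the shared state. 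Without some replacement for this ``combine-then-globalize via the classical low-degree test on an extended space'' idea, step (iii) of your plan is not executable as written.
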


A typical setting of parameters for the theorem is to choose $p$ a constant, e.g. $p=2$, $t= \Theta(\log\log n)$, and $\eps$ a small constant, which leads to constant soundness $\delta$. 

The test mentioned in the theorem has three components: (a) a low-degree test in the $X$ basis; (b) a low-degree test in the $Z$ basis; (c) an anti-commutation test relating the two bases. Both (a) and (b) are direct adaptations of the ``plane-vs-point'' low-degree test from~\cite{raz1997sub}. The basis label, $X$ or $Z$, asks the prover to measure its $n$ qudits in the simultaneous eigenbasis of the observables $\qp_X(a)$ or $\qp_Z(b)$ defined in~\eqref{eq:gen-pauli} respectively.  The prover is then asked to encode the resulting outcome $a\in \Fq^n$ as a low-degree polynomial $g_a:\Fq^m \to \Fq$, where $m = O(\log n / \log\log n)$, and return either the evaluation of the polynomial at a randomly chosen point $x\in \Fq^m$, or its restriction to a randomly chosen two-dimensional subspace $s$ of $\Fq^m$. Part (c) is designed to enforce the ``twisted commutation'' relations $\qp_X(a)\qp_Z(b) = \omega^{-\tr(ab)} \qp_Z(b)\qp_X(a)$ satisfied by these observables. 
Before explaining the test and its analysis in greater detail, we first review the main steps that go into showing soundness of the classical low-degree test. 

\paragraph{Classical low-degree tests.}
The effectiveness of the classical low-degree test is based on the use of the following Reed-Muller encoding of an $n$-variable assignment $a=(a_1,\ldots,a_n)\in\{0,1\}^n$. First, integer values $h$ and $m$ are chosen so that $h^m \geq n$, and an injection $\bij:\{1,\ldots,n\} \to \{0,\ldots,h-1\}^m$ is fixed. Second, a finite field $\Fq$ is chosen such that $q\geq h$. Third, a function $g_a : \Fq^m\to \Fq$ is defined such that $g_a(\bij(i)) = a_i$ for all $i\in\{1,\ldots,n\}$, and $g_a$ has degree at most $h$ in each of its $m$ variables; $g_a$ can be obtained by straightforward polynomial interpolation. Finally, the encoding of $a$ is defined as the concatenation of the evaluation table of $g_a$ at every point $x\in \Fq^m$ with a table describing the restriction of $g_a$ to every two-dimensional subspace $s\subseteq \Fq^m$. The encoding has roughly $q^{3m}$ entries, and each entry has size $O(d^2 \log q)$, where $d=mh$ is the total degree of $g_a$. Choosing $h \approx \log n$ and $m \approx \log n/\log\log n$ yields an encoding of quasi-polynomial size, $n^{O(\log n)}$, as long as $q$ is also polynomial in $n$.

When used for constructions of PCPs, the low-degree test provides an encoding that can be tested and evaluated while making only a small number of queries. This is achieved based on the following observations. First, the encoding can be checked by making only a constant number of queries: the test selects a pair $(x,s)$ such that $s$ is a uniformly random subspace and $x$ a uniformly random point in $s$, and checks consistency between the corresponding entries of the encoding. Second, the evaluation of $g_a$ at any point $z\in \Fq^m$ can be recovered by making $O(d)$ queries to the encoding in a way that each query is uniformly distributed: select a uniformly random line going through $z$, query $d+1$ points at random on the line, and interpolate to recover the value at $z$. 

The analysis of the low-degree test described in the previous paragraph is not simple. The goal is to show that any table which passes the test with probability $1-\eps$ must be close to the encoding of a polynomial of the form $g_a$, for some $a\in \Fq^n$. The proof is constructive: it recovers a  low-degree polynomial $g_a$ through $m$ successive steps of interpolation. The case $m=2$ is immediate, since by definition the encoding contains the restriction of $g_a$ to any two-dimensional subspace. For general $m$, one selects $(d+1)$ parallel $(m-1)$-dimensional subspaces, applies the induction hypothesis to each, and interpolates to recover a $m$-variate polynomial defined over the whole space. The key difficulty in the analysis is to control the error: na\"ively, it would, at best, double at each step, resulting in an unmanageable blow-up. The key innovation of the test, and its analysis, is a method to limit this blow-up by a procedure of ``self-improvement''. 

\paragraph{Entanglement tests.} 
Before moving on to our quantum low-degree test, it is useful to first
recall the intuition behind our prior work~\cite{NV17}, which
establishes a similar quantum  analogue for the Hadamard encoding, which is based on the linearity
test of Blum et al.~\cite{BLR93}.  

In the linearity test, the assignment $a\in\{0,1\}^n$ is encoded as the evaluation table of the function $f_a : \F_2^n \to \F_2$, $f_a(x)=x\cdot a$. Each entry of the encoding is a single bit, but there are $2^n$ entries, thus the table has exponential size. The linearity test makes three queries, $x,y$ and $x+y$ for $x,y$ uniformly distributed in $\F_2^n$, and verifies that $f_a(x)+f_a(y)=f_a(x+y)$. The soundness analysis of the test is based on Fourier analysis; no induction is needed. 

To turn the linearity test into a test for entanglement we first re-interpret it using the language of representation theory. The additive structure of $\F_2^n$ makes it into an abelian group, whose irreducible representations are the $2^n$ characters $\chi_a(x)=(-1)^{a\cdot x}$. An arbitrary table $f: \F_2^n \to \F_2$ can also be seen as  a mapping $g=(-1)^f$ from the additive group  of $\F_2^n $ to the $1$-dimensional unitary group, $U( \C)$. A table $f$ which is accepted in the linearity test with probability $1-\eps$ is an approximate representation of the group, in the sense that $\Es{x,y} |g(x)g(y)-g(x+y)|^2  = O(\eps)$, where the expectation is uniform. Thus the analysis of the linearity test exactly amounts to showing that approximate representations of abelian groups are close to exact representations (i.e. the characters, which precisely correspond to the linear functions). 

We can try to apply the same reasoning to entangled-prover
strategies. Using matrix-valued Fourier analysis it is possible to
show that a quantum strategy which succeeds with probability $1-\eps$
in an $X$-basis linearity test (resp. an $Z$-basis linearity test)
implies the existence of observables for the provers which satisfy
approximate linearity conditions $X(a)X(b)\approx X(a+b)$
(resp. $Z(a)Z(b)\approx Z(a+b)$), where the approximation holds on
average over uniform $a,b\in\F_2^n$ and is measured using the
state-dependent norm that is standard in testing. These relations by
themselves do not imply anything ``quantum''; in particular they are
satisfied by one-dimensional observables $X(a)=Z(a)=(-1)^{f(a)}$. To
obtain a truly quantum test we are missing a constraint relating the
two bases: the Pauli (anti)-commutation relation
$X(a)Z(b)=(-1)^{a\cdot b}Z(b)X(a)$. Enforcing this relation would
allow us to frame the family of unitaries $\{\pm
X(a)Z(b),\,a,b\in\F_2^n\}$ as a representation of the Pauli group
modulo complex phase (also known as the Weyl-Heisenberg group) and
combine results on the stability of approximate
representations~\cite{gowers2015inverse} with information on the
structure of irreducible representations of that group to
conclude. This is what justifies the inclusion of part (c), an
anti-commutation test, which can be based on e.g. the Mermin-Peres
Magic Square game~\cite{Arvind:02} to test for the desired
anti-commutation relations.  

\paragraph{A quantum low-degree test.}
The previous outline of an entanglement test based on the BLR linearity test is implemented in~\cite{NV17}. The use of the linearity test has two main advantages: (i) when executed in a single basis, its analysis with two entangled provers follows a direct argument using Fourier analysis; (ii) combining the linearity test in the $X$ and $Z$ bases naturally gives access to two families of observables $X(a)$ and $Z(b)$ for the provers, that can be used to specify an approximate representation of the $n$-qubit Weyl-Heisenberg group as described above, with the (anti)-commutation test certifying all required pairwise group relations. 

To reduce the communication required in the test, it is natural to turn to  low-degree tests: as described above, the latter only require poly-logarithmic, instead of linear, communication. Due to the fact that the test has only a quasi-polynomial number of questions, however, a strategy for the provers only involves a quasi-polynomial number of observables: how can one show that all exponentially many (anti)-commutation relations hold, in principle, between observables defined on the prover's space, if the test itself only requires the existence of a tiny subset of these observables in order to be played? 

This difficulty can be overcome as follows. From the classical analysis of the low-degree test, or rather its entanglement-resistant analogue~\cite{NatarajanV17twoprover}, it is possible to show that a strategy that succeeds in the $X$-basis (resp. $Z$-basis) low-degree test implies the existence of a family of observables $X(a)$ (resp. $Z(b)$), for $a\in \Fq^n$, that satisfy the commutation relations $X(a)X(b)=X(a+b)$ (resp. $Z(a)Z(b)=Z(a+b)$). Moreover, the use of an appropriate generalization of the Magic Square game over $\Z_2$, introduced in~\cite{ColadangeloS17MS}, to $\Z_s$, for any integer $s$, that allows us to test for the appropriate twisted commutation relation between any two observables that are actually queried in the test. The difficulty is to establish the right relations between observables $X(a)$ and $Z(b)$ that are not queried from the test, but whose existence follows from the independent application of the entangled-prover analysis of the low-degree test to the $X$- and $Z$- basis executions of the test. 

Our solution proceeds in three steps. The first step consists in combining $X$ and $Z$ observables together into a single family of commuting observables. We do this by adjoining two ancilla systems for each prover, each initialized in a maximally entangled state local to the prover, and setting $\hat{X}(x) = X(x)_{\reg{A}} \otimes \qp_X(x)_{\reg{A'}} \otimes \Id_{\reg{A''}}$, where $\qp_X(x)_{\reg{A'}}$ denotes the $n$-qudit Pauli that the prover's  $X(x)$ is supposed to implement, for $x$ among the possible queries in the test. Defining $\hat{Z}(z)$ similarly, provided $\hat{X}(x)$ and $\hat{Z}(z)$ satisfy the (conjugate of) the twisted commutation relation satisfied by $\qp_X(x)$ and $\qp_Z(z)$ we have obtained a family of (approximately) commuting observables. 

In the second step we use these commuting observables to define a
strategy for the classical low-degree test, not over $m$-variate
polynomials as the initial test requires, but over $2m$ variables,
half of which are ``$X$'' variables, and half of which are ``$Z$''
variables. To construct such a strategy we have to define ``points''
and ``subspace'' measurements from the $\hat{X}(x)$ and $\hat{Z}(z)$,
using the information that the initial observables $X(x)$ and $Z(z)$
came from a strategy for the provers that independently succeeded,
with good probability, in the classical low-degree test. Once this has
been completed we apply the analysis of the classical low-degree test
against two entangled provers to recover a single family of
measurements $\{\hat{S}^g\}$ with outcomes in the set of low-degree
polynomials $g$ over $\Fq^{2m}$. 

The last step consists in ``pulling apart'' the measurements obtained
in the previous step to recover observables $\tilde{X}(x)$ and
$\tilde{Z}(z)$, now defined for all $x,z\in\Fq^n$ (and not only the
special subset used as queries in the test). Given the definition of
$\hat{X}(x)$ from $X(x)$, it is natural to define $\tilde{X} =
(\Id_{\reg{A}} \otimes \Id_{\reg{A'}} \otimes \qp_X(x)_{\reg{A''}})
\cdot \hat{X}(x)$, which has the effect of ``undoing'' the initial
tensoring of $X(x)$ by a Pauli on $\reg{A'}$ (this uses that the
ancillas $\reg{A'A''}$ are initialized in a maximally entangled
state). It remains to argue that the exponentially many operators thus
constructed approximately satisfy the Pauli twisted commutation
relations. Once this has been established the result follows as in our
previous work~\cite{NV17}, as it can be shown directly that such
operators must be close to operators exactly satisfying all Pauli
relations, whose only joint eigenvalue-$1$ eigenstate is the maximally
entangled state.

\paragraph{Pauli observables over a prime power field.} 
To conclude this
overview we briefly discuss some difficulties encountered while working with
generalized Pauli observables over a prime power field. Had we
restricted attention to prime fields the proof (and certainly the notation!) would have been somewhat simpler. The
motivation for considering prime powers comes from the desire to allow embedding
qubit Hamiltonians, which we can achieve if $q=2^t$, but did not see how to
implement for odd values of $q$. Over prime power fields, we are
faced with two possible definitions of generalized Pauli observables: the
``clock'' and ``shift'' operators mod $q$, with eigenvalues that are $q$-th roots
of unity, and the definition~\eqref{eq:gen-pauli}, with eigenvalues that are $p$-th roots of unity. The
former are more common in the literature and offer the convenience of allowing to encode a projective measurement with outcomes in
$\Fq$ into a single generalized observable. However, they are not well-suited
for describing strategies in the low-degree test, since they are defined in terms of addition and multiplication over
$\Z_q$, whereas in the low degree test, all operations are performed over
$\Fq$. Hence, we opted for the second definition, using families of $t$ such
observables to encode a single measurement with outcomes in $\Fq \simeq \Fp^t$.

\subsection{Further work} 

There are several open problems raised by our work. Firstly, it would be interesting to expand the range of Hamiltonians for which we are able to give constant-gap interactive proofs, with the goal of eventually reaching a $\QMA$-complete family, and thus a proof of Conjecture~\ref{conj:qpcp} based on a deterministic reduction. Secondly, a different route towards the proof of the conjecture would consist in establishing
$\QMA$-hardness results for either of the two classes of Hamiltonians described
in Definition~\ref{def:gen-h} and Definition~\ref{def:linear-xz}, for
which we do already have a deterministic reduction to a game.  As further motivation, we note that, if such a $\QMA$-hardness result were achieved by
constructing a ``history Hamiltonian'' from a polynomial quantum circuit---as
in all such hardness results known---then by an observation of
Fitzsimons and Hajdu{\v{s}}ek~\cite{FitzsimonsH15}, our results could be
used to give an efficient delegation scheme for BQP in the ``post-hoc''
model. More broadly, the classical PCP theorem and MIP proof systems have
become important tools in the design of delegated computation schemes (e.g.~\cite{KRR14,RRR16}), and we hope that similar applications may arise from the games variant of
QPCP.  Beyond the quantum games PCP conjecture, essentially resolved in this work, the complexity of the class $\MIP^*$ of languages that have multi-prover interactive proof systems with entangled provers remains wildly open. Recent work of~\cite{ji16nexp} introduces a ``compression'' technique, that allows him to obtain $\MIP^*$ protocols for language in NEEXP (non-deterministic doubly-exponential time), albeit at the cost of an exponentially small completeness-soundness gap. Could our techniques be used to obtain the same result, for a constant gap? Such a result would provide an unconditional separation between $\MIP$ and $\MIP^*$.

In a different direction, it could be useful to extend our
entanglement test to sub-constant error, in the same spirit
as~\cite{arnon2017noise,arnon2017device}. Currently, all self-testing
results we are aware of only provide guarantees in a regime where the
success probability is close to $1$, which is arguably more
challenging to demonstrate in experiments. 

\paragraph{Organization.} We start with important notation and general
preliminaries in Section~\ref{sec:notation}. The quantum low-degree test is
stated in Section~\ref{sec:qld}, and its soundness analysis is given in
Section~\ref{sec:soundness}. In Section~\ref{sec:code} we extend the test to
allow testing arbitrary states encoded using a suitable error-correcting
code. Finally, Section~\ref{sec:testing} applies the test to prove Theorem~\ref{conj:weak_qpcp}, together with the two variants discussed as items (i) and (ii) in the introduction. 

\paragraph{Acknowledgments.} 
AN was supported by NSF CAREER Grant CCF-1452616. TV was supported by NSF CAREER Grant CCF-1553477,
AFOSR YIP award number FA9550-16-1-0495, a CIFAR Azrieli Global Scholar award,
and the IQIM, an NSF Physics Frontiers Center (NSF Grant PHY-1125565) with
support of the Gordon and Betty Moore Foundation (GBMF-12500028). Parts of this
work was completed while both authors were hosted at the Institut Henri
Poincar\'e in Paris, as part of the special semester on Analysis in Quantum
Information Theory (Fall 2017), supported by NSF Grant DMS-1700168. The hospitality of the IHP is gratefully acknowledged.

\section{Preliminaries}
\label{sec:notation}

\subsection{Notation}

We use $\mH$ to denote a finite-dimensional Hilbert space, $\Lin(\mH)$ for the linear operators on $\mH$, and $\Unitary(\mH)$ the set of unitary operators. Subscripts $\mH_\reg{A}$, $\mH_\reg{B}$ indicate distinct spaces. 

We use the notation $\poly(f(n))$ to denote $O(f^c(n))$ for some universal constant $c>0$ (which may vary each time the notation is used). Similarly, we write $\poly^{-1}(f(n))$ to denote $\Omega(f^{-c}(n))$. All parameters used in the paper will generally be a function of a single parameter $n$, and asymptotic notation $O(\cdot)$, $\Omega(\cdot)$, etc., should be understood as $n\to \infty$. 

\subsection{Finite fields and polynomials}

Throughout we use $p$ to denote a prime and $q = p^t$ a prime power. We let
$\Fq$ denote the finite field with $q$ elements, and $\Zp$ denote the cyclic
group mod $p$. The additive group of $\Fp$ coincides with $\Z_p$, but this is no longer the case for $\Fq$. The
finite field trace is denoted by $\tr(a)$; it is a map from $\Fq$ to the prime subfield $\Fp$, defined by $\tr(a) = \sum_{\ell = 0}^{t-1}
a^{p^{\ell}} $. The trace respects linear combinations with coefficients drawn
from the prime subfield:  $\tr(\alpha a + \beta b) = \alpha \tr(a) + \beta
\tr(b)$ for $\alpha, \beta \in \Fp$.  A useful alternative view of $\Fq$ is as a $t$-dimensional vector
space over $\Fp$. Each element $e \in \Fq$ can
be written as $e_1 b_1 + e_2 b_2 + \dots + e_{t} b_t$, where $(b_1, \dots,
b_t)$ is a basis for $\Fq$ over $\Fp$ and the coefficients $e_\ell$ lie in the field of scalars
$\Fp$. This representation of $\Fq$ is convenient for addition, since one can
add the individual components $e_\ell$ separately, but in general, it is
hard to do multiplication. However, if $q$ is even or $q = p^t$ with both $p$ and
$t$ odd there always exists a basis
satisfying the property of \emph{self-duality}, i.e.
\begin{equation}\label{eq:self-dual}
\tr(b_i b_j) \,=\, \delta_{ij}
\end{equation}
for all $i,j\in\{1,\ldots,t\}$ (see e.g.~\cite[Theorem~1.9]{menezes2013applications}). This property allows to express
 $\tr(ef)$, for $e, f \in \Fq$, as the inner product, over $\Fp$, of their respective vector of components along the
basis. As shown below, this property will make it convenient to express $q$-dimensional qudits as
tensor products of $p$-dimensional qudits. For the remainder of the
paper we only consider choices of $q$ such that $\Fq$ admits a
self-dual basis over $\Fp$.

For integer $d,m$ and a subspace $s\subset \Fq^m$ we let $\deg_d(s)$ denote the set of polynomials on $s$ of total degree at most $d$ (specified with respect to some fixed, implicit basis for $s$). We write $\omega = e^{\frac{2i\pi}{p}}$ for a fixed primitive $p$-th root of unity. Let 
\begin{equation}\label{eq:def-epr-p}
\ket{\epr_q} = \frac{1}{\sqrt{q}}\sum_{i \in \Fq} \ket{i}\otimes \ket{i}\,\in\,\C^q\otimes\C^q\;.
\end{equation}

\paragraph{Coordinates and polynomials.}
Let $n\geq 1$ be an integer, and  $h,m$ two integers such that $h^m \geq n$ and $h\leq q$. 
Throughout we fix an arbitrary injection $\bij:\{1,\ldots,n\}\to\{0,1,\ldots,h-1\}^m \subseteq \Fq^m$, where $n,h,m$ are integers such that $h^m \geq n$ that will be clear from context. For $x\in \Fq^m$ and $i\in\{1,\ldots,n\}$ define 
$$ x_{\bij(i)}\,=\,\prod_{j=1}^m \frac{\prod_{\substack{k=0\\ k\neq \bij(i)_j}}^{h-1} (k-x_j)}{\prod_{\substack{k=0\\ k\neq \bij(i)_j}}^{h-1} (k-\bij(i)_j)}\,\in\,\Fq\;,$$
and let $x_\bij=(x_{\bij(1)},\ldots,x_{\bij(n)})\in\Fq^n$. Note that
  for $x\in \{0,1,\ldots,h-1\}^m$, $x_{\bij(i)}=1$ if $x=\bij(i)$ and $x_{\bij(i)}=0$
  otherwise.
By ranging over all possible
values for $x$ we obtain a subset of $\Fq^n$ of size $q^m$; we think of $x\mapsto x_\bij$ as a
pseudo-random ``coordinate expansion'' map. 

Let $g:\Fq^m\to\Fq$ be an $m$-variate polynomial of degree at most $h$ in each
coordinate. Then by interpolation we can write
\begin{equation}\label{eq:interpolation}
g(x) = \sum_{i=1}^n x_{\bij(i)} g(\bij(i)) = g\cdot x_\bij\;,
\end{equation}
where we abuse notation and write $g$ for the vector
$(g(\bij(1)),\ldots,g(\bij(n)))\in\Fq^n$. Conversely, for any $a\in\Fq^n$ we let
$g_a$ be the $m$-variate polynomial of individual degree at most $h$ over $\Fq$
defined by  
\begin{equation}\label{eq:def-ga}
g_a:\;x\in\Fq^m\,\mapsto\, \sum_i a_i x_{\bij(i)} \,=\, a\cdot x_\bij\;.
\end{equation}
The map from $\Fq^n$ to $\Fq^{q^m}$ that maps $a$ to the evaluation table of $g_a$ is the $m$-variate Reed-Muller code of individual degree $h$.
Note that $(g_a(\bij(1)), \dots, g_a(\bij(n))) = a$.

We recall the Schwartz-Zippel lemma~\cite{Zippel79,schwartz1980fast}, which we will use repeatedly. 

\begin{lemma}[Schwartz-Zippel]\label{lem:sz}
Let $d,m\geq 1$ be integers and $r$ a non-zero polynomial in $m$ variables of total degree at most $d$ defined over the finite field $\Fq$. Then $r$ has at most $d|\Fq|^{m-1}$ zeros.
\end{lemma}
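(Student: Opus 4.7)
The plan is to prove the lemma by induction on the number of variables $m$, which is the standard approach for Schwartz--Zippel. The base case $m=1$ reduces to the familiar statement that a nonzero univariate polynomial of degree at most $d$ over a field has at most $d$ roots, which follows immediately from the fact that $\Fq[x]$ is a Euclidean domain (each root $\alpha$ yields a factor $(x-\alpha)$).

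For the inductive step, assume the lemma for $m-1$ and consider a nonzero $r\in\Fq[x_1,\ldots,x_m]$ of total degree at most $d$. First I would regard $r$ as a polynomial in $x_1$ with coefficients in $\Fq[x_2,\ldots,x_m]$:
\begin{equation*}
r(x_1,\ldots,x_m) \,=\, \sum_{i=0}^{d'} r_i(x_2,\ldots,x_m)\, x_1^i,
\end{equation*}
where $d'$ is the largest index for which the coefficient $r_{d'}$ is nonzero. Since $r\ne 0$ such a $d'$ exists, and moreover $r_{d'}$ has total degree at most $d-d'$.

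Next I would split the count of zeros of $r$ in $\Fq^m$ according to whether the ``tail'' $(x_2,\ldots,x_m)$ is a zero of $r_{d'}$. By the inductive hypothesis, the number of $(x_2,\ldots,x_m)\in\Fq^{m-1}$ with $r_{d'}(x_2,\ldots,x_m)=0$ is at most $(d-d')\,|\Fq|^{m-2}$; each such tuple contributes at most $|\Fq|$ zeros (one per choice of $x_1$). For every other tuple, $r(\,\cdot\,,x_2,\ldots,x_m)$ is a nonzero univariate polynomial of degree exactly $d'$ in $x_1$, contributing at most $d'$ zeros by the base case.

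Combining these bounds gives
\begin{equation*}
\#\{r=0\} \,\leq\, (d-d')\,|\Fq|^{m-2}\cdot |\Fq| \;+\; |\Fq|^{m-1}\cdot d' \,=\, d\,|\Fq|^{m-1} - d'(d-d')\,|\Fq|^{m-2} \,\leq\, d\,|\Fq|^{m-1},
\end{equation*}
completing the induction. There is no real obstacle here: the only mild subtlety is correctly bookkeeping that $r_{d'}$ has degree at most $d-d'$ (so the inductive hypothesis applies with the right parameter) and noting that $d'(d-d')\geq 0$ so the final bound holds.
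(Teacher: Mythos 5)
The paper does not prove this lemma: it is recalled as a standard fact with citations to Zippel and Schwartz, so there is no paper proof to compare against. Your proof is the standard induction-on-the-number-of-variables argument (split on whether the leading coefficient $r_{d'}$, viewed as a polynomial in $x_2,\ldots,x_m$, vanishes), and it is correct in substance.

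One small bookkeeping slip: in the final display, the quantity you bound by is
$(d-d')\,|\Fq|^{m-2}\cdot|\Fq| + |\Fq|^{m-1}\cdot d'$, which equals $d\,|\Fq|^{m-1}$ \emph{exactly}, not $d\,|\Fq|^{m-1}-d'(d-d')\,|\Fq|^{m-2}$. The tighter middle expression would be correct if you counted the "other tuples" as $|\Fq|^{m-1}-(d-d')\,|\Fq|^{m-2}$ rather than the crude $|\Fq|^{m-1}$; as written, the first equality sign is wrong but the final inequality $\leq d\,|\Fq|^{m-1}$ still holds, so the conclusion is unaffected.
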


\subsection{Pauli measurements and observables for qudits}
\label{sec:qauli}

To any projective measurement  $\{M^a\}$ with outcomes $a \in \Zp$ we can associate a generalized observable with eigenvalues that are $p$-th roots of unity: the unitary matrix $M = \sum_a \omega^a M^a$, where
$\omega = e^{\frac{2i\pi}{ p}}$. The generalized Pauli operators over $\Fp$ are a set of generalized observables indexed by a basis
setting $X$ or $Z$ and an element $a$ or $b$ of $\Fp$, with eigenvalues that are $p$-th
roots of unity. They are given by
\begin{equation}\label{eq:pauli-fp}
 \sigma_X(a) = \sum_{j \in \Fp} \ket{j + a} \bra{j}\qquad \text{and}\qquad\sigma_Z(b) =\sum_{j \in \Fp} \omega^{bj} \ket{j} \bra{j}\;, 
\end{equation}
where addition and multiplication are over $\Fp$. These observables obey the ``twisted
commutation'' relations
\begin{equation}\label{eq:twisted-fp}
 \forall a,b\in\Fp,\qquad \sigma_X(a) \sigma_Z(b) \,=\, \omega^{-ab} \,\sigma_Z(b)\sigma_X(a)\;. 
\end{equation}
Similarly, over a field $\Fq$ we can define a set of generalized Pauli operators,
indexed by a basis setting $X$ or $Z$ and an element of $\Fq$. There are different possible definitions for these operators. We choose them
to have eigenvalues that are $p$-th roots of unity. For $a,b\in\Fq$ they are given by
\[ \qp_X(a) = \sum_{j \in \Fq} \ket{j+a}\bra{j}\qquad\text{and}\qquad \qp_Z(b)= \sum_{j \in \Fq}
  \omega^{\tr(b j)} \ket{j}\bra{j}\;, \]   
where addition and multiplication are over $\Fq$. Powers of these
observables obey the relation
\[ \forall W \in \{X,Z\},\, \forall a \in \Fq,\,\forall b \in \Fp,\qquad(\qp_W(a))^b = \qp_W(a b) \;.\]
In particular, since $pa=0$ for any $a\in \Fq$ we get that 
that $(\qp_W(a))^p = \Id$ for any $a\in\Fq$. The observables  obey analogous ``twisted commutation'' relations to~\eqref{eq:twisted-fp},
\begin{equation}\label{eq:twisted-fq}
\forall a,b\in\Fq,\qquad \qp_X(a) \qp_Z(b) = \omega^{-\tr(a b)} \qp_Z(b) \qp_X(a)\;. 
\end{equation}
It is clear from the definition that all of the $\qp_X$ operators commute with each other, and similarly all the
$\qp_Z$ operators with each other. Thus, it is meaningful to speak of a common eigenbasis for all $\qp_X$ operators, and a common eigenbasis for all $\qp_Z$ operators. The common
 eigenbasis for the $\qp_Z$ operators is the computational basis. To map this basis to the common eigenbasis of the $\qp_X$ operators, one can apply the Fourier transform
\begin{equation}\label{eq:fourier-f}
 F \,=\, \frac{1}{\sqrt{q}} \sum_{j ,k\in \Fq} \omega^{-\tr(jk)} \ket{j}\bra{k}\;. 
\end{equation}
Explicitly, the eigenbases consist of the vectors $\ket{e_W}$ labeled by an element
$e \in \Fq$ and $W \in \{X, Z\}$, given by
\[\ket{e_X} = \frac{1}{\sqrt{q}} \sum_j \omega^{- \tr(e j)} \ket{j}\;, \qquad
\ket{e_Z}   = \ket{e}\;. \] 
We denote the POVM whose elements are projectors onto basis
vectors of the eigenbasis associated with the observables $\qp_W$ by
$\{\qp_W^e\}_e$. 
Then the observables $\qp_W(a)$ can be written
as 
\[ \forall W \in\{X,Z\},\,\forall a \in \Fq,\qquad\qp_W(a) = \sum_{e \in \Fq}  \omega^{\tr(a e)} \qp_W^e\; . \]

For choices of $q$ such that $\Fq$ admits a self-dual basis $(b_1,\ldots,b_t)$, we can decompose a $q$-dimensional
qudit (a ``quqit'') as a tensor product of $t$ $p$-dimensional qudits
(``qupits''). Based on this decomposition, for $W\in\{X,Z\}$ and $\ell\in\{1,\ldots,t\}$ we define the $W$-basis Pauli operator
acting on the $\ell$-th qupit by 
\begin{equation}\label{eq:pauli-l}
\forall  a \in  \Fp,\qquad  \sigma_{W,\ell}(a) \,=\, \sum_{e_1, \dots, e_t\in\Fp} \omega^{ae_\ell} \,\qp^{(e_1 b_1 +
    \dots + e_t b_t)}_W\,=\, \qp_W(a b_\ell) \;. 
		\end{equation}
It can be verified by direct computation that for every $\ell\in\{1,\ldots,t\}$, $\sigma_{X, \ell}$ and $\sigma_{Z, \ell}$ obey the Pauli twisted
commutation relations~\eqref{eq:twisted-fp}, and that when $\ell \neq \ell' \in \{1,\ldots,t\}$,
$\sigma_{X, \ell}$ and $\sigma_{Z, \ell'}$ commute. Both of these facts also follow from noting that the transformation $F$ that maps
$\ket{e_Z}$ to $\ket{e_X}$ decomposes as a tensor product over the qupits: 
\begin{align*}
  F &= \frac{1}{\sqrt{q}} \sum_{jk} \omega^{-\tr(jk)} \ket{j}\bra{k} \\
      &= \frac{1}{\sqrt{q}} \sum_{j_1, \dots j_t, k_1, \dots k_t} \omega^{- \tr(\sum_{\ell, \ell'}
        j_\ell k_{\ell'} b_\ell b_{\ell'})} \ket{j_1} \bra{k_1} \ot \cdots \ot
        \ket{j_t} \bra{k_t} \\
      &= \frac{1}{\sqrt{q}} \sum_{j_1, \dots j_t, k_1, \dots k_t} \omega^{ -\sum_{\ell, \ell'}
        j_\ell k_{\ell'} \tr(b_\ell b_{\ell'})} \ket{j_1} \bra{k_1} \ot \cdots \ot
        \ket{j_t} \bra{k_t} \\ 
      &= \frac{1}{\sqrt{q}}\sum_{j_1, \dots j_t, k_1, \dots k_t} \omega^{- \sum_{\ell}
        j_\ell k_{\ell}} \ket{j_1} \bra{k_1} \ot \cdots \ot
        \ket{j_t} \bra{k_t} \\ 
      &= \bigotimes_{\ell = 1}^{t} \Big( \frac{1}{\sqrt{p}}\sum_{j_\ell, k_\ell}  \omega^{-j_\ell
        k_\ell} \ket{j_\ell} \bra{k_\ell} \Big)\;,
\end{align*}
where in going from the second to the third line we used the linearity of
the trace and the fact that $j_\ell, k_{\ell'}$ are elements of the prime
subfield $\Fp$. We will sometimes
consider the case where $p = 2$, in which case the $\sigma_{W,\ell}$ behave as the standard
Pauli spin matrices acting on $t$ qubits, with the index $\ell$ labeling the qubit
acted on. Also, it will be sometimes useful to allow the index $a$ to range over all of $\Fq$ instead of just $\Fp$; extending~\eqref{eq:pauli-l} we define $\sigma_{W,\ell}(a)$ to be $\qp_W(ab_\ell)$ for any $a \in \Fq$.

For systems with many qudits, we will consider tensor products of the operators $\qp_W$. Slightly abusing notation, for $W \in \{X, Z\}$ and $a \in \Fq^n$ we
denote by $\qp_W(a)$ the tensor product $\qp_W(a_1) \ot \dots \ot
\qp_W(a_n)$. These obey the twisted commutation relations
\[ \forall a,b\in \Fq^n,\qquad \qp_X(a) \qp_Z(b) \,=\, \omega^{-\tr(a \cdot b)} \qp_Z(b) \qp_X(a)\;, \]
where $a \cdot b = \sum_{i=1}^{n} a_i b_i \in \Fq$. For $W\in\{X,Z\}$ and $e \in \Fq^n$
define the eigenstates
\[ \ket{e_W} = \ket{(e_1)_W} \ot \dots \ot \ket{(e_n)_W}\;, \]
and associated rank-$1$ projectors $\qp_W^e$. 


\paragraph{State-dependent distance.}
For operators $A,B\in\Lin(\mH)$, where $\mH$ is a finite-dimensional Hilbert space, and a vector $\ket{\psi} \in \mH\otimes \mH'$, where $\mH'$ is another finite-dimensional Hilbert space, we write $A\approx_\delta B$ for $\| (A-B)\otimes \Id \ket{\psi} \|^2 = O(\delta)$. Note the state $\ket{\psi}$ and the space $\mH'$ are usually kept implicit. We sometimes write the same with some free variables, e.g. $A_x^a \approx_\delta B_x^a$. By this we mean 
$$\Es{x}\sum_a \| (A_x^a-B_x^a)\otimes \Id \ket{\psi}\|^2 \,=\, O(\delta)\;.$$
 Variables appearing as subscript will most often be considered ``inputs'', and
 should be averaged; superscripts are considered ``answers'' and should be
 summed over. Which is which will always be clear from context, including the
 distribution on inputs.

 For a family of POVM $\{A_x^a\}$ acting on $\mH_A$, we
 will say that $\{A_x^a\}$ is $\delta$-self-consistent if there exists a family
 of POVM $\{\comp{A}_x^a\}$ acting on $\mH_B$ such that $A_x^a \otimes \Id_B
 \approx_\delta \Id_A \otimes \comp{A}_x^a$.\footnote{We combine the use of bold
   font together with the ``{\color{MidnightBlue} MidnightBlue}'' color to
   (hopefully) make it easier to distinguish the operators, as well as make
   (online) reading of our paper a more colorful experience.}  Note that this definition relies on an implicit understanding of the space $\mH_B$ and the operators  $\{\comp{A}_x^a\}$, and we will only use the terminology when the space and operators are clear from context. The following lemma relates two measures of consistency, defined via observables or the underlying projective measurement. 

\begin{claim}\label{claim:obs-meas-cons}
 Let $s$ be any integer, and let $\{A^a\}, \{B^b\}$ be projective measurements with outcomes
 $a,b\in\Z_s$. Let $A = \sum_a \omega_s^a A^a$ and $B= \sum_b \omega_s^b B^b$ where $\omega_s = \exp(2\pi i / s)$. Then for any state $\ket{\psi}$, 
\begin{align*}
\frac{1}{2}\Big(1-\Re\big(\bra{\psi} A\otimes B^\dag \ket{\psi}\big)\Big)\,\leq\,\sum_{a\neq b} \big\| A^a\otimes B^b \ket{\psi}\big\|^2  \,\leq\, \frac{s^2}{2\pi^2}\Big(1-\Re\big(\bra{\psi} A\otimes B^\dag \ket{\psi}\big)\Big)\;.
\end{align*}
\end{claim}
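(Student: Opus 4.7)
The claim is a direct computation. I would begin by expanding $A \otimes B^\dag$ spectrally to write
\[
\bra{\psi} A \otimes B^\dag \ket{\psi} \;=\; \sum_{a,b \in \Z_s} \omega_s^{a-b}\, \bra{\psi} A^a \otimes B^b \ket{\psi}.
\]
Because $\{A^a\}$ and $\{B^b\}$ are projective measurements acting on different tensor factors, each $A^a \otimes B^b$ is itself an orthogonal projector, so $p_{a,b} := \bra{\psi} A^a \otimes B^b \ket{\psi} = \|A^a \otimes B^b \ket{\psi}\|^2$ defines a probability distribution on $\Z_s \times \Z_s$. Taking real parts and noting that the diagonal $a=b$ terms contribute $p_{a,a}$ each and collectively cancel the leading $1$, I obtain the key identity
\[
1 - \Re\bra{\psi} A \otimes B^\dag \ket{\psi} \;=\; \sum_{a \neq b}\bigl(1 - \cos(2\pi(a-b)/s)\bigr)\, p_{a,b}.
\]

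Both inequalities now reduce to bounding the coefficient $c_k := 1 - \cos(2\pi k/s) = 2\sin^2(\pi k/s)$ uniformly over $k \in \{1,\ldots,s-1\}$. The first (upper) inequality in the claim follows from the trivial bound $c_k \leq 2$ applied to each summand, which gives $1 - \Re(\cdots) \leq 2\sum_{a \neq b} p_{a,b}$ and rearranges to the stated inequality. The second (lower) inequality requires $c_k \geq 2\pi^2/s^2$; since $c_k$ is minimized at $k = 1$ and $k = s-1$ by symmetry of $\sin$ about $\pi/2$, this amounts to lower bounding $2\sin^2(\pi/s)$ by $2\pi^2/s^2$, which one can obtain either from the asymptotic $\sin(\pi/s) = \pi/s - O(1/s^3)$ or, up to a universal constant, from Jordan's inequality $|\sin x| \geq (2/\pi)|x|$ on $[0,\pi/2]$.

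The argument is essentially mechanical and requires no induction or self-improvement step, in contrast to low-degree testing analyses elsewhere in the paper. The only mild subtlety is matching the exact constant $s^2/(2\pi^2)$ in the upper bound: this is the sharp asymptotic constant as $s \to \infty$ and needs a slightly tighter trigonometric estimate than the basic Jordan bound; the overall order $\Theta(s^2)$, however, is immediate from the identity above.
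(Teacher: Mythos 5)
Your approach is essentially identical to the paper's, only packaged differently: the paper computes $\|(A\otimes\Id - \Id\otimes B)\ket{\psi}\|^2 = 2(1-\Re\bra{\psi}A\otimes B^\dagger\ket{\psi})$ and expands this as $\sum_{a\neq b}|\omega_s^a-\omega_s^b|^2 p_{a,b}$, while you expand $1-\Re\bra{\psi}A\otimes B^\dagger\ket{\psi} = \sum_{a\neq b}(1-\cos(2\pi(a-b)/s))p_{a,b}$ directly; since $|\omega_s^a-\omega_s^b|^2 = 2(1-\cos(2\pi(a-b)/s))$, these are the same identity. Your lower bound on $S := \sum_{a\neq b}p_{a,b}$ via $c_k\leq 2$ matches the paper's use of $|\omega_s^a-\omega_s^b|\leq 2$. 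So the proofs are, in substance, the same.

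On the upper bound, you are right to be suspicious of the constant $s^2/(2\pi^2)$, and in fact the issue is worse than you suggest: no ``slightly tighter trigonometric estimate'' can match it, because the constant in the claim is actually wrong. You need $\min_{k\in\{1,\ldots,s-1\}} 2\sin^2(\pi k/s) = 2\sin^2(\pi/s) \geq 2\pi^2/s^2$, i.e.\ $\sin(\pi/s)\geq \pi/s$, and this is false for every $s\geq 2$ since $\sin x < x$ for $x>0$ (the asymptotic you cite, $\sin(\pi/s)=\pi/s - O(1/s^3)$, goes the wrong way). The paper's proof has the same defect: it asserts $\tfrac{2\pi}{s}\leq |\omega_s^a-\omega_s^b|$, but $\min_{a\neq b}|\omega_s^a-\omega_s^b| = 2\sin(\pi/s) < 2\pi/s$ (e.g.\ for $s=2$ the minimum is $2<\pi$). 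The correct sharp bound is $S\leq \frac{1}{2\sin^2(\pi/s)}\bigl(1-\Re\bra{\psi}A\otimes B^\dagger\ket{\psi}\bigr)$, and Jordan's inequality $\sin(\pi/s)\geq 2/s$ gives the clean $S\leq \frac{s^2}{8}\bigl(1-\Re\bra{\psi}A\otimes B^\dagger\ket{\psi}\bigr)$. This is a constant-factor slip that does not affect any downstream use in the paper, where only the order $O(s^2)$ matters, but you should not present $s^2/(2\pi^2)$ as attainable.
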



\begin{proof}
Expand
\begin{align*}
 \big\|(A\otimes \Id - \Id \otimes B) \ket{\psi} \big\|^2 
&= \sum_{a\neq b} \big|\omega_s^a-\omega_s^b\big|^2 \big\|A^a \otimes B^b \ket{\psi}\big\|^2\;,
\end{align*}
and use $\frac{2\pi}{s} \leq |\omega_s^a - \omega_s^b|\leq 2$ for all $a\neq b$. 
\end{proof}

\begin{claim}
  Let $s$ be an integer, $\delta\geq 0$, $\ket{\psi}\in\mH$ a state and $B\in\Lin(\mH)$ a normal operator such that $\| (B^s - \Id)
  \ket{\psi} \|^2 \leq \delta$. Then there exists a unitary $U$ with the same
  eigenvectors as $B$ such that $U^s=\Id$ and 
	$$\big\| (B  - U) \ket{\psi}\big\|^2
  \,\leq \,4\,\delta\;.$$
  \label{claim:round_to_roots}
\end{claim}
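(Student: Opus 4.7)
The plan is to use the spectral theorem for the normal operator $B$ and round each eigenvalue to its nearest $s$-th root of unity. Writing $B = \sum_\lambda \lambda P_\lambda$ with orthogonal projectors $\{P_\lambda\}$ onto eigenspaces, I would set $U = \sum_\lambda f(\lambda) P_\lambda$, where $f(\lambda)$ is any $s$-th root of unity closest to $\lambda$ (ties broken arbitrarily). By construction $U$ is unitary (since the $\{P_\lambda\}$ are orthogonal and each $f(\lambda)$ has modulus $1$), satisfies $U^s = \sum_\lambda f(\lambda)^s P_\lambda = \Id$, and shares the eigenprojectors --- hence the eigenvectors --- of $B$.

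Expanding $\ket{\psi}$ in the common eigenbasis, the two relevant norms decompose as
\[
\big\|(B-U)\ket{\psi}\big\|^2 = \sum_\lambda |\lambda - f(\lambda)|^2 \, \|P_\lambda\ket{\psi}\|^2, \qquad
\big\|(B^s-\Id)\ket{\psi}\big\|^2 = \sum_\lambda |\lambda^s - 1|^2 \, \|P_\lambda\ket{\psi}\|^2 \leq \delta.
\]
Thus it suffices to prove the pointwise scalar inequality $|\lambda - f(\lambda)|^2 \leq 4\,|\lambda^s - 1|^2$ for every $\lambda \in \C$; summing against the nonnegative weights $\|P_\lambda\ket{\psi}\|^2$ then gives $\|(B-U)\ket{\psi}\|^2 \leq 4\delta$ as required.

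For the pointwise inequality, I would first reduce to the case $f(\lambda) = 1$ via the substitution $\lambda \mapsto \lambda/f(\lambda)$, which preserves both sides (using $|f(\lambda)| = 1$ and $f(\lambda)^s = 1$). A short calculation based on $|z - e^{i\alpha}|^2 = |z|^2 - 2|z|\cos(\arg z - \alpha) + 1$ shows that the set of $\lambda$ for which $1$ is a nearest $s$-th root of unity is precisely the sector $S := \{re^{i\phi} : r \geq 0,\ |\phi| \leq \pi/s\}$. Using the factorization $\lambda^s - 1 = (\lambda - 1)\, p(\lambda)$ with $p(\lambda) = 1 + \lambda + \cdots + \lambda^{s-1}$, the claim (for $\lambda \neq 1$) reduces to $|p(\lambda)|^2 \geq 1/4$ on $S$. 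Since the zeros of $p$ are the non-trivial $s$-th roots of unity $\omega, \omega^2, \ldots, \omega^{s-1}$ (with $\omega = e^{2\pi i/s}$), all of which lie strictly outside $S$, the function $p$ is holomorphic and non-vanishing on $S$. Applying the minimum modulus principle to compact subregions $\{|\lambda| \leq R\} \cap S$ and noting that $|p(\lambda)| \to \infty$ as $|\lambda| \to \infty$ within $S$ (for $s \geq 2$; the case $s = 1$ is trivial since $p \equiv 1$), the infimum of $|p|$ over $S$ is realized either at the vertex $\lambda = 0$, where $|p(0)| = 1$, or on the bounding rays $\arg\lambda = \pm \pi/s$, where $|p(re^{\pm i\pi/s})| = (r^s + 1)/\sqrt{r^2 - 2r\cos(\pi/s) + 1}$.

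The main obstacle is the resulting algebraic inequality on the rays, namely $(r^s + 1)^2 \geq r^2 - 2r\cos(\pi/s) + 1$, or equivalently $r^{2s} + 2r^s + 2r\cos(\pi/s) \geq r^2$ for all $r \geq 0$. This follows by a short case split: for $r \geq 1$ one uses $r^{2s} \geq r^2$ together with the nonnegativity of the remaining terms (for $s \geq 2$, $\cos(\pi/s) \geq 0$); for $0 \leq r \leq 1$ and $s \geq 3$ the bound $\cos(\pi/s) \geq 1/2$ yields $2r\cos(\pi/s) \geq r \geq r^2$; and the cases $s = 1, 2$ are handled directly. The argument actually yields $|p(\lambda)| \geq 1$ on all of $S$, so the sharper pointwise bound $|\lambda - f(\lambda)|^2 \leq |\lambda^s - 1|^2$ holds --- the factor of $4$ in the statement is thus not tight, though this slack is irrelevant for the applications in the paper.
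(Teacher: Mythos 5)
Your proof is correct, and the outer framework — diagonalize $B$, define $U$ by rounding each eigenvalue to a nearest $s$-th root of unity, reduce to a pointwise scalar inequality, and sum against the spectral weights $\|P_\lambda\ket{\psi}\|^2$ — is exactly what the paper does. Where you diverge is in proving the scalar inequality $|\lambda - f(\lambda)|^2 \leq 4\,|\lambda^s-1|^2$. The paper argues elementarily: writing $\lambda = r e^{2\pi i\theta}$, it uses $|\lambda-1| \geq \max(|r-1|, 4|\theta|)$ to bound the radial and angular errors separately, and then chains a triangle inequality to get $|\lambda - (\text{root})| \leq (1 + 2/s)\,|\lambda^s - 1|$. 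You instead rotate into the sector $S = \{r e^{i\phi}: |\phi|\le\pi/s\}$, factor $\lambda^s - 1 = (\lambda-1)\,p(\lambda)$ with $p(\lambda) = 1+\lambda+\cdots+\lambda^{s-1}$, and show $|p|\ge 1$ on $S$ via the minimum modulus principle together with a direct computation on the bounding rays $\arg\lambda = \pm\pi/s$ (plus $|p|\to\infty$ at infinity inside $S$). Both routes are sound; yours is less elementary — it imports complex analysis where the paper gets by with the triangle inequality — but it yields the cleaner, sharp pointwise bound $|\lambda - f(\lambda)| \le |\lambda^s - 1|$, whereas the paper's chain of estimates loses a constant. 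As you note, the extra factor is irrelevant to the paper's applications, where only $\poly(\eps)$-type bounds matter.
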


\begin{proof}
		Let $\lambda = re^{2i\pi\theta}$ be any complex number, where $r\in \R_+$ and $\theta\in[-1/2,1/2)$. Then
		\begin{align}
		\big|re^{2i\pi\theta}-1\big| &\geq \max\big(|r-1|,\, \big|e^{2i\pi\theta}-1\big|\big)\notag\\
		&\geq \max\big(|r-1|,\, 4|\theta|\big)\;.\label{eq:angle-bound}
		\end{align}
Thus, by the triangle inequality,
\begin{align}
\Big| \lambda - e^{\frac{2i\pi}{s} \lfloor s \theta \rfloor}\Big|
		&\leq |r-1| +\Big|e^{2i\pi \theta }- e^{\frac{2i\pi}{s} \lfloor s \theta \rfloor}\Big|\notag\\
		&\leq |r^s-1| + 8\Big| \theta - \Big[ \frac{1}{s} \lfloor s \theta \rfloor\Big]_1 \Big|\notag\\
		&\leq \big| \lambda^s -1 \big| + \frac{2}{s} \big| \lambda^s -1 \big|\;,\label{eq:angle-bound-1}
\end{align}	
where in the second line we wrote $[x]_1$ for the representative of $x\bmod 1$ in $[-1/2,1/2)$, and the last line follows from~\eqref{eq:angle-bound}. 

  Write the eigendecomposition of $B = \sum_i \lambda_i \Pi_i$, where $\Pi_i$ is
  a Hermitian projector and $\lambda_i$ a complex number. We include zero
  eigenvalues, so that $\sum_i \Pi_i = \Id$. The assumption made in the claim can be written as
  \begin{equation}\label{eq:angle-bound-0}
	\| (B^s - \Id) \ket{\psi} \|^2 = \Big\| \sum_i (\lambda_i^s - 1) \Pi_i
    \ket{\psi} \Big\|^2 = \sum_i | \lambda_i^s - 1|^2 \|\Pi_i \ket{\psi}\|^2 \leq
    \delta\;. 
		\end{equation}
  Let $\omega = e^{\frac{2\pi i}{s}}$. For each $i$, let $\omega^{a_i}$ be the closest $s$-th root of unity to
  $\lambda_i$.  Define $U = \sum_i
  \omega^{a_i} \Pi_i$. Then
  \begin{align*}
    \| (B - U) \ket{\psi} \|^2 &=  \sum_i \big|\lambda_i - \omega^{a_i}\big|^2 \|\Pi_i
                                 \ket{\psi} \|^2 \\
    &\leq \sum_i \,4\,\big| \lambda_i^s - 1\big|^2 \| \Pi_i \ket{\psi} \|^2 \\
    &\leq 4\,\delta\;,
  \end{align*}
where the second line uses~\eqref{eq:angle-bound-1}, and the last is
by~\eqref{eq:angle-bound-0}. 
\end{proof}

\subsection{Self-testing}

We use the language of multi-player self-tests (we will often call the players ``provers'' as well). 

\begin{definition}
 Let $k\geq 1$ be an integer. A \emph{$k$-partite strategy} $S = (\ket{\psi},\mathcal{X},
  \mathcal{A}, \mathcal{M})$ consists of finite question and answer sets $\mathcal{X}=X_1\times\cdots\times X_k$
  and $\mathcal{A}=A_1\times\cdots\times A_k$ respectively, a $k$-partite
   quantum state $\ket{\psi} \in \mathcal{H}_1\otimes \cdots \otimes\mH_k$, and for each $i\in\{1,\ldots,k\}$ a collection of measurement
  operators $\{M^a_x\}_{a \in A_i}$ on $\mathcal{H}_i$ and indexed by $x \in X_i$.\footnote{Although this is left implicit in the notation, the measurement operators associated with different spaces need not be equal.} We say that the strategy is \emph{partial} if it only specifies measurement operators for a subset of the possible questions, or if it does not specify a state $\ket{\psi}$.  
\end{definition}

We reproduce a standard definition in self-testing. 

\begin{definition}\label{def:self-test}
A \emph{$k$-player self-test} with completeness $c$ and robustness $\delta(\eps)$ for a (partial) strategy $S=(\ket{\Psi},\mathcal{X},
  \mathcal{A}, \mathcal{M})$ is a distribution $\pi$ on $\mX$ and a family of coefficients $V(a_1,\ldots,a_k|x_1,\ldots,x_k)\in [0,1]$, for $(x_1,\ldots,x_k)\in\mX$ and $(a_1,\ldots,a_k)\in\mA$, such that the following hold:
	\begin{itemize}[nolistsep]
	\item There exists a strategy $\hat{S}$ that extends the (partial) strategy $S$ and succeeds in the test with probability at least $c$; formally, 
	$$ \sum_{(x_1,\ldots,x_k)} \pi(x_1,\ldots,x_k) \sum_{a_1,\ldots,a_k} V(a_1,\ldots,a_k|x_1,\ldots,x_k) \,\bra{\psi} \hat{M}_{x_1}^{a_1}\otimes \cdots \otimes \hat{M}_{x_k}^{a_k} \ket{\psi} \,\geq\, c\;.$$
	\item Any strategy with success at least $c-\eps$ in the test must be $\delta(\eps)$-close to the optimal strategy. Formally, for any strategy $\hat{S}=(\ket{\hat{\psi}},\mathcal{X},
  \mathcal{A}, \mathcal{\hat{M}})$ such that 
		$$ \sum_{(x_1,\ldots,x_k)} \pi(x_1,\ldots,x_k) \sum_{a_1,\ldots,a_k} V(a_1,\ldots,a_k|x_1,\ldots,x_k) \,\bra{\hat{\psi}} \hat{M}_{x_1}^{a_1}\otimes \cdots \otimes \hat{M}_{x_k}^{a_k} \ket{\hat{\psi}} \,\geq\, c-\eps\;,$$
		there exists a local isometry $\Phi=\Phi_1\otimes\cdots\Phi_k$ and a state $\ket{\aux}$ such that 
		$$ \big\| \Phi(\ket{\hat{\psi}}) - \ket{\aux}\ket{\psi}\big\| \,\leq\,\delta(\eps)\;,$$
		and
		$$ \sum_{x_1,\ldots,x_k} \pi(x_1,\ldots,x_k) \sum_{a_1,\ldots,a_k}\, \big\| \Phi\big( \hat{M}_{x_1}^{a_1} \otimes \cdots\otimes \hat{M}_{x_k}^{a_k} \ket{\hat{\psi}}\big) - \ket{\aux} M_{x_1}^{a_1}\otimes \cdots \otimes M_{x_k}^{a_k} \ket{\psi} \big\| \,\leq\, \delta(\eps)\;.$$
	\end{itemize}
	In case $S$ only specifies a partial strategy, then the above expression is restricted to questions for which $S$ is defined.
\end{definition}

\subsection{The commutation test}

In designing self-tests, it is useful to have the ability to test commutation
relations between pairs of observables applied by the provers. The following
well-known test can be employed to certify that two observables commute:

\begin{theorem}\label{thm:com_test}
  Let $s$ be an integer and $\eps > 0$. There exists a two-player self-test $\COM(M,N)$ with
  completeness $1$ and robustness $\delta(\eps) = O(s\sqrt{\eps})$, for the
  (partial) strategy $S$ that uses commuting generalized observables $M$ and $N$
  (with outcomes in $\Z_s$) for two special questions labelled $1$ and $2$,
  respectively. The test has $3$ questions per player and answers either in
  $\Z_s$ (for questions $1$ and $2$) or $\Z_s^2$ (for question $3$). Moreover, for any two commuting observables $A$ and $B$, there
  exists a strategy in which the first player uses the observables $M$ and $N$ for questions
  $1$ and $2$, using a shared state $\ket{\psi}$ that is a 
  maximally entangled state of appropriate dimension.
\end{theorem}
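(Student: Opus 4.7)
The plan is to construct $\COM(M,N)$ as follows. Each player has three possible questions. Question $1$ instructs the player to measure the observable $M$ and output the outcome in $\Z_s$; question $2$ instructs them to measure $N$ and output the outcome in $\Z_s$; question $3$ instructs them to perform a joint measurement and output a pair $(a,b)\in\Z_s^2$ intended to record the joint outcomes of $M$ and $N$. The verifier picks $i\in\{1,2\}$ and $q\in\{1,2\}$ uniformly, sends question $q$ to player $i$ and question $3$ to player $3-i$, and accepts iff the answer to question $q$ equals the $q$-th coordinate of the answer to question $3$. This is a symmetrized version of the standard commutation test, with four sub-tests in total.

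Completeness is straightforward. If $M$ and $N$ commute, they admit a common projective refinement $\{P^{a,b}\}$ with marginals $\sum_b P^{a,b}=M^a$ and $\sum_a P^{a,b}=N^b$. Taking $\ket{\psi}$ to be a maximally entangled state of appropriate local dimension and having each player use this joint measurement for question $3$ (with the usual complex-conjugation on one side), the marginals of each player's outcome agree perfectly with the other player's single-observable measurement, so each of the four sub-tests passes with probability $1$.

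For soundness, let $\{M^a\},\{N^b\}$ and $\{P^{a,b}\}$ denote the projective measurements used for questions $1,2,3$; by symmetry focus on $M,N$ on player $1$ and $P$ on player $2$. The marginals $\bar M^a:=\sum_b P^{a,b}$ and $\bar N^b:=\sum_a P^{a,b}$ are themselves projective, and the derived observables $\bar M=\sum_a e^{2\pi i a/s}\bar M^a$ and $\bar N=\sum_b e^{2\pi i b/s}\bar N^b$ on player $2$'s space commute \emph{exactly}, since $\bar M^a\bar N^b=P^{a,b}=\bar N^b\bar M^a$. A strategy succeeding with probability $1-\eps$ forces the consistency checks $\sum_{a\neq a'}\|M^a\otimes \bar M^{a'}\ket{\psi}\|^2=O(\eps)$ and the analogue for $N$; applying Claim~\ref{claim:obs-meas-cons} then yields $(M\otimes \Id)\ket{\psi}\approx_{O(\eps)}(\Id\otimes \bar M^\dag)\ket{\psi}$ and $(N\otimes \Id)\ket{\psi}\approx_{O(\eps)}(\Id\otimes \bar N^\dag)\ket{\psi}$.

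Commutation of $M$ and $N$ on $\ket{\psi}$ then follows from the substitution chain
\begin{align*}
(MN\otimes \Id)\ket{\psi}
&\approx (M\otimes \bar N^\dag)\ket{\psi}
=(\Id\otimes \bar N^\dag)(M\otimes \Id)\ket{\psi}\\
&\approx (\Id\otimes \bar N^\dag\bar M^\dag)\ket{\psi}
=(\Id\otimes \bar M^\dag\bar N^\dag)\ket{\psi}
\approx (NM\otimes \Id)\ket{\psi},
\end{align*}
where the middle equality uses the exact commutation of $\bar M,\bar N$, each $\approx$ has squared error $O(\eps)$, and unitarity of $M,N,\bar M,\bar N$ is used to push errors across tensor factors. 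This yields $\|[M,N]\otimes \Id\ket{\psi}\|^2=O(\eps)$. To package this into the formal self-test statement of Definition~\ref{def:self-test}, I would apply the standard swap-isometry which rotates player $1$'s share onto a copy of the maximally entangled state, and use the approximate Pauli relations together with a Claim~\ref{claim:round_to_roots}-style rounding to replace $M,N$ by exactly commuting observables matching the canonical partial strategy. The main obstacle I expect is tracking the $s$-dependence: the upper bound in Claim~\ref{claim:obs-meas-cons} carries a factor of $s^2/(2\pi^2)$, which is the step that incurs the factor of $s$ when passing between observable and POVM descriptions during rounding; after taking the square root this produces the claimed $\delta(\eps)=O(s\sqrt{\eps})$ robustness.
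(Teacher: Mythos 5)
Your test matches the paper's exactly, and your analysis — defining the marginals $\bar M^a = \sum_b P^{a,b}$, $\bar N^b = \sum_a P^{a,b}$ which commute exactly since $\bar M^a\bar N^b = P^{a,b}$, then combining the two consistency checks via Claim~\ref{claim:obs-meas-cons} and the standard substitution chain — is precisely the standard oracularization argument the paper defers to by citing \cite[Lemma 28]{coladangelo2017verifier} rather than spelling out itself. The one part left gestured at rather than carried out is the final packaging into the Definition~\ref{def:self-test} form (isometry, rounding, and the exact source of the $s$ factor), but that is honestly flagged as a plan and the $s$-dependence you anticipate is indeed of the right order (it enters through Claim~\ref{claim:com-obs}-style conversion of the observable commutator bound $O(\eps)$ to a POVM-level bound $O(s^2\eps)$, then a square root).
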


The guarantees of the theorem are achieved by the following test, which is a
simple instance of the idea of ``oracularization'' in multiprover interactive
proofs. In the test, the verifier performs either of the following with equal probability $\frac{1}{2}$:

\begin{enumerate}
\item Send the first player a question $q$ chosen
  uniformly from
  $\{1, 2\}$, and send the second player the question $3$. Receive an answer $a
\in \Z_s$ from the first player and $(b_1, b_2) \in \Z_s^2$ from the second
player. Accept if $a = b_q$, and reject otherwise.
\item Perform the same as in item 1., but
  with the players interchanged.
\end{enumerate}

The analysis of this test is standard; see, e.g.~\cite[Lemma 28]{coladangelo2017verifier}.

\subsection{The generalized Magic Square}

In~\cite{ColadangeloS17MS} a generalized version of the Magic Square game~\cite{Arvind:02} is introduced and shown to robustly self-test generalized observables satisfying twisted commutation relations over $\Z_s$, for any integer $s$. 

\begin{theorem}[Theorem 5.9 in~\cite{ColadangeloS17MS}]\label{thm:ms-rigid}
Let $s$ be an integer and $\eps>0$. There exists a two-player self-test $\MS(X,Z)$, with completeness $1$ and robustness $\delta(\eps) = O(s^3\sqrt{\eps})$, for the (partial) strategy $S$ that uses observables $\sigma_{X}$ and $\sigma_{Z}$ on two special questions labeled $X$ and $Z$ respectively. The test has $O(1)$ questions per player (including two questions labeled $X$ and $Z$) and answers in $\Z_s^2$. Furthermore, there is a strategy that succeeds with probability $1$ using only $\sigma_X$, $\sigma_Y$ and $\sigma_Z$ observables on two $s$-dimensional qudits per player initialized in $\ket{\psi} = \ket{\epr_s}\otimes \ket{\epr_s}$. 
\end{theorem}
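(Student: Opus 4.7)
The plan is to build a $3 \times 3$ tableau whose cells contain carefully chosen products of generalized Pauli operators $\sigma_X(a)\sigma_Z(b)$ over $\Z_s$, generalizing the Mermin--Peres magic square from $\Z_2$ to $\Z_s$. The requirement is that the three operators in each row pairwise commute (similarly for columns), the product along every row equals $\Id$, and the product along every column equals $\omega_s \Id$. By placing the entries on two separate $s$-dimensional qudits per prover one can ensure the intra-row and intra-column commutation via disjoint tensor factors, while the twisted commutation $\sigma_X \sigma_Z = \omega_s \sigma_Z \sigma_X$ within a single qudit accounts for the $\omega_s$ discrepancy between row and column products -- exactly the quantum/classical gap that powers the original game. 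Two special questions, labelled $X$ and $Z$, would simply ask the prover to measure $\sigma_X$ or $\sigma_Z$; the test itself picks a random row for one prover and a random column for the other, checks the announced product condition, and verifies consistency at the intersection cell (and with the singleton $X$ and $Z$ responses). Completeness with the shared state $\ket{\epr_s}^{\otimes 2}$ is verifiable by direct computation.

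For soundness I would extract from any strategy with success probability $1-\eps$ a family of approximate algebraic relations on the players' observables. The intersection check, combined with Claim~\ref{claim:obs-meas-cons}, yields that the observables reported for each cell are approximately consistent across the two players, which lets one transfer relations between the two sides. The row- and column-product checks then force the extracted observables at the $X$ and $Z$ cells to approximately twist-commute, $XZ \approx_\delta \omega_s ZX$ on the shared state, with $\delta$ suffering a polynomial blow-up in $s$ from the observable-to-measurement conversions in Claim~\ref{claim:obs-meas-cons}. One can then apply Claim~\ref{claim:round_to_roots} to round the extracted $X$ and $Z$ to unitaries whose eigenvalues are exact $s$-th roots of unity, at only a multiplicative constant cost in $\delta$.

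The main obstacle is the rigidity step itself: upgrading approximate twisted commutation into closeness to the ideal strategy under a local isometry. The natural candidate isometry is a generalized SWAP built from the Fourier transform $F$ of~\eqref{eq:fourier-f} and controlled powers of the rounded $X$ and $Z$ observables, adjoining two ancilla qudits on each side. Verifying that this isometry maps the shared state to $\ket{\epr_s}^{\otimes 2}\otimes\ket{\aux}$ reduces to a stability statement for approximate representations of the Heisenberg--Weyl group over $\Z_s$: a pair of unitaries which approximately satisfies the defining relation of that group must be close, on the state, to an honest representation, which decomposes as copies of the unique faithful irreducible representation on $\C^s$. I expect the $O(s^3\sqrt{\eps})$ bound to emerge from combining the $s^2$ factor of Claim~\ref{claim:obs-meas-cons} with an additional $O(s)$ factor from summing the stability estimates over all $s^2$ Heisenberg--Weyl generators needed to pin down the EPR pair; tracking these constants honestly through the isometry argument, while keeping the dependence on $\eps$ at $\sqrt{\eps}$, is the most delicate part of the proof.
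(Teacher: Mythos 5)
This is a cited result (Theorem~5.9 of \cite{ColadangeloS17MS}); the paper contains no proof of its own, so there is strictly speaking nothing of the paper's to compare against. Your reconstruction does, however, track the broad architecture of the Coladangelo--Stark result: a $3\times 3$ magic square over $\Z_s$, row/column product and cross-checks at the intersection cell, extraction of approximate twisted commutation, rounding of eigenvalues to roots of unity, and a SWAP-type isometry to pin down $\ket{\epr_s}^{\otimes 2}$. The use of Claims~\ref{claim:obs-meas-cons} and~\ref{claim:round_to_roots} is apt \emph{within this paper's toolkit}, though of course these are not what \cite{ColadangeloS17MS} use. The actual proof there is phrased in the language of solution groups for linear constraint system games (Cleve--Mittal, Slofstra, Slofstra--Vidick): one shows that an $\eps$-optimal strategy yields an approximate representation of the solution group of the magic square LCS over $\Z_s$, and then applies a stability result for that group. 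Your ``stability for approximate representations of the Heisenberg--Weyl group'' is in the same spirit but is not the route taken; the solution-group framing is what makes the $s$-dependence of the robustness tractable.

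There is one concrete error in the constraint structure you propose. If \emph{every} row product is required to be $\Id$ and \emph{every} column product is required to be $\omega_s \Id$, then the total row constraint sums to $0 \bmod s$ while the total column constraint sums to $3 \bmod s$. The classical unsatisfiability that powers the game therefore holds only when $3 \not\equiv 0 \bmod s$; for $s$ divisible by $3$ the system is classically satisfiable (e.g.\ set the third row to all $1$'s, the rest to $0$), the game has a perfect classical strategy, and there is no rigidity at all. The standard generalization keeps eight of the nine line constraints at $0$ and places the nonzero constant (say $1 \bmod s$) on exactly one line, so the parity obstruction $0 \neq 1 \bmod s$ holds for every $s > 1$. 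You would need this corrected tableau before the completeness and soundness arguments can even start.

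Finally, your discussion of where the $O(s^3\sqrt{\eps})$ bound comes from is plausible heuristically (an $s^2$ from observable-to-measurement conversions, an extra $s$ from aggregating over generators), but this is the genuinely delicate part and the bound in \cite{ColadangeloS17MS} arises from a careful analysis of the LCS rigidity, not from a back-of-envelope product of local losses; claiming the exponent on $s$ without that analysis is the main remaining gap in the soundness portion of your sketch.
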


\subsection{The classical low-degree test}

A stepping stone in our analysis is an extension of the ``classical low-degree test'' from~\cite{Vidick13xor} to the case of only two provers. 

\begin{theorem}[Theorem 2 in \cite{NatarajanV17twoprover}]\label{thm:ml}
Let $\eps > 0$, $m,d$ integers, and $q$ a prime power such that $q \geq
(dm/\eps)^{c}$ for a universal constant $c\geq 1$.
There is a two-prover test, called the \emph{classical low-degree test} $\cld(m,d,q)$, in which queries to the provers are chosen among affine subspaces $s\subseteq \Fq^m$, and answers are polynomials $r$ on $s$ of total degree at most $d$, such that the following holds. For any strategy for the provers using entangled state $\ket{\psi}$ and projective measurements $\{M_s^r\}$ that succeeds with probability at least $1-\eps$ in the test there exists a POVM $\{S^g\}$, where $g$ ranges over the polynomials  on $\Fq^m$ of total degree at most $d$, and a $\delta = \poly(\eps)$ such that the following hold:
  \begin{enumerate}
  \item Approximate consistency with $M$: 
    \[ \Es{s}\, \sum_{g} \sum_{r \neq g|_s} \bra{\psi} M_s^r
      \ot S^g \ket{\psi} \,\leq\, \delta, \]
  \item Self-consistency: 
	$$\sum_g \bra{\psi}  S^g  \ot (\Id - S^g) \ket{\psi} \,\leq\, \delta.$$
  \end{enumerate}
\end{theorem}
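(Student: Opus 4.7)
The plan is to first fix a concrete form for the test $\cld(m,d,q)$ and then adapt, in a two-stage fashion, the classical analysis of the plane-vs-point low-degree test of Raz--Safra to the setting of two entangled provers. A natural choice is a \emph{subspace-vs-subspace} test: with some distribution over pairs $(s,s')$ of affine subspaces with $s'\subseteq s\subseteq\Fq^m$ (for instance, $s$ a plane and $s'$ a point, or $s$ of dimension $k$ and $s'$ a hyperplane in $s$), send $s$ to one prover and $s'$ to the other, receive degree-$d$ polynomials $r_s$ and $r_{s'}$, and accept iff $r_s|_{s'}=r_{s'}$. Completeness is immediate from any strategy where the provers share a classical polynomial $g$ and return its restriction.

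For soundness, Step 1 is to upgrade the guarantee $\Pr[\text{accept}]\geq 1-\eps$ to a \emph{cross-consistency} statement between the two provers. Using a triangle-inequality argument around the consistency check, deduce that when both provers are queried on the \emph{same} subspace $s$, the family $\{M_s^r\}$ on one side is $O(\eps)$-self-consistent with the corresponding family on the other side, in the state-dependent norm. Step 2 is to apply the classical plane-vs-point soundness of Raz--Safra (or the more robust Moshkovitz--Raz variant) to the ``classical shadow'' of either prover's strategy, namely the distribution over $(s,r)$ obtained by Born's rule on $\ket{\psi}$. This yields, for each prover, a distribution $\mu$ over degree-$d$ polynomials $g$ on $\Fq^m$ such that most of the measurement mass on subspace $s$ is carried by restrictions $g|_s$ with $g\sim \mu$.

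Step 3, which is the main technical step, is to promote this classical distributional statement to a coherent POVM $\{S^g\}$ on a single prover's register. Following the template of~\cite{Vidick13xor}, I would define $S^g$ by composing a local measurement $M_s^r$ on a random subspace $s$ of sufficiently high dimension with a classical post-processing that ``lifts'' $r$ to a global polynomial $g$ by successive interpolation across random affine hyperplanes, mirroring the inductive dimension-by-dimension reconstruction in the classical analysis. The coherence of this procedure is ensured because the post-processing is classical and applied outcome-wise, so the resulting $\{S^g\}$ forms a valid POVM on the same space. The quantitative parameters propagate through the induction because the classical analysis of~\cite{NatarajanV17twoprover} gives an error bound that is polynomial (rather than exponential) in $m$, provided $q\geq (dm/\eps)^c$ for a sufficiently large universal constant $c$.

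Step 4 is to verify the two conclusions. Approximate consistency with $\{M_s^r\}$ follows from the construction of $S^g$ together with the cross-consistency established in Step 1: restricting the lifted polynomial $g$ to $s$ recovers the outcome $r$ obtained from $M_s^r$, up to the error accumulated during the interpolation, which is again $\poly(\eps)$. Self-consistency of $\{S^g\}$ is obtained by observing that the same construction can be performed symmetrically on the two halves of $\ket{\psi}$, and that the two resulting POVMs must be close (in the state-dependent norm) because they are both close to the provers' original measurements, which are themselves cross-consistent. The hardest part of the argument is Step 3: pasting a globally-defined quantum measurement out of highly correlated but locally-measured data, while controlling the error across $m$ rounds of interpolation and simultaneously preserving self-consistency across the two prover registers. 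Every other step follows a well-trodden path, but this pasting is what distinguishes the entangled-prover analysis from its classical ancestor.
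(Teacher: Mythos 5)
This theorem is imported from \cite{NatarajanV17twoprover} (it is cited as ``Theorem 2'' of that reference) and is not proved in the present paper, so there is no in-paper proof to compare against; I evaluate the proposal on its own merits.

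There is a genuine gap in Steps~2--3, and it is precisely at the spot you flag as ``the hardest part.'' In Step~2 you propose to run the classical Raz--Safra analysis on the ``classical shadow'' of one prover's strategy, i.e.\ on the Born distribution over pairs $(s,r)$. This does not yield what is needed. The classical plane-vs-point theorem takes as input a fixed deterministic table $T\colon s \mapsto r$; the Born distribution is a distribution over single pairs and carries no well-defined joint distribution over the outcomes the prover ``would'' report on different subspaces $s\neq s'$, because those measurements do not commute. Even granting some resolution of this (a consensus table, or conditioning on the other prover), the classical theorem at best outputs a classical distribution over polynomials $g$. The theorem you must prove demands a POVM $\{S^g\}$ on the prover's Hilbert space, satisfying operator-level consistency relations against $\{M_s^r\}$ on the shared state $\ket{\psi}$; a classical distribution over $g$ is not an object of the right type and cannot certify those relations.

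Step~3 then asserts that $S^g$ can be obtained by applying ``a local measurement $M_s^r$ on a random subspace $s$ of sufficiently high dimension'' followed by classical post-processing that lifts $r$ to $g$ by interpolation. But in the plane-vs-point test the strategy only specifies $M_s^r$ for $s$ of dimension at most two; a single outcome $r$ on a plane does not determine a degree-$d$ polynomial on all of $\Fq^m$, and interpolating from several planes entails applying several non-commuting measurements, which is not an outcome-wise classical post-processing. The actual proof in \cite{NatarajanV17twoprover} (and the three-prover precursor \cite{Vidick13xor}) inductively \emph{constructs new quantum measurement operators} for subspaces of increasing dimension, with a ``self-improvement'' step at each round to keep the error from doubling; the conclusion $\{S^g\}$ is the measurement produced at the final round of this operator-level induction. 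Recasting that as classical post-processing of a single measurement outcome elides the step that makes the entangled-prover analysis substantively harder than the classical one.
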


We let $\pild$ denote the distribution on questions used by the verifier in the low-degree test from Theorem~\ref{thm:ml}. This distribution is symmetric, and we slightly abuse notation by also writing $\pild$ for either marginal. We will  use that the test from Theorem~\ref{thm:ml} that it satisfies the following properties:  
\begin{itemize}[nolistsep]
\item[(i)] $\pild$ is a uniform mixture of the uniform distribution on pairs $(s,w)$ such that $s$ is an affine subspace of dimension $2$ in $\Fq^m$ and $w\in s$ is a uniformly random point in $s$, and its permutation $(w,s)$.  
\item[(ii)] Whenever provers in the test are queried for a pair of subspaces
  $(s,w)$, they are required to return a polynomial $r$  defined on $s$ and a value $a$ in $\Fq$ such that $r(w)=a$. 
	\end{itemize}

Theorem~\ref{thm:ml} assumes that the strategy employed by the provers in the test is invariant under permutation of the two provers. It will be convenient to allow non-symmetric strategies as well. 

\begin{corollary}\label{cor:ml}
Let $m,d,q$ and $\eps$ be as in Theorem~\ref{thm:ml}. Let $\ket{\psi} \in \mH_A
\otimes \mH_B$ be a bipartite state, and $\{M_s^r\}$ and $\{\comp{M}_s^r\}$ be
POVMs on $\mH_A$ and $\mH_B$ respectively, such that the associated strategy for
the provers succeeds in the test $\cld(m,d,q)$ from Theorem~\ref{thm:ml} with probability at
least $1-\eps$. Then there exist POVMs $\{S^g\}$ and $\{\comp{S}^g\}$, where $g$
ranges over the polynomials  on $\Fq^m$ of total degree at most $d$, defined on
$\mH_A$ and $\mH_B$ respectively, and a $\delta = \poly(\eps)$ such that the
following relations hold, on average over $s\sim\pild$:
$$\sum_{g}\,M_s^{ g_{|s}} \ot \comp{S}^g \,\approx_\delta\,\Id\;,\qquad \sum_{g}\,S^g \ot \comp{M}_s^{ g_{|s}}\,\approx_\delta\,\Id\;,$$
and
$$ S^g  \ot \Id \,\approx_\delta \, \Id\otimes \comp{S}^g\;.$$
 \end{corollary}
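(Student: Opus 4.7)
The plan is to reduce to Theorem~\ref{thm:ml} by a standard symmetrization trick. I enlarge each prover's Hilbert space to $\tilde\mH = \mH_A\otimes\mH_B\otimes\C^2$ (the last factor being a ``role'' register $R$) and define the shared state
\[
\ket{\tilde\psi} \,=\, \frac{1}{\sqrt{2}}\Bigl(\ket{00}_{R_1 R_2}\,\ket{\psi}_{A_1 B_2}\,\ket{\chi}_{B_1 A_2} \,+\, \ket{11}_{R_1 R_2}\,\ket{\psi}_{A_2 B_1}\,\ket{\chi}_{B_2 A_1}\Bigr),
\]
where $A_i, B_i$ denote the copies of $\mH_A,\mH_B$ held by prover $i\in\{1,2\}$ and $\ket{\chi}$ is a fixed product state on the ``inactive'' registers. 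Both provers apply the common measurement $\tilde M_s^r = \proj{0}_R\otimes M_s^r\otimes\Id + \proj{1}_R\otimes\Id\otimes\comp{M}_s^r$. Since the $R$ registers are perfectly correlated, both provers always observe the same value $c\in\{0,1\}$: when $c=0$ they jointly realize the original strategy on $\ket{\psi}$, and when $c=1$ its prover-swap. As $\pild$ is symmetric by property (i), the symmetrized strategy succeeds in $\cld(m,d,q)$ with probability $1-\eps$.

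Applying Theorem~\ref{thm:ml} yields a POVM $\{\tilde S^g\}$ on $\tilde\mH$ satisfying $\tilde\delta$-approximate consistency with $\tilde M$ and $\tilde\delta$-self-consistency, with $\tilde\delta = \poly(\eps)$. Since $\tilde M$ is already block-diagonal in $R$ and the $R$ registers of $\ket{\tilde\psi}$ are perfectly correlated, I may replace $\tilde S^g$ with $\Pi_0\tilde S^g\Pi_0+\Pi_1\tilde S^g\Pi_1$ (with $\Pi_c = \proj{c}_R$) without worsening either guarantee. I then define
\[
S^g = \bra{0}_R\bra{\chi}_B\,\tilde S^g\,\ket{0}_R\ket{\chi}_B \in \Lin(\mH_A), \qquad \comp{S}^g = \bra{0}_R\bra{\chi}_A\,\tilde S^g\,\ket{0}_R\ket{\chi}_A \in \Lin(\mH_B);
\]
summing over $g$ and using $\sum_g\tilde S^g = \Id$ confirms each is a valid POVM.

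To verify the corollary's relations, I expand $\langle\tilde\psi|\tilde M_s^r\otimes\tilde S^g|\tilde\psi\rangle$ and $\langle\tilde\psi|\tilde S^g\otimes\tilde S^g|\tilde\psi\rangle$ in the $c$-register. The $R$-block-diagonal structure kills all cross terms between branches. Using the Schmidt decomposition of $\ket{\psi}$, the $c=0$ contribution of the first expression computes to $\tfrac12\langle\psi|M_s^r\otimes\comp{S}^g|\psi\rangle$, and that of the second to $\tfrac12\langle\psi|S^g\otimes\comp{S}^g|\psi\rangle$. By non-negativity of the $c=1$ contributions, the Theorem~\ref{thm:ml} bounds yield $\Es{s}\sum_g\sum_{r\neq g|_s}\langle\psi|M_s^r\otimes\comp{S}^g|\psi\rangle\leq 2\tilde\delta$ and $\sum_g\langle\psi|S^g\otimes(\Id-\comp{S}^g)|\psi\rangle\leq 2\tilde\delta$. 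Using the standard fact that $0\leq A\leq\Id$ implies $\|(\Id-A)\ket{\varphi}\|^2\leq\langle\varphi|\Id-A|\varphi\rangle$, these give the first and third relations of the corollary with $\delta = O(\tilde\delta)$. The second relation follows by a parallel argument applied to the $c=1$ extraction, combined with an additional symmetrization of $\tilde S^g$ under the combined swap of provers and $R$-registers, which stabilizes $\ket{\tilde\psi}$ (up to relabeling of the ``inactive'' registers) and identifies the $c=0$ and $c=1$ extractions. The main obstacle is precisely this last bookkeeping step---verifying that the branch-extractions can be identified without loss in parameters---but it only contributes a constant factor to $\delta$, yielding $\delta = \poly(\eps)$ as claimed.
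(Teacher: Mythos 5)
Your high-level approach --- symmetrize the strategy by adjoining a role ancilla, apply Theorem~\ref{thm:ml} to the symmetrized strategy to obtain a single POVM $\{\tilde S^g\}$, then extract $\{S^g\}$ and $\{\comp{S}^g\}$ by projecting onto the ancilla branches --- is exactly the one the paper uses. But your role registers are wired incorrectly: you set $R_1R_2$ to be \emph{correlated} ($\ket{00}$ and $\ket{11}$) while both provers apply the common $R$-controlled measurement $\tilde M_s^r = \proj{0}_R\otimes M_s^r\otimes\Id + \proj{1}_R\otimes\Id\otimes\comp{M}_s^r$. In the $\ket{00}$ branch both provers then measure their copy of $\mH_A$ with $M$; but the shared state places $\ket{\psi}$ on $A_1B_2$ and $\ket{\chi}$ on $B_1A_2$, so prover~2's measurement acts entirely on the inactive register $A_2$ and produces answers uncorrelated with the game state. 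This does not ``jointly realize the original strategy,'' and the symmetrized strategy does \emph{not} succeed with probability $1-\eps$; the $\ket{11}$ branch fails analogously. The fix is to anti-correlate the role registers ($\ket{01}$ and $\ket{10}$), so that in each branch one prover has $R=0$ and measures its active $\mH_A$ factor with $M$ while the other has $R=1$ and measures its active $\mH_B$ factor with $\comp{M}$ --- this is precisely what the paper's state $\frac{1}{\sqrt{2}}\bigl(\ket{\psi}_{AB}\ket{01}_{A'B'}+\ket{\psi'}_{AB}\ket{10}_{A'B'}\bigr)$ does.

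The same mix-up propagates to your extraction: you define both $S^g$ and $\comp{S}^g$ by projecting onto $\ket{0}_R$. With the corrected anti-correlated state, $R=0$ is the branch where a prover applies $M$ and $R=1$ is where it applies $\comp{M}$; hence $\comp{S}^g$ must be pulled out of the $R=1$ block, as the paper does with $\comp{S}^g = (\Id\otimes\bra{1}_{\reg{A'}})\hat S^g(\Id\otimes\ket{1}_{\reg{A'}})$. Your $\bra{0}_R\bra{\chi}_A\,\tilde S^g\,\ket{0}_R\ket{\chi}_A$ projects onto a branch in which the $\mH_B$ factor is the inactive one, so the required consistency with $\comp{M}$ would not follow from Theorem~\ref{thm:ml}. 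Once the state and the extraction are corrected, the remainder of your argument (the $R$-block-diagonal structure killing cross-terms, and reading off the three relations from the Theorem~\ref{thm:ml} bounds at a constant-factor loss) is sound and matches the paper's proof.
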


\begin{proof}
Extending $\reg{A}$ or $\reg{B}$ as needed, assume without loss of generality that $\mH_{\reg{A}}$ and $\mH_{\reg{B}}$ have the same dimension, and fix a canonical isomorphism between the two. Adjoin ancilla spaces $\mH_{\reg{A'}}$ and $\mH_{\reg{B'}}$, each isomorphic to $\C^2$. 
From an arbitrary strategy we can construct a symmetric one by letting 
$$\hat{M}_s^r \,=\, M_s^r \otimes \proj{0}_{\reg{A'}} + \comp{M}_s^r \otimes \proj{1}_{\reg{A'}}\;,$$
and 
$$\ket{\hat{\psi}} \,=\, \frac{1}{\sqrt{2}} \big( \ket{\psi}_{\reg{AB}} \otimes \ket{0}_{\reg{A'}} \otimes \ket{1}_{\reg{B'}} +\ket{\psi'}_{\reg{AB}} \otimes \ket{1}_{\reg{A'}} \otimes \ket{0}_{\reg{B'}} \big)\;,$$
where $\ket{\psi'}_{\reg{AB}}$ is obtained by swapping registers $\reg{A}$ and $\reg{B}$ in $\ket{\psi}_{\reg{AB}}$. Using that the test from Theorem~\ref{thm:ml} is symmetric, the success probability of this strategy is the same as that of the non-symmetric one. Applying Theorem~\ref{thm:ml} gives POVM $\{\hat{S}^g\}$ defined on $\reg{AA'}$ that are consistent with the $\{\hat{M}_s^r\}$, on the state $\ket{\hat{\psi}}$. It then suffices to define 
$$S^g \,=\, \big(\Id \otimes \bra{0}_{\reg{A'}}\big)\,S^g\,\big(\Id\otimes \ket{0}_{\reg{A'}}\big)\;,\qquad \comp{S}^g \,=\, \big(\Id \otimes \bra{1}_{\reg{A'}}\big)\,S^g\,\big(\Id\otimes \ket{1}_{\reg{A'}}\big)\;.$$
\end{proof}

The length of questions in the low-degree test $\cld(m,d,q)$ from Theorem~\ref{thm:ml} is $O(m\log q)$, which for a choice of $q=\poly\log(n)$ is logarithmic in $n$. However, answers have length $O(d^2 \log q)$, which is super-logarithmic. To achieve reduced answer length it is standard to compose the test with itself: any answer $r$ from a prover is interpreted as an $n'=O(d^2 \log q)$-long string of bits, that can be encoded as a multilinear polynomial over $\Fq^{m'}$, for $m'$ such that $2^{m'} \geq n'$. Questions in the composed test are a subspace $s\subseteq\Fq^m$, together with a subspace $s'\subseteq \Fq^{m'}$, and answers are the restriction to $s'$ of the low-degree encoding of the polynomial $r$ that the prover would answer to the question $s$. The analysis of the composition is standard, and we state the result as the following theorem. 

 \begin{theorem}\label{thm:2ml}
Let $\eps > 0$, $m,d$ be integers, and $q$ a prime power such that $q \geq
(dm/\eps)^{c}$ for a universal constant $c\geq 1$. Let $n' = O(d^2)$ be the answer length (in number of $\Fq$-symbols) in $\cld(m,d,q)$, and $m'=\log(n')=O(\log\log n)$.  
There is a two-prover test, called the \emph{composed classical low-degree test} $\cld^{(2)}(m,d,q)$, in which queries to the provers are chosen among, either pairs of affine subspaces $(s,s')\subseteq \Fq^m\times \Fq^{m'}$, or points in $\Fq^m$, and answers are, either multilinear polynomials $r'$ on $s'$, or values $a\in\Fq$, such that the following holds. For any strategy specified by a shared state $\ket{\psi} \in \mH_A
\otimes \mH_B$ and measurement operators $\{M_{s,s'}^{r'}\}$ and $\{\comp{M}_{s,s'}^{r'}\}$ on $\mH_A$ and $\mH_B$ respectively, such that the associated strategy succeeds in the test $\cld^{(2)}(m,d,q)$ with probability at least $1-\eps$, there exist POVM $\{S^g\}$ and $\{\comp{S}^g\}$, where $g$
ranges over the polynomials  on $\Fq^m$ of total degree at most $d$, defined on
$\mH_A$ and $\mH_B$ respectively, and a $\delta = \poly(\eps)$ such that the
following relations hold, on average over $s\sim\pild$:
$$\sum_{g}\Es{s'} \,M_{s,s'}^{ g_{|s,s'}} \ot \comp{S}^g \,\approx_\delta\,\Id\;,\qquad \sum_{g}\Es{s'}\,S^g \ot \comp{M}_{s,s'}^{ g_{|s,s'}}\,\approx_\delta\,\Id\;,$$
where the expectation is over an $s'$ as sampled in the test (conditioned on $s$), and $g_{|s,s'}$ denotes the polynomial on $s'$ obtained by restricting to $s'$ the low-degree extension of the description of the restriction $g_{|s}$ of $g$ to $s$. Furthermore,
$$ S^g  \ot \Id \,\approx_\delta \, \Id\otimes \comp{S}^g\;.$$
\end{theorem}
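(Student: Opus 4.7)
The plan is to prove Theorem~\ref{thm:2ml} by two successive applications of Corollary~\ref{cor:ml}, followed by a triangle-type argument in the state-dependent norm: first ``inwards'', to extract from the composed strategy a POVM-valued strategy for the outer test $\cld(m,d,q)$, and then ``outwards'', to apply Corollary~\ref{cor:ml} to that extracted strategy to obtain the global polynomial POVM $\{S^g\}$.

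\textbf{Inner step.} The composed test $\cld^{(2)}$ contains, with constant probability, an inner sub-test in which both provers receive questions $(s,s'_1)$ and $(s,s'_2)$ for a common outer subspace $s$ and independent inner subspaces. For each fixed $s$, the conditional measurements $\{M_{s,s'}^{r'}\},\{\comp{M}_{s,s'}^{r'}\}$ form a strategy for a multilinear low-degree test on $\Fq^{m'}$, i.e., $\cld(m',m',q)$. Since the overall composed test succeeds with probability $1-\eps$, the per-$s$ failure probability averaged over $s\sim\pild$ is $O(\eps)$; by Markov, a $1-\sqrt{\eps}$ fraction of $s$ yield an inner strategy succeeding with probability $\geq 1-\sqrt{\eps}$. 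Applying Corollary~\ref{cor:ml} on this ``good'' fraction (and defining $N_s$ arbitrarily off it) yields POVMs $\{N_s^r\},\{\comp{N}_s^r\}$ indexed by multilinear polynomials $r:\Fq^{m'}\to\Fq$ satisfying, on average over $s\sim\pild$,
\[
\sum_r \Es{s'}\, M_{s,s'}^{r|_{s'}} \ot \comp{N}_s^r \,\approx_{\delta_1}\, \Id,\qquad N_s^r \ot \Id \,\approx_{\delta_1}\, \Id \ot \comp{N}_s^r,
\]
together with the symmetric statement, where $\delta_1=\poly(\eps)$. Under the canonical identification of a multilinear polynomial $r$ on $\Fq^{m'}$ with the degree-$d$ polynomial $r_{\mathrm{out}}$ on $s$ it encodes via evaluations at $\{0,1\}^{m'}\subset\Fq^{m'}$, one reinterprets $\{N_s^{r_{\mathrm{out}}}\}$ as POVMs with outcomes in degree-$d$ polynomials on $s$.

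\textbf{Outer step and chaining.} Combining $\{N_s^{r_{\mathrm{out}}}\}$ with the original point measurements from the composed test gives a strategy for the outer test $\cld(m,d,q)$. Its success probability is $1-\poly(\eps)$: the outer consistency check $r_{\mathrm{out}}(w)=a$ follows by chaining the Step~1 consistency of $N_s$ with $M_{s,s'}$ and the outer sub-test of $\cld^{(2)}$ that directly relates decoded polynomials to point answers. Applying Corollary~\ref{cor:ml} to this outer strategy produces POVMs $\{S^g\},\{\comp{S}^g\}$ over degree-$d$ polynomials $g$ on $\Fq^m$, with $S^g\ot\Id\approx_{\delta_2}\Id\ot\comp{S}^g$ and $\sum_g N_s^{g_{|s}}\ot\comp{S}^g\approx_{\delta_2}\Id$ on average over $s$, at $\delta_2=\poly(\eps)$. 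A triangle-inequality chaining in the state-dependent norm combines this with the Step~1 bound to yield
\[
\sum_g \Es{s'}\, M_{s,s'}^{g_{|s,s'}} \ot \comp{S}^g \,\approx_\delta\, \Id
\]
with $\delta=\poly(\eps)$, and symmetrically.

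The principal obstacle is the per-$s$ invocation of Corollary~\ref{cor:ml} in the inner step: since the inner strategies only succeed in expectation over $s$, one must carefully handle the ``bad slice'' contribution to keep the final error polynomial in $\eps$. The argument sketched above (Markov on the slice error plus an arbitrary POVM on the bad slice) is standard but requires checking that the per-$s$ POVMs $\{N_s^r\}$ can be assembled into an $(s,r)$-indexed family without blowing up the state-dependent error. Once this is done, the outer application and the final chaining are routine manipulations using only the consistency properties supplied by Corollary~\ref{cor:ml}.
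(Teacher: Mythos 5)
The paper itself does not supply a proof of Theorem~\ref{thm:2ml}; it simply says the composition is ``standard'' and states the result. Your sketch (two invocations of Corollary~\ref{cor:ml}, Markov on the per-$s$ failure probability, then chaining the consistency bounds) reconstructs the expected route, and the overall architecture is right. However, the step you flag as the principal obstacle --- the Markov/bad-slice argument --- is actually the routine part, and the step you dismiss as ``routine manipulations'' is where the real work lives. Specifically, your outer step asserts that the derived strategy $(N_s, M_w)$ succeeds in $\cld(m,d,q)$ because of ``the outer sub-test of $\cld^{(2)}$ that directly relates decoded polynomials to point answers.'' No such direct sub-test can exist: the composed test only ever receives a restriction $r'$ of the inner encoding to a small subspace $s' \subseteq \Fq^{m'}$, so it never ``sees'' the decoded polynomial $r_{\mathrm{out}}$ and cannot compare it directly with the point answer $a$. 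What the composed test actually checks is a relation of the form $r'(w') = a$ for a \emph{specific} point $w' \in \Fq^{m'}$ determined by $w$ and the encoding --- and for this to be checkable, the test must choose $s'$ to pass through $w'$, which is a non-uniform choice of $s'$. This design detail is essential: for a uniformly random $s'$, the restriction $r'$ carries essentially no information about $r_{\mathrm{out}}(w)$, and the chaining
$N_s^r \approx M_{s,s'}^{r|_{s'}}$, combined with $r'(w') = a$, gives you $r(w') = a$, hence $r_{\mathrm{out}}(w) = a$, only when $w' \in s'$ is guaranteed.

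So the gap is concrete: you must specify that $\cld^{(2)}$ includes a sub-test where $s'$ is sampled conditioned to contain the encoding-point $w'$, and you must check that the inner application of Corollary~\ref{cor:ml} still applies under this biased sampling of $s'$ (or that its guarantee can be ported to the biased distribution). Once that is in place, the chaining is as you describe. Two further small points: (i) the inner POVM $\{N_s^r\}$ produced by Corollary~\ref{cor:ml} ranges over \emph{all} polynomials of total degree at most $m'$, not only multilinear ones, so the decoding map $r \mapsto r_{\mathrm{out}}$ should be defined by evaluation on $\{0,1\}^{m'}$ for arbitrary $r$ (which is fine, but should be said); (ii) the inner test in your notation should be $\cld(m', m', q)$ with the proviso that the distribution $\pild$ includes the biased $s'$ sampling, and this needs the condition $q \geq ((m')^2/\eps)^c$, which you should verify is implied by the hypotheses on $q$ and $m'$.
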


\section{The quantum low-degree test}
\label{sec:qld}

\subsection{Description of the test}
\label{sec:lowdeg-protocol}

\begin{figure}[htbp]
\rule[1ex]{16.5cm}{0.5pt}\\
Test~$\qld^{(l)}(m,d,q)$. $m,d$ are integer, and $q=p^t$ is a prime power such that $\Fq$ admits a self-dual basis $(b_1,\ldots,b_t)$ over $\Fp$. $l\in\{1,2\}$ is a parameter that indicates the level of the test.\\
The verifier performs the following with equal probability:
\begin{enumerate}
\item[(a)] Select $W\in\{X,Z\}$ uniformly at random and send $W$ to both provers. If $l=2$ execute the test $\cld^{(2)}(m,d,q)$ from Theorem~\ref{thm:2ml} with the provers. If $l=1$ execute the test $\cld(m,d,q)$ from Theorem~\ref{thm:ml}. Let $r$ be the polynomial returned by the first prover, and $r'$ by the second. If $W=X$, set $A=r$ and $\comp{A'}=-r'$. If $W=Z$, set $A=r$ and $\comp{A'}=r'$. Accept if and only if the pair of answers $(A,\comp{A'})$ would have been accepted in the classical test. 
\item[(b)]
Select $x,z\in\Fq^m$ and $u, u' \in \Fq$ uniformly at random,
and let  $a = \tr((u x_\bij) \cdot
  (u' z_\bij))  \in \Fp$.
	\begin{itemize}
	\item If $a = 0$, execute the self-test $\COM$ (see Theorem~\ref{thm:com_test}), replacing
  queries 1, 2, and 3 in the test  by $(X, x)$, $(Z,z)$, and $(x,z,uu')$ respectively, and
  in the case of queries 1 and 2,  replacing the prover's answer $b \in \Fq$ by
  $\tr(u b)$  or $\tr(u' b) \in \Fp$, respectively, before making the same decision as the verifier in the test.    
	\item If $a \neq
  0$, execute the self-test 
  $\MS$ (see Theorem~\ref{thm:ms-rigid}) with the following modification: the
  question labeled $X$ is replaced by the query $(X, x)$ 
  as in part (a), and the prover's answer $b\in \Fq$ is replaced by $\tr(u
  b)\in \Fp$; the question labeled $Z$ is replaced by the query 
  $(Z, z)$ as in part (a), and the prover's answer $b\in \Fq$ is replaced by
  $a^{-1}\tr(u' b)\in \Fp$. 
	\end{itemize}
\end{enumerate}
\rule[1ex]{16.5cm}{0.5pt}
\caption{The quantum low-degree test. $l\in\{1,2\}$ denotes the ``level'' of the test, before ($l=1$) or after ($l=2$) composition.}
\label{fig:protocol}
\end{figure}

We denote our quantum low-degree test by $\qld^{(l)}$, for $l\in\{1,2\}$. Here $l$ 
denotes the ``level'' of the test, before ($l=1$) or after ($l=2$) composition. In general we also write $\qld$ for the ``composed quantum low-degree test'' $\qld^{(2)}$, which is the variant of the test with reduced answer size, and is the variant that will be used in our applications. 
The test is described in \figref{protocol}. We show that the test is a self-test for the following class of Pauli strategies. To define the strategy, recall the definition of the POVM  $\{\qp_W^a\}$ in Section~\ref{sec:qauli}, defined for each $W\in \{X,Z\}$. For $s \subset \Fq^m$ either a point or a $2$-dimensional subspace, and $r$ a polynomial defined on $s$, define
\begin{equation}\label{eq:honest-m-def}
 \qp_{W,s}^r = \sum_{a\in \Fq^n:\, (g_a)_{|s}=r}\qp_W^a\;,
\end{equation}
where $g_{a}$ is defined in~\eqref{eq:def-ga}. Finally, for reasons that will become clear later, it is convenient to introduce
\begin{equation}\label{eq:honest-m-def-2}
\comp{\qp}_{X,s}^r = \qp_{X,s}^{-r}\qquad\text{and}\qquad\comp{\qp}_{Z,s}^{r} =
\qp_{Z,s}^r\;.
\end{equation} 

\begin{definition}\label{def:pauli-strategy}
Let $p$ be a prime, $t\geq 1$ an integer, and $q=p^t$. 
The low-degree Pauli strategy $S_{\textrm{P}}$ on $n$ qudits of local dimension $q$ is
the strategy $(\ket{\psi}, \cX, \cA, \cM)$ where $\ket{\psi} =
\ket{\epr_q}^{\otimes n}$, $\cX = \{X, Z\} \times  (\cX_1 \cup \cX_2)$, where $\cX_1 = \Fq^m$ and $\cX_2$ is the set of all two-dimensional subspaces of $\Fq^m$, $\cA = \cA_1 \cup \cA_2$, where $\cA_1 = \Fq$ and $\cA_2 = \deg_d(\Fq^2)$, and
$\cM = \cM_1 \cup \cM_2$, where $\cM_1 = \{\qp_{W,w}^a\}\times\{\comp{\qp}_{W,w}^a\}$
 and $\cM_2 =\{\qp_{W,s}^r\}\times\{\comp{\qp}_{W,s}^r\}$, with $\qp_{W,w}^a$,  $\qp_{W,s}^r$, and $\comp{\qp}_{W,w}^a$,  $\comp{\qp}_{W,s}^r$ defined as in~\eqref{eq:honest-m-def} and~\eqref{eq:honest-m-def-2} respectively.
\end{definition}

\begin{theorem}\label{thm:qld}
Let $n\geq 1$ be an integer. Let $h,m$ be integer such that $h^m \geq n$, and let $d=hm$. Let $q = p^t$ be a prime power such that $\Fq$ admits a
self-dual basis over $\Fp$. Then for any $\eps\geq 0$ the test $\qld^{(2)}(m, d, q)$ is a $2$-prover self-test for the low-degree Pauli
strategy $S_{\textrm{P}}$ on $n$ qudits of local dimension $q$ with completeness $1$ and robustness $\delta = \poly(\poly(p)\cdot \poly(\eps) +
\poly(d/q))$. Moreover, 
the test has questions of length $O(m \log q)$ and answers of length
$O(\log^2(d) \log (q))$.

\end{theorem}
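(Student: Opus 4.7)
My plan is to prove Theorem~\ref{thm:qld} by following the three-step roadmap sketched in Section~\ref{sec:techniques}: extracting $X$- and $Z$-basis observables from parts (a) of the test via the classical low-degree analysis, combining them into a single family that forms an approximate representation of the $n$-qudit Weyl--Heisenberg group using part (b), and finally invoking a known rigidity statement for that group to conclude the existence of an isometry onto the honest Pauli strategy on $\ket{\epr_q}^{\ot n}$.

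\textbf{Completeness.} The low-degree Pauli strategy from Definition~\ref{def:pauli-strategy} passes each subtest with probability $1$: for part (a), the measurements $\qp_{W,s}^r$ are consistent with $\qp_{W,w}^a$ by construction (since $g_a$ is a fixed $n$-variate interpolation of $a$, so its restriction to a plane $s$ and its value at $w\in s$ are automatically consistent), and for part (b), the observables $\qp_X(u x_\bij)$ and $\qp_Z(u' z_\bij)$ satisfy the twisted commutation relation~\eqref{eq:twisted-fq} with phase $\omega^{-\tr((u x_\bij)\cdot (u' z_\bij))} = \omega^{-a}$; for $a=0$ this is ordinary commutation (captured by $\COM$), and for $a\neq 0$ the renormalized outcomes $\tr(ub)$ and $a^{-1}\tr(u'b)$ anticommute exactly as in the magic-square observables, so $\MS$ succeeds.

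\textbf{Soundness -- two separate low-degree analyses.} Conditioning on the verifier running part (a) with $W=X$ or $W=Z$, the strategy restricted to these questions is a strategy for the composed classical low-degree test $\cld^{(2)}(m,d,q)$. Applying Theorem~\ref{thm:2ml} in each basis separately gives POVMs $\{S_X^g\}, \{\comp{S}_X^g\}$ and $\{S_Z^g\}, \{\comp{S}_Z^g\}$ indexed by degree-$d$ polynomials $g:\Fq^m\to\Fq$, that are approximately consistent with the provers' plane and point measurements on average over $s\sim\pild$. Defining generalized observables $X(x) = \sum_g \omega^{\tr(u g(x))}\, S_X^g$ (and similarly $Z(z)$), one obtains, using the linearity of $g\mapsto g(x)$ and Claim~\ref{claim:obs-meas-cons}, that these observables are well-defined up to $\delta$-error and satisfy approximate linearity $X(x)X(x')\approx X(x+x')$, $Z(z)Z(z')\approx Z(z+z')$ on the shared state. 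From part (b), the $\COM$ analysis (Theorem~\ref{thm:com_test}) yields approximate commutation of $X(x)$ and $Z(z)$ for those pairs where $\tr((u x_\bij)\cdot (u' z_\bij))=0$, while the $\MS$ analysis (Theorem~\ref{thm:ms-rigid}) yields approximate twisted commutation $X(x)Z(z) \approx \omega^{-\tr((u x_\bij)\cdot (u' z_\bij))} Z(z)X(x)$ for the remaining pairs. Averaged over $u,u'\in\Fq$, this certifies the right Pauli twisted commutation relation indexed by $x_\bij$ and $z_\bij$ for each pair $(x,z)\in\Fq^m\times \Fq^m$ sampled with positive probability.

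\textbf{Combining the two bases.} This is the main technical step and, in my view, the hardest part of the proof. The observables $X(x),Z(z)$ are indexed by only $q^m$ values each, yet I need a family $\tilde{X}(a),\tilde{Z}(b)$ indexed by all $a,b\in\Fq^n$ satisfying the full Weyl--Heisenberg relations. The plan is to follow the three-step construction described in Section~\ref{sec:techniques}: first, adjoin two ancillary maximally entangled systems $\reg{A'}\reg{A''}$ per prover initialized in $\ket{\epr_q}^{\ot n}$ each, and set $\hat{X}(x) = X(x)_\reg{A}\ot \qp_X(x_\bij)_{\reg{A'}}\ot \Id_{\reg{A''}}$ and similarly $\hat{Z}(z) = Z(z)_\reg{A}\ot \qp_Z(z_\bij)_{\reg{A'}}\ot \Id_{\reg{A''}}$; the ancilla Pauli cancels the phase in the twisted commutation, so $\hat{X}(x)$ and $\hat{Z}(z)$ approximately commute on the extended state. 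Second, use these as a commuting family on $2m$ variables (viewing $(x,z)\in\Fq^{2m}$) and interpret the provers' plane/point measurements from both bases as a strategy in a single classical low-degree test $\cld(2m,2d,q)$; applying Corollary~\ref{cor:ml} produces a single POVM $\{\hat{S}^G\}$ indexed by bivariate polynomials $G:\Fq^{2m}\to \Fq$. Third, reading off the coefficient of each monomial in $G$ gives joint outcomes $(a,b)\in\Fq^n\times \Fq^n$ (via the map $a\mapsto g_a$), and one undoes the ancilla by setting $\tilde{X}(a) = (\Id\ot\Id\ot \qp_X(a)_{\reg{A''}})\cdot \sum_G \omega^{\tr(a\cdot (\text{$X$-coeffs of }G))}\,\hat{S}^G$ and similarly for $\tilde{Z}(b)$. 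Using the classical low-degree analysis together with the self-consistency of $\{\hat{S}^G\}$, one verifies that the $\tilde X(a), \tilde Z(b)$ satisfy, on the extended state and for all $a,b\in\Fq^n$, the linearity relations $\tilde X(a)\tilde X(a')\approx\tilde X(a+a')$ and the twisted commutation relation $\tilde X(a)\tilde Z(b)\approx \omega^{-\tr(a\cdot b)}\tilde Z(b)\tilde X(a)$.

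\textbf{From an approximate Pauli representation to a maximally entangled state.} Once these approximate relations are in place the remainder of the argument follows exactly as in the linearity-based test of~\cite{NV17}: the operators $\{\tilde X(a)\tilde Z(b)\}$ form an $\eps'$-approximate representation of the $n$-qudit Weyl--Heisenberg group with $\eps'=\poly(\poly(p)\poly(\eps)+\poly(d/q))$, and the stability results for such representations, combined with the fact that the maximally entangled state is the unique joint eigenvector of the Pauli operators with eigenvalue $1$, give a local isometry mapping the shared state to $\ket{\epr_q}^{\ot n}\ot\ket{\aux}$ with error $\delta = \poly(\eps')$, and simultaneously mapping the provers' measurements onto those of the low-degree Pauli strategy. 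The bound on question length is immediate from the tests used (each query is a point or plane in $\Fq^m$ plus a basis label), and answer length $O(\log^2(d)\log q)$ comes from the composition in Theorem~\ref{thm:2ml}.
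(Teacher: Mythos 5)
Your high-level skeleton matches the paper's: analyze each basis with the entangled-prover low-degree test, adjoin maximally entangled ancillas to turn twisted commutation into ordinary commutation, merge the two bases through a further low-degree test, and finish with the stabilizer/representation argument. But three of your intermediate steps contain genuine gaps, and each one is precisely where the paper has to do something nontrivial.

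\textbf{The ``linearity in $x$'' claim is false.} You define $X(x) = \sum_g \omega^{\tr(u\,g(x))}S_X^g$ and assert $X(x)X(x')\approx X(x+x')$. For a degree-$d$ polynomial $g$, $g(x+x')\neq g(x)+g(x')$, so this fails badly. The only linearity that holds is in the ``weight'' index $u$ at a fixed point: $X_u(x_\bij)X_{u'}(x_\bij)=X_{u+u'}(x_\bij)$, because these are all functions of one projective measurement. The reason the paper works with the whole family $\{\hat X_u(x_\bij),\hat Z_v(z_\bij)\}_{u,v\in\Fq}$, rather than a single observable per point, is exactly to exploit this $u$-linearity.

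\textbf{You never explain how to merge the $2q$ approximately commuting observables at a fixed pair $(x,z)$ without picking up a $q$-dependent error.} After adjoining the ancillas you have $2q=\Theta(\poly\log n)$ approximately commuting observables $\hat X_u(x_\bij),\hat Z_v(z_\bij)$. Naively conjugating their eigenprojectors one after another (as in Lemma~\ref{lem:approx-commute-u} for a constant number of observables) would blow the error up polynomially in $q$, which is not allowed since $q$ grows with $n$. The paper side-steps this by observing that, interpreting $(u,v)\in\Fp^{2t}$ as queries, the family $\{\hat Q_{xu,zv}^c\}$ defined via Lemma~\ref{lem:approx-commute-u} passes the two-prover BLR-style linearity test over $\Fp^{2t}$, and then invoking the Hadamard self-test of~\cite{NV17} (Claim~\ref{claim:q-lin} and Lemma~\ref{lem:hatq-meas-two-outcome}). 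This is the step that keeps the error independent of $q$, and it is entirely absent from your argument.

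\textbf{You cannot run $\cld(2m,2d,q)$ with $\Fq^2$-valued point outcomes, and even if you could, the extracted polynomial need not factor.} The classical low-degree test expects a single $\Fq$-value at each point; the joint measurement you build produces a pair $(a,b)\in\Fq^2$. The paper handles this by introducing two extra dummy variables $(\alpha,\beta)$ and packing the outcome as $c=\alpha a+\beta b$, so the point measurement $\hat Q_{xz\alpha\beta}^c$ lives over $\Fq^{2m+2}$, not $\Fq^{2m}$ (Lemma~\ref{lem:hatq-meas}, Claim~\ref{claim:lines}). It then has to argue, using the Schwartz--Zippel lemma and consistency with the point measurement, that the recovered degree-$(d+1)$ polynomial $G$ must, with high probability, have the structured form $G(x,z,\alpha,\beta)=\alpha g_1(x)+\beta g_2(z)$ with $g_1$ depending only on $x$ and $g_2$ only on $z$ (proof of Lemma~\ref{lem:hat-s}). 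Without both the packing and the product-form argument, your step of ``reading off the $X$-coefficients of $G$'' is undefined, since a generic degree-$2d$ polynomial on $\Fq^{2m}$ has no canonical $x$--$z$ split. You would also need to construct the subspace measurements for the enlarged test; the paper does so as $T_{s',s''}^{p,q}=\hat M_{X,s'}^p\hat M_{Z,s''}^q\hat M_{X,s'}^p$ in Claim~\ref{claim:lines}, and this construction does not appear in your sketch either.

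Once these steps are repaired, the paper's final stage differs slightly from yours: since the $\tilde X_\ell(a),\tilde Z_{\ell'}(b)$ are built from a single projective measurement $\hat S^{g_1,g_2}$ tensored with honest Paulis, they satisfy the Weyl--Heisenberg relations \emph{exactly}, not approximately, so there is no need for a general stability theorem for approximate representations. One conjugates by a unitary and checks directly, via $H_X H_Z=\ket{\psi_\epr}\!\bra{\psi_\epr}$, that the shared state is close to EPR (Section~\ref{sec:finish_soundness}). Your appeal to approximate-representation stability would give a comparable bound, so I would view this last point as a permissible alternative rather than a gap; the three issues above, however, would need to be filled before the proof closes.
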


 Completeness of the test is shown  in \lemref{completeness} in
 Section~\ref{sec:lowdeg-completeness}. Soundness is shown in
 \lemref{soundness} in Section~\ref{sec:soundness}.

\begin{remark}
  In a typical application of the test $\qld^{(2)}$, the parameters
  are chosen such that $m = \Theta(\frac{\log n}{\log \log n})$ and $h =
  \Theta(\log n)$, resulting in $d = \Theta(\frac{\log^2(n)}{\log \log
    n})$. Further, we chose $p$ to be constant and $q = \Theta(\frac{\log^2(n)}{\log \log
    n})$ such that $d/q$ is a small constant. This results in a
  question length that is $O(\log n)$ and an answer length that is
  $\poly(\log\log n)$.
\end{remark}

\subsection{Completeness}
\label{sec:lowdeg-completeness}

The proof of the following lemma specifies the ``honest'' strategy that is expected of the provers in the quantum low-degree test. 

\begin{lemma}[Completeness]\label{lem:completeness}
For $m,d,q$ as in Theorem~\ref{thm:qld} the strategy $S_{\textrm{P}}$ introduced in Definition~\ref{def:pauli-strategy} can be extended to a strategy that succeeds with probability $1$ in the test $\qld(m,d,q)$. 
\end{lemma}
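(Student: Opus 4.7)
The plan is to specify an extension of the partial Pauli strategy $S_P$ of Definition~\ref{def:pauli-strategy} to every question that may appear in $\qld^{(2)}(m,d,q)$, and then verify that each branch of the verifier's decision tree accepts with probability $1$.

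For part (a), both provers measure $\ket{\epr_q}^{\otimes n}$ in the common eigenbasis of $\qp_W$, obtain outcomes $a, a' \in \Fq^n$, and play the composed classical low-degree test $\cld^{(2)}$ honestly on the polynomial $g_a$ (resp. $g_{a'}$) given by~\eqref{eq:def-ga}. A direct expansion shows that $\ket{\epr_q} = \tfrac{1}{\sqrt{q}}\sum_{e\in\Fq}\ket{e_X}\otimes\ket{(-e)_X}$ (the off-diagonal terms vanish because $\sum_{i\in\Fq}\omega^{\tr((e+e')i)} = 0$ whenever $e+e'\neq 0$), so measuring in the $Z$ basis yields $a' = a$ while measuring in the $X$ basis yields $a' = -a$. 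In the latter case, linearity of $a \mapsto g_a$ gives $g_{a'} = -g_a$, and the sign flip $\comp{A'} = -r'$ built into the $X$ branch of part (a) then makes the provers' effective polynomials agree. Completeness of $\cld^{(2)}$ (Theorem~\ref{thm:2ml}) then gives acceptance probability $1$.

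For part (b), the key calculation is to identify the effective $\Fp$-valued observable that the Pauli strategy implements when responding to a query $(X, x)$ or $(Z, z)$ with the prescribed answer rescaling $b \mapsto \tr(u b)$ (resp.\ $\tr(u' b)$, $\alpha^{-1}\tr(u' b)$). Using $g_a(x) = a \cdot x_\bij$ together with $\qp_W(v) = \sum_{a\in\Fq^n} \omega^{\tr(v\cdot a)} \qp_W^a$, one obtains
\[
\sum_{a\in\Fq^n} \omega^{\tr(u\, g_a(x))}\, \qp_X^a \;=\; \qp_X(u x_\bij),
\]
and analogously $\qp_Z(u' z_\bij)$ in the COM branch and $\qp_Z(\alpha^{-1} u' z_\bij)$ in the MS branch, where $\alpha = \tr((u x_\bij) \cdot (u' z_\bij))$. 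The twisted commutation relation~\eqref{eq:twisted-fq} then gives
\[
\qp_X(u x_\bij)\, \qp_Z(u' z_\bij) \;=\; \omega^{-\alpha}\, \qp_Z(u' z_\bij)\, \qp_X(u x_\bij),
\]
so the two observables commute exactly when $\alpha = 0$, while the $\alpha^{-1}\in\Fp$ rescaling applied in the MS branch converts the twisting factor to $\omega^{-1}$, matching the requirement of Theorem~\ref{thm:ms-rigid} with $s=p$.

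With the correct (twisted) commutation in hand, I close out by invoking the completeness parts of Theorem~\ref{thm:com_test} and Theorem~\ref{thm:ms-rigid}: both provide an explicit perfect strategy given any two observables satisfying the required algebraic relation on a maximally entangled state. The shared state $\ket{\epr_q}^{\otimes n}$ plays the role of the ambient maximally entangled state (of dimension $q^n = p^{tn}$, which is more than enough to embed the two-qudit-pair structure of the $\MS$ completeness strategy), and the auxiliary observables demanded by $\COM$ and $\MS$ can be synthesised from further $\qp_X$ and $\qp_Z$ observables with outcome post-processing analogous to $\tr(u\,\cdot)$, using the self-dual basis identity $\tr(b_i b_j) = \delta_{ij}$ from~\eqref{eq:self-dual} together with the tensor decomposition of the Fourier transform $F$ given in Section~\ref{sec:qauli}. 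The main (though routine) obstacle is precisely this last bookkeeping step, namely translating our $\Fq$-valued generalized Pauli measurements into the $\Fp$-valued per-qupit strategy expected by the completeness clauses of $\COM$ and $\MS$ without breaking the algebraic identities just verified.
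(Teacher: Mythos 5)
Your treatment of part (a) and the core algebra of part (b) matches the paper's proof in substance: the identification
\[
\sum_{a\in\Fq^n} \omega^{\tr(u\,g_a(x))}\,\qp_X^a \;=\; \qp_X(u\,x_\bij)
\]
and the derived twisted commutation relation are exactly the calculations in the paper's eq.~\eqref{eq:def-local-obs}, and your explicit $X$-basis expansion of $\ket{\epr_q}$ is equivalent to the paper's stabilizer-based argument for~\eqref{eq:epr_stab-0}.

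The gap is in the final step for the $\MS$ sub-case. Completeness of the generalized Magic Square test requires, on top of the prescribed pair $(X_u(x_\bij),Z_{u'}(z_\bij))$, a \emph{second} pair of observables satisfying the same twisted commutation relation and commuting with the first pair (Theorem~\ref{thm:ms-rigid} specifies the perfect strategy on \emph{two} EPR pairs per player). The paper's proof handles this by taking $\ket{\psi_\epr}=\bigotimes_{j=1}^{n+1}\ket{\epr_q}$ — i.e.\ it adjoins an $(n+1)$-th EPR pair on which the second pair $(X',Z')$ acts. This is exactly what the lemma's phrasing ``can be extended to a strategy'' allows. You instead insist on working on $\ket{\epr_q}^{\otimes n}$ alone and assert that the auxiliary observables ``can be synthesised from further $\qp_X$ and $\qp_Z$ observables with outcome post-processing,'' labelling this ``routine.'' It is not: you would need to exhibit $a',b'\in\Fq^n$ with $\tr(a'\cdot u'z_\bij)=0$, $\tr(ux_\bij\cdot b')=0$, and $\tr(a'\cdot b')\ne 0$, so that $\qp_X(a'),\qp_Z(b')$ commute with the first pair but twisted-anticommute with each other. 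Such $a',b'$ do exist (use that the $\MS$ branch is only entered when $\tr((ux_\bij)\cdot(u'z_\bij))\ne 0$, so the $\Fp$-span of $ux_\bij$ is not contained in $(u'z_\bij)^{\perp}$, and a rank argument on the form $\tr(\cdot\,\cdot)$ over $\Fp^{tn}$ gives the required pair provided $tn\ge 2$), but this is a genuine argument that must be spelled out, and it is the crux of the $\MS$ completeness claim. The self-dual-basis/Fourier-decomposition facts you cite do not by themselves produce the commuting second pair. The paper's ancilla trick avoids all of this and is the intended route; if you prefer to avoid the ancilla you must supply the existence argument above.
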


\begin{proof}
Let 
$$\ket{\psi_\epr} \,=\, \bigotimes_{j=1}^{n+1} \ket{\epr_q}\;,$$
where $\ket{\epr_q}$ is defined in~\eqref{eq:def-epr-p}. 
We first describe a strategy for the players assuming questions in part (a) of the test come from $\cld$, instead of the composed test $\cld^{(2)}$. Once a strategy for the former has been defined it is straightforward to adapt it to a strategy for the latter; this only requires classical post-processing. 

To define the strategy
 we use the generalized Pauli operators and
projections defined in Section~\ref{sec:qauli}. When queried for a subspace $s\subseteq
\Fp^m$ in a basis $W\in\{X,Z\}$, the prover measures the first $n$ qudits using
the projective measurement $\{\qp_W^a\}$ and returns the polynomial $(g_a)_{|s}$; this corresponds to the POVM described in~\eqref{eq:honest-m-def}. 

To see that these measurements define a strategy which succeeds with probability
$1$ in part (a) of the test, note that the state $\ket{\epr_q}$ is stabilized by
$\qp_X(a) \ot \qp_X(a)$ and $\qp_Z(b) \ot
\qp_Z(-b)$ for any $a, b \in \Fq$. Hence, if both provers measure the
state $\ket{\epr_q}$ in the $X$
eigenbasis, and the first prover obtains an outcome $a\in\Fq$, the second prover will obtain the outcome $-a$; if they 
 measure in the $Z$ eigenbasis, they will both always obtain the same
outcome. As a consequence, the following consistency relations hold for any $s$:
\begin{equation}
\begin{aligned}
  \sum_{r\in \deg_d(s)} \,\qp_{X,s}^r \ot \qp_{X,s}^{-r} \,\ket{\psi_\epr} &= \ket{\psi_\epr}\;,
  \\
  \sum_{r\in\deg_d(s)} \,\qp_{Z,s}^r \ot \qp_{Z,s}^{r} \,\ket{\psi_\epr} &= \ket{\psi_\epr}\;.
\end{aligned}
\label{eq:epr_stab-0}
\end{equation}
Thus whenever $W=X$ is selected in part (a) of the low-degree test the first prover's answers are consistent
with the negation of the second prover's, as the verifier expects; in case $W=Z$ both provers' answers are consistent. 

Using the notation introduced in~\eqref{eq:honest-m-def-2}, the
consistency relations~\eqref{eq:epr_stab-0} become
\begin{equation}
\begin{aligned}
  \sum_q \qp_{W,s}^q \ot \comp{\qp}_{W,s}^q \ket{\psi_\epr} &=
  \ket{\psi_\epr}\;, \\
\end{aligned}
\label{eq:epr_stab}
\end{equation}
for any $W\in\{X,Z\}$.

To show completeness in part (b) of the
test we introduce a family of generalized observables associated with the measurement performed by a prover in part (a) of the test when it is queried for a value at a single point $w\in\Fq^m$. The prover's answer in this case is a
value in $\Fq$. To the provers' strategy for determining his answer we introduce a family of $q$
observables over $\Fp$, indexed by $u\in \Fq$, each of which is associated with the value $\tr(ub)\in \Fp$, where $b\in\Fq$ is the answer obtained by the prover. 
 We denote the corresponding for query $(W,w)$ by
$W_u(w_\bij)$:
\begin{align}
    W_u(w_\bij) &= \sum_{s \in \Fq^n}\, \omega^{ \tr(g_s(w) u)} \,\qp_W^a \label{eq:def-local-obs}\\
              &= \sum_{s \in \Fq^n} \,\omega^{\tr(s\cdot (u w_\bij))} \,\qp_W^a \notag\\
              &= \qp_W(u w_\bij)\;.\notag
\end{align}
From this expression it is clear that for any $x,z\in\Fq^m$, $u,u'\in\Fq$, and $a=\tr((u x_\bij) \cdot
  (u' z_\bij))$,
 $$X_u(x_\bij) Z_{u'}(a^{-1} z_\bij) =
\omega^{-\tr(a^{-1} (u x_\bij)(u' z_\bij))} Z_{u'}(a^{-1} z_\bij) X_u(x_\bij) =
\omega^{-1} Z_{u'}(a^{-1} z_\bij) X_u(x_\bij)\;.$$

 Hence, the measurement operators
corresponding to the questions labeled $X$ and $Z$ in part (b) of the test satisfy the
required twisted commutation relation. It is then straightforward that each prover can implement a strategy that succeeds in part (b) of the test. In case the test $\COM$ is executed this is immediate; in case it is $\MS$ the provers may use the $(n+1)$-th qudit and a second pair of observables $(X',Z')$ satisfying the same twisted commutation relation, so that $(X,Z)$ and $(X',Z')$ together form a strategy which succeeds with probability $1$ in the test $\MS$.

%
\end{proof}

\section{Soundness analysis}
\label{sec:soundness}

\begin{lemma}[Soundness]\label{lem:soundness}
Let $n\geq 1$ be an integer and $m,h,d,$ and $q=p^t$ as in Theorem~\ref{thm:qld}. Let $\eps\geq 0$. Suppose a strategy using state
$\ket{\psi}_{\reg{AB}} \in \mH_{\reg{A}} \otimes \mH_{\reg{B}}$ and projective
measurements $\{M_{W,s,s'}^r\}$ and $\{M_{W,w}^a\}$ succeeds in test $\qld(m,d)$ with probability at
least $1-\eps$. Then there is a $\delta = \poly(\poly(p)\cdot \poly(\eps) +
\poly(d/q))$, isometries $V_D: 
\mH_{\reg{D}} \to (\C^q)^{\otimes n}_{\reg{D'}}\otimes \mH_{\reg{D''}}$ for
$D\in \{A,B\}$, and a state $\ket{\aux}\in\mH_{\reg{A''}} \otimes \mH_{\reg{B''}}$  such that  
$$ \big\| V_A \otimes V_B\ket{\psi} -  \ket{\epr_q}^{\otimes n} \ket{\aux} \big\|^2 \,\leq\, \delta\;,$$
and for all $W\in \{X,Z\}$, 
\[
\Es{w\in\Fq^m} \,\sum_{a\in \Fq}\, \big\|(V_A \ot V_B)(M_{W,w}^a \ot \Id) \ket{\psi} -
  (\qp_{W,w}^a \ot \Id)\ket{\epr_q}^{\otimes n} \ket{\aux} \big\|^2 \, \leq \,
  \delta\;.
\]
Moreover, an analogous relation holds for the second prover's operators.
\end{lemma}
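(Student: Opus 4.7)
The plan is to combine soundness of the classical low-degree test in each basis with the self-tests in part (b) to eventually access the exponentially large family of Pauli-like observables $\qp_X(a), \qp_Z(b)$ for arbitrary $a,b \in \Fq^n$, and then invoke a stability result for approximate representations of the $n$-qudit Heisenberg group (as in~\cite{NV17}) to produce the required isometries and the uniqueness of the maximally entangled stabilizer state.

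First, I would apply the composed low-degree test soundness (Theorem~\ref{thm:2ml}) separately to the $X$- and $Z$-basis instantiations of part (a). For each $W\in\{X,Z\}$, this yields POVMs $\{S_W^g\}, \{\comp{S}_W^g\}$ indexed by degree-$d$ polynomials $g$ on $\Fq^m$, approximately consistent, up to error $\poly(\eps)$, with the provers' point and subspace measurements. Writing the point-measurement observables as $W_u(w_\bij)$ in the sense of~\eqref{eq:def-local-obs}, the extracted low-degree POVMs let me conclude that these behave additively: $W_u(w_\bij) W_{u'}(w_\bij) \approx W_{u+u'}(w_\bij)$ on average over $w\in\Fq^m$. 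In parallel, from part~(b) I would extract the twisted commutation relation $X_u(x_\bij) Z_{u'}(z_\bij) \approx \omega^{-\tr((u x_\bij)\cdot(u' z_\bij))} Z_{u'}(z_\bij) X_u(x_\bij)$ on average over $(x,z,u,u')$, using Theorem~\ref{thm:com_test} when $\tr((u x_\bij)\cdot(u' z_\bij))$ vanishes and Theorem~\ref{thm:ms-rigid} (with $s=p$) otherwise, absorbing the multiplicative $\poly(p)\cdot\poly(\eps)$ loss from the Magic Square soundness.

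At this stage the observables $X(x), Z(z)$ are only controlled for $x,z$ in the pseudo-random subset $\{u w_\bij : u\in\Fq, w\in\Fq^m\}$ of $\Fq^n$, a set of size $\poly(n)$ rather than $q^n$. The main obstacle, which I expect to be the hardest step, is to upgrade the approximate relations to the whole exponentially large family of observables needed to speak of an approximate representation of the Weyl--Heisenberg group on $\Fq^n$. I would follow the three-step strategy sketched in Section~\ref{sec:techniques}. Step~(i): adjoin two ancilla registers $\reg{A'}, \reg{A''}$ each initialized in $\ket{\epr_q}^{\otimes n}$, and define $\hat{X}(x) = X(x)_{\reg{A}} \otimes \qp_X(x)_{\reg{A'}} \otimes \Id_{\reg{A''}}$ and $\hat{Z}(z)$ analogously on $\reg{A''}$; the honest Pauli twisted commutation on the ancillas cancels the approximate twisted commutation of the tested observables, so that $\hat{X}(x), \hat{Z}(z)$ approximately commute. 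Step~(ii): use the commuting pair $(\hat{X}, \hat{Z})$ to synthesize a strategy for a classical low-degree test over the $2m$-variate polynomial ring $\Fq^{2m}$, with half the variables encoding $X$-queries and half $Z$-queries; applying Corollary~\ref{cor:ml} produces a single POVM $\{\hat{S}^g\}$ whose outcomes are degree-$d$ polynomials on $\Fq^{2m}$, and whose existence implicitly controls all pairs of $\hat{X}(x), \hat{Z}(z)$. Step~(iii): peel off the ancilla by setting $\tilde{X}(x) = (\Id\otimes\Id\otimes\qp_X(x)_{\reg{A''}})\,\hat{X}(x)$ and $\tilde{Z}(z)$ analogously; using the EPR-transpose identity on the maximally entangled ancillas, the $\qp$-factor on $\reg{A''}$ is undone, and a direct computation combined with the global POVM from step~(ii) shows that the $\tilde{X}(x), \tilde{Z}(z)$, now defined for all $x, z\in\Fq^n$, approximately satisfy the full Pauli twisted commutation relations.

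Finally, with $\{\tilde{X}(x), \tilde{Z}(z)\}_{x,z\in\Fq^n}$ constituting an approximate representation of the $n$-qudit Weyl--Heisenberg group over $\Fq$, I would invoke the stability argument of~\cite{NV17}, adapted to qudits of prime-power dimension via the self-dual basis decomposition~\eqref{eq:pauli-l}. This produces local isometries $V_A, V_B$ and an auxiliary state $\ket{\aux}$ under which the $\tilde{X}, \tilde{Z}$ conjugate to honest $\qp_X, \qp_Z$ on a factor isomorphic to $(\C^q)^{\otimes n}$; since $\ket{\epr_q}^{\otimes n}$ is the unique joint $+1$ eigenstate across all stabilizers $\qp_W(x)\otimes \comp{\qp}_W(x)$, the tested factor of the state must be close to $\ket{\epr_q}^{\otimes n}$. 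Pulling back the approximate consistency of $M_{W,w}^a$ with $\qp_{W,w}^a$ through $V_A \otimes V_B$ and applying Claim~\ref{claim:obs-meas-cons} to convert observable closeness into POVM closeness yields the second displayed bound. The three dominant error contributions I would track are the classical low-degree error $\poly(d/q)$, the $\poly(p)\cdot\poly(\eps)$ loss from the generalized Magic Square, and the error accumulation through step~(iii), which is the technical heart of the argument.
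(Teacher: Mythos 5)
Your proposal tracks the paper's proof outline faithfully at a high level --- apply the classical low-degree soundness independently in each basis, extract twisted commutation from part (b), adjoin EPR ancillas to build commuting $\hat{X}$ and $\hat{Z}$, run a combined low-degree test, peel off the ancilla, and finish with a representation-theoretic stability argument. This is exactly the route in Section~\ref{sec:soundness}. However, two technical steps that you gloss over are in fact the crux of the argument, and a careless treatment of them would make the proof fail.

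First, in your step~(ii) you assert that the approximately commuting observables $\hat{X}_u(x_\pi)$ and $\hat{Z}_v(z_\pi)$ can be fed into a classical low-degree test over $\Fq^{2m}$. But to build the ``point'' measurements for that test you must first combine the $2q$ observables $\{\hat{X}_u(x_\pi), \hat{Z}_v(z_\pi)\}_{u,v\in\Fq}$ at a fixed pair $(x,z)$ into a single joint POVM, and doing this naively by chaining Lemma~\ref{lem:approx-commute-u} pairwise incurs a loss that grows with $q$, which is superconstant. The paper avoids this by interpreting the family $\{\hat{Q}_{xu,zv}^c\}$, viewed as a function of $(u,v)\in\Fp^{2t}$, as a strategy for the two-prover linearity test of~\cite{NV17} (Claim~\ref{claim:q-lin}), whose soundness produces the joint POVM $\{\hat{Q}_{x,z}^{a,b}\}$ with $q$-independent error. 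Relatedly, the combined test in the paper lives over $\Fq^{2m+2}$, not $\Fq^{2m}$: the two extra scalar variables $\alpha,\beta$ are needed so that each point query returns a single $\Fq$-valued answer $\alpha a + \beta b$ rather than a pair, which is what the classical low-degree test format requires, and they are also what enables the later argument (decay of $g\notin\mathcal{G}$) that the recovered global polynomial splits as $\alpha g_1(x)+\beta g_2(z)$.

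Second, your step~(iii) defines $\tilde{X}(x) = (\Id\otimes\Id\otimes\qp_X(x)_{\reg{A''}})\,\hat{X}(x)$. But $\hat{X}(x)$ is only defined for $x$ in the $\poly(n)$-size set of queried points $\{u w_\pi\}$, so this formula cannot yield observables for all $x\in\Fq^n$ as you claim, and the resulting $\tilde{X},\tilde{Z}$ would only \emph{approximately} commute in the twisted sense, which reintroduces the error-accumulation problem across $q^{2n}$ pairs that the whole construction is designed to avoid. The paper's actual definition~\eqref{eq:def-tilde-xz} instead builds $\tilde{X}_\ell(a)$ directly from the global projective POVM $\hat{S}^{g_1,g_2}$ tensored with the honest Pauli $\sigma_{X,\ell}(a)^\dag$ on $\reg{A''}$: the fact that all $\hat{S}^{g_1,g_2}$ commute (being elements of one projective measurement) combined with the exact Pauli relations on $\reg{A''}$ makes the $\tilde{X},\tilde{Z}$ satisfy the Weyl--Heisenberg relations \emph{exactly} for all $a,b\in\Fq^n$ (Lemma~\ref{lem:xz-lowdeg}, item 1). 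This exactness is what lets the final stabilizer argument in Section~\ref{sec:finish_soundness} go through without any further appeal to a stability theorem; the closeness to $\ket{\epr_q}^{\otimes n}$ is extracted from the single approximate-stabilization bound~\eqref{eq:tilde-x-con} together with the algebraic identity $H_X H_Z = \proj{\psi_\epr}$.
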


The outline for the proof of Lemma~\ref{lem:soundness} is as follows: 
\begin{enumerate}
\item In Section~\ref{sec:strat}, we describe the conditions satisfied by any strategy
  that succeeds with high probability in the test.
\item In Section~\ref{sec:hatx}, we adjoin an ancilla to each of the provers' private registers, and define a set of approximately commuting
  ``points'' observables $\hat{X}_u(x_\bij)$ and $\hat{Z}_u(z_\bij)$, for each $u\in\Fq$, that act on the
  original shared state tensored with a maximally entangled state on the ancilla.
\item In Section~\ref{sec:hatq}, we construct a family of joint measurements
  $\{\hat{Q}_{x,z}^{c}\}$ indexed by a pair of values $x, z \in \Fq^m$ and with
  outcomes $c \in \Fq$, obtained as a common refinement of the $2q$ approximately commuting observables
  $\hat{X}_{u}(x_\bij)$ and $\hat{Z}_v(z_\bij)$, for all $u,v\in\Fq$, defined in the previous step. Joint measurability is proved by showing that the observables satisfy approximate linearity relations, in the sense of implying a successful strategy in the two-prover linearity test from~\cite{NV17}.  
\item From these joint measurements we define a strategy for the test $\cld(2m+2,d+1,q)$ over $\Fq^m\times\Fq^m \times \Fq^2$, denoted by $\{\hat{Q}_{s}^r\}$. Applying Theorem~\ref{thm:2ml}, we deduce a single low-degree measurement $\{\hat{S}^g\}$.
\item We argue that $g$ must take the form $g(x,z,\alpha,\beta)=\alpha g_1(x) + \beta g_2(z)$ for low-degree polynomials $g_1,g_2:\Fq^m\to \Fq$. This allows us to recover commuting low-degree measurements $\{\hat{S}_X^{g_1}\}$ and $\{\hat{S}_Z^{g_2}\}$.
\item In Section~\ref{sec:tildex} we use the low-degree measurements obtained in the previous step to recover observables
  $\tilde{X}_\ell(x)$ and $\tilde{Z}_{\ell'}(z)$ defined for all points $x,z\in\Fp^n$ and indices $\ell \in \{1,\ldots,t\}$. By construction these operators exactly satisfy
  the same twisted commutation relations as the ``honest'' generalized observables $\qp_X(b_\ell x)$ and $\qp_Z(b_{\ell'} z)$  (recall that $\{b_1,\ldots,b_t\}$ is a self-dual basis of $\Fq$ over $\Fp$); moreover, we
  show that they are consistent with the provers' original observables $X_{b_\ell} ,Z_{b_{\ell'}}$ at points
  of the form $x_\bij,z_\bij$. 
\item Finally, in Section~\ref{sec:finish_soundness}, we use that $\tilde{X}_\ell(x)\otimes \tilde{X}_\ell(x)$ and $\tilde{Z}_{\ell'}(z) \otimes \tilde{Z}_{\ell'}(z)$ approximately stabilize $\ket{\psi}$ to conclude that the
  provers' shared state is close to the target state $\ket{\epr_q}^{\otimes n}$ (under the action of the appropriate isometry).
\end{enumerate} 

\subsection{Arbitrary strategies in the test $\qld$}
\label{sec:strat}

We start with the following preliminary claim, which establishes basic properties of successful strategies in the test $\qld(m,d,q)$.

\begin{claim}\label{claim:strategies}
Let $m,d,q=p^t$, $\eps$, $\ket{\psi}$ and $\{M_{W,s,s'}^r\}$ be as in
Lemma~\ref{lem:soundness}. There exists $\delta_M = \poly(p)\cdot\poly(\eps)$ such that
the following hold. For $W\in\{X,Z\}$ and $s\subseteq \Fq^m$ 
there exist projective measurements $\{{M}_{W,s}^{r}\}_{r\in\deg_d(s)}$ and $\{\comp{M}_{W,s}^{r}\}_{r\in\deg_d(s)}$ such that, on average
over $s\sim\pild$, 
\begin{equation}\label{eq:subspace-consistent}
 M_{W,s}^r \otimes  \Id \,\approx_{\delta_M} \, \Id \otimes \comp{M}_{W,s}^{r} \;,
\end{equation}
and moreover the $\{{M}_{W,s}^{r}\}_{r\in\deg_d(s)}$ and $\{\comp{M}_{W,s}^{r}\}_{r\in\deg_d(s)}$, together with the state $\ket{\psi}$, specify a strategy with success $1-\eps'$ in the test $\qld^{(1)}(m,d,q)$, for some $\delta = \poly(p)\cdot\poly(\eps)$. 

For $W\in\{X,Z\}$, $u\in\Fq$, and $w\in \Fq^m$ define\footnote{The map $w\mapsto w_{\bij}$ from $\Fq^m$ to $\Fq^n$ is defined in Section~\ref{sec:notation}.}
\begin{equation}\label{eq:def-xz-obs}
W_u(w_\bij) \,=\, \sum_a \omega^{\tr(a u)} M_{W,w}^a\;, \qquad \comp{W}_u(w_\bij) \,=\, \sum_a
\omega^{\tr(a u)} \comp{M}_{W,w}^a\;.
\end{equation}
Then for fixed $W$ and $w$, the $q$ observables
$\{W_u(w_\bij),\,u\in\Fq\}$ pairwise commute. For any
$a\in\Fq$, we can write the POVM elements $M_{W,w}^a$ and $\comp{M}_{W,w}^a$ in terms of the observables $W_u(w_\bij)$ and $\comp{W}_u(w_\bij)$ as
follows:\footnote{Note that here, and elsewhere, the superscript denotes the
  outcome of the measurement, not exponentiation.}
\begin{equation}\label{eq:m-from-w}
 M_{W,w}^a \,=\, \Es{u\in\Fq} \,\omega^{-\tr(au)}\,W_u(w_\bij)\;,\qquad  \comp{M}_{W,w}^a \,=\, \Es{u\in\Fq} \,\omega^{-\tr(au)}\,\comp{W}_u(w_\bij)\;.
\end{equation}
Moreover, on average over $w\in\Fq^m$ and for every $u\in\Fq$,
\begin{equation}\label{eq:xz-cons}
W_u(w_\bij) \otimes \Id \,\approx_{\delta_M} \,\Id \otimes \comp{W}_u(w_\bij)\;.
\end{equation}
Finally,
\begin{equation}\label{eq:xz-ac}
 X_u(x_\bij) Z_{u'}(z_\bij) \,\approx_{\delta_M}\, \omega^{\tr((u
   x_\bij) \cdot (u' z_\bij))} Z_{u'}(z_\bij) X_u(x_\bij)\;,
\end{equation}
on average over uniformly random $x,z\in\Fq^m$ and $u,u' \in \Fp^t$.
\end{claim}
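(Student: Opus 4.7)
The plan breaks into four steps tracking the four assertions in the statement. First, the decoded subspace POVMs $\{M_{W,s}^r\}$ and $\{\comp{M}_{W,s}^r\}$ are extracted by invoking Theorem~\ref{thm:2ml} (the soundness analysis of the composed low-degree test) separately for the two conditional strategies in part~(a) of $\qld^{(2)}$ corresponding to $W=X$ and $W=Z$. Each conditional strategy succeeds in $\cld^{(2)}(m,d,q)$ with probability $1-O(\eps)$, where the $O(\cdot)$ absorbs the constant selection probability of part~(a); for $W=X$ the sign flip between $r$ and $r'$ prescribed by the test specification is folded into the definition of $\{\comp{M}^{r}_{X,s}\}$. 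Theorem~\ref{thm:2ml} then yields~\eqref{eq:subspace-consistent} with $\delta_M=\poly(\eps)$, and the same state $\ket{\psi}$ together with the decoded POVMs specifies a strategy for the uncomposed test $\qld^{(1)}(m,d,q)$ whose success probability in part~(a) follows from the consistency of the decoded POVMs with the composed operators, and in part~(b) is inherited unchanged.

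Second, the Fourier observables $W_u(w_\bij)$ and $\comp{W}_u(w_\bij)$ are defined via the discrete Fourier transform on the additive group $\Fq\simeq\Fp^t$, using the self-dual basis $(b_1,\ldots,b_t)$ so that characters take the form $a\mapsto\omega^{\tr(au)}$; the inversion formula~\eqref{eq:m-from-w} is then a direct Fourier inversion, and the fact that each $W_u(w_\bij)$ has eigenvalues that are $p$-th roots of unity is immediate from the fact that $p a=0$ for $a\in\Fq$. For fixed $W$ and $w$, the family $\{W_u(w_\bij)\}_{u\in\Fq}$ is built from linear combinations of the same complete system of orthogonal projectors, so its members pairwise commute. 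Third, the point-wise consistency~\eqref{eq:xz-cons} is transferred from the measurement consistency~\eqref{eq:subspace-consistent}, specialized to the ``points'' branch $s=w$ of $\pild$ (which carries constant mass by property~(i)): expanding
\[ \bigl(W_u(w_\bij)\otimes \Id - \Id\otimes \comp{W}_u(w_\bij)\bigr)\ket{\psi} \,=\, \sum_{a\in\Fq} \omega^{\tr(au)}\bigl(M_{W,w}^a\otimes \Id - \Id\otimes \comp{M}_{W,w}^a\bigr)\ket{\psi}, \]
squaring the norm and invoking Claim~\ref{claim:obs-meas-cons} together with orthogonality of the projectors bounds the error by $O(\delta_M)$ uniformly in $u$.

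The fourth step, the twisted commutation relation~\eqref{eq:xz-ac}, is the most substantive and is where I expect the bulk of the work. It will be derived from part~(b) of the test. For uniform $(x,z,u,u')$ the scalar $a=\tr((ux_\bij)\cdot(u'z_\bij))\in\Fp$ vanishes with probability $1/p$ and is otherwise uniform in $\Fp^\ast$. When $a=0$, Theorem~\ref{thm:com_test} applied to the embedded $\COM$ subprotocol forces $X_u(x_\bij)$ and $Z_{u'}(z_\bij)$ to approximately commute, with state-dependent error $O(p\sqrt{\eps})$; this is exactly~\eqref{eq:xz-ac} in the case $a=0$. When $a\neq 0$, Theorem~\ref{thm:ms-rigid} applied to the embedded $\MS$ subprotocol certifies a twisted commutation relation with phase $\omega^{\pm 1}$ between $X_u(x_\bij)$ and the observable obtained from $Z_{u'}(z_\bij)$ by rescaling the $\Fp$-valued outcome by $a^{-1}$. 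The main technical issue to be handled is that this rescaling corresponds, on the level of generalized observables with $Z_{u'}(z_\bij)^p=\Id$, to raising $Z_{u'}(z_\bij)$ to the $a^{-1}$-th power, and so the $\MS$ certificate yields
\[ X_u(x_\bij)\,Z_{u'}(z_\bij)^{a^{-1}} \,\approx\, \omega^{\pm 1}\, Z_{u'}(z_\bij)^{a^{-1}}\,X_u(x_\bij), \]
with state-dependent error $O(p^3\sqrt{\eps})$. Iterating this relation $a$ times against the fixed anchor $X_u(x_\bij)$ on the left (and carefully tracking the accumulation of state-dependent error, which at each step only incurs an additive $O(\eps)$ contribution because the iterated identity is being applied with the same observables), gives
\[ X_u(x_\bij)\,Z_{u'}(z_\bij) \,\approx\, \omega^{\pm a}\,Z_{u'}(z_\bij)\,X_u(x_\bij), \]
which matches~\eqref{eq:xz-ac} up to the sign convention in the exponent. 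Averaging over $a\in\Fp$ and the remaining randomness in $(x,z,u,u')$ yields $\delta_M=\poly(p)\cdot\poly(\eps)$ as claimed; the factor $\poly(p)$ comes from the $O(p^3)$ robustness of $\MS$ combined with the at-most-linear blow-up from unwinding the fractional power.
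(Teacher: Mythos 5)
Your decomposition into four parts, and the ideas underlying parts one through three, match the paper's proof: the decoded subspace POVMs come from the (composed) classical low-degree test soundness, the Fourier inversion and pairwise commutativity are formal consequences of the definition, and the point-observable consistency~\eqref{eq:xz-cons} is pulled out of~\eqref{eq:subspace-consistent} via the constant mass that $\pild$ places on $0$-dimensional subspaces together with Claim~\ref{claim:obs-meas-cons}. In step one you invoke Theorem~\ref{thm:2ml} globally and then collapse the full-polynomial measurement to each subspace, whereas the paper instead fixes the outer subspace $s$ and applies Theorem~\ref{thm:ml} to the induced inner-test strategy; these are two equivalent ways to decode the composition, and yours is fine.

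The place where I would push back is the fourth step, specifically the parenthetical justification for why the iteration $X_u(x_\bij)Z' \approx \omega^{\mp 1} Z' X_u(x_\bij) \Rightarrow X_u(x_\bij)(Z')^a \approx \omega^{\mp a} (Z')^a X_u(x_\bij)$ is error-stable. The statement that each step ``only incurs an additive $O(\eps)$ contribution because the iterated identity is being applied with the same observables'' does not justify the iteration: in the state-dependent norm, $\|(AB-BA)\otimes\Id\ket{\psi}\|$ small does not by itself control $\|(AB^2-B^2A)\otimes\Id\ket{\psi}\|$, because after applying one factor of $B$ the state has changed. The ingredient that makes such iterations go through is the \emph{consistency} relations $B\otimes\Id\ket{\psi}\approx\Id\otimes\comp{B}\ket{\psi}$, which let you move the extra factor of $B$ to the other register (where it is now $\comp{B}$, a contraction) and re-use the commutation bound on the original state. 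Without appealing to~\eqref{eq:xz-cons} (or to the two-sided structure of the self-test certificate), your argument as written has a hole. Moreover, the iteration is not actually needed: since $Z_{u'}(z_\bij)^{a^{-1}} = Z_{a^{-1}u'}(z_\bij)$ and both $Z_{u'}(z_\bij)$ and $Z_{u'}(z_\bij)^{a^{-1}}$ are different $\F_p$-valued functions of the \emph{same} projective measurement $\{M_{Z,z}^b\}$, the self-test guarantee of Theorem~\ref{thm:ms-rigid} (Definition~\ref{def:self-test}) directly certifies the collapsed projectors $N^c = \sum_{b : a^{-1}\tr(u'b)=c} M_{Z,z}^b$ as close to ideal $\sigma_Z$-eigenprojectors, and hence any linear combination — in particular $Z_{u'}(z_\bij) = \sum_c \omega^{ac}N^c$, which then inherits closeness to $\sigma_Z^a$ with no further loss. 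This is consistent with the paper's claimed $\delta_M = O(p^3\sqrt\eps)$ with no extra factor from ``unwinding''. Either fix — replace the parenthetical by an explicit appeal to consistency, or drop the iteration in favor of the direct measurement-level certification — closes the gap.
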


\begin{proof}
For $s\subseteq \Fq^m$, $s'\subseteq\Fq^{m'}$ and $r\in\deg_d(s)$ let 
$$\comp{M}_{X,s,s'}^{r} \,=\, M_{X,s,s'}^{-r}\qquad\text{and}\qquad \comp{M}_{Z,s,s'}^{r} \,=\, M_{Z,s}^{r}\;.$$
With this definition,~\eqref{eq:subspace-consistent} follows from the assumption that the provers' strategy succeeds with probability $1-O(\eps)$ in part (a) of the test $\qld^{(2)}$. Moreover, for any fixed choice of $W$ and first part $s$ of the question $(s,s')$ in $\cld^{(2)}(m,d,q)$, the induced strategy is a successful strategy in $\cld^{(1)}(m',d',q)$, to which Theorem~\ref{thm:ml} can be applied. This defines the required POVM elements $\{{M}_{W,s}^{r}\}_{r\in\deg_d(s)}$ and $\{\comp{M}_{W,s}^{r}\}_{r\in\deg_d(s)}$.

 Commutation of the $\{W_u(w_\bij),\,u\in\Fq\}$ follows since $\{M_{W,w}^a\}$ are projective measurements. Eq.~\eqref{eq:m-from-w} follows by expanding $W_u(w_\bij)$ using the definition.

Using that the distribution $\pild$ from the low-degree test places constant probability on subspaces of dimension $0$, by Claim~\ref{claim:obs-meas-cons} the consistency conditions~\eqref{eq:subspace-consistent} imply~\eqref{eq:xz-cons}. 

Finally, using Theorems~\ref{thm:com_test} and~\ref{thm:ms-rigid} success with probability at least $1-\eps$ in part (b) of the test implies that the observables defined in~\eqref{eq:def-xz-obs} satisfy~\eqref{eq:xz-ac} for $\delta_M = O(p^3\sqrt{\eps})$.
\end{proof}

\subsection{Expanding the Hilbert space and defining commuting observables}
\label{sec:hatx}

From the initial strategy of the provers, satisfying the properties expressed in Claim~\ref{claim:strategies}, we define new observables on an extended Hilbert space that will be the main operators used in the proof. 

\begin{lemma}\label{lem:hats}
Let $m,d,q=p^t$, $\eps$, $\ket{\psi}$ and $\{M_{W,s}^r\}$ be as in Lemma~\ref{lem:soundness}, and $W_u(w_\bij)$ as in Claim~\ref{claim:strategies}. There exists a state 
$$\ket{\hat{\psi}}_{\reg{AA'A''}\reg{BB'B''}} \in \mH_\reg{A} \otimes (\C^q_{\reg{A'}}\otimes \C^q_{\reg{A''}})^{\otimes n} \otimes \mH_\reg{B} \otimes (\C^q_{\reg{B'}}\otimes \C^q_{\reg{B''}})^{\otimes n}\;,$$
and for $W\in\{X,Z\}$, $s\subseteq \Fq^m$, $u \in \Fq$, and $w\in\Fq^m$ there
are POVM $\{\hat{M}_{W,s}^r\}_{r \in\deg_d(s)}$, $\{\comp{\hat{M}}_{W,s}^r\}_{r\in\deg_d(s)}$ and observables 
$\hat{W}_u(w_\bij)$, $\comp{\hat{W}}_u(w_\bij)$ on $\mH_\reg{A}\otimes
(\C^q_{\reg{A'}})^{\otimes n} $ and $\mH_\reg{B}\otimes
(\C^q_{\reg{B'}})^{\otimes n} $ respectively, 
\begin{equation}\label{eq:def-hat-xz}
\hat{W}_u(w_\bij) \,=\, W_u(w_\bij) \otimes \comp{\qp}_W(u
w_\bij)\;,\qquad \comp{\hat{W}}_u(w_\bij) \,= \, 
\comp{W}_u(w_\bij) \ot \qp_W(u w_\bij)\;,
\end{equation}
such that the following hold for some $\delta_{\hat{M}}= \poly(\delta_M)$. On
average over $(s,w)\sim \pild$ and for any $u \in \Fq$ and $W\in\{X,Z\}$,
\begin{equation}\label{eq:m-cons}
\begin{aligned}
\big(\hat{M}_{W,s}^r \big)_{\reg{AA'}}\otimes \Id \,\approx_{\delta_{\hat{M}}}\, \Id\otimes \big(\comp{\hat{M}}_{W,s}^r\big)_{\reg{BA''}} \;,\\
 \hat{W}_u(w_\bij)_{\reg{AA'}} \otimes \Id \,\approx_{\delta_{\hat{M}}}\, \Id \otimes \comp{\hat{W}}_u(w_\bij)_{\reg{BA''}}\;,\\
\sum_{r\in\deg_d(s)} \big(\hat{M}_{W,s}^r\big)_{\reg{AA'}}
\otimes \big(\comp{\hat{W}}_u^{\tr(r(w) u)}(w_\bij)\big)_{\reg{BA''}} \,\approx_{\delta_{\hat{M}}}\, \Id\;,\\
\sum_{r\in\deg_d(s)} \big(\hat{M}_{W,s}^r\big)_{\reg{AA'}}
\otimes \big(\comp{\hat{M}}_{W,w}^{r(w)}\big)_{\reg{BA''}} \,\approx_{\delta_{\hat{M}}}\, \Id\;,
\end{aligned}
\end{equation}
where
\begin{equation}\label{eq:def-hatm-w}
 \hat{M}_{W,w}^a \,=\, \Es{u\in\Fq} \,\omega^{-\tr(au)} \,\hat{W}_u(w_\bij)\;,\qquad  \comp{\hat{M}}_{W,w}^a \,=\,  \Es{u\in\Fq} \,\omega^{-\tr(au)}\,\comp{\hat{W}}_u(w_\bij)\;.
\end{equation}
Finally, on average over uniformly random $x,z\in\Fq^m$ and
$u, u' \in \Fq$,
\begin{equation}\label{eq:xz-com}
\hat{X}_u(x_\bij)\hat{Z}_{u'}(z_\bij) \,\approx_{\delta_{\hat{M}}} \,\hat{Z}_{u'}(z_\bij)\hat{X}_{u}(x_\bij)\;.
\end{equation}
\end{lemma}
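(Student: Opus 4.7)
The plan is to construct $\ket{\hat\psi}$ by tensoring the provers' shared state with two independent batches of EPR pairs on ancilla registers, $\ket{\hat\psi} = \ket{\psi}_{\reg{AB}} \otimes \ket{\epr_q}^{\otimes n}_{\reg{A'A''}} \otimes \ket{\epr_q}^{\otimes n}_{\reg{B'B''}}$ (with prover $A$ holding $\reg{A'}$ and prover $B$ holding $\reg{A''}$, symmetrically for $\reg{B',B''}$), and then verify each relation in~\eqref{eq:m-cons}--\eqref{eq:xz-com} by a clean factorization in which every extended operator on the left-hand side splits as a product of a prover operator on $\reg{AB}$ (inherited from Claim~\ref{claim:strategies}) tensored with an exact Pauli operator on the EPR ancilla. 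I would use the definitions of $\hat W_u(w_\bij), \comp{\hat W}_u(w_\bij)$ given in~\eqref{eq:def-hat-xz} and the point POVMs given by the Fourier inversion~\eqref{eq:def-hatm-w}, and define the subspace POVMs $\hat M_{W,s}^r, \comp{\hat M}_{W,s}^r$ as a convolution of the provers' subspace POVMs with the honest Reed-Muller POVMs $\qp_{W,s}, \comp{\qp}_{W,s}$ of~\eqref{eq:honest-m-def}--\eqref{eq:honest-m-def-2}, so that ancilla contributions in the subspace-point consistency resolve into exact identities coming from the Reed-Muller code.

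The central new computation is the commutation~\eqref{eq:xz-com}. Writing $\hat X_u(x_\bij)\hat Z_{u'}(z_\bij)$ and $\hat Z_{u'}(z_\bij)\hat X_u(x_\bij)$ in factored form on $\reg{A}$ and $\reg{A'}$, the prover tensor factor $X_u(x_\bij)Z_{u'}(z_\bij)$ vs.\ $Z_{u'}(z_\bij)X_u(x_\bij)$ satisfies the approximate twisted commutation~\eqref{eq:xz-ac} with some phase $\omega^{\pm\tr((ux_\bij)\cdot(u'z_\bij))}$, while the ancilla factor $\comp{\qp}_X(ux_\bij)\comp{\qp}_Z(u'z_\bij)$ vs.\ $\comp{\qp}_Z(u'z_\bij)\comp{\qp}_X(ux_\bij)$ exactly satisfies the \emph{opposite}-signed twisted commutation, computed directly from~\eqref{eq:twisted-fq} together with the conjugation conventions $\comp{\qp}_X(c)=\qp_X(-c)$, $\comp{\qp}_Z(c)=\qp_Z(c)$ (derived by Fourier inversion from~\eqref{eq:honest-m-def-2}). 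The two phases cancel, yielding approximate commutation of $\hat X_u$ and $\hat Z_{u'}$ with error $O(\delta_M)$; this cancellation is precisely the motivation for introducing the ancillas. Each of the consistency relations in~\eqref{eq:m-cons} then follows by the same factorization pattern: cross-prover consistency of $\hat W_u$ (resp.\ of $\hat M_{W,s}^r$) reduces to the prover consistency~\eqref{eq:xz-cons} (resp.\ \eqref{eq:subspace-consistent}) times the \emph{exact} ancilla identity $(\comp{\qp}_W(uw_\bij)\otimes\Id)\ket{\epr_q}^{\otimes n}=(\Id\otimes\qp_W(uw_\bij))\ket{\epr_q}^{\otimes n}$ on the EPR state; the subspace-to-point consistency additionally uses the provers' subspace-point consistency $\sum_r M_{W,s}^r\otimes\comp{M}_{W,w}^{r(w)}\approx\Id$ inherited from success of the strategy in the classical low-degree test (Corollary~\ref{cor:ml} applied to the composed test, Theorem~\ref{thm:2ml}); and the third, coarser spectral-projector relation involving $\comp{\hat W}_u^{\tr(r(w)u)}(w_\bij)$ follows from the fourth by coarse-graining outcomes on the ancilla point observable. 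The overall error bound $\delta_{\hat M}=\poly(\delta_M)=\poly(p)\cdot\poly(\eps)$ is inherited from Claim~\ref{claim:strategies}.

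The main obstacle is the careful choice of sign conventions so that the single pair of conjugations $\comp{\qp}_X(c)=\qp_X(-c)$, $\comp{\qp}_Z(c)=\qp_Z(c)$ simultaneously (i) matches the EPR stabilizer relations $\qp_X(a)\otimes\qp_X(a)\ket{\epr_q}=\ket{\epr_q}$ and $\qp_Z(a)\otimes\qp_Z(-a)\ket{\epr_q}=\ket{\epr_q}$, delivering \emph{exact} ancilla consistency in every cross-prover relation, and (ii) produces the correctly oriented twist on the ancilla to cancel the prover's approximate twist in~\eqref{eq:xz-ac}. The second requirement is nontrivial for odd primes $p$, where the twist phase is genuinely nonzero and the cancellation only works because the two signs are opposite. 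A secondary difficulty is arranging the convolution form of the subspace POVMs $\hat M_{W,s}^r, \comp{\hat M}_{W,s}^r$ so that their ancilla factor in the subspace-point consistency collapses cleanly via the Reed-Muller structure to the identity rather than introducing residual error; beyond this, no essentially new argument is needed beyond the properties already established in Claim~\ref{claim:strategies} and the exact Pauli-stabilizer algebra of $\ket{\epr_q}$.
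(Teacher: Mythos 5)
Your proposal is correct and takes essentially the same approach as the paper's own proof: you adjoin EPR ancillas on $\reg{A'A''}$ and $\reg{B'B''}$, define $\hat M_{W,s}^r$ as the convolution of $M_{W,s}$ with the honest Reed--Muller POVM $\comp{\qp}_{W,s}$, and prove~\eqref{eq:xz-com} by the exact cancellation of the Pauli twist phases (the paper's Claim~4.5) and~\eqref{eq:m-cons} by factoring each cross-prover relation into the prover-side approximation times the exact EPR ancilla identity (the paper's Claim~4.4). The only small deviations are cosmetic: the paper's ancilla state also includes $\ket 0^{\otimes 2N}$ registers to allow Naimark dilation in the following lemma, and the paper derives the $\comp{\hat M}_{W,w}^{r(w)}$ consistency from the $\comp{\hat W}_u^{\tr(r(w)u)}$ one rather than vice versa; also, the subspace-to-point consistency you cite is established directly from acceptance in the plane-vs-point check of $\cld$ (via Claim~4.1), not from the $S$-measurement of Corollary~2.10/Theorem~2.11, but both routes are available.
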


\begin{proof} 
We first define the state $\ket{\hat{\psi}}$. For this we enlarge the Hilbert space $\mH_{\reg{A}}\otimes \mH_{\reg{B}}$ in two ways. First we assume that each prover has access to a sufficiently large number $N$ of qubits initialized in the state $\ket{0}$. This allows us
to apply Naimark's dilation theorem to simulate a POVM measurement applied by the
provers by a projective measurement, whenever it is convenient (we will always specify when we do so). Second, for each prover $D\in\{A,B\}$ we adjoin two ancilla registers
$\reg{D'},\reg{D''}$ initialized in state 
$$\ket{\psi_\epr}_{\reg{D'D''}}\,=\,\ket{\epr_q}^{\otimes
  n}_{\reg{D'D''}}\;,$$
where $\ket{\epr_q}$ is defined
in~\eqref{eq:def-epr-p}. The state of the enlarged system is 
\begin{equation}\label{eq:ancilla-def}
 \ket{\hat{\psi}}_{\reg{AA'A''}\reg{BB'B''}} \,=\, \big(\ket{\psi} \ot
\ket{0}^{\ot 2N}\big)_{\reg{AB}} \ket{\psi_\epr}_{\reg{A'A''}}
\ket{\psi_{\epr}}_{\reg{B'B''}}\;.
\end{equation}
Next, for $W\in\{X,Z\}$, $s\subset \Fq^m$ and $r\in \deg_d(s')$   let 
\begin{equation}\label{eq:hatm-def}
 \hat{M}_{W,s}^r \,=\, \sum_{\substack{r',r''\in\deg_d(s):\\ r' + r''=r}}
 M_{W,s}^{r'} \otimes \comp{\qp}_{W,s}^{r''}\;,
\end{equation}
where $\{\qp_{W,s}^{r''}\}$ is the ``honest'' subspace measurement defined
in~\eqref{eq:honest-m-def}.  Define complementary measurements 
$$\comp{\hat{M}}_{W,s}^r =
\sum_{\substack{r', r'' \in \deg_d(s):\\ r'+r''=r}} \comp{M}_{W,s}^{r'} \otimes
\qp_{W,s}^{r''}\;.$$
The following claim establishes~\eqref{eq:m-cons}.

\begin{claim}\label{claim:hat-m-cons}
For $W\in\{X,Z\}$ the subspace measurements are self-consistent, and consistent
with the point measurements: on average over $(s,w)\sim\pild$ and for any $u \in\Fq$,
\begin{align*}
\sum_{r\in\deg_d(s)} \big(\hat{M}_{W,s}^r \big)_{\reg{AA'}}\otimes \big(\comp{\hat{M}}_{W,s}^r\big)_{\reg{BA''}} &\approx_{\poly(\delta_M)} \Id\;,\\
\sum_{r\in\deg_d(s)} \big(\hat{M}_{W,s}^r\big)_{\reg{AA'}} \otimes
  \big(\comp{\hat{W}}_u^{\tr(r(w) u)}(w_\bij)\big)_{\reg{BA''}} &\approx_{\poly(\delta_M)} \Id\;,\\
	\sum_{r\in\deg_d(s)} \big(\hat{M}_{W,s}^r\big)_{\reg{AA'}}
\otimes \big(\comp{\hat{M}}_{W,w}^{r(w)}\big)_{\reg{BA''}} \,\approx_{\delta_{\hat{M}}}\, \Id\;,
\end{align*}
and
$$ \hat{W}_u(w_\bij)_{\reg{AA'}} \otimes \Id \,\approx_{\poly(\delta_M)}\, \Id \otimes \comp{\hat{W}}_u(w_\bij)_{\reg{BA''}}\;.$$
\end{claim}

\begin{proof}
Note that $\{\hat{M}_{W,w}^a\}$ as defined in~\eqref{eq:def-hatm-w} is a well-defined projective measurement, given the definition of the $\{\hat{W}_u(w_\pi)\}$ in~\eqref{eq:def-hat-xz}.

For the first identity, decompose $\hat{M}_{W,s}^r$ using the definition~\eqref{eq:hatm-def} and use self-consistency of $M_{W,s}^{r'}$, which is expressed in~\eqref{eq:subspace-consistent}, and of $\qp_{W,s}^{r''}$, which follows since the ancilla introduced in~\eqref{eq:ancilla-def} is maximally entangled on $\reg{A'A''}$. 

For the second identity, decompose $\hat{M}_{W,s}^r$ and
$\hat{W}_u^{\tr(u r(w))}(w_\bij)$ using the definition to get
\begin{align*}
\sum_{r\in\deg_d(s)}& \hat{M}_{W,s}^r \otimes \comp{\hat{W}}_u^{\tr(u
  r(w))}(w_\bij) \\
  &=\sum_{\substack{r',r'',a',a'':\\\tr( (r'  + r'')(w) \cdot u) =(a' +
  a'')}}\big(M_{W,s}^{r'}\big)_{\reg{A}} 
  \otimes \big(\comp{\qp}_{W,s}^{r''}\big)_{\reg{A'}} 
  \otimes \big(\comp{W}_u^{a'}(w_\bij)\big)_{\reg{B}} 
  \otimes \big(\qp_W^{a''}(u w_\bij)\big)_{\reg{A''}} \\
  &=\sum_{\substack{r',a':\\\tr(r'(w)\cdot u)=a'}}\big(M_{W,s}^{r'}\big)_{\reg{A}}  \otimes
  \big(\comp{W}_u^{a'}(w_\bij)\big)_{\reg{B}} \\ 
  &\approx_{\poly(\eps)} \Id\;,
\end{align*}
where the second line uses consistency of $\comp{\qp}_{W,s}^{r''}$ with
$\qp_W ^{\tr(u r''(w))} (u w_\bij)$ (which follows from the analysis of the honest
strategy given in the proof of Lemma~\ref{lem:completeness}), and
the third follows from success of the provers' strategy in the low-degree test and the definition of $\comp{W}_u(w_\bij)$ in~\eqref{eq:def-xz-obs}.    

The third identity follows from the second and the definition~\eqref{eq:def-hatm-w}.

Finally, the last relation follows from  the first, specialized to $s=w$. Alternatively, combine 
consistency between $W_\reg{A}$ and $\comp{W}_\reg{B}$ (shown in~\eqref{eq:xz-cons}) and of $(\qp_W)_\reg{A'}$ and $(\comp{\qp}_W)_\reg{A''}$, which follows since the ancilla state is $\ket{\psi_\epr}_{\reg{A'A''}}$. 
\end{proof}

The next claim establishes~\eqref{eq:xz-com}. 

\begin{claim}\label{claim:hat-xz-commute}
On average over uniformly random $x, z\in\Fq^m$ and $u, u' \in \Fq$,
$$\hat{X}_u(x_\bij)\hat{Z}_{u'}(z_\bij)\,\approx_{\poly(\delta_M)}\, \hat{Z}_{u'}(z_\bij)\hat{X}_{u}(x_\bij)\;.$$
\end{claim}

\begin{proof}
Write
\begin{align*}
\hat{X}_u(x_\bij)\hat{Z}_{u'} (z_\bij) &=  X_{u} (x_\bij)Z_{u'} (z_\bij) \otimes
                                 \comp{\qp}_X(u x_\bij)
                                               \comp{\qp}_Z(u' z_\bij)
  \\
  &= X_u(x_\bij) Z_{u'}(z_\bij) \otimes \qp_X(-u x_\bij)
    \qp_Z(u' z_\bij) \\
&\approx_{\delta_M} \big( \omega^{-\tr((u x_\bij)\cdot (u' z_\bij)}
  Z_{u'}(z_\bij)X_{u}(x_\bij) \big)\otimes \big( 
  \omega^{\tr((u x_\bij)\cdot (u' z_\bij))}\qp_Z(u' z_\bij)
  \qp_X(-u x_\bij)\big)\\
  &= Z_{u'} (z_\bij) X_{u} (x_\bij) \otimes \comp{\qp}_Z(u' z_\bij)
    \comp{\qp}_X({u} x_\bij) \\
&= \hat{Z}_{u'}(z_\bij)\hat{X}_{u}(x_\bij)\;,
\end{align*}
where the approximation follows from~\eqref{eq:xz-ac} in Claim~\ref{claim:strategies}. 
\end{proof}

\end{proof}

\subsection{Combining $X$ and $Z$ measurements}
\label{sec:hatq}

In this section we combine the approximately commuting observables $\hat{X}_u(x_\bij)$ and $\hat{Z}_u(z_\bij)$ constructed in the proof of Lemma~\ref{lem:hats} into a single POVM. We then show that the POVM leads to a strategy for the classical low-degree test. Applying Theorem~\ref{thm:ml}, we obtain a single POVM $\{\hat{S}^{g_1,g_2}\}_{g_1,g_2\in\deg_{d}(s)}$ which is simultaneously consistent with both families of observables, as shown in the following lemma. 

\begin{lemma}\label{lem:hat-s}
Let $m,d,q=p^t$ be as in Lemma~\ref{lem:soundness}, and $\ket{\hat{\psi}}$, $\hat{W}_u(w_\bij)$, $\comp{\hat{W}}_u(w_\bij)$ and $\delta_{\hat{M}}$ as in Lemma~\ref{lem:hats}. There exists  projective measurements $\{\hat{S}_{\reg{AA'}}^{g_1,g_2}\}$ and $\{\comp{\hat{S}}_{\reg{BA''}}^{g_1,g_2}\}$ with outcomes in the set of pairs $(g_1,g_2)$ of polynomials on $\Fq^m$ of total degree at most $d$ each such that, on average over uniformly random $x,z\in\Fq^m$, and for all $u \in \Fq$, 
\begin{equation}\label{eq:hat-s-xz-cons}
\sum_{g_1, g_2} \hat{S}^{g_1,g_2} \otimes  \comp{\hat{X}}_{u}^{\tr(g_1(x) u)}(x_\bij)\,\approx_{\delta_S}\, \Id\;,\qquad   \sum_{g_1, g_2} \hat{S}^{g_1,g_2} \otimes  \comp{\hat{Z}}_{u}^{\tr(g_2(z) u)}(z_\bij)\, \approx_{\delta_S}\, \Id\;,
\end{equation}
for some $\delta_S = \poly(p)\cdot\poly(\delta_{\hat{M}})+\poly(d/q)$. Similar relations hold with $\comp{\hat{S}}$ and $\hat{X}$, $\hat{Z}$ instead of $S$ and $\comp{\hat{X}}$, $\comp{\hat{Z}}$ respectively.
\end{lemma}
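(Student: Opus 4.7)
The plan is to apply Theorem~\ref{thm:ml} (through Corollary~\ref{cor:ml}) to a strategy for the classical low-degree test over $\Fq^{2m+2}$, built from the commuting observables $\hat{X}_u(x_\bij)$ and $\hat{Z}_{u'}(z_\bij)$ supplied by Lemma~\ref{lem:hats}.

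First, I would combine the two families of commuting observables into joint POVMs on $\reg{AA'}$ and $\reg{BA''}$. For each pair $(x,z)\in\Fq^m\times\Fq^m$, the approximate commutation~\eqref{eq:xz-com} together with the exact commutation within each family enables a simultaneous approximate diagonalization: applying Claim~\ref{claim:round_to_roots} to the product $\hat{X}_\alpha(x_\bij)\hat{Z}_\beta(z_\bij)$ (whose $p$-th power is close to the identity on $\ket{\hat\psi}$) produces, for each $(\alpha,\beta)$, an exact observable with $p$-th root eigenvalues. Fourier inversion over $(\alpha,\beta)\in\Fq^2$ then extracts a joint POVM $\{\hat{Q}_{x,z}^{a,b}\}_{(a,b)\in\Fq^2}$ whose $\Fp$-valued marginals approximately reconstruct $\hat{X}_\alpha(x_\bij)$ and $\hat{Z}_\beta(z_\bij)$; an analogous construction on $\reg{BA''}$ produces $\{\comp{\hat{Q}}_{x,z}^{a,b}\}$.

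Second, I would use these joint POVMs to define a two-prover strategy for the classical low-degree test $\cld(2m+2,d+1,q)$ on variables $(x,z,\alpha,\beta)\in\Fq^{2m+2}$. On a point query the prover measures $\hat{Q}_{x,z}$ and reports $\alpha a+\beta b\in\Fq$; on a 2-dimensional affine subspace query $s\subseteq\Fq^{2m+2}$ with $s_x,s_z\subseteq\Fq^m$ the projections onto the $x$- and $z$-coordinates, the prover measures $\hat{M}_{X,s_x}$ and $\hat{M}_{Z,s_z}$ simultaneously to obtain polynomials $r_X,r_Z$ each of total degree $d$, and returns $\alpha(\lambda)r_X(x(\lambda))+\beta(\lambda)r_Z(z(\lambda))$, a polynomial of total degree at most $d+1$ in the affine parameters $\lambda$. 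Success in the test follows by chaining the point-subspace consistency~\eqref{eq:m-cons} for each basis with the marginal properties of $\hat{Q}$ and $\comp{\hat{Q}}$. Applying Corollary~\ref{cor:ml} then yields POVMs $\{\hat{T}^G\}$ and $\{\comp{\hat{T}}^G\}$ indexed by $G\in\deg_{d+1}(\Fq^{2m+2})$, consistent with the point strategy and with each other.

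Third, I would argue that with all but negligible total POVM weight, $G$ takes the bilinear form $G(x,z,\alpha,\beta)=\alpha g_1(x)+\beta g_2(z)$ for some $g_1,g_2\in\deg_d(\Fq^m)$. Linearity in $(\alpha,\beta)$ is forced by the linearity of the point strategy, with the Schwartz-Zippel lemma converting approximate agreement on random inputs into an exact polynomial identity; the independence of $g_1$ from $z$ (respectively $g_2$ from $x$) follows by specializing $(\alpha,\beta)=(1,0)$ (resp.\ $(0,1)$), noting that the corresponding outcome involves only $\hat{X}$ (resp.\ only $\hat{Z}$) and so cannot depend on $z$ (resp.\ $x$), and again appealing to Schwartz-Zippel. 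Setting $\hat{S}^{g_1,g_2}=\sum_{G}\hat{T}^G$ with the sum restricted to $G$ of this bilinear form yields the required POVM, and specializing the consistency of $\hat{T}$ with the point strategy at $(\alpha,\beta)=(u,0)$ or $(0,u)$ recovers the two halves of~\eqref{eq:hat-s-xz-cons}. The main obstacle will be this bilinear form argument: converting approximate linearity and separation properties that hold only on average into an exact algebraic decomposition requires stratifying $\Fq^{2m+2}=\Fq^m\times\Fq^m\times\Fq^2$ and iterating Schwartz-Zippel along each factor while carefully controlling the accumulation of error. The subspace strategy in the second step also requires care to keep the degree at $d+1$ rather than the $2d$ that a naive tensor of two degree-$d$ subspace answers would yield; this is precisely the role of the auxiliary variables $(\alpha,\beta)$.
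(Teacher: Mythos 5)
Your Step 1 has a genuine gap, and it is exactly the one the paper flags as the nontrivial part of this argument. You propose to round each product $\hat{X}_\alpha(x_\bij)\hat{Z}_\beta(z_\bij)$ (for every $(\alpha,\beta)\in\Fq^2$) to an exact $p$-th-root observable via Claim~\ref{claim:round_to_roots}, and then Fourier-invert over $(\alpha,\beta)$ to obtain $\{\hat{Q}_{x,z}^{a,b}\}_{(a,b)\in\Fq^2}$. This does not produce a valid POVM: the $\hat{X}$ and $\hat{Z}$ observables only \emph{approximately} commute, so the rounded operators $U_{\alpha,\beta}$ are not a linear family in $(\alpha,\beta)$ (i.e.\ $U_{\alpha,\beta}U_{\alpha',\beta'}\neq U_{\alpha+\alpha',\beta+\beta'}$ even approximately after rounding), and the Fourier inverse $\Es{\alpha,\beta}\omega^{-\tr(a\alpha+b\beta)}U_{\alpha,\beta}$ need not be positive. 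Worse, any naive attempt to sum or otherwise combine $q^2$ separately rounded observables would incur error scaling with $\poly(q)$ rather than $\poly(p)$, which is incompatible with the claimed bound $\delta_S=\poly(p)\cdot\poly(\delta_{\hat M})+\poly(d/q)$ (recall $q=p^t$ with $t=\Theta(\log\log n)$, so $\poly(q)$ is much larger). The paper avoids this by a different route: it first combines $\hat{X}_u$ with $\hat{Z}_v$ \emph{pairwise} for each fixed $(u,v)$ using Lemma~\ref{lem:approx-commute-u} to get two-outcome joint measurements, then observes that the resulting family indexed by $(u,v)\in\Fp^{2t}$ passes the two-prover \emph{linearity test} of~\cite{NV17} (Claim~\ref{claim:q-lin}), and applies~\cite[Theorem~10]{NV17} to glue all $q^2$ pieces into a single $\Fq^2$-valued POVM with error independent of $q$. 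Without invoking the linearity test (or some equivalent stability result for approximate representations of $\Fp^{2t}$), the combination step cannot be made to work with the required error dependence.

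Your Steps 2 and 3 are broadly on track. For the subspace strategy, the paper uses the conjugation $T_{s',s''}^{p,q}=\hat M_{X,s'}^p\hat M_{Z,s''}^q\hat M_{X,s'}^p$ to make precise your notion of measuring the two subspace measurements ``simultaneously''; the degree-$(d+1)$ observation about the role of $(\alpha,\beta)$ is correct. For Step 3, you overestimate the difficulty: no ``exact algebraic decomposition'' is needed. The paper simply shows, via consistency of $\{\hat S^g\}$ with the points measurement and Schwartz--Zippel, that the total POVM weight carried by outcomes $g$ not of the form $\alpha g_1(x)+\beta g_2(z)$ is $O(\poly(\delta_Q)+d/q)$ in state-dependent expectation, and then implicitly discards that weight when restricting to bilinear outcomes; converting approximate constraints into an exact identity on $g$ is neither necessary nor attempted.
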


The proof of Lemma~\ref{lem:hat-s} proceeds in two steps. In the first step, for each pair $(u,v)\in \Fq^2$ we combine the approximately  
commuting observables $\hat{X}_u(x_\bij)$ and $\hat{Z}_{v}(z_\bij)$ into a single POVM $\{\hat{Q}_{xu,zv}^{a,b}\}$  that essentially measures in their joint eigenbasis. The following lemma shows how this can be done in general. (See~\cite{glebsky2010almost} for a similar claim that applies to arbitrary unitaries but is restricted to the Frobenius norm.)

\begin{lemma}\label{lem:approx-commute-u}
Let $\eta>0$, $\ket{\psi} \in \mH\otimes \mH'$, for finite-dimensional Hilbert spaces $\mH,\mH'$, and $W_j\in\Unitary(\mH)$, $\comp{W}_j
\in\Unitary(\mH')$, for $j=1,\ldots,k$, be such that for all
$j,\ell\in\{1,\ldots,k\}$, the powers $(W_j)^p = (\comp{W}_j)^p = \Id$, $\|(W_j\otimes \Id - \Id \otimes \comp{W}_j)\ket{\psi}\|^2 \leq \eta$, and $\|(W_jW_\ell-W_\ell W_j)\otimes \Id \ket{\psi}\|^2 \leq \eta$. Then there exists an $\eta' = \poly(p,k)\cdot\poly(\eta)$ and a POVM $\{Q^{a}\}_{a\in \Fp^k}$ such that 
$$Q^{a} \approx_{\eta'} \prod_j W_j^{a_j}\qquad\text{and}\qquad
\forall j,\quad W_j \,\approx_{\eta'}\, \sum_{a}\, \omega^{a_j}
Q^{a}\;,$$
 where $W_j^{a_j}$ is the projector onto the
eigenspace of $W_j$ associated with the eigenvalue $\omega^{a_j}$.
Moreover, there exists a projective measurement $\{\comp{Q}^{a}\}_{a \in
  \Fp^k}$ satisfying analogous relations with respect to $\comp{W}_j$ and which is consistent with $\{Q^{a}\}$: 
$$ Q^{a} \otimes \Id \,\approx_{\eta'}\, \Id \otimes \comp{Q}^{a}\;.$$
\end{lemma}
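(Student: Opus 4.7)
The plan is to build $\{Q^a\}$ as the sequential measurement obtained by applying the projective measurements $\{W_j^b\}_{b\in\Fp}$ (the spectral decomposition of $W_j$) in the order $j=1,2,\ldots,k$. Concretely, I set
$$Q^a \,=\, (P^a)^\dagger P^a\;,\qquad P^a \,=\, W_k^{a_k}\cdots W_2^{a_2} W_1^{a_1}\;,$$
where $W_j^b$ denotes the Hermitian projector onto the $\omega^b$-eigenspace of $W_j$, so that $W_j = \sum_b \omega^b W_j^b$ and $W_j^b = \tfrac{1}{p}\sum_c \omega^{-bc}(W_j)^c$. Each $Q^a$ is manifestly PSD, and using $\sum_b W_j^b = \Id$ together with $(W_j^b)^2 = W_j^b$ one verifies by telescoping from the innermost index $a_k$ outward that $\sum_a Q^a = \Id$, so $\{Q^a\}$ is a bona fide POVM.

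The key step is to show $Q^a \approx_{\eta'} \prod_j W_j^{a_j}$ on $\ket{\psi}$, where on the right the factors are taken in some fixed canonical order. First, I would derive approximate commutation of the eigenspace projectors from that of the unitaries: using a standard telescoping identity for $[(W_j)^b,(W_\ell)^c]$ together with the Fourier expansion $W_j^{a_j} = \tfrac{1}{p}\sum_c\omega^{-a_j c}(W_j)^c$, one obtains a bound of the form $\bigl\|[W_j^{a_j},W_\ell^{a_\ell}]\otimes \Id\,\ket{\psi}\bigr\|^2 \leq \poly(p)\cdot \eta$ for every $a_j,a_\ell$. Armed with this, I apply $O(k^2)$ adjacent-projector swaps inside $(P^a)^\dagger P^a$ to reorder the factors so that each pair $W_j^{a_j}\cdots W_j^{a_j}$ becomes adjacent, then use $(W_j^{a_j})^2 = W_j^{a_j}$ to collapse every such pair, leaving the single product $\prod_j W_j^{a_j}$ up to error $\poly(p,k)\cdot\poly(\eta)$ on $\ket{\psi}$. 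The relation $W_j \approx_{\eta'}\sum_a \omega^{a_j}Q^a$ then follows: telescoping the sum over $a_\ell$ for $\ell>j$ collapses the outer layers of $Q^a$ to the identity, a further $j-1$ approximate swaps move the remaining $W_j^{a_j}$ across the surviving factors, the inner layers collapse similarly, and weighting by $\omega^{a_j}$ and summing over $a_j$ reconstructs $W_j = \sum_{a_j}\omega^{a_j}W_j^{a_j}$.

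Finally, to produce the consistent second POVM $\{\comp{Q}^a\}$ on $\mH'$ I apply the same construction with each $\comp{W}_j$ in place of $W_j$. The required approximate commutation $\Id\otimes [\comp{W}_j,\comp{W}_\ell]\,\ket{\psi}\approx_\eta 0$ is obtained by twice transporting operators across the tensor factor using the hypothesized $W_j\otimes\Id\approx_\eta\Id\otimes\comp{W}_j$ together with the given commutation of the $W_j$ on the $\mH$ side, after which the same analysis produces a valid POVM $\{\comp{Q}^a\}$ on $\mH'$ with the same approximate-projector behaviour. The pairwise consistency $Q^a\otimes\Id \approx_{\eta'}\Id\otimes \comp{Q}^a$ is then obtained by sequentially transporting each projector factor of $Q^a$ across the tensor product via the single-operator consistency. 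The main obstacle is the careful bookkeeping of error through the $O(k^2)$ approximate swaps combined with the $O(p)$-factor blow-up incurred in passing from the unitaries to their eigenspace projectors via Fourier inversion; this combination is what ultimately dictates the final polynomial dependence $\eta' = \poly(p,k)\cdot\poly(\eta)$.
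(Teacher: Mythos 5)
Your proposal is essentially the paper's proof: the paper also defines $Q^a$ as the palindromic sandwich of spectral projectors, $Q^a = W_k^{a_k}\cdots W_1^{a_1}\cdots W_k^{a_k}$ (identical to your $(P^a)^\dagger P^a$ after relabelling $j \mapsto k+1-j$ and collapsing the innermost repeated projector), checks $\sum_a Q^a = \Id$ by the same telescoping argument, and then deduces both approximation relations and the cross-consistency from approximate commutation of the projectors $\{W_j^{a_j}\}$ combined with idempotence, exactly as you describe.

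The one place where the routes genuinely diverge is in transferring commutation from the unitaries $W_j$ to their eigenspace projectors $W_j^{a_j}$. You propose Fourier expansion $W_j^{a_j} = \tfrac{1}{p}\sum_c\omega^{-a_jc}(W_j)^c$ together with telescoping of $[(W_j)^c,(W_\ell)^d]$; note that the telescoping step requires using consistency to transport the tail $(W_j)^{c-1-i}(W_\ell)^{d-1-i'}$ onto the $\mH'$ side before the hypothesis $\|[W_j,W_\ell]\otimes\Id\ket{\psi}\|^2\le\eta$ can be applied, a point you leave implicit (though you do flag the bookkeeping as the main obstacle, and it does go through). The paper instead isolates this transfer as a separate claim (Claim~\ref{claim:com-obs}), which bounds $\sum_{a,b}\|(P^aQ^b-Q^bP^a)\otimes\Id\ket{\psi}\|^2$ directly by expanding the squared norm, transporting to the $\mH'$ side via consistency, and then averaging over a scalar multiplier $\alpha\in\Fp$ inserted into the $Q$ observable so that the cross-terms with $b\neq d$ cancel; this yields the clean factor $O(p^2)$ without telescoping products of unitaries term by term. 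Both routes establish the needed projector commutation with $\poly(p)\cdot\poly(\eta)$ error and lead to the same conclusion; the paper's averaging trick is slightly sharper and avoids the double sum over powers, but your more direct Fourier-plus-telescoping argument is also valid.
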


\begin{proof} 
Write the eigendecompositions 
$$W_j = \sum_{a_j\in\Fp} \omega^{a_j} W^{a_j}\;,\qquad \comp{W}_j = \sum_{a_j\in\Fp} \omega^{a_j} \comp{W}^{a_j}\;, $$
 where $\omega = e^{2i\pi/p}$. Using
Claim~\ref{claim:obs-meas-cons} the consistency assumption on $W_j$ implies
that $\{W_j^{a_j}\}$ and $\{\comp{W}_j^{a_j}\}$ are also consistent projective measurements, with error
$O(p^2\eta)$.  We use the following general claim. 

\begin{claim}\label{claim:com-obs}
Let $\{P^a\}$ and $\{Q^b\}$, and $\{\comp{P}^a\}$ and $\{\comp{Q}^b\}$, be projective measurements with outcomes $a,b\in \Fp$ for a prime $p$ and such that $P^a \otimes \Id \approx_\eta \Id \otimes \comp{P}^a$ and $Q^b \otimes \Id \approx_\eta \Id \otimes \comp{Q}^b$ for some $\eta \geq 0$. Let $P = \sum_a \omega^a P^a$ and $Q = \sum_b \omega^b Q^b$. Then 
$$\sum_{a,b} \big\|(P^aQ^b-Q^bP^a) \otimes \Id \ket{\psi}\big\|^2 \,\approx_{\poly(\eta)} O\Big(p^2\, \|(PQ-QP)\otimes \Id \ket{\psi}\|^2\Big)\;.$$
\end{claim}

\begin{proof}
Expand
\begin{align} 
\bra{\psi} (PQ-QP)(PQ-QP)^\dagger \otimes \Id \ket{\psi} &= 2 - \sum_{a,b,c,d} \omega^{a+b-(c+d)}\bra{\psi}\big(  P^a Q^b P^c Q^d + Q^a P^b Q^c P^d \big)\otimes \Id \ket{\psi} \notag\\
&\approx_{\poly(\eta)} 2-  \bra{\psi}\Big(\sum_{a}\omega^a P^a \Big)\otimes \Big(\sum_{b,c,d}\omega^{b-(c+d)}\comp{Q}^d \comp{P}^c \comp{Q}^b\Big)\ket{\psi}\notag\\
&\hskip2cm - \bra{\psi}\Big(\sum_{a}\omega^{-a} P^a \Big)\otimes \Big(\sum_{b,c,d}\omega^{-b+(c+d)}\comp{Q}^b \comp{P}^c \comp{Q}^d\Big)\ket{\psi} \;.\label{eq:com-obs-1}
\end{align}
Let $\delta=\frac{1}{2} \|(PQ-QP)\otimes \Id \ket{\psi}\|^2$. From~\eqref{eq:com-obs-1} we get the
consistency relation 
\begin{align*}
\frac{1}{2}\sum_{a,b,c,d}\, \bra{\psi}\big( \omega^{(a-c)+(b-d)} Q^d P^c Q^b + \omega^{(c-a)+(d-b)}Q^b P^c Q^d\big) \otimes \comp{P}^{a} \ket{\psi}\, \approx_{\poly(\eta)} \,1-O(\delta)\;.
\end{align*}
We now repeat this argument, but replacing the observable $Q$ with its
power $(Q)^{\alpha}$ for $\alpha \in \Fp$ (not to be confused with the POVM
element $Q^a$). Starting from $\|(P(Q)^\alpha-(Q)^\alpha P)\otimes \Id \ket{\psi}\|^2 =  O(p\delta)$ for any $\alpha\in\Fp$ gives 
\begin{align*}
\frac{1}{2}\sum_{a,b,c,d}\, \bra{\psi}\big( \omega^{(a-c)+\alpha (b-d)} Q^d P^c Q^b + \omega^{(c-a)+\alpha (d-b)}Q^b P^c Q^d\big) \otimes \comp{P}^{a} \ket{\psi}\, \approx_{\poly(\eta)} \,1-O(p\delta)\;.
\end{align*}
 Averaging the relations over all $\alpha\in\Fp$ and using that $(b-d) \Fp = \Fp$ if
 $b-d\neq 0$ and $\{0\}$ otherwise yields
\begin{align*}
\frac{1}{2}\sum_{a,b,c}\, \bra{\psi} \big(\omega^{a-c} + \omega^{c-a}\big) Q^b
  P^a Q^b \otimes \comp{P}^{a} \ket{\psi}  \approx_{\poly(\eta)} 1-O(p^2\delta)\;.
\end{align*}
Using $|\omega^{a-c}+\omega^{c-a}|\leq 2-\Omega(1/p)$ for $a\neq c$ proves the claim. 
\end{proof}

We define the POVM elements $\{Q^a\}$ as 
$$ Q^a \,=\, W_k^{a_k} \cdots W_1^{a_1} \cdots W_k^{a_k}\;,$$
and define $\{\comp{Q}^a\}$ analogously. The two relations in the lemma then follow from the definition, the fact that $\{W_j^{a_j}\}$ are projections, and Claim~\ref{claim:com-obs}; similarly, consistency of $\{Q^a\}$ follows. 
\end{proof}

Lemma~\ref{lem:approx-commute-u} allows us to combine any pair of approximately commuting observables $\hat{X}_u(x_\bij)$ and $\hat{Z}_v(z_\bij)$ into a single POVM $\{\hat{Q}_{xu,zv}^{a,b}\}$, with outcomes $(a,b)\in \Fp^2$, that simultaneously refines both observables. In the following lemma we show that all $\{\hat{Q}_{xu,zv}^{a,b}\}$, for $u,v\in \Fq$, can be pieced together into a single POVM $\{\hat{Q}_{xz}^{a,b}\}$ with outcomes $(a, b) \in \Fq^2$  which simultaneously refines all $2 q$ observables $\hat{X}_u(x_\bij)$ and $\hat{Z}_v(z_\bij)$. Note that this is not trivial because we wish to avoid any dependence on $q$ when combining the $2q$ approximately commuting observables. To achieve this, we rely on the linearity test from~\cite{NV17}. 

\begin{lemma}\label{lem:hatq-meas-two-outcome}
For every $x,z\in\Fq^m$ there are projective
measurements $\{\hat{Q}_{xz}^{a,b}\}_{a,b\in \Fq^2}$ and
$\{\comp{\hat{Q}}_{xz}^{a,b}\}_{a,b\in \Fq^2}$  defined on $\reg{AA'}$ and
 $\reg{BA''}$ respectively such that the following hold, for some $\delta_Q = \poly(p)\cdot\poly(\delta_{\hat{M}})$:
\begin{enumerate}
\item The $\hat{Q}_{xz}$ are consistent with the $\comp{\hat{Q}}_{xz}$: 
$$ \sum_{a,b} \big(\hat{Q}_{xz}^{a,b}\big)_{\reg{AA'}} \otimes  \big(\comp{\hat{Q}}_{xz}^{a,b}\big)_{\reg{BA''}}\, \approx_{\delta_Q} \Id\;,$$
on average over uniformly random $x,z\in\Fq^m$.
\item The $\hat{Q}_{xz\alpha\beta}$ are consistent with
  $\comp{\hat{M}}_{X, x}$ and
  $\comp{\hat{M}}_{Z, z}$: 
 \[ \sum_{a,b} \big(\hat{Q}_{xz}^{a,b}\big)_{\reg{AA'}} \otimes  \Big(
 \comp{\hat{M}}_{X, x}^a
 \comp{\hat{M}}_{Z, z}^{b}\Big)_{\reg{BA''}} \approx_{\delta_Q} \Id\;, \]
on average over uniformly random $x,z\in\Fq^m$. 
\end{enumerate}
\end{lemma}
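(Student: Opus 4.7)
The plan is to fix $(x,z)\in\Fq^m\times\Fq^m$ and view the family
\[
  g_{x,z}(u,v) \,:=\, \hat{X}_u(x_\bij)\,\hat{Z}_v(z_\bij),\qquad (u,v)\in\Fq\times\Fq,
\]
as an approximate unitary representation of the additive group $\Fq\times\Fq$, then apply the two-prover linearity-test framework of~\cite{NV17} to extract the desired joint POVM. The first observation is that, because $\hat{W}_u(w_\bij)=\sum_a\omega^{\tr(au)}\hat{M}_{W,w}^a$ and the $\{\hat{M}_{W,w}^a\}$ are projective, the families $u\mapsto\hat{X}_u(x_\bij)$ and $v\mapsto\hat{Z}_v(z_\bij)$ are each \emph{exact} unitary representations of $\Fq$. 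Consequently, the only obstruction to $g_{x,z}$ being an exact representation of $\Fq\times\Fq$ is the failure of $\hat{Z}_v$ to commute with $\hat{X}_{u'}$:
\[
  g_{x,z}(u,v)\,g_{x,z}(u',v') - g_{x,z}(u+u',v+v') \,=\, \hat{X}_u\bigl(\hat{Z}_v\hat{X}_{u'}-\hat{X}_{u'}\hat{Z}_v\bigr)\hat{Z}_{v'}.
\]
Using unitarity of $\hat{X}_u$ together with the consistency $\hat{Z}_{v'}\otimes\Id\approx\Id\otimes\comp{\hat{Z}}_{v'}$ from~\eqref{eq:m-cons} to move $\hat{Z}_{v'}$ across the tensor factorization, the state-dependent norm squared of this difference reduces (on average over $v'$) to $\|[\hat{Z}_v,\hat{X}_{u'}]\ket{\hat{\psi}}\|^2$, which by~\eqref{eq:xz-com} averages to $O(\delta_{\hat{M}})$ over $(v,u',x,z)$. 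Thus $g_{x,z}$ is an approximate representation in the sense required to invoke the linearity test.

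The second step is to apply the quantitative stability statement underlying the two-prover linearity test of~\cite{NV17}, which, for an approximate unitary representation of an abelian group, produces a joint projective measurement whose characters reproduce the original operators on average. Applied here, this yields projective measurements $\{\hat{Q}_{xz}^{a,b}\}$ on $\reg{AA'}$ and $\{\comp{\hat{Q}}_{xz}^{a,b}\}$ on $\reg{BA''}$, indexed by $(a,b)\in\Fq^2$, with $\delta_Q=\poly(p)\cdot\poly(\delta_{\hat{M}})$, satisfying on average over $(x,z)$ the self-consistency $\sum_{a,b}\hat{Q}_{xz}^{a,b}\otimes\comp{\hat{Q}}_{xz}^{a,b}\approx_{\delta_Q}\Id$ (this establishes part~1 of the lemma) together with the Fourier identification $\hat{X}_u(x_\bij)\hat{Z}_v(z_\bij)\approx_{\delta_Q}\sum_{a,b}\omega^{\tr(ua+vb)}\hat{Q}_{xz}^{a,b}$. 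For part~2, I would specialize this identification to $v=0$ and Fourier-invert over $u\in\Fq$ to recover the marginal relation $\sum_b\hat{Q}_{xz}^{a,b}\approx\hat{M}_{X,x}^a$, and symmetrically $\sum_a\hat{Q}_{xz}^{a,b}\approx\hat{M}_{Z,z}^b$; a triangle-inequality argument that inserts $\Id=\sum_{b'}\comp{\hat{M}}_{Z,z}^{b'}$ between the two factors of the product and uses the $\hat{M}$--$\comp{\hat{M}}$ consistency supplied by Lemma~\ref{lem:hats} then delivers the claimed consistency with $\comp{\hat{M}}_{X,x}^a\comp{\hat{M}}_{Z,z}^b$.

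The main obstacle is the quantitative invocation of~\cite{NV17} with the correct scaling. Because the commutation bound~\eqref{eq:xz-com} is averaged over $(u,u')\in\Fq^2$ rather than pointwise, a direct pairwise application of Lemma~\ref{lem:approx-commute-u} to the $t^2$ basis cross-pairs $(\hat{X}_{b_\ell},\hat{Z}_{b_{\ell'}})$ would incur a factor of $q^2$ in translating the average bound into individual commutation defects, which is prohibitive. The linearity-test formulation is designed precisely to ingest an averaged group-law defect and output a joint measurement whose error is polynomial in $p$ and independent of $q$; verifying that it applies in the two-register, two-prover geometry of Lemma~\ref{lem:hats}, and that the Fourier marginals produce exactly the consistency relations demanded by part~2, is the step I expect to require the most care.
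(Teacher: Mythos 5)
Your proposal is essentially aligned with the paper's proof: the key tool in both is the two-prover linearity test of \cite{NV17}, invoked with $(u,v)\in\Fp^{2t}\cong\Fq^2$ as the query and using the exact group law of each of $u\mapsto\hat{X}_u$, $v\mapsto\hat{Z}_v$ together with the averaged approximate commutation~\eqref{eq:xz-com} to supply the approximate linearity hypothesis. Your derivation of item~2 by specializing the Fourier identification to $v=0$ (and symmetrically) and then sandwiching is a reasonable variant of the paper's calculation, which instead averages $\comp{\hat{X}}_u^c\comp{\hat{Z}}_v^d$ over $(u,v)$ and uses that distinct linear functionals on $\Fp^{2t}$ agree with probability $1/p$.

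Two points worth flagging. First, your stated obstacle is not the one the paper faces: the paper does apply Lemma~\ref{lem:approx-commute-u}, but only ever to a \emph{single pair} $(\hat{X}_u(x_\bij),\hat{Z}_v(z_\bij))$ at a time (so $k=2$ fixed), for \emph{every} $(u,v)$, inheriting the error on average over $(x,z,u,v)$ from~\eqref{eq:m-cons} and~\eqref{eq:xz-com}. There is no $t^2$-fold simultaneous application and hence no $q^2$ blowup; the cost is $\poly(p)\cdot\poly(\delta_{\hat{M}})$ per pair, uniformly. Second, your ``approximate unitary representation $g_{x,z}(u,v)=\hat{X}_u\hat{Z}_v$'' framing glosses over the fact that $g_{x,z}(u,v)^p\neq\Id$ in general (the $\hat{X}$ and $\hat{Z}$ only approximately commute), so $g_{x,z}(u,v)$ is not itself a generalized observable with outcomes in $\Fp$. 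To produce a genuine measurement-valued strategy for the linearity test of \cite{NV17}, one still needs the pairwise joint POVM $\{\hat{Q}_{xu,zv}^{a,b}\}_{a,b\in\Fp}$ from Lemma~\ref{lem:approx-commute-u}, collapsed to $\hat{Q}_{xu,zv}^c=\sum_{a+b=c}\hat{Q}_{xu,zv}^{a,b}$. This is precisely the intermediate step you were hoping to skip, and it is both necessary and harmless; with it in place the rest of your argument matches the paper.
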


\begin{proof}
For every $x,z\in\Fq^m$ and $u,v\in \Fq$ we let $\{ \hat{Q}_{xu,zv}^{a,b} \}_{a,b\in\Fp}$ and
$\{\comp{\hat{Q}}_{xu,zv}^{a,b}\}_{a,b\in \Fp}$ be the projective
measurements guaranteed by Lemma~\ref{lem:approx-commute-u} when the observables $P$ and $Q$ are chosen as $\hat{X}_u(x_\bij)$ and $\hat{Z}_v(z_\bij)$ respectively; the assumptions of the lemma are satisfied by~\eqref{eq:m-cons} and~\eqref{eq:xz-com}. We also define, for $c\in \Fp$, 
\begin{equation}\label{eq:def-hatq-c}
\hat{Q}_{xu,zv}^{c} =  \sum_{a,b:\, a +  b = c} \hat{Q}_{xu,zv}^{a,b}\;,
\end{equation}
and similarly define associated measurements $\{\comp{\hat{Q}}_{xu,zv}^{c}\}_{c\in\Fp}$. In the following claim we identify $\Fq$ with the vector space $\Fp^t$ to interpret this family of measurements as a strategy in the linearity test over $\Fp^{2t}$, with the queries specified by $(u,v)\in\Fp^{2t}$ and the answers $c\in\Fp$.

\begin{claim}\label{claim:q-lin}
On average over $x,z\in\Fq^m$ the family of projective measurements $\{ \hat{Q}_{xu,zv}^{c} \}_{c\in\Fp}$, indexed by $(u,v)\in \Fp^{2t}$, together with $\{\comp{\hat{Q}}_{xu,zv}^{c} \}_{c\in\Fp}$, induces a strategy with success probability at least $1-\poly(\delta_{\hat{M}})$ in the two-prover linearity test from~\cite[Section 3]{NV17}. 
\end{claim}

\begin{proof}
Consistency between  $\hat{Q}_{xu,zv}^{c}$ and $ \comp{\hat{Q}}_{xu,zv}^{c}$ is clear by the definition and Lemma~\ref{lem:approx-commute-u}. We need to verify (approximate) linearity. First note that from the definition of the observables $\hat{W}_u$  in~\eqref{eq:def-hat-xz}, using that the underlying measurements are projective it follows that for $W\in\{X,Z\}$ the family of observables $\{\hat{W}_u(w_\bij)\}_{u\in\Fp^t}$ is linear, in the sense that for any $u,u'\in\Fp^t$, $\hat{W}_u(w_\bij)\hat{W}_{u'}(w_\bij)=\hat{W}_{u+u'}(w_\bij)$, where $u+u'$ is addition as elements of the vector space $\Fp^t$. Using Lemma~\ref{lem:approx-commute-u} it follows that $\{ \hat{Q}_{xu,zv}^{c} \}$ is approximately linear in both $u$ and $v$, i.e. 
$$ \Es{u,u',v,v'} \sum_{c,c'} \, \hat{Q}_{xu,zv}^{c} \hat{Q}_{xu',zv'}^{c'} \otimes \comp{\hat{Q}}_{x(u+u'),z(v+v')}^{c+c'} \,\approx_{\poly(p,\delta_{\hat{M}})}\, \Id\;.$$
Linearity for $\{ \hat{Q}_{xu,zv}^{c} \}$ then follows directly from the definition~\eqref{eq:def-hatq-c}. 
\end{proof}

Applying~\cite[Theorem 10]{NV17} for every $x,z$ we find a single POVM $\{ \hat{Q}_{x,z}^{a,b} \}_{a,b\in\Fq^{2}}$ such that, on expectation over $u,v\in \Fp^t$, $\hat{Q}_{xu,zv}^{e} \approx \sum_{a,b:\, \tr(a \cdot u +b \cdot v) = e} \hat{Q}_{x,z}^{a,b}$.\footnote{The results in~\cite[Theorem 10]{NV17} apply to the analysis of the linearity test over $\F_2^{2t}$, i.e. the case $p=2$ here. The same proof extends with, minor modifications, to the case of arbitrary prime $p$. We omit the details.}
Similarly, there exists a $\{ \comp{\hat{Q}}_{x,z}^{a,b} \}_{a,b\in\Fq^{2}}$ that is consistent with $\{ \hat{Q}_{x,z}^{a,b} \}_{a,b\in\Fq^{2}}$. The first item in the lemma now follows immediately from the consistency guarantees of the linearity test.

For the second item, note that from Lemma~\ref{lem:approx-commute-u} and the guarantees of the linearity test, it follows that $\hat{Q}_{x,z}^{(a,b)}$ is approximately consistent with a randomly chosen product of $\comp{\hat{X}_u}$ and $\comp{\hat{Z}_v}$ POVM elements.
\begin{align*} \Id &\approx_{\delta_{\hat{M}}'}  \Es{u,v} \sum_{(a,b)} (\hat{Q}_{x,z}^{a,b})_{\reg{AA'}} \ot \Big( \sum_{c,d : \tr(au + bv) = c +d} \comp{\hat{X}}_u^{c} (x_\pi) \comp{\hat{Z}}_v^{d} (z_\pi) \Big)_{\reg{BA''}} \\
  &= \Es{u,v} \sum_{(a,b)} (\hat{Q}_{x,z}^{a,b})_{\reg{AA'}} \ot \Big( \sum_{a', b': \tr(au + bv) = \tr(a'u + b'v)} \comp{\hat{M}}^{a'}_{X,x} \comp{\hat{M}}^{b'}_{Z,z} \Big)_{\reg{BA''}} \\
  &= \sum_{(a,b)} (\hat{Q}_{x,z}^{a,b})_{\reg{AA'}} \ot \Big( \Es{u,v} \sum_{a', b'} 1_{\tr(au + bv) = \tr(a'u + b' v)} \comp{\hat{M}}^{a'}_{X,x} \comp{\hat{M}}^{b'}_{Z,z} \Big)_{\reg{BA''}} \\
  &= \sum_{(a,b)} (\hat{Q}_{x,z}^{a,b})_{\reg{AA'}} \ot \Big( \comp{\hat{M}}^{a}_{X,x} \comp{\hat{M}}^{b}_{Z,z} + \frac{1}{p} (\Id - \comp{\hat{M}}^{a}_{X,x} \comp{\hat{M}}^{b}_{Z,z}) \Big)_{\reg{BA''}} \\
  &= \frac{1}{p} \Id + \Big(1 - \frac{1}{p}\Big) \sum_{(a,b)} (\hat{Q}_{x,z}^{a,b})_{\reg{AA'}} \ot \Big(\comp{\hat{M}}^{a}_{X,x} \comp{\hat{M}}^{b}_{Z,z} \Big)_{\reg{BA''}}\;, 
\end{align*}
where in the second to last equation we used the fact that any two distinct linear functions from $\Fp^{2t}$ to $\Fp$ agree with probability $1/p$. Moving the term proportional to $\Id$ to the left hand side, we obtain
\[  \Id \approx_{O(\delta'_{\hat{M}})} \sum_{a,b} (\hat{Q}_{x,z}^{a,b})_{\reg{AA'}} \ot \Big(\comp{\hat{M}}^{a}_{X,x} \comp{\hat{M}}^{b}_{Z,z} \Big)_{\reg{BA''}}\;,\]
which is the desired consistency expression.
\end{proof}

For a subsequent application of the low-degree test it is more convenient to have a POVM which takes values $c\in\Fq$, rather than $(a,b)\in\Fq^2$. In order not to lose information, the following lemma re-arranges $\hat{Q}_{xz}^{a,b}$ into a family of POVMs $\{\hat{Q}_{xz\alpha\beta}^c = \sum_{a,b:\,\alpha a + \beta b =c} \hat{Q}_{xz}^{a,b}\}$, and shows that this family obeys similar consistency properties.

\begin{lemma}\label{lem:hatq-meas}
For every $\alpha,\beta\in\Fq$ and $x,z\in\Fq^m$ there are projective
measurements $\{\hat{Q}_{xz\alpha\beta}^{c}\}_{c\in \Fq}$ and
$\{\comp{\hat{Q}}_{xz\alpha\beta}^{c}\}_{c\in \Fq}$  defined on $\reg{AA'}$ and
 $\reg{BA''}$ respectively such that the following hold, for $\delta_Q$ as in Lemma~\ref{lem:hatq-meas-two-outcome}:
\begin{enumerate}
\item The $\hat{Q}_{xz\alpha\beta}$ are consistent with the $\comp{\hat{Q}}_{xz\alpha\beta}$: 
$$ \sum_c \big(\hat{Q}_{xz\alpha\beta}^c\big)_{\reg{AA'}} \otimes  \big(\comp{\hat{Q}}_{xz\alpha\beta}^c\big)_{\reg{BA''}}\, \approx_{\delta_Q} \Id\;,$$
on average over uniformly random  $\alpha,\beta\in\Fq$ and $x,z\in\Fq^m$.
\item The $\hat{Q}_{xz\alpha\beta}$ are consistent with
  $\comp{\hat{M}}_{X, x}$ and
  $\comp{\hat{M}}_{Z, z}$: 
 \[ \sum_c \big(\hat{Q}_{xz\alpha\beta}^c\big)_{\reg{AA'}} \otimes  \Big(
 \sum_{\substack{a, a':\\ \alpha a + \beta a'=c}} \comp{\hat{M}}_{X, x}^a
 \comp{\hat{M}}_{Z, z}^{a'}\Big)_{\reg{BA''}} \approx_{\delta_Q} \Id\;, \]
on average over uniformly random  $\alpha,\beta\in\Fq$ and $x,z\in\Fq^m$. 
\end{enumerate}
\end{lemma}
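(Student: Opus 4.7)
The plan is to simply define $\hat{Q}_{xz\alpha\beta}^c = \sum_{(a,b)\in\Fq^2:\,\alpha a + \beta b = c} \hat{Q}_{xz}^{a,b}$ and $\comp{\hat{Q}}_{xz\alpha\beta}^c$ analogously from the two-outcome POVMs constructed in Lemma~\ref{lem:hatq-meas-two-outcome}, and then deduce both consistency properties by grouping terms. Since $\{\hat{Q}_{xz}^{a,b}\}_{(a,b)\in\Fq^2}$ is a projective measurement, any coarse-graining of its outcome set is again projective, so the newly defined $\{\hat{Q}_{xz\alpha\beta}^c\}_{c\in\Fq}$ is a projective measurement for every fixed $\alpha,\beta\in\Fq$ (and likewise for $\comp{\hat{Q}}$).

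The key tool is a standard ``consistency implies rigidity against relabelings'' observation: if $\{A^a\}$ and $\{B^a\}$ are projective measurements on the two factors of $\ket{\hat\psi}$ satisfying $\sum_a A^a \otimes B^a \approx_\eta \Id$, then $\sum_{a \neq a'} A^a \otimes B^{a'} \ket{\hat\psi} \approx_{\poly(\eta)} 0$, since $1-\sum_a \bra{\hat\psi} A^a \otimes B^a \ket{\hat\psi} = \sum_{a\neq a'} \bra{\hat\psi}A^a\otimes B^{a'}\ket{\hat\psi}$ and each such cross term is nonnegative. Applied to item 1 of Lemma~\ref{lem:hatq-meas-two-outcome} with the pair $(A,B) = (\hat{Q}_{xz},\comp{\hat{Q}}_{xz})$, and to item 2 with $(A,B) = (\hat{Q}_{xz},\comp{\hat{M}}_{X,x}\comp{\hat{M}}_{Z,z})$ viewed as a joint POVM on $\reg{BA''}$ (which is a valid joint POVM because the commutation relation~\eqref{eq:xz-com} transferred to the ``consistent copy'' side makes $\comp{\hat{M}}_{X,x}^a\comp{\hat{M}}_{Z,z}^{b}$ approximately positive and summing to identity, up to an extra $\poly(\delta_{\hat M})$ error absorbed into $\delta_Q$), we obtain that off-diagonal cross terms are negligible on the state.

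With this in hand the computation for item 1 is immediate:
\begin{align*}
\sum_c \hat{Q}_{xz\alpha\beta}^c \otimes \comp{\hat{Q}}_{xz\alpha\beta}^c
&= \sum_{\substack{(a,b),(a',b')\\ \alpha a+\beta b=\alpha a'+\beta b'}} \hat{Q}_{xz}^{a,b}\otimes \comp{\hat{Q}}_{xz}^{a',b'}\\
&= \sum_{(a,b)} \hat{Q}_{xz}^{a,b}\otimes \comp{\hat{Q}}_{xz}^{a,b} + \sum_{\substack{(a,b)\neq(a',b')\\ \alpha a+\beta b=\alpha a'+\beta b'}} \hat{Q}_{xz}^{a,b}\otimes \comp{\hat{Q}}_{xz}^{a',b'}\;,
\end{align*}
where the first sum is $\approx_{\delta_Q}\Id$ by Lemma~\ref{lem:hatq-meas-two-outcome}(1), and the second is $\approx_{\poly(\delta_Q)} 0$ by the rigidity observation applied to that same consistency relation (the extra constraint $\alpha a+\beta b=\alpha a'+\beta b'$ only restricts the off-diagonal index set). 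Item 2 is handled identically, grouping the pair $(a',b')$ in the same way and invoking Lemma~\ref{lem:hatq-meas-two-outcome}(2) to kill off-diagonal terms.

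There is essentially no obstacle here; the content is purely combinatorial regrouping of outcomes and a single invocation of the rigidity-from-consistency principle. The only place to be mildly careful is that in item 2 the ``complementary POVM'' on $\reg{BA''}$ is a product $\comp{\hat M}_{X,x}^a \comp{\hat M}_{Z,z}^b$ which is \emph{approximately} (not exactly) a joint projective measurement; but since Lemma~\ref{lem:hatq-meas-two-outcome}(2) already absorbs this error into $\delta_Q$, the same $\delta_Q$ suffices for the coarse-grained statement and no new error parameter is introduced.
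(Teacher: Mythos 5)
Your proposal matches the paper's proof: the paper defines $\hat{Q}_{xz\alpha\beta}^c$ by exactly the same coarse-graining (Eq.~\eqref{eq:hat-q-def}) and then asserts in a single sentence that both items follow from the definition and the corresponding items of Lemma~\ref{lem:hatq-meas-two-outcome}. Your write-up supplies the missing bookkeeping, and for item~1 it is fully rigorous: every off-diagonal term $\bra{\hat\psi}\hat{Q}^{a,b}\otimes\comp{\hat{Q}}^{a',b'}\ket{\hat\psi}$ is nonnegative, so restricting the index set to a hyperplane can only decrease the already-$O(\delta_Q)$ off-diagonal mass, and the positive sub-sum operator $T$ (which satisfies $T\leq\Id$) then obeys $\|T\ket{\hat\psi}\|^2\leq\bra{\hat\psi}T\ket{\hat\psi}=O(\delta_Q)$.

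The one place where your argument is looser than a careful reader would want is item~2. Your rigidity lemma hinges on the cross terms being nonnegative, but $\comp{\hat{M}}_{X,x}^{a'}\comp{\hat{M}}_{Z,z}^{b'}$ is neither Hermitian nor positive, so ``each cross term is nonnegative'' fails, and ``approximately positive'' is not a substitute — individually large cross terms could cancel in the full sum but not in the restricted one. The fix is short but worth having: use that the $\{\hat{Q}_{xz}^{a,b}\}$ are orthogonal projectors, so when you compute $\bigl\|\sum_{(a,b)\neq(a',b'),\;\alpha a+\beta b=\alpha a'+\beta b'}\hat{Q}^{a,b}\otimes\comp{\hat{M}}_X^{a'}\comp{\hat{M}}_Z^{b'}\ket{\hat\psi}\bigr\|^2$ the square collapses (by orthogonality in $a,b$ and the fact that for fixed $c$, $a'$ determines $b'$) to $\sum_{(a,b)\neq(a',b'),\;\alpha a+\beta b=\alpha a'+\beta b'}\bra{\hat\psi}\hat{Q}^{a,b}\otimes\comp{\hat{M}}_Z^{b'}\comp{\hat{M}}_X^{a'}\comp{\hat{M}}_Z^{b'}\ket{\hat\psi}$, whose terms \emph{are} nonnegative, with $\sum_{a',b'}\comp{\hat{M}}_Z^{b'}\comp{\hat{M}}_X^{a'}\comp{\hat{M}}_Z^{b'}=\Id$. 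One then needs that $\sum_{a,b}\hat{Q}^{a,b}\otimes\comp{\hat{M}}_Z^b\comp{\hat{M}}_X^a\comp{\hat{M}}_Z^b\approx\Id$, which follows from item~2 of Lemma~\ref{lem:hatq-meas-two-outcome} together with the approximate commutation of $\comp{\hat{M}}_{X,x}$ and $\comp{\hat{M}}_{Z,z}$ (itself a consequence of Eq.~\eqref{eq:xz-com} and consistency), at the cost of $\poly(\delta_{\hat M})$ absorbed into $\delta_Q$. Since the paper itself offers only a one-line justification, your level of detail is comparable; just be aware that ``approximately positive'' is papering over a nonnegativity requirement that genuinely needs the conjugated form to hold.
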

\begin{proof}
Let $\hat{Q}_{x,z}^{a,b}$  and $\comp{\hat{Q}}_{x,z}^{a,b}$ be the families of POVMs guaranteed by Lemma~\ref{lem:hatq-meas-two-outcome}. We ``collapse'' the $\hat{Q}$ (and analogously, the $\comp{\hat{Q}}$) into a family of measurements with outcomes in $\Fq$ by defining, for every $\alpha,\beta\in\Fq$, $x,z\in\Fq^m$ and $c\in\Fq$, 
\begin{equation}\label{eq:hat-q-def}
 \hat{Q}_{xz\alpha\beta}^{c} \,=\,\sum_{a,b:\,\alpha a + \beta b = c} \hat{Q}_{x,z}^{a,b}\;.
\end{equation}
With this definition, both items in the claim follows from the definition~\eqref{eq:hat-q-def} and corresponding items of Lemma~\ref{lem:hatq-meas-two-outcome}.


\end{proof}

The next step in the proof of Lemma~\ref{lem:hat-s} is to use the family of measurements $\{\hat{Q}_{xz\alpha\beta}^c\}$ to devise a strategy for the provers in the classical $(2m+2)$-variable degree-$(d+1)$ test. Towards this the following claim establishes the existence of appropriate subspace measurements. 

\begin{claim}\label{claim:lines}
For every dimension-$2$ subspace $s \subseteq \Fq^{2m+2}$ there exists a POVM
$\{\hat{Q}_{s}^r\}_r$, with outcomes $r\in \deg_{d+1}(s)$, such that, on average over a uniformly random $s$ and $(x,z,\alpha,\beta)\in s$,
$$ \sum_{r,c:\,r(x,z,\alpha,\beta)\neq c}\, \bra{\psi} \hat{Q}_s^r \otimes
\comp{\hat{Q}}_{xz\alpha\beta}^c \ket{\psi} \,=\, \poly(\delta_Q) +
O\big(d/q\big)\;.$$
Likewise, there exists $\{\comp{\hat{Q}}_{s}^r\}$ that are consistent with $\{\hat{Q}_{xz\alpha\beta}^c\}$, on average.
\end{claim}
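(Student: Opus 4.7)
The plan is to construct $\hat{Q}_s^r$ on register $\reg{AA'}$ by mimicking the honest subspace strategy: at the point $(x,z,\alpha,\beta)\in s$, the honest polynomial returns $c = \alpha g_1(x) + \beta g_2(z)$ for low-degree $g_1,g_2\in\deg_d(\Fq^m)$, a polynomial of total degree at most $d+1$ in $(x,z,\alpha,\beta)$, whose restriction to any $2$-dimensional subspace $s\subseteq \Fq^{2m+2}$ lies in $\deg_{d+1}(s)$. The construction consists of running the ``$X$-subspace'' and ``$Z$-subspace'' measurements from Lemma~\ref{lem:hats} on $\reg{AA'}$ and combining their outputs into a degree-$(d+1)$ polynomial on $s$.

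More concretely, I would parameterize $s$ by $(\lambda,\mu)\in\Fq^2$ via affine maps $x(\lambda,\mu),z(\lambda,\mu)\in\Fq^m$ and $\alpha(\lambda,\mu),\beta(\lambda,\mu)\in\Fq$, and let $s_X,s_Z\subseteq\Fq^m$ be (any choice of) $2$-dimensional affine subspaces containing $\{x(\lambda,\mu)\}$ and $\{z(\lambda,\mu)\}$ respectively. After Naimark-dilating each $\hat{M}_{W,s'}$ to a projective measurement, I would define $\hat{Q}_s^r$ as the POVM that first applies $\hat{M}_{X,s_X}$ to obtain $r_X\in\deg_d(s_X)$, then applies $\hat{M}_{Z,s_Z}$ to obtain $r_Z\in\deg_d(s_Z)$, and finally outputs $r(\lambda,\mu) = \alpha(\lambda,\mu)\,r_X(x(\lambda,\mu)) + \beta(\lambda,\mu)\,r_Z(z(\lambda,\mu))$. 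Letting $\comp{\hat Q}_s^r$ be constructed analogously on $\reg{BA''}$, the symmetric claim will follow in parallel.

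The consistency analysis then chains three ingredients. First, sequential applicability: the tensor factorization $\hat{M}_{W,s'} = M_{W,s'}\otimes \comp{\qp}_{W,s'}$ combined with the twisted commutation relations established in Lemma~\ref{lem:hats} (eq.~\eqref{eq:xz-com}) and the observable-to-measurement conversion from Claim~\ref{claim:obs-meas-cons} implies that reversing the $X$-then-$Z$ order perturbs the state only by $\poly(\delta_{\hat{M}})$ on average over $s$, so the composite POVM is well-defined up to the target error. Second, consistency of each stage: the subspace-to-point consistency in Lemma~\ref{lem:hats} gives $\hat{M}_{X,s_X}^{r_X}$ on A consistent with $\comp{\hat{M}}_{X,x}^{r_X(x)}$ on B for $x\in s_X$, and likewise for $Z$; chaining these (using Cauchy-Schwarz to pass through the second measurement) yields consistency of the pair $(r_X,r_Z)$ on A with the product $\comp{\hat{M}}_{X,x}^{a}\comp{\hat{M}}_{Z,z}^{a'}$ on B at $a=r_X(x)$, $a'=r_Z(z)$. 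Third, consistency with $\comp{\hat Q}$: invoking item~2 of Lemma~\ref{lem:hatq-meas} then identifies this with $\comp{\hat{Q}}_{xz\alpha\beta}^c$ at $c = \alpha r_X(x)+\beta r_Z(z) = r(x,z,\alpha,\beta)$, which is exactly the desired conclusion. The $O(d/q)$ term will appear via Schwartz-Zippel when bounding the probability that two distinct degree-$(d+1)$ polynomials agree at a uniformly random point of $s$, handling the degenerate cases where $\pi_X(s)$ or $\pi_Z(s)$ has dimension less than $2$ and has to be extended arbitrarily.

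The main obstacle will be controlling the disturbance incurred by applying two non-commuting subspace measurements in sequence on $\reg{AA'}$. The approximate commutation is only guaranteed at the level of point observables $\hat{X}_u(x_\bij),\hat{Z}_v(z_\bij)$; lifting this to approximate commutation between the full subspace projective measurements requires inverting~\eqref{eq:m-from-w} to express each $\hat{M}_{W,s}^r$ as a Fourier combination of $\hat{W}_u$'s at the $q^2$ points of $s$, and averaging the pointwise commutation bounds. The second subtlety is that $\hat{M}_{W,s'}$ only enjoys good consistency on average over $s'\sim\pild$, while the subspace $s\subseteq\Fq^{2m+2}$ is sampled uniformly, so a short combinatorial argument is needed to verify that the induced projections $s_X,s_Z$ (together with any random padding used to fill them up to dimension $2$) are distributed closely enough to $\pild$ to inherit the consistency guarantees.
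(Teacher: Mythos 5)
Your construction is the paper's construction: the sequential ``measure $\hat M_{X,s_X}$ then $\hat M_{Z,s_Z}$'' POVM, with the outcome read off as $\alpha(\lambda,\mu)r_X(x(\lambda,\mu))+\beta(\lambda,\mu)r_Z(z(\lambda,\mu))$, is exactly the sandwich $T_{s',s''}^{p,q}=\hat M_{X,s'}^{p}\hat M_{Z,s''}^{q}\hat M_{X,s'}^{p}$ followed by the coarse-graining in~\eqref{eq:hatq_subspace_def} (no Naimark step is needed, since the $\hat M_{W,s}^r$ from Claim~\ref{claim:strategies} are already projective). Your second and third consistency ingredients also match the paper's chain~\eqref{eq:joint_subspace_points}: both pass from $T_{s',s''}^{p,q}$ on one side to $\hat Q_{xz}^{p(x),q(z)}$ on the other using item~2 of Lemma~\ref{lem:hatq-meas} together with Schwartz--Zippel, and your remark about the distribution of the projected subspaces $s_X,s_Z$ versus $\pild$ points at a step the paper leaves implicit.

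The one genuine misstep is your ``first'' ingredient, which you also flag as the main obstacle: trying to prove approximate commutation of the \emph{subspace} measurements $\hat M_{X,s_X}$ and $\hat M_{Z,s_Z}$ by Fourier-inverting~\eqref{eq:m-from-w} and averaging pointwise commutation bounds. This is unnecessary, because the sandwich $\hat M_{X,s'}^{p}\hat M_{Z,s''}^{q}\hat M_{X,s'}^{p}$ is a valid POVM with no hypothesis whatsoever --- $\hat M_{X,s'}^{p}$ is a projection, so each $T^{p,q}$ is PSD, and they sum to $\Id$ --- and the paper's consistency chain never once appeals to commutation of subspace measurements. It is also dangerous: the inversion of~\eqref{eq:m-from-w} expresses $\hat M_{W,s}^r$ as a sum over $\Theta(q^2)$ point observables, so naively averaging the $\poly(\delta_{\hat M})$ pointwise bounds would blow up the error by a factor polynomial in $q$. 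Avoiding precisely this kind of error accumulation is the entire reason the analysis routes through Theorem~\ref{thm:ml} rather than through pairwise commutation; if you pursue the commutation route you will likely stall. Drop that step, observe that $T$ is a POVM for free, and go straight to the consistency argument you sketched.
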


\begin{proof}
The argument is entirely symmetric between $\hat{Q}$ and $\comp{\hat{Q}}$. We give the construction for $\hat{Q}$. 

Let $s\subset \Fq^{2m+2}$ be a two-dimensional linear subspace spanned by two vectors $y_1 = (x_1,z_1,\alpha_1,\beta_1),y_2=(x_2,z_2,\alpha_2,\beta_2)\in\Fq^{2m+2}$. Let $s',s'' \subset \Fq^m$ be the (at most) two-dimensional subspaces spanned by $\{x_1,x_2\}$ and $\{z_1,z_2\}$ respectively. 
Any point $(\lambda, \mu)$ in the subspace $s$ corresponds to a point in the full space
\[ \#(\lambda, \mu) = (\lambda x_1 + \mu x_2, \lambda z_1 + \mu z_2, \lambda \alpha_1 + \mu \alpha_2, \lambda \beta_1 + \mu \beta_2). \]
Let $g(x,y, \alpha, \beta) = \alpha p(x) + \beta q(y)$ be a $2m + 2$-variate polynomial. Its restriction $r$ to the subspace $s$ takes the values
\[ r(\lambda, \mu) = g(\#(\lambda, \mu)) = (\lambda \alpha_1 + \mu \alpha_2) p(\lambda x_1 + \mu x_2) + (\lambda \beta_1 + \mu \beta_2) q(\lambda z_1 + \mu z_2). \]
From this expression, we see that $r$ can be evaluated if we have
access to the restrictions of $p$ to $s'$ and $q$ to $s''$,
respectively. We will now construct a measurement that, given $s',
s''$ as input, jointly measures $p$ and $q$ in a manner that is
consistent with the joint points measurement $\{\hat{Q}_{xz}^{a,b}\}$ guaranteed by Lemma~\ref{lem:hatq-meas-two-outcome}. Define
\begin{equation} T_{s',s''}^{p,q} = \hat{M}_{X,s'}^{p} \hat{M}_{Z,s''}^{q}
  \hat{M}_{X,s'}^{p}. \label{eq:joint_subspace} \end{equation}
The collection $\{T_{s', s''}^{p,q}\}_{p,q}$ forms a valid POVM. Its
inconsistency with $\hat{Q}_{xz}^{a,b}$ is:
\begin{align}
  &\Es{s',s''} \Es{x \in s', z \in s''} \sum_{\substack{a,b,p,q: \\ (a,b) \neq (f(x), q(z))}}
  \bra{\psi} T_{s', s''}^{p,q} \ot \comp{\hat{Q}}_{xz}^{a,b} \ket{\psi}
  \nonumber \\
  &\qquad=\Es{s', s''} \Es{x \in s', z \in s''} \sum_{\substack{a,b,p,q: \\ (a,b) \neq
  (p(x), q(z))}}\bra{\psi} \hat{M}_{X,s'}^{p} \hat{M}_{Z,s''}^{q} \hat{M}_{X,s'}^{p} \ot
  \comp{\hat{Q}}_{xz}^{a,b} \ket{\psi} \nonumber \\
  &\qquad= 1 - \Es{s', s''} \sum_{p, q} \Es{x \in s', z \in s''} \bra{\psi} \hat{M}_{X,s'}^{p} \hat{M}_{Z,s''}^{q} \hat{M}_{X,s'}^{p} \ot
  \comp{\hat{Q}}_{xz}^{p(x),q(z)} \ket{\psi} \nonumber\\
  &\qquad \leq 1  - \Es{s', s''} \sum_{p, q, p'} \Es{x \in s', z \in s''} \bra{\psi} \hat{M}_{X,s'}^{p} \hat{M}_{Z,s''}^{q} \hat{M}_{X,s'}^{p'} \ot
  \comp{\hat{Q}}_{xz}^{p(x),q(z)} \ket{\psi}   + \poly(\delta_Q) + O(d/q)
    \nonumber \\
  &\qquad = 1  - \Es{s', s''} \sum_{p, q} \Es{x \in s', z \in s''} \bra{\psi} \hat{M}_{X,s'}^{p} \hat{M}_{Z,s''}^{q}  \ot
  \comp{\hat{Q}}_{xz}^{p(x),q(z)} \ket{\psi}  + \poly(\delta_Q) + O(d/q) \nonumber\\
  &\qquad = \Es{s', s''} \sum_{p, q} \sum_{(a,b) \neq (p(x), q(z))} \Es{x \in s', z \in s''} \bra{\psi} \hat{M}_{X,s'}^{p} \hat{M}_{Z,s''}^{q}  \ot
  \comp{\hat{Q}}_{xz}^{a,b} \ket{\psi}  + \poly(\delta_Q) + O(d/q) \nonumber \\
  &\qquad =  \poly(\delta_Q) + O(d/q)\;. \label{eq:joint_subspace_points}
\end{align}
Here, in the first equation we used~\eqref{eq:joint_subspace}. In the third
equation we used item 2. of Lemma~\ref{lem:hatq-meas} to bound the error terms
where $p'(x) \neq p(x)$ by $\poly(\delta_Q)$, and used Lemma~\ref{lem:sz} to bound the probability
that $p(x) = p'(x)$ on a randomly chosen $x$ for distinct polynomials $p, p'$ by
$O(d/q)$. Finally, in the sixth equation, we again used item 2. of Lemma~\ref{lem:hatq-meas}.

Finally, we will use this to construct a joint subspace
measurement $\{\hat{Q}_{s}^r\}_r$:
\begin{equation} \hat{Q}_{s}^r = \sum_{p, q: r = (\lambda \alpha_1 + \mu \alpha_2) p(\lambda
    x_1 + \mu x_2) + (\lambda \beta_1 + \mu \beta_2) q(\lambda z_1 + \mu z_2)}
  T_{s', s''}^{p,q}\;. \label{eq:hatq_subspace_def} 
	\end{equation}
The consistency of $\hat{Q}_{s}^r$ with the re-arranged points measurement $\hat{Q}_{xz\alpha \beta}^{c}$ now follows directly from~\eqref{eq:hatq_subspace_def} together with the
consistency of $T_{s', s''}^{p,q}$ with $\hat{Q}_{xz}^{a,b}$ established in~\eqref{eq:joint_subspace_points}.
\end{proof}

Claim~\ref{claim:lines} shows that the families of measurements
$\{\hat{Q}_{s}^r\}_r$ and $\{\hat{Q}_{xz\alpha\beta}^c\}$ induce a strategy with
success probability $1-\eps'$ in the classical low-degree test, for some $\eps'
= \poly(p) \cdot \poly(\delta_{\hat{M}})+O(d/q)$ (to obtain this error bound,
recall that $\delta_Q = \poly(p) \cdot \poly(\delta_{\hat{M}})$). This allows us to apply Theorem~\ref{thm:ml} and conclude the proof of Lemma~\ref{lem:hat-s}. 


\begin{proof}[Proof of Lemma~\ref{lem:hat-s}]
The proof is symmetric under exchanging $\hat{S}, \hat{X}, \hat{Z}$ with $\comp{\hat{S}}, \comp{\hat{X}}, \comp{\hat{Z}}$, so we only present one direction.
Claim~\ref{claim:lines} establishes the existence of a strategy
$\{\hat{Q}_{s}^r,\comp{\hat{Q}}_s^r\}$ which succeeds with probability
$1-\eps'$, for some $\eps' = \poly(t)\cdot\poly(\delta_{\hat{M}})+O(d/q)$, in
the classical low-degree test for $m'=2m+2$ and $d'=d+1$. Increasing the error
artificially by replacing $O(d/q)$ with $md/q^{1/c}$ if needed, the assumption
$q \geq (dm/\eps')^c $ from Theorem~\ref{thm:ml} is satisfied. The theorem yields a POVM
$\{\hat{S}^g\}$ with outcomes in the set of polynomials 
$g:\Fq^{2m+2}\to\Fq$ of degree at most $d+1$ that is self-consistent and
consistent with the $\{\hat{Q}_{s}^r\}$ and
the family of POVM $\{\hat{Q}_{xz\alpha\beta}^c\}$ defined in Lemma~\ref{lem:hatq-meas}. Moreover, applying Naimark's
dilation theorem (taking advantage of the assumption that the state $\ket{\hat{\psi}}$ defined in~\eqref{eq:ancilla-def}, by definition, contains sufficiently many ancilla $\ket{0}$ qubits), we can assume without loss of generality that $\{\hat{S}^g\}$ is a projective measurement.

In general there is no a priori guarantee that $g$ takes the form $g=\alpha g_1 + \beta g_2$ for $g_1$ (resp. $g_2$) a degree-$d$ polynomial in $x$ (resp. $z$) only. Let $\mathcal{G}$ denote the latter set of polynomials. We first show that the probability of obtaining an outcome $g$ that does not fall within the set $\mathcal{G}'$ of $(2m+2)$-variate polynomials that are linear in $\alpha$ and $\beta$ is small. 
  Using consistency of $\{\hat{S}^g\}$ with the $\{\hat{Q}_{xz\alpha\beta}^c\}$, which follows from item 1. in Theorem~\ref{thm:ml}, 
  \begin{align}
    \sum_{g \notin \mathcal{G}'} \hat{S}^g 
    &\approx_{\poly(\eps')} \sum_{g \notin \mathcal{G}'}  \Es{(x,z,\alpha,\beta)\in\Fq^{2m+2}} \hat{S}^g
      \ot \comp{\hat{Q}}^{g(x,z,\alpha,\beta)}_{xz\alpha\beta} \notag\\
    &\approx_{\poly(\delta_Q)} \sum_{g \notin \mathcal{G}'} \sum_{a,b} \Es{x,z} \Big( \Es{\al,\be} 1_{\alpha a + \beta b = g(x,z,\alpha,\beta)} \Big) \hat{S}^g
      \ot \comp{\hat{M}}_{X, x}^{a} \comp{\hat{M}}_{Z,z}^b\;,\label{eq:hats-1}
			\end{align}
			where the second approximation follows from item 2. in
      Lemma~\ref{lem:hatq-meas}. If $g$ contains a term of degree higher than
      $1$ in $\alpha$ or $\beta$ the expectation inside the brackets
      in~\eqref{eq:hats-1} is at most $O(d/q)$. This bounds the contribution of
      outcomes $g\notin \mathcal{G}'$. Next we show that 
  outcomes in $\mathcal{G}'\backslash \mathcal{G}$ are unlikely, i.e. that $g_1$ should
  depend solely on $x$ and $g_2$ solely on $z$. (Note that either polynomial has degree at most $d$, since $g$ itself has degree at most $d+1$.) Towards this, starting from~\eqref{eq:hats-1}, write  
	\begin{align*}
	  \sum_{g = \alpha g_1 + \beta g_2\in \mathcal{G}'} \hat{S}^g 
    &\approx_{\poly(\eps',\delta_Q)} \sum_{g = \alpha g_1 + \beta g_2}  \hat{S}^g
      \ot \Es{z}\,\sum_b \,\Big( \sum_a \Es{x}
      1_{a=g_1(x,z)} 1_{b = g_2(x,z)} \comp{\hat{M}}_{X, x}^{a}
      \Big)\comp{\hat{M}}_{Z, z}^b\\
    &\approx_{\poly(\delta_Q)} \sum_{g = \alpha g_1 + \beta g_2} \hat{S}^g \ot
      \Es{ z} \, \sum_b \, \sum_a \, \Es{x} 1_{a=g_1(x,z)}
      1_{b=g_2(x,z)} \comp{\hat{Q}}^{a,b}_{x,z}\\
      &\leq \sum_{g = \alpha g_1 + \beta g_2} \hat{S}^g \ot
      \Es{z} \, \sum_b \, \Es{x} 
      1_{b=g_2(x,z)} \Big( \sum_a \comp{\hat{Q}}^{a,b}_{x,z} \Big)\\
		&\approx_{\poly(\delta_Q)} \sum_{g = \alpha g_1 + \beta g_2}  \hat{S}^g
      \ot \Es{z}\,\sum_b \,\Big(  \Es{x} 1_{b = g_2(x,z)}
      \Big)\comp{\hat{M}}_{Z, z}^b\;,
	\end{align*}
	where the inequality removes the first indicator by using positivity. 
	If $g_2(x,z)$ depended on $x$, the indicator $1_{b=g_2(x,z)}$ appearing within the expression inside the brackets would have probability at most $O(d/q)$ to be satisfied, for a random choice of $b$ and $z$, on average over $x$. Thus outcomes $g=\alpha g_1 + \beta g_2$ for which $g_2$ depends on $x$ have small probability; similarly for $z$.
The relation~\eqref{eq:hat-s-xz-cons} then follows directly from the above, Claim~\ref{claim:hat-m-cons}, and the second item in Lemma~\ref{lem:hatq-meas}.
\end{proof}

\subsection{Generalized $X$ and $Z$ observables}
\label{sec:tildex}

In this section we complete the last main step of the proof. For $W\in\{X,Z\}$, $w\in\Fq^m$ and $u\in \Fq$ let ${W}_u(w_\bij)$ be the observable defined in Claim~\ref{claim:strategies}, and $\hat{W}_u(w_\bij)$ defined in Lemma~\ref{lem:hats}. For convenience we consider a ``basis'' for these sets of observables, as $u$ ranges over $\Fp^t$, by defining 
\begin{equation}\label{eq:def-wl}
 \forall \ell\in\{1,\ldots,t\}\;,\qquad {W}_\ell(w_\bij)\,=\, {W}_{b_\ell}(w_\bij)\;\qquad\text{and}\qquad  \hat{W}_\ell(w_\bij)\,=\, \hat{W}_{b_\ell}(w_\bij)\;,
\end{equation}
where $\{b_1,\ldots,b_t\}$ is a self-dual basis for $\Fq$ over
$\Fp$. The corresponding ``true'' Paulis $\sigma_{W, \ell}(w_\bij)$ were defined in~\eqref{eq:pauli-l}. Similarly define $\comp{W}_\ell(w_\bij)$ and $\comp{\hat{W}}_\ell(w_\bij)$.
 In this section we use these observables to construct a family of observables
$\tilde{W}_\ell(a)$, for $a\in\Fq^n$, which satisfy appropriate self-consistency and twisted commutation relations, as expressed in the following lemma.
\begin{lemma}\label{lem:xz-lowdeg}
Let $m,d,q=p^t$ be as in Lemma~\ref{lem:soundness}, $\ket{\hat{\psi}}$ as in Lemma~\ref{lem:hats}, $W_\ell(w_\bij)$ and
$\comp{W}_\ell(w_\bij)$, 
$\hat{W}_\ell(w_\bij)$ and $\comp{\hat{W}}_\ell(w_\bij)$ as in~\eqref{eq:def-wl}, and
$\{\hat{S}^{g_1,g_2}\}$, $\{\comp{\hat{S}}^{g_1,g_2}\}$, and $\delta_S$ as in
Lemma~\ref{lem:hat-s}.  

For every $a\in\Fq^n$ and $\ell \in \{1, \dots, t\}$ there are observables $\tilde{X}_\ell(a)$ acting
on $\reg{AA'A''}$ and $\comp{\tilde{X}}_\ell(a)$ acting on $\reg{BB'B''}$,
and for every $b\in \Fq^n$ and $\ell' \in \{1, \dots , t\}$ observables $\tilde{Z}_{\ell'}(b)$ acting on
$\reg{AA'A''}$ and $\comp{\tilde{Z}}_\ell(b)$ acting on $\reg{BB'B''}$, such that the following properties hold for some $\delta_W = \poly(\delta_S)+\poly(d/q)$: 
\begin{enumerate}
\item The families of observables $\{\tilde{X}_\ell(a),\tilde{Z}_{\ell'}(b),\,
  a,b\in\Fq^n,\,\ell, \ell' \in \{1, \dots, t\} \}$ and
  $\{\comp{\tilde{X}}_\ell(a), \comp{\tilde{Z}}_{\ell'} (b), \,
  a,b, \in \Fq^n,\, \ell, \ell' \in \{1, \dots, t\} \}$ exactly satisfy the same 
  algebraic relations as the Pauli observables $\sigma_{W, \ell}$ over $\Fq$ defined in~\eqref{eq:pauli-l};
\item On average over $x,z\in\Fq^m$,  and for all $\ell, \ell' \in \{1, \dots, t\}$,
$$\tilde{X}_{\ell}^e(x_\bij) \approx_{\delta_W} X_{\ell}^e(x_\bij)\qquad
\text{and}\qquad \tilde{Z}_{\ell'}^e(z_\bij) \approx_{\delta_W} Z_{\ell'}^e(z_\bij)\;,$$
and the analogous relations between $\comp{\tilde{X}}_\ell,
\comp{\tilde{Z}}_{\ell'}$ and $\comp{X}_\ell, \comp{Z}_{\ell'}$ hold;
\item The $\tilde{X}_{\ell}(a)$, $\tilde{Z}_{\ell'}(b)$ are approximately consistent
  with the $\comp{\tilde{X}}_{\ell}(a), \comp{\tilde{Z}}_{\ell'}(b)$: in expectation over $a,b\in\Fq^n$,
$$ \sum_{e \in \Fp} \,\tilde{X}_\ell^e(a) \otimes \comp{\tilde{X}}_\ell^e(a)
\approx_{\delta_W} \Id\;,\qquad  \sum_{e \in \Fp} \,\tilde{Z}_{\ell'}^e(b) \otimes \comp{\tilde{Z}}_{\ell'}^e(b) \approx_{\delta_W} \Id\;.$$
\end{enumerate}
\end{lemma}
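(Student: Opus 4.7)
The plan is to refine the joint polynomial measurement $\{\hat{S}^{g_1,g_2}\}$ from Lemma~\ref{lem:hat-s} into Pauli-like observables indexed by arbitrary strings $a\in \Fq^n$, and then ``undo'' the ancilla tensoring introduced in Lemma~\ref{lem:hats} by multiplying the result with an honest Pauli on $\reg{A''}$. Concretely, I first coarsen $\hat{S}$ into a joint POVM on $\reg{AA'}$ with outcomes in $\Fq^n\times\Fq^n$ by setting
\[ \tilde{S}^{a,b} \,=\, \sum_{g_1,g_2:\,g_1(\pi(i))=a_i,\,g_2(\pi(i))=b_i\,\forall i}\, \hat{S}^{g_1,g_2}, \]
with marginals $\tilde{S}_X^a = \sum_b \tilde{S}^{a,b}$ and $\tilde{S}_Z^b = \sum_a \tilde{S}^{a,b}$, which by Naimark dilation inside the existing ancilla $\ket{0}^{\otimes 2N}$ we may take to be projective. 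I then define Fourier observables $\tilde{X}^{\mathrm{raw}}(c)=\sum_a \omega^{\tr(a\cdot c)}\tilde{S}_X^a$ and $\tilde{Z}^{\mathrm{raw}}(c)=\sum_b\omega^{\tr(b\cdot c)}\tilde{S}_Z^b$ on $\reg{AA'}$, and set
\[ \tilde{X}_\ell(a) \,=\, \tilde{X}^{\mathrm{raw}}(b_\ell a)_{\reg{AA'}}\otimes \qp_X(b_\ell a)_{\reg{A''}},\qquad \tilde{Z}_\ell(a) \,=\, \tilde{Z}^{\mathrm{raw}}(b_\ell a)_{\reg{AA'}}\otimes \qp_Z(b_\ell a)_{\reg{A''}}. \]
The mirror operators $\comp{\tilde{X}}_\ell,\comp{\tilde{Z}}_{\ell'}$ on $\reg{BB'B''}$ are produced by the same construction applied from the second prover's vantage point.

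Item~1 then reduces to a direct algebraic check. Additivity $\tilde{X}^{\mathrm{raw}}(c)\tilde{X}^{\mathrm{raw}}(c') = \tilde{X}^{\mathrm{raw}}(c+c')$ follows from projectivity of $\tilde{S}_X$ together with multiplicativity of the character $\omega^{\tr(\cdot)}$; commutation of $\tilde{X}^{\mathrm{raw}}(c)$ and $\tilde{Z}^{\mathrm{raw}}(c')$ holds because $\tilde{S}_X$ and $\tilde{S}_Z$ are marginals of a single joint projective measurement; and the twisted commutation of $\tilde{X}_\ell(a)$ with $\tilde{Z}_{\ell'}(b)$ is inherited verbatim from~\eqref{eq:twisted-fq} applied to the Pauli factors on $\reg{A''}$. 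Self-duality~\eqref{eq:self-dual} of the basis $(b_1,\ldots,b_t)$ converts the trace inner product $\tr(b_\ell b_{\ell'}ab)$ into the correct $\delta_{\ell\ell'}ab$, recovering exactly the algebra of the $\sigma_{W,\ell}$ defined in~\eqref{eq:pauli-l}.

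Item~2 is the main computation. Using the interpolation identity~\eqref{eq:interpolation}, the raw $X$-observable evaluated at a pulled-back point can be rewritten as
\[ \tilde{X}^{\mathrm{raw}}(b_\ell x_\pi) \,=\, \sum_{g_1,g_2}\,\omega^{\tr(b_\ell g_1(x))}\,\hat{S}^{g_1,g_2}. \]
The consistency relation~\eqref{eq:hat-s-xz-cons} of Lemma~\ref{lem:hat-s} with $u=b_\ell$, combined with Claim~\ref{claim:obs-meas-cons} and the consistency between $\hat{X}_\ell$ and $\comp{\hat{X}}_\ell$ recorded in~\eqref{eq:m-cons}, then yields $\tilde{X}^{\mathrm{raw}}(b_\ell x_\pi)\approx \hat{X}_\ell(x_\pi) = X_\ell(x_\pi)\otimes \comp{\qp}_X(b_\ell x_\pi)_{\reg{A'}}$ on $\ket{\hat{\psi}}$, averaged over $x\in\Fq^m$. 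Tensoring with $\qp_X(b_\ell x_\pi)_{\reg{A''}}$ and invoking the standard transpose identity $M_{\reg{A'}}\otimes\Id_{\reg{A''}}\ket{\epr_q}=\Id_{\reg{A'}}\otimes M^T_{\reg{A''}}\ket{\epr_q}$ on the ancilla $\ket{\epr_q}^{\otimes n}_{\reg{A'A''}}$ lets the two Pauli factors on $\reg{A'A''}$ collapse to a stabilizer of the EPR pairs, producing $\tilde{X}_\ell(x_\pi)\ket{\hat{\psi}}\approx X_\ell(x_\pi)\ket{\hat{\psi}}$. A final application of Claim~\ref{claim:obs-meas-cons} promotes this observable-level approximation to the PVM-level statement claimed. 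The $Z$ case is entirely analogous.

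Item~3 combines the self-consistency of $\tilde{S}_X$ and $\tilde{S}_Z$ with their counterparts produced by the symmetric construction on the second prover's side (both descending from self-consistency of $\hat{S}$ in Lemma~\ref{lem:hat-s}) with the EPR-induced cancellation between the Pauli tensored on $\reg{A''}$ in $\tilde{X}_\ell(a)$ and its mirror on $\reg{B''}$ in $\comp{\tilde{X}}_\ell(a)$. The main obstacle I anticipate is matching sign conventions correctly: since $\comp{\qp}_X(c)=\qp_X(-c)$ while $\comp{\qp}_Z(c)=\qp_Z(c)$, the Pauli tensored on $\reg{A''}$ must be chosen differently in the $X$ and $Z$ cases so as to simultaneously produce the correct stabilizer cancellation needed in item~2 and the correct twisted commutation needed in item~1; once this choice is pinned down the remaining verifications are bookkeeping using the consistency relations already established in Lemmas~\ref{lem:hats} and~\ref{lem:hat-s} together with Claim~\ref{claim:obs-meas-cons}.
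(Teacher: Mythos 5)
Your overall approach is the same as the paper's: coarse-grain $\{\hat{S}^{g_1,g_2}\}$ into Fourier-style observables on $\reg{AA'}$ and glue back a single-qudit Pauli on $\reg{A''}$ so that, on the ancilla $\ket{\epr_q}^{\otimes n}_{\reg{A'A''}}$, the two ancilla factors cancel and the composite reduces to the prover's original observable. Your $\tilde{X}^{\mathrm{raw}}(b_\ell a)$ is identically the first tensor factor in the paper's~\eqref{eq:def-tilde-xz}, and items~1 and~3 go through with your setup. The genuine gap, which you flag but do not resolve, is the sign of the Pauli factor tensored on $\reg{A''}$: you write $\tilde{X}_\ell(a)=\tilde{X}^{\mathrm{raw}}(b_\ell a)\otimes\qp_X(b_\ell a)_{\reg{A''}}$, while the paper tensors with $\sigma_{X,\ell}(a)^\dagger=\qp_X(-b_\ell a)$. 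For $p=2$ these coincide, but for odd $p$ your definition makes item~2 false.

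To see why, track item~2 with your choice. After applying~\eqref{eq:hat-s-xz-cons}, $\tilde{X}^{\mathrm{raw}}(b_\ell x_\bij)\approx\hat{X}_\ell(x_\bij)=X_\ell(x_\bij)_{\reg{A}}\otimes\comp{\qp}_X(b_\ell x_\bij)_{\reg{A'}}$, and $\comp{\qp}_X(u)=\qp_X(-u)$. So the two ancilla Pauli factors are $\qp_X(-b_\ell x_\bij)_{\reg{A'}}\otimes\qp_X(b_\ell x_\bij)_{\reg{A''}}$. Using the transpose identity with $\qp_X(u)^T=\qp_X(-u)$, this acts on $\ket{\epr_q}^{\otimes n}_{\reg{A'A''}}$ as $\Id_{\reg{A'}}\otimes\qp_X(b_\ell x_\bij)\qp_X(b_\ell x_\bij)_{\reg{A''}}=\Id\otimes\qp_X(2b_\ell x_\bij)$, which is not a stabilizer unless $p=2$. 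Equivalently, at the projector level $(V\otimes W)^a=\sum_{b+c=a}V^b\otimes W^c$, and the chain of identities from the paper's proof terminates at $\sum_{b'+2c=a}X_\ell^{b'}\otimes\cdots$ rather than $X_\ell^a\otimes\cdots$. The same failure occurs for $Z$: $\comp{\qp}_Z(u)=\qp_Z(u)$ together with your $\qp_Z(b_{\ell'}z_\bij)$ on $\reg{A''}$ yields $\qp_Z(u)\otimes\qp_Z(u)$ on the ancilla, again producing $\qp_Z(2u)$ rather than $\Id$ under the transpose identity.

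The fix, which is what~\eqref{eq:def-tilde-xz} does, is to tensor with the Hermitian conjugate in \emph{both} cases: $\tilde{X}_\ell(a)=\tilde{X}^{\mathrm{raw}}(b_\ell a)\otimes\sigma_{X,\ell}(a)^\dagger$ and $\tilde{Z}_{\ell'}(b)=\tilde{Z}^{\mathrm{raw}}(b_{\ell'}b)\otimes\sigma_{Z,\ell'}(b)^\dagger$. Contrary to your suspicion, the $X$ and $Z$ cases do not need to be handled differently: for $X$ the dagger pairs $\qp_X(-u)_{\reg{A'}}$ with $\qp_X(-u)_{\reg{A''}}$, and for $Z$ it pairs $\qp_Z(u)_{\reg{A'}}$ with $\qp_Z(-u)_{\reg{A''}}$; both are stabilizers of $\ket{\epr_q}$. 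Item~1 is unaffected, since the $\sigma^\dagger$'s satisfy the same twisted commutation relations as the $\sigma$'s (the phase $\omega^{-\tr(ab)}$ is invariant under $a\mapsto -a$, $b\mapsto -b$). With the dagger restored, the rest of your sketch agrees with the paper's argument.
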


\begin{proof}
For any $a,b\in\Fq^n$ and $\ell \in \{1, \dots, t\}$ define 
\begin{equation}\label{eq:def-tilde-xz}
\begin{aligned}
\tilde{X}_\ell(a)_\reg{AA'A''} &=  \Big(\sum_{g_1,g_2} \,\omega^{\tr( b_\ell(g_1\cdot a))}\,
\hat{S}^{g_1,g_2}_\reg{AA'} \Big) \otimes \sigma_{X, \ell}(a)_\reg{A''}^\dag\;,\quad
\tilde{Z}_\ell(b)_\reg{AA'A''} =  \Big(\sum_{g_1,g_2} \,\omega^{\tr(b_\ell(g_2\cdot
  b))}
\,\hat{S}^{g_1,g_2}_\reg{AA'}\Big) \otimes \sigma_{Z, \ell}(b)_\reg{A''}^\dag\;, \\
\comp{\tilde{X}}_{\ell}(a)_\reg{BB'B''} &= \Big(\sum_{g_1,g_2} \,\omega^{\tr(b_\ell(g_1\cdot a))}\,
\comp{\hat{S}}^{g_1,g_2}_{\reg{BB'}} \Big) \otimes \comp{\sigma}_{X,\ell}(a)_\reg{B''}^\dag\;,\quad
\comp{\tilde{Z}}_{\ell}(b)_{\reg{BB'B''}} =  \Big(\sum_{g_1,g_2} \,\omega^{\tr(b_\ell(g_2\cdot b))}
\,\comp{\hat{S}}^{g_1,g_2}_\reg{BB'} \Big) \otimes \comp{\sigma}_{Z, \ell}(b)_\reg{B''}^\dag\;,
\end{aligned}
\end{equation}
where $\comp{\hat{S}}^{g_1,g_2}_{\reg{BB'}}$ is defined as $\comp{\hat{S}}^{g_1,g_2}_{\reg{BA'}}$, with the role of the register $\reg{A'}$ replaced by $\reg{B'}$ (note the two are isomorphic). 
From Lemma~\ref{lem:hat-s} we know that $\{\hat{S}^{g_1,g_2}\}$ is a projective
measurement. In particular the first component of each of the observables
defined in~\eqref{eq:def-tilde-xz} commute; hence the first item in the lemma
follows from the fact that $\sigma_{X, \ell}(a)$ and $\sigma_{Z,\ell}(b)$ themselves satisfy
the required Pauli relations. 

Next we verify the second item. We give the proof for the $X$ observables; the
arguments for $Z, \comp{X}, \comp{Z}$ are similar. Using the definition~\eqref{eq:def-tilde-xz}, on average over $x\in\Fq^m$,
\begin{align*}
\tilde{X}_\ell^a(x_\bij) &=\sum_{b,c \in \Fp:\,b - c=a}\Big(\sum_{g_1,g_2:\,
                           \tr(b_\ell g_1(x))=b}
                      \hat{S}_\reg{AA'}^{g_1,g_2} \Big) \otimes
                      \sigma_{X, \ell}^c(x_\bij)_\reg{A''}\\ 
&\approx_{\poly(\delta_S)} \sum_{b,c \in \Fp:\,b-c=a}
  \hat{X}_\ell^b(x_\bij)_\reg{AA'} \otimes \sigma_{X, \ell}^c(x_\bij)_\reg{A''} \\
&= \sum_{b',b'',c \in \Fp :\,b'+b'' - c=a} X_\ell^{b'}(x_\bij)_\reg{A} \otimes
  \comp{\sigma}_{X, \ell}^{b''}(x_\bij)_\reg{A'}\otimes \sigma_{X, \ell}^c(x_\bij)_\reg{A''} \\
& = X_{\ell}^a(x_\bij),
\end{align*}
where the second line follows from~\eqref{eq:hat-s-xz-cons} in Lemma~\ref{lem:hat-s} and the definition~\eqref{eq:def-wl}, the third uses the
definition~\eqref{eq:def-hat-xz} of $\hat{X}$ from Claim~\ref{claim:strategies}, and the last uses the fact
that $\ket{\hat{\psi}}$ defined in Lemma~\ref{lem:hats} is an EPR pair on $\reg{A'A''}$ together
with~\eqref{eq:epr_stab} to set $b'' = c$ (so the equality holds in the state-dependent distance).

Finally we show the third item in the lemma. We show consistency for $\tilde{X}$; consistency for $\tilde{Z}$ is similar 
For ease of notation we write $g$ for $g_1$ and omit the outcome $g_2$, which in this argument is always
summed over. Using the definition~\eqref{eq:def-tilde-xz}, 
\begin{align}
\sum_e\, \tilde{X}_\ell^e(a)_\reg{AA'A''} \otimes \comp{\tilde{X}}_\ell^e(a)_\reg{BB'B''}
 &= \sum_{\substack{g,c,g',c':\\ \tr(b_\ell(g \cdot a))  - c=  \tr(b_\ell(g' \cdot a)) -c'}} \hS^g_\reg{AA'} \otimes \sigma_{X,\ell}^c(a)_\reg{A''} \otimes \comp{\hS}^{g'}_\reg{BB'} \otimes \comp{\sigma}_{X,\ell}^{c'}(a)^\dag_\reg{B''}\notag\\
&\approx_{\poly(\delta_S)} \Es{x\in\Fq^m}\, \sum_{\substack{g,c,g',c':\\\tr(b_\ell( (g-g')\cdot a)) = c-c'}}  \hS^g_\reg{AA'} \otimes \sigma_{X,\ell}^c(a)_\reg{A''} \otimes \comp{\hS}^{g'}_\reg{BB'} \otimes \comp{\sigma}_{X,\ell}^{c'}(a)_\reg{B''}\notag\\
&\hskip1cm		\cdot \Big(\sum_{\substack{ r + s=g(x)\\ r' + s'=g'(x)}}
	 ({M}_{X, x}^{r})_\reg{A} \otimes
  (\comp{\qp}_{X, x}^{s})_\reg{A'} \otimes
  (\comp{M}_{X, x}^{r'})_\reg{B} \otimes (\qp_{X, x}^{s'})_\reg{B'}\Big)\;,\label{eq:gen-xz-1}
	\end{align}
where the approximation uses~\eqref{eq:hat-s-xz-cons},~\eqref{eq:m-cons}, and~\eqref{eq:def-hat-xz} and~\eqref{eq:m-from-w}. 
 Using consistency of $M$ with $\comp{M}$, as
shown in Claim~\ref{claim:strategies}, 
additionally imposes the constraint that $r=r'$, i.e. $(g-g')(x)=s - s'$. Now, recall
that $\sigma^c_{X,\ell}(a)_\reg{A''}$ is implemented by measuring all $n$ qudits of register
$\reg{A''}$ in the $X$ basis, obtaining an outcome $h\in\Fq^n$, and reporting
$c=\tr(b_\ell(h\cdot a))$ as the outcome. Similarly, $(\qp^s_{X,x})_{\reg{A''}}$ is implemented by
measuring all $n$ qudits of register $\reg{A''}$ in the $X$ basis,
obtaining an outcome $h\in\Fq^n$, and reporting $s=h(x) = h \cdot x_\bij$ as the
outcome. Moreover, as the registers $\reg{A'A''}$ and $\reg{B'B''}$ are in
EPR states, we can apply the exact consistency relations~\eqref{eq:epr_stab}
between Pauli measurements. This
lets us rewrite~\eqref{eq:gen-xz-1} as 
	\begin{align}
	&\approx_{\poly(\delta_M)} \Es{x\in\Fq^m}\, \sum_{\substack{g,h,g',h':\\
    \tr(b_\ell (g-g')\cdot a) = \tr(b_\ell (h-h')\cdot a) \\ (g-g')(x)=(h-h')(x)}} \sum_{r = (g-h)(x)} \hS^g_\reg{AA'}  ({M}^{r}_{X,x})_\reg{A} \otimes \comp{\hS}^{g'}_\reg{BB'}  (\comp{M}^{r}_{X,x})_\reg{B}
	 \otimes   (\qp_X^{h})_\reg{A''} \otimes (\comp{\qp}_X^{h'})_\reg{B''}\notag\\
	&= \Es{x\in\Fq^m}\, \sum_{\substack{g,h,g',h':\\ \tr(b_\ell (g-g')\cdot a)
          = \tr(b_\ell (h-h')\cdot a)\\ (g-g')(x)=(h-h')(x)}} \sum_{\substack{ r =
          g(x)\\ r'=g'(x)}} \hS^g_\reg{AA'}  (\hat{M}^{r}_{X,x})_\reg{AA'} \otimes \comp{\hS}^{g'}_\reg{BB'}  (\comp{\hat{M}}^{r'}_{X,x})_\reg{BB'}
	 \otimes   (\qp_X^{h})_\reg{A''} \otimes (\comp{\qp}_X^{h'})_\reg{B''}\notag\\
	&\approx_{\poly(\delta_S)} \Es{x\in\Fq^m}\, \sum_{\substack{g,h,g',h':\\
    \tr(b_\ell (g-g')\cdot a) = \tr(b_\ell (h-h')\cdot a) \\ (g-g')(x)=(h-h')(x)}} \hS^g_\reg{AA'} \otimes \comp{\hS}^{g'}_\reg{BB'}  
	 \otimes   (\qp_X^{h})_\reg{A''} \otimes (\comp{\qp}_X^{h'})_\reg{B''}\;,\label{eq:gen-xz-2}
	\end{align}
	where the second line uses the definition of $\hat{M}$, and for the last
  approximation we removed the constraint on $r$
  using~\eqref{eq:hat-s-xz-cons}. If $g - g'\neq h - h'$ then by Lemma~\ref{lem:sz} the
  condition $(g-g')(x)=(h-h')(x)$ holds at a random point $x\in\Fq^m$ with
  probability at most $O(d/q)$. If $g - g'=h - h'$ the constraint $\tr(b_\ell (g-g')\cdot
  a) = \tr(b_\ell (h-h')\cdot a)$ is superfluous. Hence, under the expectation,
  we can replace the constraints in the summation in~\eqref{eq:gen-xz-2} by the
  constraint $(g - g')(x) = (h - h')(x)$, incurring an error of
  $O(d/q)$:
	\begin{align*}
	&\approx_{\poly(d/q)} \Es{x\in\Fq^m}\, \sum_{\substack{g,g',s, s':\\  (g-g')(x)=(s-s')}} \hS^g_\reg{AA'} \otimes \comp{\hS}^{g'}_\reg{BB'}  
	 \otimes   (\qp^{s}_{X,x})_\reg{A''} \otimes( \comp{\qp}^{s'}_{X,x})_\reg{B''}\\
  &\approx_{\delta_S} \Es{x\in\Fq^m}\, \sum_{\substack{e,e',s, s':\\
    e - e'=s-s'}} (\hat{M}^{e}_{X,x})_\reg{AA'} \otimes (\comp{\hat{M}}^{e'}_{X,x})_\reg{BB'}  
	 \otimes   (\qp^{s}_{X,x})_\reg{A''} \otimes(
    \comp{\qp}^{s'}_{X,x})_\reg{B''}\\
  &= \Es{x\in\Fq^m}\, \sum_{\substack{e,e',f, f', s, s':\\  e + f -(e' + f')=s-s'}}
    (M^{e}_{X,x})_\reg{A} \otimes (\comp{M}^{e'}_{X,x})_\reg{B}  
	 \otimes (\comp{\qp}^{f}_{X,x})_\reg{A'} \otimes
    (\qp^{s}_{X,x})_\reg{A''} \otimes (\comp{\qp}^{f'}_{X,x})_\reg{B'} \otimes(
    \comp{\qp}^{s'}_{X,x})_\reg{B''}\\
  &= \Es{x \in \Fq^m} \, \sum_{e} (M_{X,x}^e)_A \ot (M_{X,x}^{e})_B \\
  &\approx_{\delta_M} \Id\;,
\end{align*}
where the second line is by~\eqref{eq:hat-s-xz-cons}, the third by the
definition of $\hat{M}$ (Lemma~\ref{lem:hats}), the fourth by consistency of $\qp$ on the EPR state, and the last is
by self-consistency of $M$ (Claim~\ref{claim:strategies}). 
\end{proof}

\subsection{Proof of Lemma~\ref{lem:soundness}}
\label{sec:finish_soundness}
We conclude the proof of Lemma~\ref{lem:soundness}. Lemma~\ref{lem:xz-lowdeg} shows that whenever there exists a strategy for the provers that succeeds in the test $\qld(m,d,q)$ with probability at least $1-\eps$, for some $\eps$ small enough with respect to $d/q$ so that the quantity $\delta_W$ in the lemma is a small constant, there exist operators $\tilde{X}_\ell(a)$
  and $\tilde{Z}_\ell(b)$, for $a,b \in \Fq^n$, acting on the extended local spaces $\reg{AA'A''}$ and $\reg{BB'B''}$ respectively, that exactly satisfy the group relations of the generalized
  Pauli group (also known as the finite Heisenberg group). 
	
	Using these relations it is fairly straightforward to construct isometries $V_A$, $V_B$ acting on
  each prover's space such that $V_A$ maps
  $\tilde{W}_\ell(a)_\reg{AA'A''}$ to $\Id_\reg{AA'} \ot
  \sigma_{W, \ell}(a)_\reg{A''}$ for $W \in \{X, Z\}$, $\ell \in \{1, \dots, t\}$, and $a\in\Fq^n$, and likewise for $V_B$. 
	The definition of $V_A$ and $V_B$ is analogous, so we drop the subscript. 
	To explicitly define $V$, let $U$ be a unitary such that $U
  \sigma_{W,\ell}(a)^\dag U^\dag = \sigma_{W,\ell}(a)$ for all $a\in\Fq^n$, $\ell \in \{1,\dots,t\}$ and $W\in\{X,Z\}$; such a $U$ can be explicitly defined through its expansion in the Pauli basis. Define
  \[ V = \sum_{g_1 g_2} \hat{S}^{g_1, g_2}_\reg{AA'} \ot
    U \qp_X(g_2) \qp_Z(g_1)\;, \]
where in the notation $\qp_W(g)$ we interpret $g$ as the corresponding vector of
coefficients in $\Fq^n$.
  To see that this accomplishes the desired map, using that $\{\hat{S}^{g_1g_2}\}$ is projective, evaluate
  \begin{align*}
    V \tilde{X}_{\ell}(a)_\reg{AA'A''} V^\dag 
    &= \sum_{g_1, g_2} \omega^{\tr(b_\ell (g_1
    \cdot a))} \hat{S}_\reg{AA'}^{g_1, g_2} \ot U\qp_X(g_2)
    \qp_Z(g_1) \sigma_{X,\ell}(a)^\dag \qp_Z(g_1)^\dag
      \qp_X(g_2)^\dag U^\dag \\
    &= \sum_{g_1, g_2} \omega^{\tr(b_\ell (g_1
    \cdot a))} \hat{S}_\reg{AA'}^{g_1, g_2} \ot U\qp_X(g_2)
    \qp_Z(g_1) \qp_{X}(b_\ell a)^\dag \qp_Z(g_1)^\dag
      \qp_X(g_2)^\dag U^\dag \\
    &= \sum_{g_1, g_2} \hat{S}_\reg{AA'}^{g_1, g_2} \ot U
      \qp_X(a b_\ell)^\dag U^\dag
    \\
    &= \sum_{g_1, g_2} \hat{S}_\reg{AA'}^{g_1, g_2} \ot U
      \sigma_{X,\ell}(a)^\dag U^\dag \\
    &= \Id \ot \sigma_{X,\ell}(a)\;.
  \end{align*}
  A similar calculation can be done for $Z$.

The combination of all isometries considered in the analysis might have increased the local dimension by a large amount. However, using the outcome consistency between $\tilde{X}_\ell(a)$ and $X_\ell(a)$ (item 3. in Lemma~\ref{lem:xz-lowdeg}), we
  know that the state  $\ket{\hat{\psi}}$ satisfies 
  \begin{equation}
    \Es{\ell \in \{1, \dots, t\}} \Es{a \in \Fq^n } \,\sum_e \,\bra{\hat{\psi}} \tilde{X}_{\ell}^e(a) \ot
    \comp{\tilde{X}}_{\ell}^e(a)\ket{\hat{\psi}}\, \geq\, 1 - O(\delta_{W})\;,
    \label{eq:tilde-x-con}
  \end{equation}
and an analogous property also holds for the $\tilde{Z}_\ell(b)$. Let $H_W = \Es{\ell} \Es{a}
\sum_e \sigma_{W,\ell}(a)^e \ot \comp{\sigma}_{W,\ell}(a)^e$ for $W \in \{X,
Z\}$, and $\ket{\psi'} = V_A\otimes V_B \ket{\hat{\psi}}$. In this notation, we can rewrite~\eqref{eq:tilde-x-con} as 
  \begin{equation}\label{eq:hw-1}
    \bra{\psi'} H_W \ket{\psi'}  \,\geq\, 1 - O(\delta_W)\;, 
  \end{equation}
  for $W \in \{X, Z\}$. 
  By construction, the operators $H_W$ are Hermitian with $0
  \leq H_W \leq \Id$, and $H_X H_Z = H_ZH_X$. Hence, $H = H_X
  H_Z$ is Hermitian and $0 \leq H \leq \Id$. An application
  of the Cauchy-Schwarz inequality to~\eqref{eq:hw-1} yields
  \begin{equation}\label{eq:hw-2}
	\bra{\psi'} H \ket{\psi'} \geq 1 -
    O(\sqrt{\delta_W})\;. 
		\end{equation}
  Moreover, a direct calculation reveals that in fact $H = \ket{\psi_\epr}\bra{\psi_\epr}$. First, we evaluate $H_W$:
  \begin{align*}
    H_W &= \Es{a} \Es{\ell} \sum_e \sigma_{W, \ell}^e(a) \ot \comp{\sigma}_{W,\ell}^e(a)  \\
          &= \Es{a} \Es{\ell} \sum_{h, h': \tr(b_\ell h \cdot a) =\tr(b_\ell h' \cdot a) } \qp_W^h\ot \comp{\qp}_W^{h'} \\
          &= \Es{a}\Es{\ell} \sum_{h, h'} \omega^{\tr(b_\ell a \cdot (h - h'))} \qp_W^h \ot
            \comp{\qp}_W^{h'} \\
          &= \Es{a}\Es{\ell} \sigma_{W, \ell}(a) \ot \comp{\sigma}_{W,\ell}(a)^\dag \\
        &= \Es{\ell} \Big( \Es{a \in \Fq} (\sigma_{W,\ell}(a) \ot \comp{\sigma}_{W,\ell}^\dag(a))
            \Big)^{\ot n}\;, \\
        &= \Es{\ell} \Big( \Es{a \in \Fq} (\qp_{W}(a b_\ell) \ot \comp{\qp}_W^\dag(a b_\ell))
            \Big)^{\ot n}\;, \\
        &= \Big(\Es{a \in \Fq} (\qp_{W}(a) \ot \comp{\qp}_W^\dag(a))\Big)^{\ot n},
  \end{align*}
  where in going to the last line, we did a change of variables on the variable $a$, to absorb the factor of $b_\ell$. Now, we evaluate $H$:
  \begin{align*}
    H &= H_X H_Z \\
      &= \Big(\Es{a,b \in \Fq} \qp_X(a) \qp_Z(b ) \ot
            \comp{\qp}_X^\dag (a) \comp{\qp}_Z^\dag(b) \Big)^{\ot n} \\
      &= \Big(\Es{a,b \in \Fq} \qp_X(a) \qp_Z(b ) \ot
        \qp_X(a) \qp_Z(-b) \Big)^{\ot n} \\
      &= \Big(\Es{a,b \in \Fq} \sum_{i, j} \omega^{\tr(i b )}
            \ket{i + a}\bra{i} \ot \omega^{-\tr(j  b)} \ket{j+a}
            \bra{j} \Big)^{\ot n} \\
          &= \Big(\Es{a \in \Fq} \sum_{i} \ket{i+a}
            \bra{i} \ot \ket{i+a}\bra{i} \Big)^{\ot n} \\
          &= \ket{\psi_\epr}\bra{\psi_\epr}\;,
  \end{align*}
  where in going from the fourth to the fifth line, we have used the fact that
  the summation over $b$ vanishes unless $j = i$, and for the last we use that $\Es{a\in\Fq} \ket{i+a}\ket{i+a} =
  q^{-1/2}\ket{\psi_\epr}$ for any $i\in\Fq$. 
  Hence, from~\eqref{eq:hw-2} we conclude that
  \[ \big\| \braket{\psi_\epr}{\psi'} \big\|^2 \,\geq\, 1 - O(\sqrt{\delta_W})\;. \]
This completes the proof of Lemma~\ref{lem:soundness}.

\section{A test for codewords}
\label{sec:code}

In this section we show that the low-degree test $\qld$ can be combined with any weakly self-dual quantum CSS code $\mC$ defined over $\Fq$ to obtain a self-test for states in the $n$-fold tensor product of the codespace, as well as certain tensor products of generalized Pauli observables on the codespace (including all single-qudit Pauli observables). The test uses as many provers as the dimension of the code. 

\subsection{CSS codes}
\label{sec:codes}

We consider weakly self-dual  \emph{Calderbank-Shor-Steane (CSS)
  codes}~\cite{CalderbankShor96,Steane96}. Let $C$ be a classical $[k,k']$ linear error-correcting code over $\Fq$, for a prime power $q$: $C$ is specified by a generator matrix $G \in \Fq^{k\times k'}$ and a parity check matrix $K\in \Fq^{(k-k')\times k}$ such that $C = \ima(G)=\ker(K)$. We say that $C$ is weakly self-dual if the dual code $C^\perp$, with generator matrix $K^T$, is such that $C\subseteq C^\perp$; equivalently, $G^T G=0$. To any such  code $C$ we associate a subspace $\mathcal{C}$ of 
$(\C^q)^{\otimes k}$ that is the simultaneous $+1$ eigenspace of a set of stabilizers 
$\{S_{W,j}\}_{W\in\{X,Z\},j\in\{1,\ldots,k'\}}$ such that $S_{W,j}$ is
a tensor product of Pauli $W$ observables over $\Fq$ in the locations indicated by the
$j$-th column of the generator matrix $G$, i.e.
\[ S_{W,j} = \qp_W(G_{1j}) \ot \qp_W(G_{2j}) \ot \dots
  \ot  \qp_W(G_{kj}), \]
where $G_{ij}$ is the $(i,j)$-th entry of $G$.
The condition that $G^TG=0$ implies that all the $S_{W,j}$ commute, so that $\mathcal{C}$ is well-defined. We refer to~\cite{gottesman1999fault,ketkar2006nonbinary} for more background on the theory of stabilizer codes over qudits. 


\begin{example}[EPR code]
A simple example of a weakly self-dual $2$-qudit code with dimension
$1$ is the ``EPR code'' (our terminology) with generator matrix $G
= \begin{pmatrix} 1  \\ 1\end{pmatrix}$ in case $q=2$, and $G
= \begin{pmatrix} 1  \\  \sqrt{-1} \bmod q \end{pmatrix}$ for $q\equiv
1 \bmod 4$. The associated code subspace $\mC$ has dimension $1$, and it is spanned by a maximally entangled state of two qudits.
\end{example}

\begin{example}[Quadratic residue code]
\label{ex:quad_res_code}
The $7$-qudit code is a weakly self-dual CSS code that has $k=7$, $k'=3$, and encodes one qudit over $\Fq$ for any prime power $q=p^e$ such that $p$ is a quadratic residue modulo $7$. For example $p=2$ is a quadratic residue modulo $7$. See Theorem~40 in~\cite{ketkar2006nonbinary} for a more general construction.  
\end{example}

\subsection{The $\code$ test}
\label{sec:code-protocol}

Let $n$ be an integer and $C$ a weakly self-dual $[k,k']$ linear code. Let $\mC$ be the associated CSS code, as described in Section~\ref{sec:codes}.\footnote{All results in this and the next section can be obtained by restricting attention to the $7$-qubit code described in Example~\ref{ex:quad_res_code}.} The test $\code(C,n)$ is summarized in Figure~\ref{fig:code-test}.
The test builds on the (composed) low-degree test described in Figure~\ref{fig:protocol}. Recall the following properties of the honest strategy for the provers in the test (see Section~\ref{sec:lowdeg-completeness}):
\begin{itemize}[nolistsep]
\item In the first part of the test, each prover is sent a query of the form $(W,s,s')$, where $W\in\{X,Z\}$ designates a choice of basis and $s$, $s'$ are the specification of a pair of subspaces. The honest prover measures each of his $n$ qudits in the basis $W$, obtaining a string $a\in\Fq^n$. From $a$, the prover computes the degree-$d$ polynomial $g_a$ specified in~\eqref{eq:def-ga}, and returns the restriction of the (suitably encoded) bivariate polynomial $(g_a)_{|s}$ to the subspace $s'$.
\item In the second part of the test, the prover is sent a query of the form $(W_1,W_2)$, where $W_1,W_2\in\{X,Z,Y\}^n$ are commuting $n$-qudit observables. The honest prover jointly measures $W_1$ and $W_2$ and returns the pair of outcomes obtained. 
\end{itemize}
We now describe the test $\code(C,n)$. In the test, the verifier splits the $k$
provers into two groups. One prover, indexed by $t\in\{1,\ldots, k\}$, is chosen
at random and called the ``special prover''. The remaining $(k-1)$ provers are
jointly called ``composite prover''. In general a
prover is not told whether it is the special prover, or a composite prover. In the test the
verifier simulates queries from the two-prover low-degree test for the special
and composite provers using the following  scheme. 

\begin{definition}[Composite queries and answers]\label{def:queries}
Let $G \in \Fq^{k\times k'}$ be the generator matrix for a $[k,k']$ weakly self-dual code $C$. 
Let $Q$ be a query in the test $\qld$.
\begin{enumerate}
\item The composite query associated with $Q$, denoted $\comp{Q}$, is
  obtained by sending each prover forming the composite prover the query $Q$.
\item 
Given answers $(A_j)_{j \neq t}$ from the $(k-1)$ provers
  forming the composite prover, the composite answer $\comp{A}$ is
  obtained by selecting a uniformly random vector $v$ in the column
  span of $G$ such that $v_t=1$, and computing the sum $\comp{A} =
  - \sum_{j \neq t} v_j A_j$. \footnote{The specific way in which this summation is performed depends on the form of
the query $Q$. In general each $A_i$ is expected to be
either a low-degree polynomial, or of a pair of values in $\Fq$. In
both cases, there is a natural way to add up the answers in order to
obtain an answer $\comp{A}$ that is formatted as the prover's
answer in the low-degree test. }
\end{enumerate}
\end{definition}

This definition is consistent with the notation
$\comp{M}$ used in $\qld$; in both cases, the answers
obtained from the composite prover (in the case of the two-player
test, the second prover) are multiplied by the appropriate entry of the
generator matrix of a code. The test $\qld$ differs only insofar as the
EPR state is not a CSS code state, so the $X$ and $Z$ stabilizers are
not identical. Moreover, in both cases, for honest strategies, the special and
composite prover obtain the same outcome when given the same query. This fact is formalized
in the following lemma.

\begin{lemma}\label{lem:fq-linear}
Let $\ell \geq 1$ be an integer and $f: \Fq^n \to \Fq^\ell$ a linear function. Suppose that $k$ provers
share a $(nk)$-qudit state $\ket{\Psi}$ that is a valid qudit-by-qudit encoding of an $n$-qudit state $\ket{\psi}$ according to a $k$-qudit self-dual CSS code $\mathcal{C}$. Let $W\in\{X,Z\}^n$ and suppose that for each $j\in\{1,\ldots,k\}$, the $j$-th prover measures the $i$-qudit of its share of the state in the basis $W_i$, for each $i\in\{1,\ldots,n\}$, to obtain an assignment $a_j \in \Fq^n$, and
returns the value $y_j = f(a_j) \in \Fq^\ell$. Then for any index $t\in\{1,\ldots,k\}$ for the special
prover, and vector $v\in\Fq^k$ chosen as in item 2. in Definition~\ref{def:queries}, the special prover's answer $y_t$ and the composite prover's answer
$\comp{y} = -\sum_{j \neq t} v_j y_j$ are equal with certainty.
\label{lem:composite-linear}
\end{lemma}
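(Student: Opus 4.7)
The plan is to reduce the claim to a purely classical structural statement about the raw measurement outcomes $a_1,\ldots,a_k\in\Fq^n$. Since $v_t=1$, the desired equality $y_t=-\sum_{j\neq t}v_j y_j$ is equivalent to $\sum_{j=1}^k v_j y_j = 0$, and $\Fq$-linearity of $f$ reduces this further to showing
\[
\sum_{j=1}^k v_j a_j \,=\, 0 \quad \text{in } \Fq^n\,.
\]
Because the encoding is qudit-by-qudit, $\ket{\Psi}$ lies in the $n$-fold tensor product $\mC^{\otimes n}$, and the measurement acts coordinate-wise on the $n$ logical qudits. It therefore suffices to prove, for each $i\in\{1,\ldots,n\}$ separately, that the $k$-tuple $c^{(i)}:=(a_{1,i},\ldots,a_{k,i})\in\Fq^k$ satisfies $v\cdot c^{(i)}=0$; by the definition of $C^\perp$ this holds for every $v\in C$ as soon as $c^{(i)}\in C^\perp$.

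The core step is thus to establish that measuring any codestate of $\mC$ with every physical qudit measured in a common basis $W\in\{X,Z\}$ returns outcomes in $C^\perp$. This is a standard fact about CSS codes that follows from the stabilizer structure: for any $v'\in C$, the operator $S_{W,v'}=\qp_W(v'_1)\otimes\cdots\otimes\qp_W(v'_k)$ lies in the stabilizer of $\mC$, where we use that the stabilizer group of a qudit CSS code over $\Fq$ is $\Fq$-linearly generated by the listed generators $\{S_{W,j}\}$ (and not merely $\Fp$-linearly). A direct calculation shows that $S_{W,v'}$ acts on a joint $W$-basis outcome eigenstate $\bigotimes_\ell\ket{(c_\ell)_W}$ as the scalar $\omega^{\tr(v'\cdot c)}$. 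Since the codestate is a $+1$ eigenstate, any outcome $c$ occurring with positive probability must satisfy $\tr(v'\cdot c)=0$ in $\Fp$; applying the same argument to every rescaling $\alpha v'$ with $\alpha\in\Fq$ and invoking non-degeneracy of the trace pairing $\Fq\times\Fq\to\Fp$ forces the stronger equation $v'\cdot c=0$ in $\Fq$, i.e.\ $c\in C^\perp$.

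The main conceptual point requiring care is this final upgrade from a single $\Fp$-trace constraint to the full $\Fq$-linear equation, which is where $\Fq$-linearity of the stabilizer group is essential; once that is settled the remainder is bookkeeping, combining $v\cdot c^{(i)}=0$ across $i=1,\ldots,n$ to obtain $\sum_j v_j a_j=0$ in $\Fq^n$ and applying $f$ to conclude.
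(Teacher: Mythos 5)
Your approach matches the paper's: the paper's own proof of this lemma simply asserts in one line that ``it follows from the definition of $v$ and the stabilizer property of the code that $\sum_{j} v_j a_j = 0$'' and then applies $\Fq$-linearity of $f$, exactly as you do. You supply the derivation of $\sum_j v_j a_j = 0$ that the paper treats as obvious, and your details are correct --- in particular, the coordinate-wise reduction to $c^{(i)}\in C^\perp$, the phase computation $S_{W,v'}\ket{c_W}=\omega^{\tr(v'\cdot c)}\ket{c_W}$, and the passage from the $\Fp$-trace constraints $\tr(\alpha v'\cdot c)=0$ over all $\alpha\in\Fq$ to the $\Fq$-linear constraint $v'\cdot c=0$ via non-degeneracy of the trace pairing are all sound, and you are right that this last step is where the real content lies.

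One caveat: you write that ``the stabilizer group $\ldots$ is $\Fq$-linearly generated by the listed generators $\{S_{W,j}\}$,'' but a qudit Pauli stabilizer group is a $p$-group, hence at best an $\Fp$-vector space, so ``$\Fq$-linearly generated'' is not a well-posed notion, and the group literally generated by the paper's $2k'$ operators $\{S_{W,j}\}$ contains $S_{W,v}$ only for $v$ in the $\Fp$-span of the columns of $G$, a strict subset of $C$ when $t>1$. What your argument actually needs --- and what is true of the standard qudit CSS codespace --- is that the stabilizer group is all of $\{S_{W,v} : v\in C\}$, equivalently that the codespace is fixed by the rescaled operators $\otimes_i\,\qp_W(\alpha G_{ij})$ for every $\alpha\in\Fq$, not just $\alpha=1$. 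The definition in Section~\ref{sec:codes} only lists the $\alpha=1$ operators, which is an imprecision your proof is quietly correcting rather than invoking; it would be cleaner to state explicitly that $\mC$ is the codespace whose stabilizer group is $\{S_{W,v}:v\in C\}$ (this is the object described in the references cited there), rather than attributing the larger stabilizer set to the listed generators.
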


Before giving the proof, we note that the functions computed in the low-degree
test, i.e. polynomials $g_a$ as in~\eqref{eq:def-ga}, evaluated on points or restricted to  subspaces, are linear functions of $a$ of the form considered in the lemma.

\begin{proof}
 It follows from the definition of $v$ and the stabilizer property of the code that
  \[ \sum_{j} v_j a_j  = 0\;. \]
  Write the linear function $f$ as $f(a) = K a $, for $K\in\Fq^{\ell\times n}$. Then, a simple
  calculation shows that
\[
    \comp{y} = -\sum_{j \neq t} v_j \,f(a_j)  = - \sum_{j \neq t} v_j \,( K a_j)  = K\Big(- \sum_{j \neq t} v_j \,a_j\Big)  = K a = y\;.
\]
  \end{proof}

\begin{figure}[htbp]
\rule[1ex]{16.5cm}{0.5pt}\\
Test~$\code(C,n)$:
Given is the generator matrix $G$ for a $[k,k']$ weakly self-dual
linear code $C$ over $\Fq$, as described in Section~\ref{sec:codes},
and $n$ an integer. Let $d = \lceil \log n \rceil \cdot \lceil
\frac{\log n}{\log \log n} \rceil$ and $m =\lceil \frac{\log(n)}{\log\log(n)} \rceil$. 
\begin{itemize}
\item[(a)] The verifier selects one of the $k$ provers at random and assigns it the label of ``special prover''. All remaining provers are given the label of ``composite prover''. (The provers are not told how they are labeled.) Let $t\in\{1,\ldots,k\}$ be the index of the special prover. 
\item[(b)] The verifier executes the verifier for the test
  $\qld^{(2)}(m,d,q)$ to generate a pair of queries $(Q,Q')$ for the two
  provers in that test. The verifier sends the query $Q$ to the
  special prover, and distributes the query $\comp{Q'}$ to the composite
  prover. He receives answers $A$ and $\comp{A'}$ respectively.  
\item[(c)] The verifier accepts if and only if $(A,\comp{A'})$ is a
  pair of valid answers to queries $(Q,Q')$ in the low-degree
  test. 
\end{itemize}
\rule[1ex]{16.5cm}{0.5pt}
\caption{The procedure $\code(C,n)$ verifies that $k$ provers share an entangled state which lies in the $n$-fold tensor product of the code $\mathcal{C}$, defined over $k$ qudits each of dimension $q$.}
\label{fig:code-test}
\end{figure}

\subsection{Analysis of the $\code$ self-test}

We first show completeness of the test $\code$. 

\begin{lemma}\label{lem:code-completeness}
Let $C$ be a $[k,k']$ weakly self-dual linear code over $\Fq$, and $n$ an integer. Let $\mC$ be the associated CSS code, as described in Section~\ref{sec:codes}.  Then for any $(nk)$-qubit state $\ket{\Psi}\in \mC^{\otimes n}$ there exists a strategy for the $k$ provers based on sharing $\ket{\Psi}$ and measuring tensor products of Pauli observables, such that the strategy is accepted with probability $1$ in the test $\code(C,n)$. 
\end{lemma}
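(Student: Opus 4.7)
The plan is to exhibit the natural ``codeword-share'' strategy and verify completeness by two applications of \lemref{composite-linear}. In this strategy each prover $j\in\{1,\ldots,k\}$ holds the $j$-th physical qudit of each of the $n$ code blocks composing $\ket{\Psi}\in\mC^{\otimes n}$, and responds to any $\qld^{(2)}$ query exactly as the honest low-degree prover of \lemref{completeness} would on an $n$-qudit register. Concretely, on a part-(a) query $(W,s,s')$ each prover measures all of its $n$ qudits in basis $W\in\{X,Z\}$ to obtain a string $a_j\in\Fq^n$ and returns the honest encoding of $(g_{a_j})_{|s,s'}$; on a part-(b) query it measures a single generalized Pauli observable on its share, or jointly measures two commuting observables with the help of a private ancilla EPR pair, as in the $\COM$ or $\MS$ honest strategies of \lemref{completeness}.

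The main step is to verify that when the verifier composes the composite-prover answers via Definition~\ref{def:queries}, the resulting $\comp{A'}$ coincides with the honest $\qld^{(2)}$ second-prover answer matched to the special prover's answer $A$. For this I would first observe that the CSS stabilizers $S_{W,1},\ldots,S_{W,k'}$ force the measurement outcome $(a_1,\ldots,a_k)\in\Fq^k$ of any single code block, in either basis, to lie in $\ker(G^T)=C^\perp$; hence for every $v\in\ima(G)\subseteq C^\perp$ with $v_t=1$ one has $\sum_j v_j a_j=0$ qudit-by-qudit, so $a_t=-\sum_{j\neq t}v_j a_j$ in $\Fq^n$. Second, every response that an honest $\qld^{(2)}$ prover produces is an $\Fq$-linear function $f$ of its measurement outcome: polynomial restrictions $(g_a)_{|s,s'}$, point evaluations $g_a(w)$, and trace projections $b\mapsto\tr(ub)$ are all $\Fq$-linear in $a$. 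Applying \lemref{composite-linear} then gives $-\sum_{j\neq t}v_j f(a_j)=f(a_t)$, which is precisely the $\comp{A'}$ prescribed by Definition~\ref{def:queries}.

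With this linearity in hand, I would check acceptance case by case. Part (a) is immediate since $\comp{A'}$ agrees with $A=f(a_t)$ as polynomials, so both the $W=X$ and $W=Z$ sub-tests of $\qld^{(2)}$ are satisfied. Part (b) splits into two sub-cases: single-observable queries are handled by the same linearity argument, while the two-outcome queries of $\COM$ and $\MS$ are handled using a private ancilla EPR pair local to each prover (and thus unaffected by the code-share combination step), exactly as in \lemref{completeness}.

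The only delicate point, rather than a true obstacle, is reconciling the sign conventions: Definition~\ref{def:queries} prescribes $\comp{A'}=-\sum_{j\neq t}v_jA_j$ regardless of basis, whereas the $\qld^{(2)}$ verifier treats $W=X$ (with its sign flip on the raw second-prover answer) and $W=Z$ differently. These align because the CSS stabilizers impose the \emph{same} linear relation $\sum_j v_j a_j=0$ in \emph{both} $X$ and $Z$ bases---unlike the two-qudit EPR state of $\qld^{(2)}$, which imposes the sign-reversed relation in the $X$ basis and the matching relation in the $Z$ basis---so the single formula of Definition~\ref{def:queries} produces the correct matched answer in either sub-test, completing the proof.
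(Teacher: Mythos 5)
Your proposal correctly identifies the honest strategy and correctly handles part (a) of the test. The key observation that the CSS stabilizer constraint $\sum_j v_j a_j = 0$ holds in \emph{both} bases (so that the single formula $\comp{A'} = -\sum_{j\neq t} v_j A_j$ of Definition~\ref{def:queries} produces the matched answer in both cases, whereas in $\qld^{(2)}$ the EPR state forces a sign flip only for $W=X$) is exactly right, and Lemma~\ref{lem:composite-linear} together with $\Fq$-linearity of the low-degree encoding maps is the right tool. This part of your argument is essentially a more explicit rendering of the paper's first observation.

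Where your argument diverges from the paper --- and where it contains a genuine error --- is in part (b). You assert that the two-outcome queries of $\COM$ and $\MS$ ``are handled using a private ancilla EPR pair local to each prover (and thus unaffected by the code-share combination step).'' This is wrong on two counts. First, for $\COM$ query~3 the honest $\qld$ strategy does \emph{not} use an ancilla at all: it jointly measures the two commuting observables $\hat{X}_u(x_\bij)$ and $\hat{Z}_{u'}(z_\bij)$ on the same $n$-qudit register; the correct handling is to apply the linearity argument componentwise to the pair of outcomes, noting that each component separately obeys the stabilizer relation (this works precisely because in the $\COM$ branch the two observables commute, so the joint outcome of $\hat{X}_u$ is unaffected by the simultaneous measurement of $\hat{Z}_{u'}$). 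Second, and more seriously, the $\MS$ questions that are \emph{not} relabeled to $(X,x)$ or $(Z,z)$ do use an ancilla pair of observables $(X',Z')$, but that ancilla in the honest $\qld$ strategy is a \emph{shared} EPR pair between the two provers (it is the $(n{+}1)$-th tensor factor of $\ket{\psi_\epr}$). If each prover in $\code$ held its own private ancilla, the special prover's answer and the composite answer to the cross-checked $\MS$ questions would be completely uncorrelated and the test would fail. The ancilla must instead be encoded as an additional $\mC$-codeblock shared among all $k$ provers, so that the stabilizer constraint, and hence your linearity argument, again holds for the ancilla measurement outcomes. The paper sidesteps all of this case analysis by showing, via an explicit isometry (crucially using $v_t=1$ and $v \cdot v = 0$ so that $\comp{X}=\otimes_{i\neq t}\qp_X(v_i)$ and $\comp{Z}=\otimes_{i\neq t}\qp_Z(v_i)$ satisfy the correct twisted commutation), that the entire strategy --- conditioned on $v$ --- is isometric to the honest two-prover $\qld^{(2)}$ strategy, and then invokes Lemma~\ref{lem:completeness} wholesale. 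Your case-by-case route can be made to work, but you should replace the ``private ancilla'' claim with a shared codeblock ancilla and apply Lemma~\ref{lem:composite-linear} once more on that block.
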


\begin{proof}
Fix $\ket{\Psi}\in \mC^{\otimes n}$. 
The strategy for the provers is simple: each prover
directly applies the honest strategy in the test $\qld^{(2)}$, as
described in Section~\ref{sec:lowdeg-completeness}. 

It remains to verify that the answers $(A,\comp{A'})$ computed by the
verifier in step (c) of the test $\code$ are valid answers to $(Q,Q')$
in $\qld^{(2)}$. Fix a choice of codeword $v$ as made by the verifier in the
computation of the composite query $\comp{Q'}$ at step (b) of
$\code$ (see Definition~\ref{def:queries}). 
We make the following observations on the joint measurement performed
by the provers that constitute the composite prover. Consider first
queries of the form $Q'=(W,s,s')$. Upon receipt of such a query,
the $i$-th prover that constitutes the composite prover measures each
of its $n$ qudits using the projective measurement $\qp_{W}$ to obtain outcomes
$a'_i=(a'_{i,1},\ldots,a'_{i,n})$, from which it computes a low-degree
polynomial $g_{a'_i}$ as in~\eqref{eq:def-ga}. Since $a'\mapsto
g_{a'}$ is a linear function, the answer $\comp{A'}$ computed
by the verifier is the restriction to $s'$ of the (suitably encoded) bivariate polynomial $(g_{a'})_{|s}$, where $a' = \sum_{i \neq t} v_i a'_i$ is the outcome of an
imaginary joint measurement performed by the composite prover using
the measurement $\comp{\qp}_{W}^{a'} = \sum_{a'_i : \sum_{i \neq t} v_i
  a'_i = a'} \otimes_{i\neq t} \qp_{W}^{a'_i}$. The
situation in case the query $Q'$ is taken from the second part of the
low-degree test is similar.


	To conclude we show that for any choice of codeword $v$ made by the verifier, the provers' strategy, conditioned on $v$, is isometric to the honest strategy for the low-degree test, as defined in the proof of Lemma~\ref{lem:completeness}. 
	
	To see this, observe that by definition, for a fixed $v$, the operators $X \otimes \comp{X}$ and $Z \otimes \comp{Z}$ stabilize each group of $k$ qudits of $\ket{\Psi}$, where $X=\qp_X(1)$, $\comp{X} = \otimes_{i\neq t} \qp_X(v_i)$, and $Z=\qp_Z(2)$, $\comp{Z} = \otimes_{i\neq t} {\qp_Z}(v_i)$; indeed this is because $v$ defines both an $X$ and a $Z$ stabilizer for $\mC$. Moreover, $\comp{X}$ and $\comp{Z}$ satisfy the same twisted commutation relation as $\comp{\qp}_X$ and $\comp{\qp}_Z$; this is because $v_t=1$ and $v\cdot v=0$ by weak self-duality. Thus there exists a local isometry acting jointly on all provers forming the composite prover, which maps $\comp{X}\mapsto \comp{\qp}_X$ and $\comp{Z}\mapsto \comp{\qp}_Z$. The image of $\ket{\Psi}$ under this isometry is stabilized by $\qp_X\otimes \comp{\qp}_X$ and $\qp_Z \otimes \comp{\qp}_Z$, hence must be the state $\ket{\epr_q}$. Lemma~\ref{lem:completeness} then lets us conclude that the above-defined strategy succeeds with probability $1$ in the test.

\end{proof}

The next theorem shows soundness of the test $\code$. 

\begin{theorem}\label{thm:codeword_test}
Let $n,k,k'$ be integer. Let $q=p^t$ be a prime power such that $\Fq$ admits a self-dual basis over $\Fp$. Let $C$ be a $[k,k']$ weakly self-dual linear code over $\Fq$, and $\mC$ the associated CSS code. Let $m,d$ be as in Figure~\ref{fig:code-test}. Suppose a strategy using state
$\ket{\Psi} \in \otimes_{i=1}^k \mH_{i}$ and projective
measurements $\{M_{W,w}^a\}$ for the special prover succeeds in test $\code(C,n)$ with probability at
least $1-\eps$, for some $\eps \geq 0$. Then there is a $\delta_C = \max(\poly(p)\cdot\poly(\eps),\poly(q^{-1}))$ and isometries $V_i: 
\mH_{i} \to (\C^q)^{\otimes n}\otimes \mH'_{i}$ for
$i\in \{1,\ldots,t\}$, and states $\ket{\psi}\in\mC$ and $\ket{\aux}\in \otimes_i \mH'_{i} $  such that  
$$ \big\| (\otimes_i V_i)\ket{\Psi} -  \ket{\psi} \ket{\aux} \big\|^2 \,\leq\, \delta_C\;,$$
and for all $W\in \{X,Z\}$, 
$$
\Es{w\in\Fq^m} \,\sum_{a\in \Fq}\, \big\|(\otimes_i V_i )(M_{W,w}^a \ot \Id) \ket{\Psi} -
  (\qp_{W,w}^a \ot \Id)\ket{\psi} \ket{\aux} \big\|^2 \, \leq \,
  \delta_C\;.$$
\end{theorem}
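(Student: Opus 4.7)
The approach is to reduce the $k$-prover test $\code(C,n)$ to the two-prover quantum low-degree test $\qld^{(2)}$, whose soundness is given by Lemma~\ref{lem:soundness}. For each index $t\in\{1,\ldots,k\}$ of the special prover, I would define a two-player strategy for $\qld^{(2)}$ as follows: the first player is the special prover with measurements $\{M_{W,w}^a\}$ acting on $\mH_t$, and the second player is a ``composite prover'' acting on $\bigotimes_{i\neq t}\mH_i$ tensored with a classical register holding the random codeword $v$ used by the verifier. On query $Q'$, the composite prover samples $v\in\ima(G)$ with $v_t=1$, has each constituent sub-prover $i\neq t$ apply its measurement for $Q'$ to obtain outcome $A_i$, and returns $\comp{A'}=-\sum_{i\neq t}v_iA_i$. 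By construction the success probability of this induced two-player strategy in $\qld^{(2)}$ equals that of the original $k$-prover strategy in $\code(C,n)$, namely $1-\eps$.

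Applying Lemma~\ref{lem:soundness} to each such induced strategy yields, for each $t$, isometries $V_{A,t}:\mH_t\to(\C^q)^{\otimes n}\otimes \mH_{A,t}''$ and $V_{B,t}$ on the composite space, together with an auxiliary state $\ket{\aux_t}$, such that the post-isometry image of the joint state is $\poly(\eps)+\poly(q^{-1})$-close to $\ket{\epr_q}^{\otimes n}\otimes \ket{\aux_t}$, and the special prover's measurements $M_{W,w}^a$ approximately agree with $\qp_{W,w}^a$ on the distinguished $(\C^q)^{\otimes n}$ register. I would then define $V_i:=V_{A,i}$ for each $i\in\{1,\ldots,k\}$ by invoking the lemma with $i$ playing the role of the special prover; this immediately gives the desired consistency between $M_{W,w}^a$ and $\qp_{W,w}^a$ stated in the conclusion of the theorem.

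It remains to show that $\bigotimes_{i=1}^{k}V_i\ket{\Psi}$ is close to $\ket{\psi}\ket{\aux}$ for some $\ket{\psi}\in\mC^{\otimes n}$. Since $\mC^{\otimes n}$ is the simultaneous $+1$ eigenspace of the qudit-wise code stabilizers $S_{W,j}^{(\alpha)}=\bigotimes_{i=1}^{k}\qp_W(G_{ij})$, for $W\in\{X,Z\}$, $j\in\{1,\ldots,k'\}$ and $\alpha\in\{1,\ldots,n\}$, it suffices to establish that each such stabilizer has expectation close to $1$ on the isometrized state. Fixing such an $(W,j,\alpha)$, I would pick any $t$ with $G_{tj}\neq 0$; the composite prover's aggregation scheme specialized to $v$ supported on the $j$-th column of $G$ yields, via Lemma~\ref{lem:fq-linear} applied to the linear function extracting the $\alpha$-th qudit, that the special prover's single-qudit $W$-outcome on qudit $\alpha$ scaled by $G_{tj}$ equals the corresponding combination $-\sum_{i\neq t}v_iA_i$ of the composite provers' single-qudit $W$-outcomes. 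This consistency, combined with the closeness of the joint isometrized state to $\ket{\epr_q}^{\otimes n}\otimes\ket{\aux_t}$ (which encodes the analogous stabilizer relation between special and composite Paulis), implies that $\bigotimes_i V_i\ket{\Psi}$ is approximately stabilized by $S_{W,j}^{(\alpha)}$, and averaging over all stabilizers yields the desired distance bound.

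The main obstacle, and the step requiring the most care, is to extract the \emph{single-qudit} Pauli information from Lemma~\ref{lem:soundness}, whose conclusion is phrased in terms of the polynomial-valued point measurements $M_{W,w}^a$ rather than individual qudit measurements. The resolution is to appeal to the full content of Lemma~\ref{lem:xz-lowdeg}, which produces a family of operators $\tilde{W}_\ell(e)$ for every $e\in\Fq^n$ (in particular for $e$ supported on a single coordinate $\alpha$) and which by construction exactly satisfy the Pauli twisted commutation relations; these single-qudit operators then serve as the ``honest'' Pauli observables to which the composite prover's analogous extraction must be consistent. Tracking error propagation through the $k$ applications of Lemma~\ref{lem:soundness}, the averaging over $t$ and $v$, and the Cauchy--Schwarz step converting stabilizer expectations into a fidelity bound to $\mC^{\otimes n}$ yields the claimed $\delta_C=\max(\poly(p)\cdot\poly(\eps),\poly(q^{-1}))$.
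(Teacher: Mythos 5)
Your reduction is the same as the paper's: for each choice $t$ of the special prover, you aggregate the other $k-1$ provers into a composite prover, obtain a two-player strategy for $\qld^{(2)}(m,d,q)$ that succeeds with the same probability, and invoke Theorem~\ref{thm:qld} (Lemma~\ref{lem:soundness}). The paper's own proof of this theorem is a single short paragraph that does exactly that for each $t$ and then asserts the conclusion; it does not spell out how the $k$ separate two-prover self-testing statements combine into a single product isometry $\otimes_i V_i$ together with a single codestate $\ket{\psi}\in\mC^{\otimes n}$. Your instinct to make that combination explicit is a reasonable response to the paper's brevity.

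The step in which you would do so, however, has a genuine gap. Lemma~\ref{lem:fq-linear} is a \emph{completeness} statement: its hypotheses are that $\ket{\Psi}$ is already a valid code state and that every prover measures honestly in the prescribed basis. You cannot appeal to it in the soundness direction, where these are precisely the conclusions you want to reach. What you actually have from the $k$ invocations of Lemma~\ref{lem:soundness} is (i) that each prover's marginal point measurement is close, under its own isometry $V_i$, to the Pauli point measurement $\qp_{W,w}^a$, and (ii) that for each $t$ the bipartite state between prover $t$ and the composite is close to $\ket{\epr_q}^{\otimes n}\otimes\ket{\aux_t}$ — but under a composite-side isometry that need not factor across the $k-1$ sub-provers. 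To certify the stabilizer relations you would still need to pass from the polynomial-valued outcomes that $\qld^{(2)}$ actually self-tests to single-qudit outcomes (your appeal to Lemma~\ref{lem:xz-lowdeg} reopens the internal machinery of the soundness proof rather than using its stated conclusion), relate the $v$-weighted composite answer to a sum of the individual approximately-Pauli answers, and use the randomness of the codeword $v$ (with $v_t=1$) to conclude that $\sum_i v_i a_i \approx 0$ on the isometrized state for all such $v$, tracking errors across $k$ invocations. None of this is delivered by Lemma~\ref{lem:fq-linear} or by the per-$t$ bipartite EPR closeness alone, so the plan for the state claim does not go through as written.
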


\begin{proof}
Fix a strategy for the $k$ provers in the test that is accepted with probability
at least $1-\eps$. Fix any $t\in\{1,\ldots,k\}$. By combining the $(k-1)$
strategies employed by provers $\{1,\ldots,k\}\backslash\{t\}$, when they play
the role of the composite prover, into a single strategy, we obtain a two-prover
strategy for the test $\qld^{(2)}(m,d,q)$ that has success probability at least
$1-\eps$. Applying Theorem~\ref{thm:qld} shows the self-testing claim for the
observables applied by prover $t$, when designated as the special prover. The
same applies for all $t\in\{1,\ldots,k\}$, proving the theorem.  
\end{proof}

\begin{remark}\label{rm:code_check_bits}
 We record here the bit complexity of the
protocol $\code$. The test invokes the composed quantum low-degree test $\qld^{(2)}(m,d,q)$ with $m = \lceil \frac{\log
  n}{\log \log n} \rceil$ and $d = \Theta(
\frac{\log^2 n}{ \log \log n} )$. Hence, the number of bits in the verifier's questions
scales as $O(\frac{\log n}{\log \log n} \log q)$, and
the number of bits in the provers' responses scales as $O((\log \log
n)^2 \log q)$, so the overall bit complexity is $O(\frac{\log n}{\log
  \log n} \log q)$. 
\end{remark}

\section{Energy tests}
\label{sec:testing}

In the previous section we gave a test that enforces that the provers' shared state is
close to a valid code state of an error-correcting code. In this section, we
 show how to further test any property of the encoded state that can be
expressed in terms of a local Hamiltonian of the appropriate form. We achieve 
this by using interactive protocols to ``command'' the provers to measure a subset of the
terms of a given Hamiltonian, perform a computation on the measurement outcomes,
and return the result. We introduce the required tools from the classical PCP literature
in Section~\ref{sec:delcomp}, and adapt them to our setting in Section~\ref{sec:sum-test} and  Section~\ref{sec:multi-basis}. In
Section~\ref{sec:constant_gap} we give a protocol to estimate the ground energy of a
(not necessarily local) Hamiltonian up to constant precision, provided that the
terms of the Hamiltonian have a certain form. As a consequence, we 
show that is $\QMA$-hard to approximate the maximum success
probability of a nonlocal game (i.e. one-round $\MIP^*$ protocol) with logarithmic communication, either conditionally on the
constraint satisfaction quantum PCP conjecture
(Corollary~\ref{cor:qma-generalizedXZ}), or unconditionally, but under
randomized reductions (Corollary~\ref{cor:randomized}). In
Section~\ref{sec:ff} we use similar tools to give a protocol to estimate the ground energy of
a  class of (not necessarily local) frustration-free Hamiltonians up
to inverse polynomial accuracy.

As a note on terminology, in previous sections, we introduced ``tests'' for states with certain properties, such as of being a valid codestate. In this section we provide tests for states that encode the answer to a computational problem (in particular, variants
of the local Hamiltonian problem). To formulate these tests we use the language of interactive proofs, and often refer to a test for a property as an $\MIP^*$ protocol for the
corresponding computational problem. The notions of a test, a nonlocal
game, and a one-round $\MIP^*$ protocol are roughly synonymous, and their meaning should always be clear from context.

\subsection{Classical PCPs for linear functions}
\label{sec:delcomp}

The codeword test introduced in Section~\ref{sec:code} gives the verifier the ability to
query a prover for a location in the low-degree encoding of the string of
outcomes obtained by the prover when measuring all its qudits in the Pauli $X$ or $Z$ basis. For our applications, we would like to have the ability to 
command the provers to compute more general functions of their measurement
outcomes. For example, upon obtaining an outcome $a \in \Fq^n$, we may want to ask the prover
to compute the inner product $a \cdot b$ with a
given string $b \in \Fq^n$ (that may not necessarily correspond to an entry in the low-degree encoding of $a$), agreed on in
advance between the prover and the verifier. One approach to doing this is to
use the sum-check protocol of~\cite{lund1992algebraic}, but this requires a logarithmic number of rounds of interaction, resulting in a polylogarithmic number of bits of
communication. To achieve a protocol with logarithmic communication we rely on the following classical PCP construction, that can be extracted from~\cite{BGHSV05}.

\begin{theorem}\label{thm:composedpcp}
Let $p = 2$, $n,t\geq 1$ integer such that $t=\Theta(\log\log n)$, and $q = 2^t$. For any $a, b \in \Fq^n$, there exists a proof
$\Pi_{a,b} \in \F_q^{n'}$, with $n' = O(\poly(n))$, each of whose bits is an $\F_q$-linear function of $a$, such that the following holds. There exists an efficient test $\lin(b)$ that uses
$O(\log n)$ random bits and reads a total of $O(1)$ bits from
$\Pi_{a,b}$ and from the evaluation table of the low-degree extension $g_a$ of $a$, as
well as a value $c\in\Fq$, with the following properties:

\begin{enumerate}
\item Completeness: If $b \cdot a = c$ the test accepts with certainty.
\item Soundness: If $b \cdot a \neq c$, for any claimed proof $\Pi$, the test rejects with constant probability.
\end{enumerate}
\end{theorem}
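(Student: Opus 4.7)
My plan is to specialize the polynomial-size, constant-query PCP of proximity construction of~\cite{BGHSV05} to the specific $\Fq$-linear language $L_{b,c} = \{a \in \Fq^n : b \cdot a = c\}$, and to observe that every step of that construction can be carried out in an $\Fq$-linear manner in the input $a$ (with $b$ treated as a fixed parameter).

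\textbf{Outer proof construction.} First I would express the computation of $b \cdot a$ as an arithmetic circuit of depth $O(\log n)$ and size $O(n)$ over $\Fq$: multiply each coordinate $a_i$ by the fixed scalar $b_i$, then sum via a balanced binary tree of additions. Each intermediate value in this circuit is an $\Fq$-linear function of $a$, so the full computation tableau $T_b(a)\in\Fq^{O(n)}$ is itself an $\Fq$-linear map $a \mapsto T_b(a)$. I would then take the proof $\Pi_{a,b}$ to contain the evaluation table of the Reed--Muller low-degree extension $g_{T_b(a)} : \Fq^{m'} \to \Fq$ of this tableau (for a suitable choice of $m' = \Theta(\log n / \log\log n)$ and individual degree $\Theta(\log n)$), together with the restriction of $g_{T_b(a)}$ to every affine plane in $\Fq^{m'}$ as in the plane-vs-point encoding. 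Both the low-degree extension and subspace restrictions are $\Fq$-linear operations, so every entry of $\Pi_{a,b}$ is an $\Fq$-linear function of $a$.

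\textbf{The test and composition.} The verifier would perform: (i) a plane-vs-point low-degree test on $\Pi_{a,b}$; (ii) a consistency check between the input-layer entries of the encoded tableau and the corresponding evaluations of $g_a$ at a random point, linking $\Pi_{a,b}$ to the supposed input $a$; (iii) a random gate-consistency check, sampling a gate together with its inputs on a plane restriction and verifying the appropriate $\Fq$-linear relation; and (iv) a check that the top-level entry of the tableau equals $c$. A direct implementation of the gate-consistency check would have polylogarithmic query complexity because of the size of a plane restriction; to bring this down to $O(1)$ queries while keeping proof length polynomial I would apply the PCPP composition theorem of~\cite{BGHSV05}, composing with an inner PCPP (for instance a Hadamard-based PCPP) for the linear gate-relation language.

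\textbf{Main obstacle.} The principal technical difficulty lies in the composition step, where one must simultaneously achieve constant query complexity, polynomial total proof length, $O(\log n)$ randomness, and preservation of the $\Fq$-linearity of the proof in $a$. The first three are already delivered by~\cite{BGHSV05}; the fourth requires observing that, since our gate-relation languages are themselves $\Fq$-linear, the inner PCPP can be chosen so that its proof is an $\Fq$-linear function of its input, and hence the composed proof remains an $\Fq$-linear function of $a$. Once these observations are in place, completeness is immediate by construction (the honest tableau passes all checks deterministically), and soundness follows from the soundness of the plane-vs-point low-degree test combined with the soundness of the composed inner PCPPs and the final equality check at the output gate, which must fail whenever $b \cdot a \neq c$.
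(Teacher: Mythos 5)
Your route is genuinely different from the paper's. The paper simply cites~\cite[Theorem~3.3]{BGHSV05} as a black box to obtain a constant-query, polynomial-length PCP of proximity, observes (via Section~8.4 of that paper) that for properties expressible as conjunctions of $\F_2$-linear constraints the proof is $\F_2$-linear in the input, and then spends most of its effort on two adaptations: lifting $\F_2$-linearity to $\Fq$-linearity by decomposing the $\Fq$-linear constraint $b\cdot a=c$ into traces along a self-dual basis, and applying the PCPP to the evaluation table of $g_a$ rather than to $a$ itself so that the proximity promise is met (via Schwartz--Zippel) for \emph{every} non-satisfying $a$. You instead propose to build a specialized PCPP for $L_{b,c}$ from scratch: arithmetize an $O(n)$-size, $O(\log n)$-depth circuit for $b\cdot a$, low-degree-encode its tableau, and run plane-vs-point plus gate-consistency plus an output check. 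Working directly over $\Fq$ from the start sidesteps the $\F_2$-to-$\Fq$ lift, and checking consistency against the actual $g_a$ oracle implicitly handles the proximity issue, so those two aspects of your approach are fine.

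The gap that would bite is the composition. After the outer plane-vs-point encoding, a single outer query has polylogarithmic answer size (a degree-$d$ bivariate polynomial over $\Fq$ with $d$ and $\log q$ both $\poly\log n$), so the decision predicate the inner PCPP must certify has input length $\poly\log n$. A Hadamard inner PCPP on an input of that size has proof length $2^{\poly\log n}$, which is quasipolynomial and thus blows past the required $n'=O(\poly(n))$. To keep the total proof polynomial while reaching $O(1)$ queries one must do what~\cite{BGHSV05} do: iterate robust composition a constant number of times, shrinking the residual instance size by a $\polylog$ factor at each round, and only switch to Hadamard once the residual instance is constant size. That is not a cosmetic fix; it is the bulk of the work behind~\cite[Theorem~3.3]{BGHSV05}, and you would additionally have to re-verify that \emph{each} round of robust composition preserves $\Fq$-linearity of the proof as a function of $a$ (it should, since the intermediate languages remain $\Fq$-linear, but this needs to be checked explicitly for the particular robust PCP and composition theorem you use). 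As written, your ``for instance a Hadamard-based PCPP'' does not deliver the claimed polynomial proof length, and the soundness analysis --- combining low-degree test decoding, gate-consistency on the decoded polynomial rather than the raw oracle, and the output check --- is asserted rather than derived. In short: the strategy is plausible but amounts to re-proving a substantial part of~\cite{BGHSV05} specialized to linear languages, which is a much heavier lift than the paper's approach of invoking their theorem and making two local adaptations.
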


\begin{proof}
  We use the language of ``PCPs of proximity'' from~\cite{BGHSV05}. A PCP of
  proximity (PCPP) consists of an algorithm $V$ to verify that a given input $a \in
  \{0,1\}^n$ (called the \emph{assignment}) satisfies a \emph{property}
  specified by a poly-sized Boolean circuit. The verifier $V$ is given access to $a$ and to an
  auxiliary proof string $\Phi$ of polynomial length, but is only allowed to
  query a small (e.g. constant) number of bits of $a$ and $\Phi$. The
  completeness and soundness requirements on the verifier are that whenever $a$
  satisfies the property (the YES case), there exists a proof $\Phi_a$ that convinces the
  verifier $V$ to accept with certainty, and when $a$ is $(\delta n)$ far in Hamming distance from
  any string $a'$ satisfying the property (the NO case), then for all proofs $\Phi$, the
  verifier $V$ rejects with at least constant probability. The parameter $\delta$ is
  called the \emph{proximity parameter} of the PCPP. 

  From~\cite[Theorem 3.3]{BGHSV05}, with the parameter $t$ appearing in
  that theorem\footnote{Not
    to be confused with our parameter $t$, which controls the field
    size!} chosen to be a constant greater than $3$, and for a
  proximity parameter $\delta = \Theta(1/t)$, there exists a PCPP for properties encoded
  by circuits of size $O(n)$, consisting of a proof
   $\Pi$ and verifier $V$ that uses $O(\log^{2/t} n)$
  random bits, reads $O(t) = O(1)$ bits of the proof and
  assignment, and in the NO case rejects with probability
  $\Omega(1/t)$.  Moreover, from the discussion in Section~8.4 of~\cite{BGHSV05} it follows that if the property
  can be expressed as an AND of linear constraints (i.e. of the form $b \cdot a
  = c$ over $\F_2$), then the bits of the proof string $\Phi$ are $\F_2$-linear functions of the
  assignment, and the checks performed by the verifier $V$ are $\F_2$-linear
  constraints on $\Phi$. 
  
Ideally, we would like to directly apply this PCPP to check that the string
  $a \in \Fq^{n}$, interpreted as a bit string in $\{0,1\}^{nt}$,
  satisfies the condition $ b \cdot a = c$, which is a linear
  condition over $\Fq$. To do this, we need to address two
  issues. First, the result of~\cite{BGHSV05} applies to $\F_2$-linear
  conditions, and the proof string $\tilde{\Pi}$ is an $\F_2$-linear function
  of the input, whereas the present theorem requires linearity
  over $\Fq$. We resolve this by noting that the $\Fq$-linear
  condition $b \cdot a = c$ can be expressed as a conjunction of $\F_2$-linear conditions $\tr[(b \cdot a) \chi_i] = \tr[c \chi_i]$ for
  every element $\chi_i$ of a self-dual basis for $\F_q$ over
  $\F_2$. Moreover, any $\F_2$-linear function $f: \F_2^{tn} \to \F_2$
  can be expressed a function $f: \F_q^{n} \to \F_2$ of the form
  $\tr[u \cdot a]$ for some $u \in \F_q^{n}$, and hence can be
  extended to an $\F_q$-linear function $\ul{f} : \F_q^n \to \F_q$
  given by $\ul{f} = u \cdot a$. Applying this extension to each bit
  of $\Phi$, we can construct a proof $\Pi_{a,b}  \in \Fq^{\poly(n)}$,
  such that each entry of $\Pi_{a,b}$ is an $\Fq$-linear function of $a$,
  and from which the PCPP verifier can recover the original proof
  $\Phi$ by taking the trace of each entry. Since $\tilde{\Pi}$ could be verified
  by querying a constant number of bits, $\Pi$ can be verified by
  querying a constant number of $\Fq$-valued entries.

  The second issue concerns the soundness of the proof system. The statement of the present theorem requires soundness to
  hold against all non-satisfying $a$, not just those that satisfy the promise
  of the PCPP. Thus, instead of applying the PCPP directly to $a$, we apply it to the evaluation table of the
  low-degree encoding $g_a$ of the assignment, which has length
  $O(n \log n)$. The condition $b \cdot
  a = c$ can be expressed as a linear condition on the evaluation
  table of $g_a$. Moreover, if $b \cdot a \neq c$, then the encoding $g_a$ 
  differs from the encoding $g_{a'}$ of any $a'$ such that $b \cdot a' = c$
  on at least a constant fraction of positions. (This follows from the
  Schwartz-Zippel lemma: if $a \neq a'$, then $g_a - g_{a'}$ is a nonzero
  polynomial and hence by Lemma~\ref{lem:sz}, it cannot be $0$ on
  more than $d/|\Fq|$ fraction of the points.) Hence, the soundness promise
  holds on the encoded input $g_a$. Finally, to check that the part of the proof that corresponds to $g_a$ is a valid low-degree encoding, the verifier executes a standard low-degree test (such as the PCP version of the test $\cld$); this can be done using a constant number of additional queries to the evaluation table of $g_a$ (provided restrictions to lines and planes are included). 
\end{proof}

\subsection{A test for non-local Pauli observables}
\label{sec:sum-test}

Theorem~\ref{thm:composedpcp} specifies a test certifying that $a \cdot b =
c$, given $O(1)$ queries to a proof $\Pi_{a,b}$ whose bits are $\F_q$-linear
functions of $a$. Based on this test, in Figure~\ref{fig:sumgame} we give a
multiprover protocol $\sumgame(C,W,b)$ in which the verifier commands one of $k$ provers (supposedly) sharing an encoding of a state $\ket{\psi}$ according to $\mathcal{C}$ to
measure their share of the state in a specified basis ($X$ or $Z$) to obtain an
assignment $a$, and report the value $a \cdot b$. The verifier checks that this
value was computed correctly by using the test $\lin(b)$ from Theorem~\ref{thm:composedpcp}, together with the
guarantees of the low-degree test. The test $\lin(b)$ requires a constant number of queries to both $g_a$ and $\Pi_{a,b}$. In order to aggregate these queries, the verifier first asks the provers to encode $\Pi_{a,b}$ as a low-degree
polynomial $h$; to query a constant number of entries of $\Pi_{a,b}$ the verifier then asks for the restriction
$h|_s$ of $h$ to a curve $s$ that goes through all points to be queried. However, the number of bits required
to specify the restriction $h|_s$ is, for our choice of parameters,
polylogarithmic in $n$. To get around this we apply composition, in a similar way to the composed low degree test (Theorem~\ref{thm:2ml}). For concreteness,
we explicitly state how to do this, following the variable substitution technique
in~\cite[Section 3.1.2]{Vidick13xor}, which appeared earlier in~\cite[Section 4.4]{DFKRS11}.

\begin{definition}\label{def:variable-substitution}
  Define the substitution map 
	\begin{equation}\label{eq:substitution}
	\#_d : \Fq \to \Fq^{\mu(d)}\;, \quad \#(x) = (x, x^2, x^4, \dots, x^d)\;,
	\end{equation}
  where $\mu(d) = 2 \lceil \log (d+1) \rceil$. 
  \label{def:variable-substition}
\end{definition}

Under this map, any univariate degree-$d$ polynomial $f(x)$ can be viewed as a
degree $\delta(d) = O(\log d)$, $\mu(d)$-variate polynomial $f(\#x)$, by
formally identifying powers of $x$ with products of the substituted
variables. Thus, instead of querying for the restriction $h|_s$, we view
this restriction itself as a multivariate polynomial over $\Fq^{\mu(d)}$, and
query the restriction of that polynomial to a curve over $\Fq^{\mu(d)}$. This
can be described in logarithmically fewer bits. The precise form of the queries
we make to the prover is specified in Figure~\ref{fig:sumgame}.

\begin{figure}[htbp]
\rule[1ex]{16.5cm}{0.5pt}\\
Test $\sumgame(C,W, b = \{b_1,\ldots,b_k\})$:  \\
The verifier sends the basis label $W$ to all provers. In the test, the verifier sends pairs of questions, generally formatted as in the test $\qld^{(2)}(m,d,q)$ and $\cld^{(2)}(m',d,q)$. We write the first question as $s$, and the second as $s'$. Note that, as in $\qld(m,d,q)$, $s$ (resp. $s'$) itself can consist of a point in $\Fq^m$ (resp. $\Fq^{m'}$), or a pair $(s_1,s_2)$ (resp. $s'_1,s'_2$) of subspaces.  
\begin{enumerate}
\item[(a)] Send the special prover a question $s=(s_1,s_2)$ distributed as in
  $\qld^{(2)}(m,d,q)$ (conditioned on the basis choice $W$ having been made). Receive
  a polynomial $r\in\deg_d(s_2)$. Send the composite prover the composite
  query $\comp{(b_{t'}, (s, s'))}$ consisting of the vector $b_{t'}$ as well as a
  question $(s,s')$, where $s'$ is distributed as in $\cld^{(2)}(m',d,q)$. Receive the
  composite answer $\comp{(r',r'')}$. Reject if $r \neq \comp{r'}$. 
\item[(b)] Send $b_{t}$, where $t$ is the index of the special prover, to both provers. Execute the tests $\qld^{(2)}(m,d,q)$ and $\cld^{(2)}(m',d,q)$ in parallel. Accept if and only if the provers' answers pass both tests. 
\item[(c)] Send $b_{t'}$ to both provers. Simulate the test $\lin(b_{t'})$ from
  Theorem~\ref{thm:composedpcp} to obtain a tuple
  $(x_1,\ldots,x_\ell,i_1,\ldots,i_{\ell'})$ of queries, where $x_i \in \Fq^m$
  are queries to $g_a$ and $i_j$ are indices of bits to be queried in the PCPP proof
  $\Pi_{a,b}$. Let $s_1$ be a constant-degree curve in $\Fq^m$ that goes
  through all the $x_i$, i.e. a constant-degree polynomial $s_1: \Fq \to \Fq^m$
  whose image contains each point $x_i$, and likewise $s'_1$ a constant-degree curve in $\Fq^{m'}$ that
  goes through all $\bij(i_{j})$. Moreover, let $s_2$ be a constant-degree curve
  in $\Fq^{\mu(d)}$ that goes through all of the points
  $\#(s_1^{-1}(x_i))$, and let $s'_2$ be a constant-degree curve in
  $\Fq^{\mu(d')}$ that goes through all of the points
  $\#((s'_1)^{-1}(\bij(i_{ij})))$, where $\#$ and $\mu(\cdot)$ are as in Definition~\ref{def:variable-substitution}. Perform one of the following tests with 
  probability $\frac{1}{2}$ each.
  \begin{enumerate}
  \item[(i)] Choose uniformly random points $y$ on $s_1$ and $y'$ on $s'_1$. Send $(y, y')$ to
    the special prover and $\comp{((s, \#y), (s', \#y'))}$ to the composite prover. Receive  answers $(\alpha, \beta) \in \Fq^2$ and $\comp{(\gamma, \delta)}
    \in \Fq^2$, respectively. If $\alpha = \comp{\gamma}$ and $\beta =
    \comp{\delta}$, then accept, else reject.
  \item[(ii)]
    Send $((s_1,s_2), (s'_1, s'_2))$ to the special prover, and two points
    $\comp{(z,z')}$ to the composite prover, where $z$ is uniformly random in $s_2$ and $z'$ is
    uniformly random in $s'_2$. Receive from the special prover a pair of
    polynomials $(r, r')$, where $r$ is $\mu(d)$-variate and $r'$ is
    $\mu(d')$-variate, and from the composite prover a pair of values
    $\comp{(\alpha, \beta)} \in \Fq^2$. If the answers are consistent on
    points $z, z'$ (i.e. $r(z) = \comp{\alpha}$ and $r'(z') = \comp{\beta}$) and if the entries of $\Pi$ and $g$ decoded from the answers $r$ and $r'$
     on $s$ and $s'$ would be accepted in the test $\lin(b)$, then
     accept, else reject. \\
  \end{enumerate}
\item[(d)] For $j\in\{1,\ldots,k\}$, send the $j$-th prover the vector $b_j$ and a query $((s_1,
  s_2), (s'_1, s'_2))$ chosen as in part (c). Receive from each
  prover a value $c_j\in\Fq$, as well as a pair of polynomials $(r_j, r_j')$. If for each prover $j$,
  the entries of $\Pi$ and $g$ decoded from $r_j$ and $r_j'$ would be
  accepted in the test $\lin(b_j)$, then return $\omega^{\tr[\sum_{j=1}^{k} c_j]}$.
\end{enumerate}
\rule[1ex]{16.5cm}{0.5pt}
\caption{Procedure $\sumgame(C,W,b)$, where $C$ is a self-dual CSS code, $W\in\{X,Z\}$ a basis label, and $b=\{b_1,\ldots,b_k\}$, where each $b_j\in\Fq^n$.}
\label{fig:sumgame}
\end{figure}

Before stating the completeness and soundness properties of the test $\sumgame(C,W,b)$, we state the strategy followed by the honest prover.

\begin{definition}\label{def:sumgame-honest}
In the game $\sumgame(C,W,  b=\{b_1,\ldots,b_k\})$, the honest strategy is defined as follows:
\begin{itemize}
  \item State: the provers share a state $\ket{\Psi}$ which is a
    qubit-by-qubit encoding of a state $\ket{\psi} \in (\C^q)^{\otimes
      n}$ using the CSS code $\mathcal{C}$ derived from the self-dual classical code $C$.
  \item Measurements: each prover performs a measurement of each of its qudits in the basis $W$, to obtain as
outcome a string $a\in\Fq^n$. Using $a$, the prover determines a polynomial $g_a : \Fq^m \to \Fq$. In addition, the prover
may be sent a vector $b_j\in\Fq^n$. If this is the case, the prover replies with
$c=a\cdot b_j\in\Fq$. In addition, it computes a polynomial $h_a : \Fq^{m'}\to \Fq$ that
is a low-degree extension $h_a = g_{\Pi_{a,b_j}}$ for the PCPP proof
$\Pi_{a,b_j}$ that verifies $c=a\cdot b_j$, as described in
Theorem~\ref{thm:composedpcp}. Finally, the prover is sent a pair 
$(s,s')$. Here $s$ (resp. $s'$) may be: a question from the tests $\qld^{(2)}$ (resp.
$\cld^{(2)}$), or, in part (c) of the test, a
single point in $\Fq^{m}$ (resp. $\Fq^{m'})$, or a specification of a curve of
constant degree in $\Fq^m$ (resp. $\Fq^{m'}$) together with a point or a curve
in the space $\Fq^{\mu(d)}$ (resp. $\Fq^{\mu(d')}$). The honest prover responds with the
appropriate restriction of $g_a$ (resp. $h_a$), composed with
the variable substitution map whenever appropriate. 
\end{itemize}
\end{definition}

\begin{theorem}\label{thm:sum-game}
 Let $C$ be a
 $[k,k']$ weakly self-dual linear code. Let $n$ be an integer, and $d,m,q$ integer such that $q = 2^t$ and $d,m,\log q$
 are polynomially bounded in $n$. Let $\eps \geq d/q$. 
Let $b_1, \dots, b_k \in\Fq^n$. The procedure $\sumgame(C,W, \{b_1,
\dots, b_k\})$
 is a 1-round, $k$-prover interactive protocol with the following completeness
 and soundness properties.
\begin{enumerate}
  \item \emph{Completeness:} If the provers follow the strategy introduced in
    Definition~\ref{def:sumgame-honest}, they pass the test with certainty.
  \item \emph{Soundness:}
    Suppose that a strategy for the provers is accepted with
    probability at least $1-\eps$ in $\sumgame(C,W, \{b_1, \dots, b_k\})$. Suppose further that the restriction of the strategy to questions formatted as in $\qld^{(2)}(m,d,q)$ succeeds in that test with probability at least $1-\eps$.
    Then there is a $\delta = \poly(\eps, \delta_{C})$, where $\delta_{C}$ is as
    specified in Theorem~\ref{thm:codeword_test}, such that the following holds. There exists an encoded state $\ket{\Psi} \in
    (\C^q)^{\otimes nk}$ such that the value returned by the verifier in step
    (d) of the protocol has expectation that is within $\delta$ of 
    $\bra{\Psi}\otimes_{i=1}^{k} \qp_{W}( b_i)\ket{\Psi}$. 
\end{enumerate}
\end{theorem}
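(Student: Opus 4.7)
For completeness, I verify that the honest strategy described in Definition~\ref{def:sumgame-honest} passes every check and yields the correct expectation in step~(d). Part~(a) succeeds because by Lemma~\ref{lem:composite-linear} the linear functions $a \mapsto g_a|_{s_2}$ evaluated by the composite and special provers agree with certainty on the shared codestate. Parts~(b) and~(c)(i)--(ii) succeed because the honest strategy returns true low-degree encodings of $a$ and $\Pi_{a,b_j}$, and $\Pi_{a,b_j}$ is by construction a valid PCPP proof for the linear claim $a \cdot b_j = c$, so the $\lin(b_j)$ checks accept with certainty (Theorem~\ref{thm:composedpcp}). For the expected value in step~(d), observe that if prover~$j$ measures its share of $\ket{\Psi}$ in basis $W$, obtaining outcome $a_j \in \Fq^n$, and reports $c_j = a_j \cdot b_j$, then the returned value is $\omega^{\tr[\sum_j a_j \cdot b_j]} = \prod_j \omega^{\tr(a_j \cdot b_j)}$; since $\qp_W(b_j) = \sum_{a_j} \omega^{\tr(a_j \cdot b_j)} \qp_W^{a_j}$, taking expectation over the measurement outcomes gives exactly $\bra{\Psi} \bigotimes_j \qp_W(b_j) \ket{\Psi}$.

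For soundness, the plan proceeds in four steps. First, I restrict the given strategy to the $\qld^{(2)}(m,d,q)$ part of the protocol (together with part~(a), which is just a composite-prover version of the low-degree test on the $g_a$ encoding). By hypothesis this restriction succeeds with probability $1-\eps$, so Theorem~\ref{thm:codeword_test} applies: there exist isometries on each prover's space and an encoded state $\ket{\Psi}$ in the $n$-fold tensor power of the code subspace, such that the special prover's ``points'' and ``subspace'' measurements are $\delta_C$-close, in the state-dependent norm, to the generalized Pauli measurements $\qp_W^a$ and $\qp_{W,s}^r$ on $\ket{\Psi}$. Because the role of special prover is randomized uniformly in~(a), the same conclusion holds simultaneously, up to a factor of $k$ in the error, for each of the $k$ provers.

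Second, I analyze the provers' answers on the PCPP portion of the protocol for a fixed choice of vector $b_j$. Part~(b), executed with input $b_j$, runs $\cld^{(2)}(m',d,q)$ alongside $\qld^{(2)}$, which by Theorem~\ref{thm:2ml} produces, for each prover, a low-degree POVM $\{H^h_{b_j}\}$ with outcomes equal to $m'$-variate polynomials $h$ of total degree at most $d$, that is consistent with the subspace-restriction answers returned during~(b) and~(c)(ii). Part~(c)(i) then enforces consistency between point-values of $g_a$ and $\Pi_{a,b_j}$ along the curves $s_1$ and $s_1'$ and the restrictions returned in~(c)(ii): combined with the curve-to-subspace composition via the substitution map $\#_d$ of Definition~\ref{def:variable-substitution} and the low-degree analysis applied at the outer level, this guarantees that the $O(1)$ entries of the PCPP proof and of $g_a$ queried by $\lin(b_j)$ are, up to error $\poly(\eps)$, consistent values of a pair of fixed low-degree polynomials committed to by the prover. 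This gives an implicit ``assignment'' $(a, \Pi)$ for each possible PCPP seed, whose queried bits the prover is forced to answer according to, on average.

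Third, I use the classical soundness of $\lin(b_j)$ (Theorem~\ref{thm:composedpcp}) to conclude that if the prover's committed assignment $a$ does not satisfy $a \cdot b_j = c$ for the value $c$ that the prover reports in step~(d), then the $\lin(b_j)$ check of part~(c)(ii) rejects with constant probability. Hence, except with probability $\poly(\eps)$, the value $c_j$ returned in part~(d) coincides with $a_j \cdot b_j$, where $a_j$ is the outcome of the low-degree-extracted Pauli measurement $\qp_W^{a_j}$ on the $j$-th qudit-block of $\ket{\Psi}$; by Theorem~\ref{thm:codeword_test} this measurement is $\delta_C$-close to the honest one. Summing the errors over the $k$ provers and combining with step~one, the distribution of $\sum_j c_j$ is $\poly(\eps, \delta_C)$-close in total variation to the distribution of $\sum_j a_j \cdot b_j$ under honest measurement of $\ket{\Psi}$, and the completeness calculation above concludes that the expectation of the reported value is within $\delta = \poly(\eps, \delta_C)$ of $\bra{\Psi} \bigotimes_j \qp_W(b_j) \ket{\Psi}$.

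The main obstacle I expect is step~two, namely turning the \emph{simultaneous} success of the $\qld^{(2)}$, $\cld^{(2)}$, and curve-based consistency checks into a single committed pair of low-degree polynomials $(g_a, h)$ per prover whose queried values feed into the PCPP verifier. The delicate point is that the PCPP verifier $\lin(b_j)$ queries both $g_a$ and $\Pi_{a,b_j}$ at jointly chosen random locations, while the curve-and-substitution-map construction aggregates these queries into a constant number of ``outer'' entries that can be certified using the composed low-degree test. Proving that consistency on uniformly random points of a single curve extends to consistency on all $O(1)$ PCPP query locations requires the standard curve-extension argument of~\cite{DFKRS11}, carried out here in the state-dependent norm and carefully combined with the Schwartz--Zippel bound of Lemma~\ref{lem:sz} to absorb the $d/q$ error. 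All other steps are essentially routine applications of Theorems~\ref{thm:codeword_test},~\ref{thm:composedpcp}, and the consistency machinery already developed in Section~\ref{sec:soundness}.
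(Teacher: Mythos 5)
Your proposal follows essentially the same route as the paper: apply Theorem~\ref{thm:codeword_test} to the $b$-independent $\qld^{(2)}$ measurements to get Pauli closeness, use Theorem~\ref{thm:2ml} to extract a low-degree PCPP commitment $h$, invoke the classical soundness of $\lin(b_j)$ to force $a_j \cdot b_j = c_j$ with high probability, and conclude with the expectation calculation. One step the paper makes explicit that your outline leaves implicit is the oracularization argument showing that, from success in parts (a) and (b), the joint measurement $M_{b,s,s'}^{r,r'}$ applied by a prover on a compound question factors into a product $A_{b,s}^r B_{b,s'}^{r'}$ of a $\qld^{(2)}$-type measurement and a $\cld^{(2)}$-type measurement, and that the $b$-dependent factor $A_{b,s}^r$ is forced by part (a) to be close to the $b$-independent measurements $N_s^r$; your ``restrict the strategy to $\qld^{(2)}$'' phrasing is compatible with this but should be spelled out, since a priori a prover's answer to the $\qld^{(2)}$ portion could depend on $b$ and on the correlated $\cld^{(2)}$ answer. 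Your step two usefully flags the curve-extension argument (and its interaction with the substitution map and Schwartz--Zippel) as the delicate point, which the paper treats only briefly; that is a fair observation, not a gap in your plan.
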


\begin{remark}\label{rk:sum-complexity}
  The number of bits communicated to a prover in $\sumgame$ is at most the
  number of bits needed in the (composed) low-degree test, plus the number of bits needed
  to specify a constant-degree curve over $\Fq^n$. By
  Theorem~\ref{thm:qld}, both are at most $O(m \log q) = O(\frac{\log n}{\log \log n} \log q)$. The number of bits in
  the provers' answers is at most the maximum of the number of bits needed in
  the (composed) low-degree test, and the the number of bits needed to specify a
  degree-$\delta(d) = O(\log d)$
  polynomial restricted to a constant-degree curve. By Theorem~\ref{thm:qld}, the
  former is at most $O(\log^2(d) \log(q))$, while the latter is
  at most $O(\log(d)\log(q))$.
\end{remark}

\begin{proof}
Completeness follows from the definition of the honest strategy, Lemma~\ref{lem:composite-linear}
and the completeness of $\lin(b)$ as described in Theorem~\ref{thm:composedpcp}.

We show soundness. A strategy for a prover in the test $\sumgame(C,W,b)$ consists of a family of
measurements $\{M_{b,s,s'}^{r,r'}\}$ used in response to questions of the form
$(s,s')$. As the subscripts indicate, these measurements depend on the
vector $b$ received by the prover. In addition, part (a) of the test
involves cross-checking these measurements with a strategy for the quantum low-degree
test $\qld^{(2)}(m,d,q)$, in which the players are \emph{not} told $b$. The strategy
used for this test is described by measurements $\{N_{s}^{r}\}$, which do not
depend on $b$.

We show the soundness of the test in two steps. First, we note that
success in parts (a) and (b) of $\sumgame(C,W, \{b_1, \dots, b_k\})$ implies that the measurements used
by the players are close to product form:
\[ M_{b,s,s'}^{rr'} \approx_{\poly(\eps)} A_{b,s}^r B_{b,s'}^{r'}, \]
where $A_{b,s}^r = \Es{s'}\sum_{r'} M_{b,ss'}^{rr'}$ and $B_{b,s'}^{r'} = \Es{s}
\sum_{r} M_{b,ss'}^{rr'}$ are the measurements obtained by marginalizing the joint measurements
$M_{b,ss'}^{r''}$. This follows from a standard ``oracularization'' analysis,
similar to the one in the proof of Theorem~\ref{thm:com_test}.

From success in part (a) of the test it follows that the measurements
$A_{b,s}^r$ must be $\poly(\eps)$-close to the measurements $N_{s}^r$ used in the
test $\qld^{(2)}(m,d,q)$. By applying Theorem~\ref{thm:codeword_test} to the
strategy $N_{s}^r$, which is independent of $b$, this implies that, after
applying a suitable isometry, 
\[ A_{b,s}^r \approx_{\poly(\eps, \delta_C)} \qp_{W, s}^{r} \,=\,\sum_{a:\, (g_a)_{|s}=r}  \qp_W^a \;, \]
where $\qp_{W,s}^{r}$ is the measurement used in the honest strategy for
$\qld^{(2)}(m,d,q)$ as defined in
Definition~\ref{def:pauli-strategy}. Moreover, this implies that the
shared state is $\poly(\eps, \delta_C)$-close, under the isometry, to some encoded
state $\ket{\Psi}$.

Moreover, 
success in part (b) of the test implies that the measurements $\{B_{b,s'}^{r'}\}$ must
constitute a good strategy for the classical low-degree test $\cld^{(2)}(m',
d,q)$ (in which the players \emph{are} informed of $b$ and $j$). This implies that there exists a measurement $\{B_{b}^{h}\}$ with outcomes
$h$ that are $m'$-variate degree-$d$ polynomials over $\Fq$, such that
\[ B_{b, s'}^{r'} \approx \sum_{h : h|_{s'} = r'} B_{b}^{h}\;. \]

Finally, from part (c), we conclude that the set of outcomes $(g,h)$ that are such that the string $\Pi \in \Fq^{n'}$ for which $h$ is the low-degree extension, together
with $g$, are not accepted by the PCPP verifier, must have low probability of
being obtained when performing the corresponding measurement. Hence, by  the soundness of the PCPP from
Theorem~\ref{thm:composedpcp}, it follows that, for each prover $j$,
the polynomial $g$ obtained by this prover encodes an
assignment $a_j$ which satisfies $b_j \cdot a_j = c_j$ with high
probability. This implies that the expectation value of the output
$\omega^{\tr[\sum_j c_j]}$ computed by the verifier in part (d) is close to the expectation value of the Pauli observable
$ \otimes_{i=1}^{k} \qp_{W}(b_i)$,
as desired. 
\end{proof}

\subsection{Evaluating multiple-basis operators}
\label{sec:multi-basis}

Building on the test $\sumgame$, we design a test $\eval$ that can measure
operators which are tensor products of both $X$- and $Z$-basis
Paulis. In anticipation of our application to testing Hamiltonians, we
describe $\eval$ as taking as input a distribution over logical operators to
be measured. The process of translating these logical operators into
physical operators to be measured by each prover is bundled into the
test. 

\begin{figure}[htbp]
\rule[1ex]{16.5cm}{0.5pt}\\
Test $\eval_\xi(C,\pi,\ol{x},\ol{z})$: Given is a distribution $\pi$ over $\{S\subseteq\{1,\ldots,n\}\}\times \Fq^n\times\{\pm 1\}$, a $[k,k']$ weakly self-dual linear code $C$, $\ol{x},\ol{z}\in\Fq^k$ such that $\ol{X}=\qp_X(\ol{x})$ and $\ol{Z}=\qp_Z(\ol{z})$ are $\qp_X$ and $\qp_Z$ logical operators for $\mC$ respectively, and a parameter $0\leq \xi \leq 1$. \\
The verifier samples $(S,u,\epsilon)\sim \pi$.  
	The verifier performs one of the following two tests, the first with probability $(1-\xi)$, and the second with probability $\xi$: 
\begin{enumerate}
\item[(a)] (Test) Let $u_S$ and $u_{\ol{S}}$ be the substrings of $u$ indexed by $S$ and $\ol{S}$ respectively. 
\begin{enumerate}
\item[(i)] Do either of the following, with probability $1/2$ each:  
\begin{enumerate}
\item[1.] Send either $S$, $\ol{S}$, $\emptyset$ or $\{1,\ldots,n\}$ to all provers, with probability $1/4$ each. Execute the test $\code(C,n)$.
\item[2.] Do as in 1., except the sets sent to the special and composite provers are complemented ($S,\ol{S}$ or $\emptyset,\ol{\emptyset}$), and so is the choice of basis $W$ in $\code(C,n)$. 
\end{enumerate} 
\item[(ii)] Send $S$ to the special prover, and either $\emptyset$ or
  $\ol{\emptyset}$ to the composite prover. Choose a random
  vector $v \in \Fq^k$ in the column span of $G$, as in Definition~\ref{def:queries}. Execute the test
  $\sumgame(C,X,\{v_1 s, v_2s, \dots, v_k s\})$ on query string $s=u_{\ol{S}}$ (case $\emptyset$) or
  $s=u_{{S}}$ (case $\ol{\emptyset}$). Reject if the test $\sumgame$
  rejects, or if the returned value is not $1$. 
\end{enumerate}
\item[(b)] (Eval) Let $u_S$ and $u_{\ol{S}}$ be the substrings of $u$
  indexed by $S$ and $\ol{S}$ respectively. For $i\in \{1,\ldots,k\}$
  let $u_i$ be the string with substrings $u_{S,i}= \ol{x}_i u_S$
  indexed on $S$ and $u_{\ol{S},i} = \ol{z}_i u_{\ol{S}}$ indexed on $\ol{S}$. 
Send all provers the set $S$. Execute part (d) of $\sumgame(C,X,\{u_{1},
\dots u_{k}\})$ with all the provers. Let $E$ be the returned
value. Return $\epsilon \cdot E$.
\end{enumerate}
\rule[1ex]{16.5cm}{0.5pt}
\caption{Procedure $\eval(C,\pi,\ol{x},\ol{z})$ to evaluate the expectation of a set of Pauli operators, chosen according to distribution $\pi$, on an encoded state.}
\label{fig:eval}
\end{figure}

The procedure $\eval_\xi(C,\pi,\ol{x},\ol{z})$ is described in Figure~\ref{fig:eval}. It takes as input a  $[k,k']$ weakly
self-dual linear code $C$, a distribution\footnote{There should be no confusion between $\pi$ and the coordinate expansion map $\bij$ used in previous sections.} $\pi$ over $\{S \subseteq\{1,\ldots,n\}\}\times \Fq^n\times\{\pm 1\}$, and strings $ \ol{x},\ol{z}\in\Fq^k$ such that
the operators $\ol{X}=\qp_X(\ol{x})$ and $\ol{Z}=\qp_Z(\ol{z})$ are
respectively logical $\qp_X$ and $\qp_Z$ operators for the CSS code $\mC$
associated with $C$. To any triple $(S,u,\eps)$ in the support of $\pi$ we
associate a qudit Pauli operator
\begin{equation}\label{eq:def-hs}
h_{S}(u) \,=\, \otimes_{i\in S} \qp_X(u_i) \otimes_{i\in \ol{S}} \qp_Z(u_i)\;.
\end{equation}

The procedure is divided into a ``test'' and an ``eval''
part. The relative weight given to each part is governed by the parameter $0\leq
\xi \leq 1$. The goal of the testing part is to ensure that the provers' answers
in the evaluation part are distributed according to a distribution that can be
obtained by performing Pauli measurements on the
encoding of a fixed $n$-qudit state $\ket{\psi}$ using the CSS code $\mC$
associated with $C$, where the Pauli $\qp_X$ and $\qp_Z$ are encoded using
$\ol{X}$ and $\ol{Z}$ respectively.  The test is formulated using the notion of
``special'' and ``composite'' prover introduced in
Section~\ref{sec:code-protocol}; recall the scheme for distributing queries to
the composite prover specified in Definition~\ref{def:queries}.

\begin{lemma}\label{lem:eval}
Let $C$ be a $[k,k']$ weakly self-dual linear code and $\ol{X}=\qp_X(\ol{x})$,
$\ol{Z}=\qp_Z(\ol{z})$ logical $\qp_X$ and $\qp_Z$ operators for the
associated CSS code $\mC$ respectively. Let $n$ be an integer and $\pi$ a
distribution 
over $\{S\subseteq\{1,\ldots,n\}\} \times \Fq^n \times\{\pm 1\}$, and let $\xi$ be a real number between $0$ and $1$. Then the procedure $\eval_\xi(C, \pi , \ol{x},
\ol{z})$ has the following properties:
\begin{itemize}
\item \emph{(Completeness)} For any state $\ket{\psi} \in (\C^q)^{\otimes n}$
  there is a strategy for the provers that is accepted with probability $1$ in
  part (a) of the test, and such that the value returned by the
  verifier in part (b), conditioned on the choice of $(S,u,\epsilon)$, has expectation
	$ \epsilon\cdot \Re(\bra{\psi}h_S(u)\ket{\psi})$, where $h_S(u)$ is defined in~\eqref{eq:def-hs}.
\item \emph{(Soundness)} Suppose a strategy for the provers is accepted with probability at least $1-\eps$ in each of the
  ``test'' rounds of the procedure $\eval_\xi(C,\pi,\ol{x},\ol{z})$. Then there exists a state $\ket{\psi} \in (\C^q)^{\otimes n}$ such that, on expectation over $(S,u,\epsilon)\sim \pi$, the value returned by the verifier in step (b) of the protocol, conditioned on the choice of $(S,u,\epsilon)$, has expectation that is within $\poly(\eps,\delta_{C})$ of 
	$ \epsilon\cdot \Re(\bra{\psi}h_S(u)\ket{\psi})$, where $\delta_{C}$ is as
  specified in Theorem~\ref{thm:codeword_test}.
	\end{itemize}
\end{lemma}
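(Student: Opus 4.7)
\emph{Plan for completeness.} Fix $\ket{\psi} \in (\C^q)^{\otimes n}$ and let $\ket{\Psi}$ be its qudit-by-qudit encoding in $\mC^{\otimes n}$, with prover $j$ holding the $j$-th share of each of the $n$ logical qudits. On receiving a subset $S$, each prover measures its $i$-th physical qudit in the $X$ basis if $i \in S$ and in the $Z$ basis if $i \in \ol{S}$; since $\qp_X$ and $\qp_Z$ operators on distinct qudits commute, this is a well-defined projective measurement, and because $\mC$ is a CSS code its stabilizers are preserved qudit-by-qudit, so Lemma~\ref{lem:code-completeness} applies essentially unchanged to the mixed-basis codeword test in (a)(i). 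Part (a)(ii) passes with certainty by Lemma~\ref{lem:composite-linear}, since the stabilizer property forces $\sum_j v_j a_j = 0$ on any encoded state whenever $v$ lies in the column span of $G$. For (b), a short computation using $u_j = (\ol{x}_j u_S, \ol{z}_j u_{\ol{S}})$ gives $\sum_j u_j \cdot a_j = u_S \cdot b^X + u_{\ol{S}} \cdot b^Z$ where $b^X, b^Z$ are the decoded logical $\ol{X}$ and $\ol{Z}$ measurement outcomes on the corresponding qudits of $\ket{\psi}$; averaging $\omega^{\tr(\cdot)}$ over the measurement randomness produces $\bra{\psi} h_S(u) \ket{\psi}$, and multiplying by $\epsilon$ and taking real parts yields the claim.

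\emph{Plan for soundness.} Assume the strategy is accepted with probability $1-\eps$ in each test round. First, the rounds of (a)(i) with $S = \emptyset$ and $S = \{1,\ldots,n\}$ are precisely the $Z$- and $X$-basis codeword tests, so Theorem~\ref{thm:codeword_test} produces isometries $V_1, \dots, V_k$ and an encoded state $\ket{\Psi} \in \mC^{\otimes n}$ such that the image of the provers' shared state under $\otimes_j V_j$ is $\poly(\eps, \delta_C)$-close to $\ket{\Psi}$ tensored with an ancilla, and each prover's $X$- and $Z$-basis point measurements are $\poly(\eps, \delta_C)$-close to the honest Pauli POVMs on $\ket{\Psi}$. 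Next, the rounds of (a)(i) executed on the mixed labels $S, \ol{S}$ (together with the complementary variant (a)(i)2) should be used to certify that the prover's mixed-basis POVM factors approximately as the product of its $X$-POVM on positions in $S$ and its $Z$-POVM on positions in $\ol{S}$, using the disjoint-qudit commutativity of $\qp_X$ and $\qp_Z$ and the cross-checks introduced by the complementarity. Finally, Theorem~\ref{thm:sum-game} applied to part (a)(ii) certifies that the aggregated value returned by $\sumgame$ is $\poly(\eps, \delta_C)$-close to the sum $s \cdot \sum_j v_j a_j$ of genuine inner products of measurement outcomes with queried vectors; the stabilizer identity $\sum_j v_j a_j = 0$ on encoded states then converts the acceptance condition ``returned value equals $1$'' into a consistency check between the special prover's mixed-basis measurement and the composite prover's single-basis measurement.

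\emph{Finishing, and the main obstacle.} With all these closeness estimates in place, a last invocation of Theorem~\ref{thm:sum-game} on the evaluation round (b) yields that $\E[\epsilon\cdot \omega^{\tr(\sum_j c_j)}]$ is within $\poly(\eps, \delta_C)$ of the honest value computed on $\ket{\Psi}$, which by the completeness calculation equals $\epsilon \cdot \Re(\bra{\psi}h_S(u)\ket{\psi})$ for the logical state $\ket{\psi}$ decoded from $\ket{\Psi}$. The hard part will be extracting a \emph{single} encoded state and set of isometries simultaneously compatible with all four ``basis'' labels $S, \ol{S}, \emptyset, \{1,\ldots,n\}$ used in (a)(i): naively invoking Theorem~\ref{thm:codeword_test} four times produces a priori unrelated isometries, and the argument must show they can be aligned at a controlled cost in error. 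This is where the complementary structure of (a)(i)2, together with the self-duality of $C$ and part (a)(ii)'s cross-prover check, will be essential, in the same spirit in which the $X$/$Z$ anti-commutation test is used to glue together the independent $X$- and $Z$-basis analyses in the proof of Theorem~\ref{thm:qld}.
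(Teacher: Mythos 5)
Your completeness sketch is essentially the paper's: share the codeword encoding, measure in $X$ on $S$ and $Z$ on $\ol{S}$ (the paper phrases this as applying a Fourier gate $F$ to qudits in $S$ before a uniform $X$-basis measurement, which, together with weak self-duality of $C$, is the same thing), then run the $\code$/$\sumgame$ honest strategy; the expectation calculation at the end of part (b) is routine.

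On soundness you correctly assemble the ingredients but stop short at exactly the point the paper resolves, and the resolution you guess at is not the one used and would be an unnecessary detour. You write that the ``hard part'' is aligning the four isometries coming from the four basis labels $S,\ol{S},\emptyset,\{1,\ldots,n\}$, and you propose an anti-commutation-style gluing argument in the spirit of the $\qld$ analysis. That is not how Lemma~\ref{lem:eval} is proved, and it would be overkill. The paper does apply Theorem~\ref{thm:codeword_test} for every $S$ and explicitly acknowledges that the resulting isometry is $S$-dependent; it then writes the special prover's observables as $\qp_{S,W}(w)$ to record this dependence. But instead of trying to reconcile these isometries pairwise, the argument exploits a structural asymmetry you already noticed but did not fully use: in part (a)(ii) the \emph{composite prover is never told $S$}. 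Its claimed value $\comp{c}$ and the measurement producing it therefore cannot depend on $S$; they live in the fixed $\emptyset$-frame (i.e.\ in the eigenbasis of $\qp_{\emptyset,X}$ or $\qp_{\emptyset,Z}$). The acceptance condition in $\sumgame$ then forces the special prover's $S$-dependent claimed value $c$ to be trace-consistent with $\comp{c}$ for every $S$, which anchors all the $\qp_{S,\cdot}$ measurements to the single $\emptyset$-frame at cost $\poly(\eps,\delta_C)$. The symmetry of (a)(i)2 is used only to get the identification $\qp_{S,X}(w)=\qp_{\ol{S},Z}(w)$, not to perform a global gluing. Once the anchoring to the $\emptyset$-frame is in place, part (b) immediately reads off $\bra{\psi}h_S(u)\ket{\psi}$, since the strings $u_{S,i}=\ol{x}_iu_S$ and $u_{\ol{S},i}=\ol{z}_iu_{\ol{S}}$ are chosen precisely so that the physical inner-product measurement implements the logical observable. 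In short: the glue you need is not anti-commutation, it is the $S$-blindness of the composite prover combined with the $\sumgame$ consistency check, and your proposal should be revised to carry out that argument rather than pursuing the anti-commutation analogy.

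Two smaller points. First, your completeness sketch argues Lemma~\ref{lem:code-completeness} ``applies essentially unchanged to the mixed-basis codeword test''; the paper instead reduces the mixed-basis case to the pure-basis case via the Fourier gate and self-duality, which is cleaner and avoids reproving a mixed-basis codeword lemma. Second, your invocation of Theorem~\ref{thm:sum-game} should be explicit that the consistency it delivers is between the special prover's $S$-frame observables and the composite prover's $\emptyset$-frame observables, since that is precisely what you need for the anchoring step.
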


\begin{remark}\label{rm:eval_bits}
The amount of bits communicated to any prover in the procedure $\eval$ is at most the number of bits necessary to communicate an element sampled from $\pi$
(which scales as the logarithm of the support size of $\pi$) plus the maximum of
the number of bits communicated in either the $\sumgame$ or $\code$
tests. It follows from Remark~\ref{rk:sum-complexity} that the former
requires $O(m \log q) = O(\frac{\log n}{\log \log n} \log q )$ bits, and from
Remark~\ref{rm:code_check_bits} that the latter similarly requires $O(\frac{\log n}{\log \log n} \log q)$
bits. 
\end{remark}

\begin{proof}  
We first show the completeness property. Let $\ket{\psi} \in (\C^q)^{\otimes n}$ and
$\ket{\Psi} \in (\C^q)^{\otimes nk}$ a qudit-by-qudit encoding of $\ket{\psi}$
according to $\mC$. The strategy for the provers uses $\ket{\Psi}$ as a shared
state. When a prover is sent a set $S$, it immediately applies an $F$ gate~\eqref{eq:fourier-f} to all qudits in $S$. If sent a query from the test $\code$, it applies the honest strategy for the test, as described in
Lemma~\ref{lem:code-completeness}. If asked to execute the protocol
$\sumgame(C,X,\{s_1, \dots, s_k\})$, on a query string $s_j \in\Fq^n$, it measures all its qubits in the $X$ basis to obtain a string $a\in \Fq^n$ and then executes the protocol honestly, following the strategy specified in Definition~\ref{def:sumgame-honest}. 

We verify that this strategy succeeds in each of the sub-tests of part (a) with probability $1$. For (i) this is a direct consequence of success in $\code$ and the fact that the code is self-dual; application of $F$ merely exchanges the role of the $X$ and $Z$ bases for all provers. For (ii) this follows from the completeness property in Theorem~\ref{thm:sum-game}. 

Regarding soundness, assume that a strategy for the provers succeeds with probability at least $1-\eps$ in each of the sub-tests executed in part (a). Using (i)1., by~\thmref{codeword_test}, for each $S$ the associated 
  strategy is isometric to a $\delta_S$-extension of a Pauli $\mC$-codeword strategy,
  where $\Es{S\sim\pi}\delta_S=\delta_{C}(\eps,q)$, as stated in the theorem.
	Note that in general the implied isometry depends on the choice of $S$. For the remainder of the proof, assume that a prover applies observable $\qp_{S,W}(w)$ when sent a query of the form $(W,w)$, after having been told the set $S$. Using the symmetry in the tests we may also assume that $\qp_{S,X}(w)=\qp_{\ol{S},Z}(w)$ for all $w$ and $S$. 
	
Next consider part (ii). Since the composite prover is not sent the
set $S$, the value $\comp{c}$ it claims also does not depend on
$S$. Since the only way for the test $\sumgame$ to return $1$ in part
(d) of the test is for the values $\comp{c}$ and $c$ to have identical
trace, from the previous analysis it follows that we may assume that
the value $c=a\cdot s$ returned by the special prover is obtained from
the outcome  obtained by a measurement of the composite prover in the
eigenbasis of $\qp_{\emptyset,X}$ (case $s=u_S$) or
$\qp_{\emptyset,Z}$ (case $s=u_{\ol{S}}$).

As a result, the distribution of claimed values obtained in part (b) of the test is close to what would be obtained if all provers were to perform a measurement in the eigenbasis of $\qp_{\emptyset,X}$ for the qudits in $S$, and $\qp_{\emptyset,Z}$ for the qudits in $\ol{S}$. By definition of the strings $u_{S,i}$ and $u_{\ol{S},i}$ that are actually sent to prover $i$, the resulting physical observable implements the logical
    $n$-qudit observable $h_S(u)$, as desired.   
\end{proof}

\subsection{Efficient energy test for local Hamiltonians}
\label{sec:constant_gap}

We show how to use the $\eval$ test to estimate the energy of a
Hamiltonian up to constant accuracy, provided that the terms of the Hamiltonian are
(not necessarily local) Pauli operators of a particular form, which we call
$Y$-free. From this, we deduce two results in the direction of the quantum
games PCP conjecture: we show
$\QMA$-hardness of approximating the maximum success probability of a
nonlocal game with logarithmic communication, either conditionally, assuming the Local Hamiltonian problem is QMA-complete for
a constant-error approximation (Corollary~\ref{cor:qma-generalizedXZ}), or
unconditionally, under randomized reductions (Corollary~\ref{cor:randomized}).

\begin{definition}\label{def:gen-h}
Let $n$ be an integer and $q$ a prime power. 
We say that a Hamiltonian $H$ on $(\C^q)^{\otimes n}$ is a Hamiltonian in $Y$-free form if $H$ can be expressed as
\begin{equation}\label{eq:gen-h}
 H = \Es{S \subseteq \{1, \dots, n\}, u \in \Fq^n} \,\frac{ \alpha_{S,u}
 }{2}\,\big(h_S(u) + h_S(u)^\dagger\big)\;,
\end{equation}
where each term $h_S(u)$ is a Pauli operator of the form described
in~\eqref{eq:def-hs}, the weights
$\alpha_{S,u} \in \R$ are such that $|\alpha_{S,u}|\leq 1$ for all
$(S,u)$, and the expectation is taken according to a distribution $\pi$ with polynomial-size support. 
\end{definition}

The term $Y$-free refers to the fact that there are no Pauli $Y$ operators
(i.e. products of $X$ and $Z$ acting on the same qudit) in any of the terms.
As motivation for considering this class of Hamiltonians, we remark that in the
case of $q = 2$, i.e. for qubits, our definition of $Y$-free Hamiltonians
includes the generalized XZ model of~\cite{CM13}.\footnote{Called the $XY$ model
  in their convention; to convert to ours it suffices to relabel the Pauli $Y$
  and $Z$ operators.} In
that reference, it was shown that the local Hamiltonian problem for the XZ model
is $\QMA$-complete, for an inverse-polynomial promise
gap. The class of $Y$-free Hamiltonians is considerably more general
as it imposes no limits on the locality of the terms in the Hamiltonian, and
accommodates qudits of dimension up to $\poly(\log n)$. 

The following lemma shows that it is possible to embed qubit
Hamiltonians of the XZ model into qudit
$Y$-free Hamiltonians with local dimension $q = 2^t$  for any $t$. This will be useful since the low-degree test requires fields of large enough size. 

\begin{lemma}
  Given any Hamiltonian $H$ in the XZ model over qubits, and $q = 2^t$, there exists
  a Hamiltonian $H'$ in $Y$-free form over qudits of dimension $q$ with the same
  spectrum (up to multiplicity) as $H$.
  \label{lem:embed_qubit_hamiltonian}
\end{lemma}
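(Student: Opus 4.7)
The plan is to exploit the tensor-product decomposition $\C^q \cong (\C^2)^{\otimes t}$ induced by a self-dual basis of $\Fq = \F_{2^t}$ over $\F_2$, as developed in Section~\ref{sec:qauli}, and to embed each qubit of the original XZ Hamiltonian $H$ into the ``first subqubit'' of a corresponding qudit in $H'$.

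First, I would fix a self-dual basis $(b_1,\ldots,b_t)$ of $\Fq$ over $\F_2$, which exists since $q$ is even (see the discussion around~\eqref{eq:self-dual}). Writing $b_1 = 1\cdot b_1 + 0\cdot b_2 + \cdots + 0\cdot b_t$ and applying~\eqref{eq:pauli-l} gives $\qp_X(b_1) = \sigma_{X,1}(1)$ and $\qp_Z(b_1) = \sigma_{Z,1}(1)$; equivalently, under the isomorphism $\C^q \cong (\C^2)^{\otimes t}$, the single-qudit operators $\qp_X(b_1)$ and $\qp_Z(b_1)$ act as $\sigma_X \otimes \Id^{\otimes t-1}$ and $\sigma_Z \otimes \Id^{\otimes t-1}$ respectively.

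Next, I would translate $H$ term by term. Each summand in $H$ is, up to a real scalar, a tensor product $T = \bigotimes_{i\in S_X} \sigma_X^{(i)} \bigotimes_{i\in S_Z} \sigma_Z^{(i)}$ for disjoint $S_X, S_Z \subseteq \{1,\ldots,n\}$, acting as $\Id$ on the remaining qubits. To each such term I associate the qudit operator $h_S(u)$ of Definition~\ref{def:gen-h} by taking $S = S_X$, $u_i = b_1$ if $i \in S_X \cup S_Z$, and $u_i = 0$ otherwise. Since $\qp_X(0) = \qp_Z(0) = \Id$, by the previous step this operator acts as $T$ on the ``first subqubit'' of each qudit and trivially on the remaining $(t-1)n$ ancillary qubits. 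Carrying the original coefficients through gives a Hamiltonian $H'$ that is manifestly in $Y$-free form.

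Finally, I would verify the spectral claim. Under the isometric embedding $\iota : (\C^2)^{\otimes n} \hookrightarrow (\C^q)^{\otimes n}$, $\ket{v} \mapsto \bigotimes_i(\ket{v_i}\otimes \ket{0}^{\otimes t-1})$, the construction ensures $H'\iota = \iota H$; moreover, $H'$ commutes with every single-qubit Pauli supported on any of the $(t-1)n$ ancillary subqubits (since $\qp_W(b_1)$ acts trivially there), so $H'$ decomposes as $H \otimes \Id^{\otimes (t-1)n}$ with respect to the induced splitting, and every eigenvalue of $H$ appears in $H'$ with multiplicity amplified by exactly $2^{(t-1)n}$. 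Thus $H$ and $H'$ have the same spectrum up to multiplicity. There is essentially no obstacle here; the only point requiring care is the coordinate calculation that $\qp_W(b_1)$ genuinely acts as $\sigma_W$ on a single subqubit, which is immediate from self-duality $\tr(b_\ell b_m) = \delta_{\ell m}$.
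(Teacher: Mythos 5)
Your proof is correct and follows essentially the same route as the paper's: both fix a self-dual basis of $\Fq$ over $\F_2$, identify each $q$-dimensional qudit with $t$ qubits, use $\qp_W(b_1) = \sigma_{W,1}$ to plant the original Pauli term on the first sub-qubit of each qudit, and conclude that $H'$ is $H\otimes\Id$ under the induced splitting. Your write-up is marginally more general (it handles terms of arbitrary weight rather than the two-local form the paper implicitly assumes) and makes the $h_S(u)$ bookkeeping and the $2^{(t-1)n}$ multiplicity explicit, but the underlying idea is the same.
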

\begin{proof}
  Recall from Section~\ref{sec:qauli} that when $q = 2^t$ for any $t$, the field
  $\Fq$ admits a self-dual basis over $\F_2$, and a qudit of dimension $q$
  decomposes as a tensor product of $t$ qubits. Moreover, qubit Pauli operators
  $\{\sigma_{W, \ell}\}_{\ell \in \{1, \dots, t\}}$  acting on a single
  ``sub-qubit'' can be recovered from the qudit Paulis by the formula 
  \begin{equation}
    \sigma_{W, \ell} =  \qp_W(b_\ell),
    \label{eq:embed_qubit_pauli}
  \end{equation}
  where $\{b_1, \dots, b_t\}$ is a
  self-dual basis for $\Fq$ over $\F_2$. Extending this to multiple qudits, we can
  view a system of $n$ qudits of dimension $q=2^t$ each as a collection of $tn$
  qubits of dimension $2$ each. Let us index these qubits by pairs $(i, \ell)$, where $i
  \in \{1, \dots, n\}$ labels a qudit, and $\ell \in \{1, \dots, t\}$ labels
  a sub-qubit of the $i$th qudit. Then, given a qubit Hamiltonian $H$ over
  $n$ qubits, we construct the desired $H'$ by, for each qubit $X$ or $Z$ Pauli term in $H$ acting on qubits $i, j$, including the corresponding Pauli term acting
  on qubits $(i, 1)$ and $(j, 1)$. By~\eqref{eq:embed_qubit_pauli} this can be
  implemented by a generalized Pauli $\qp_X$ or $\qp_Z$ acting on qudits $i$ and $j$, and hence $H'$
  is in $Y$-free form. Moreover, $H'$ decomposes as a tensor product $H
  \ot \Id$ of $H$ acting on qubits $(1,1), (2,1), \dots, (n,1)$, and $\Id$
  acting on the remaining qubits. Hence $H'$ has the same spectrum (up to multiplicity) as $H$.
\end{proof}

Given a Hamiltonian $H$ in $Y$-free form provided as input, we describe a test whose maximum success probability is linearly related to the minimum energy of the Hamiltonian. The test requires the honest provers to share an encoding of a minimum-energy eigenstate of $H$ according to a quantum code $\mC$, and relies on the procedure $\eval$ described in Figure~\ref{fig:eval} to estimate the energy of the encoded state under $H$.  The energy test is described in Figure~\ref{fig:energy}, and its guarantees are stated in Theorem~\ref{thm:energy} below. 

\begin{figure}[H]
\rule[1ex]{16.5cm}{0.5pt}\\
Test~$\energy_\xi(C,H)$: Given as input is a Hamiltonian in $Y$-free form, specified by real coefficients $\{\alpha_{S,u}\}$ as in~\eqref{eq:gen-h}, a $[k,k']$ weakly self-dual linear code $C$ such that the associated CSS code $\mC$ encodes at least one logical qubit, and a parameter $0\leq \xi \leq 1$. 
\begin{enumerate}
\item Let $\ol{x},\ol{z}\in\Fq^n$ be such that $\qp_X(a\ol{x})$ and $\qp_Z(b\ol{z})$ are logical $\qp_X(a)$ and $\qp_Z(b)$ operators for the code $\mC$, respectively. 
\item Let $\pi$ be the distribution over $\{S \subseteq \{1, \dots, n\}\} \times
  \Fq^n \times \{\pm 1\}$ that is obtained by sampling $(S, u)$ uniformly, and
  returning $(S,u, \text{sign}(\alpha_{S,u}))$ with probability $|\alpha_{S,u}|$, and a symbol ``$\perp$'' with probability $1-|\alpha_{S,u}|$.
\item Execute $\eval_\xi(C, \pi, \ol{x},\ol{z})$. If the element sampled from $\pi$ is $\perp$, then automatically accept in case part (a) of the test is executed, and reject in case part (b) is executed. Otherwise, if the test returns ACCEPT or REJECT, then
  accept or reject accordingly. Finally, if the test returns a value $e$ such that $\Re(e)\in [-1,1]$, then accept with probability $\frac{1}{2}(1-\Re(e))$. 
\end{enumerate}
\rule[1ex]{16.5cm}{0.5pt}
\caption{Test $\energy_\xi(C,H)$ for the ground state of a Hamiltonian $H$ in $Y$-freeform.}
\label{fig:energy}
\end{figure}

\begin{theorem}\label{thm:energy}
  Let $H$ be a Hamiltonian in $Y$-free form, $C$ a weakly self-dual linear code, and $0\leq \eta\leq 2$. 
	\begin{itemize}
	\item \emph{(Completeness)}
If $\lambda_{\min}(H) \leq -1+\eta$, there is a strategy for the provers such that for any $0\leq \xi\leq 1$ the test $\energy_\xi(C,H)$ accepts with
  probability at least $1-\frac{1}{2}\eta\xi$. 
	\item \emph{(Soundness)}
	If there exists a strategy with probability of success in the test
  $\energy_\xi(C,H)$ at least $ 1- \eps$, then $\lambda_{\min}(H) \leq -1+\eta'$
  for $\eta' = \frac{2\eps}{\xi} + \poly(\frac{\eps}{1 - \xi},
  \delta_C(\frac{\eps}{1 - \xi}))$, where $\delta_C$ is as specified in Theorem~\ref{thm:codeword_test}.
	\end{itemize}
  \label{thm:energy_test}
\end{theorem}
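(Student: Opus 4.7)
The proof is a direct book-keeping argument that reduces to the completeness and soundness guarantees of $\eval_\xi(C,\pi,\ol{x},\ol{z})$ established in Lemma~\ref{lem:eval}. Throughout, let $\Omega$ denote the support of the distribution underlying $H$ and let $\bar{|\alpha|} = \mathbb{E}_{(S,u)\sim\Omega}[|\alpha_{S,u}|]$; without loss of generality we normalize so that $\bar{|\alpha|}=1$, which can be arranged by padding coefficients with $|\alpha_{S,u}|=0$ entries (these contribute only to the $\perp$ branch). The key identity the proof will rely on is that for any state $\ket{\psi}\in (\C^q)^{\otimes n}$, direct expansion using the definition of $\pi$ and of $H$ gives
\begin{equation}\label{eq:energy-identity}
\mathbb{E}_{(S,u,\epsilon)\sim\pi}\big[\,\epsilon \cdot \Re(\bra{\psi}h_S(u)\ket{\psi})\big] \;=\; \bra{\psi}H\ket{\psi}\;,
\end{equation}
where $\perp$ outcomes contribute zero.

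For completeness, I would take the honest strategy from Lemma~\ref{lem:eval} with $\ket{\psi}$ a minimum-energy eigenstate of $H$, so that $\bra{\psi}H\ket{\psi}\leq -1+\eta$. Lemma~\ref{lem:eval} guarantees acceptance with probability $1$ in part (a). In part (b), conditioning on whether $\pi$ returns $\perp$, the acceptance probability is
$$\Pr[\text{accept in (b)}]\;=\;\mathbb{E}_{(S,u,\epsilon)}\big[\tfrac{1}{2}\big(1-\epsilon\cdot\Re(\bra{\psi}h_S(u)\ket{\psi})\big)\big]\;=\;\tfrac{1}{2}\big(1-\bra{\psi}H\ket{\psi}\big)\;,$$
using~\eqref{eq:energy-identity} and the fact that $\perp$ rejects (contributing $0$) exactly matches the ``missing'' mass after rescaling $\bar{|\alpha|}=1$. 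Combining with the $(1-\xi)$ weight on the test part yields total acceptance $(1-\xi)+\xi\cdot\tfrac{1}{2}(1-\bra{\psi}H\ket{\psi})\geq 1-\tfrac{1}{2}\eta\xi$, as required.

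For soundness, suppose a strategy has overall acceptance probability at least $1-\eps$. Since the test part is executed with probability $1-\xi$, its conditional rejection probability is at most $\eps/(1-\xi)$; equivalently, each of the sub-tests inside part (a) is passed with probability at least $1-O(\eps/(1-\xi))$. Applying the soundness clause of Lemma~\ref{lem:eval} with this parameter produces a state $\ket{\psi}\in(\C^q)^{\otimes n}$ such that the expectation of the value returned in part (b), conditioned on $(S,u,\epsilon)\sim\pi$ not being $\perp$, is within $\delta = \poly(\eps/(1-\xi),\delta_C(\eps/(1-\xi)))$ of $\epsilon\cdot\Re(\bra{\psi}h_S(u)\ket{\psi})$ on average over $\pi$. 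The acceptance probability in part (b) is therefore $\tfrac{1}{2}(1-\bra{\psi}H\ket{\psi})\pm\delta$ by~\eqref{eq:energy-identity}. On the other hand, since total acceptance is at least $1-\eps$ and the test part contributes at most $1-\xi$, the eval part must accept with conditional probability at least $1-\eps/\xi$. Combining,
$$\tfrac{1}{2}\big(1-\bra{\psi}H\ket{\psi}\big)\;\geq\;1-\frac{\eps}{\xi}-\delta\;,$$
which rearranges to $\bra{\psi}H\ket{\psi}\leq -1+\frac{2\eps}{\xi}+2\delta$, so $\lambda_{\min}(H)\leq -1+\eta'$ with $\eta'$ as in the theorem.

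The argument is essentially a clean reduction and I do not anticipate a serious technical obstacle: the content is entirely in Lemma~\ref{lem:eval}. The one point that requires care is the normalization convention for the coefficients $\{\alpha_{S,u}\}$ and the correct accounting of the $\perp$ outcomes in both branches, so that identity~\eqref{eq:energy-identity} delivers precisely $\bra{\psi}H\ket{\psi}$ (not a rescaling of it) and so that the test branch contributes a full $1-\xi$ to the honest acceptance probability. Once those bookkeeping conventions are pinned down, both completeness and soundness follow by the short calculations above.
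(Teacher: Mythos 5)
Your proof follows essentially the same reduction as the paper: Lemmas~\ref{lem:energy-completeness} and~\ref{lem:energy-soundness} likewise invoke Lemma~\ref{lem:eval}, identify the part-(b) acceptance probability with $\frac{1}{2}\big(1-\bra{\psi}H\ket{\psi}\big)$, and combine it with the $(1-\xi)$ and $\xi$ weights, exactly as you do. The one wrinkle is that your ``WLOG $\bar{|\alpha|}=1$'' cannot be achieved by padding with zero-weight terms (that only \emph{decreases} $\bar{|\alpha|}$), so in general $p_\perp>0$ and the part-(b) acceptance probability is really $\frac{1-p_\perp}{2}-\frac{1}{2}\bra{\psi}H\ket{\psi}$; the paper's own proof glosses over the same point, and since $p_\perp\le\eta$ whenever $\lambda_{\min}(H)\le -1+\eta$ while the extra $-p_\perp$ only strengthens the soundness bound, the conclusion is unaffected up to constant factors.
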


We show the completeness and soundness properties claimed in Theorem~\ref{thm:energy} in two separate lemma. 

\begin{lemma}[Completeness]\label{lem:energy-completeness}
Let $H$ be a Hamiltonian in $Y$-free form such that $H$ has an eigenstate $\ket{\psi}$ with associated eigenvalue $\lambda \in [-1,1]$, and $0\leq\xi\leq 1$. Then for any weakly self-dual linear code $C$ there is a strategy for the provers, based on sharing an encoding of the state $\ket{\psi}$ according to $\mC$, whose success probability 
  in the test $\energy_\xi(C,H)$ is $ (1-\xi) + \frac{\xi}{2}(1-\lambda)$.
\end{lemma}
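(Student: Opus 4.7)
The proof is essentially a direct calculation combining the guarantees of the $\eval$ subroutine with the eigenvalue equation for $\ket{\psi}$. Here is the plan.

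The honest strategy has each group of $k$ provers share an encoding $\ket{\Psi}$ of the eigenstate $\ket{\psi}$ according to the CSS code $\mC$ derived from $C$, and then play the honest strategy for $\eval_\xi(C,\pi,\ol{x},\ol{z})$ described in the completeness part of Lemma~\ref{lem:eval}. Invoking that lemma, part~(a) of the subroutine accepts with probability $1$ whenever the sampled triple from $\pi$ is not $\perp$ (and by construction the test auto-accepts when it is $\perp$), so the ``test'' branch of $\energy_\xi(C,H)$ accepts with probability $1$. The plan is then to compute the acceptance probability in the ``eval'' branch conditional on the triple $(S,u,\epsilon)$ drawn from $\pi$.

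Conditioned on $(S,u,\epsilon)\neq \perp$, Lemma~\ref{lem:eval} guarantees that the value $e$ returned by $\eval$ has conditional expectation $\epsilon\cdot \Re(\bra{\psi} h_S(u)\ket{\psi})$ where $h_S(u)$ is the Pauli operator defined in~\eqref{eq:def-hs}. Since the verifier accepts with probability $\tfrac{1}{2}(1-\Re(e))$ and this is affine in $e$, the conditional acceptance probability equals $\tfrac{1}{2}\bigl(1-\epsilon\,\Re(\bra{\psi}h_S(u)\ket{\psi})\bigr)$. Conditioned on $(S,u,\epsilon)=\perp$, the branch auto-rejects. Unpacking the definition of $\pi$, where $(S,u)$ is drawn from the distribution defining $H$ and the non-$\perp$ outcome is returned with probability $|\alpha_{S,u}|$ with sign $\epsilon=\mathrm{sign}(\alpha_{S,u})$, the acceptance probability of the eval branch becomes
\begin{equation*}
\Es{(S,u)} \, |\alpha_{S,u}|\cdot \tfrac{1}{2}\bigl(1-\mathrm{sign}(\alpha_{S,u})\,\Re(\bra{\psi}h_S(u)\ket{\psi})\bigr)
\,=\,\tfrac{1}{2}-\tfrac{1}{2}\Es{(S,u)} \,\alpha_{S,u}\,\Re(\bra{\psi}h_S(u)\ket{\psi}),
\end{equation*}
where I have used the normalization $\Es{(S,u)}|\alpha_{S,u}|=1$ implicit in the statement ($H$ is given with $\|H\|\leq 1$ in $Y$-free form).

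The last step is to recognize that the second term on the right is exactly $\tfrac{1}{2}\bra{\psi}H\ket{\psi}=\tfrac{\lambda}{2}$: indeed, by symmetrizing,
\begin{equation*}
\Es{(S,u)}\alpha_{S,u}\,\Re(\bra{\psi}h_S(u)\ket{\psi})
\,=\,\Es{(S,u)}\tfrac{\alpha_{S,u}}{2}\bra{\psi}\bigl(h_S(u)+h_S(u)^\dagger\bigr)\ket{\psi}
\,=\,\bra{\psi}H\ket{\psi}\,=\,\lambda.
\end{equation*}
Combining the two branches with weights $1-\xi$ and $\xi$ respectively yields total acceptance probability $(1-\xi)\cdot 1+\xi\cdot\tfrac{1}{2}(1-\lambda)=(1-\xi)+\tfrac{\xi}{2}(1-\lambda)$, as claimed. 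The only nontrivial ingredient is Lemma~\ref{lem:eval}; everything else is bookkeeping about the distribution $\pi$ and the symmetrization identity above. No real obstacle arises because the honest strategy for $\eval$ is defined precisely so that its returned expectation matches a Pauli expectation on the encoded state, and the auto-reject on $\perp$ is balanced exactly against the factor $|\alpha_{S,u}|$ used when sampling, so that the missing contribution to $1-\xi+\xi/2$ from the non-$\perp$ mass lines up with the $\lambda$-independent constant $1/2$ in the acceptance rule.
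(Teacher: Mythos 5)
Your proof is correct and follows the paper's own reasoning: both arguments invoke the honest $\code$/$\eval$ strategy for an encoding of $\ket{\psi}$ to get probability $1$ in part (a), and both reduce part (b) to the completeness guarantee of Lemma~\ref{lem:eval}, yielding acceptance probability $\tfrac{1}{2}(1-\bra{\psi}H\ket{\psi})=\tfrac{1}{2}(1-\lambda)$ on the eval branch. The paper simply asserts this conclusion without displaying the intermediate arithmetic, whereas you spell out the conditioning on $(S,u,\epsilon)$ and the symmetrization of $h_S(u)$, and in doing so helpfully make explicit the normalization $\Es{(S,u)}|\alpha_{S,u}|=1$ that is silently relied upon in the paper's statement ``precisely $\tfrac{1}{2}(1-\bra{\psi}H\ket{\psi})$.''
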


\begin{proof}
Let $\ket{\psi} \in (\C^q)^{\otimes n}$ be as in the lemma, and $\ket{\Psi}\in (\C^q)^{\otimes kn}$ a qudit-by-qudit encoding of $\ket{\psi}$ under the code $C$, where each individual qudit is encoded according to the logical operators $\ol{x}$, $\ol{z}$ used by the verifier in the test $\energy$. When sent a query by the verifier, each prover applies the honest strategy in the test $\code$, as specified in Lemma~\ref{lem:code-completeness}. By definition this strategy succeeds with probability $1$  in part (a) of $\eval$. 

Regarding part (b), it follows from the definition of the distribution $\pi$ and the completeness property of the procedure $\eval$ stated in Lemma~\ref{lem:eval} that the probability of accepting in the third step of the procedure $\energy$, conditioned on part (b) of $\eval$ being executed by the verifier, is precisely $\frac{1}{2}(1-\bra{\psi} H \ket{\psi})$. 
\end{proof}

\begin{lemma}[Soundness]\label{lem:energy-soundness}
Let $H$ be a Hamiltonian in $Y$-free form, $0\leq\xi\leq 1$ and $C$ a weakly self-dual linear code. Suppose there exists a strategy for the provers whose success probability 
  in the test $\energy_\xi(C,H)$ is $1-\eps$, for some $\eps\leq\xi$. Then $H$
  has an eigenvector with energy at most $-1 + \frac{2\eps}{\xi} +
  \poly(\frac{\eps}{1 - \xi},\delta_{C})$. 
\end{lemma}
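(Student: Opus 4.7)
The plan is to decompose the acceptance probability of $\energy_\xi(C,H)$ into its ``test'' contribution (from the portion where $\eval_\xi$ executes part~(a)) and its ``eval'' contribution (the portion where $\eval_\xi$ executes part~(b)), apply Lemma~\ref{lem:eval} to the first portion to extract a witness state $\ket{\psi}\in(\C^q)^{\otimes n}$, and then read off the energy $\bra{\psi}H\ket{\psi}$ from the second portion. Concretely, writing $P_a$ and $P_b$ for the conditional acceptance probabilities of $\energy_\xi(C,H)$ given that $\eval_\xi$ selected its testing branch or its evaluation branch respectively, the hypothesis gives
\[
(1-\xi)\,P_a \;+\;\xi\,P_b \;\geq\; 1-\eps\;,
\]
so that (using $P_b\leq 1$) we have $P_a\geq 1-\eps/(1-\xi)$. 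Thus the prover strategy, viewed as a strategy for the test-rounds of $\eval_\xi(C,\pi,\ol{x},\ol{z})$, passes those rounds with probability at least $1-\eps'$ where $\eps'=\eps/(1-\xi)$.

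Next, I would invoke Lemma~\ref{lem:eval} (soundness) with this $\eps'$ to obtain a state $\ket{\psi}\in(\C^q)^{\otimes n}$ such that, on expectation over $(S,u,\epsilon)\sim\pi$, the expected value returned by the verifier in the evaluation branch, conditioned on $(S,u,\epsilon)$, lies within $\delta:=\poly(\eps',\delta_C(\eps'))$ of $\epsilon\cdot\Re\bra{\psi}h_S(u)\ket{\psi}$. Using the definition of $\pi$ in the test (sampling $(S,u)$ uniformly from the support of the Hamiltonian's distribution, then returning $\perp$ with probability $1-|\alpha_{S,u}|$ and $(S,u,\text{sign}(\alpha_{S,u}))$ otherwise), and the postprocessing rule of $\energy_\xi$ (automatic rejection on $\perp$, acceptance probability $\tfrac12(1-\Re(e))$ otherwise), a short averaging yields
\[
P_b \;=\; \tfrac12\,\Es{(S,u)}|\alpha_{S,u}| \;-\; \tfrac12 \,\Es{(S,u)}\alpha_{S,u}\,\Re\bra{\psi}h_S(u)\ket{\psi} \;\pm\;O(\delta)\;,
\]
where the second expectation is exactly $\bra{\psi}H\ket{\psi}$ by Definition~\ref{def:gen-h}, and the first is bounded by $1$ since $|\alpha_{S,u}|\leq 1$ for all $(S,u)$.

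From $(1-\xi)P_a+\xi P_b\geq 1-\eps$ and $P_a\leq 1$ we also have $P_b\geq 1-\eps/\xi$; plugging this into the expression for $P_b$ above and rearranging gives
\[
\bra{\psi}H\ket{\psi} \;\leq\; \Es{(S,u)}|\alpha_{S,u}|\;-\;2\;+\;\frac{2\eps}{\xi}\;+\;O(\delta)\;\leq\;-1\;+\;\frac{2\eps}{\xi}\;+\;\poly\!\Big(\tfrac{\eps}{1-\xi},\,\delta_C\big(\tfrac{\eps}{1-\xi}\big)\Big)\;.
\]
By the variational principle $\lambda_{\min}(H)\leq\bra{\psi}H\ket{\psi}$, which yields the claimed bound.

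The main obstacle I anticipate is not the arithmetic above but the correct bookkeeping of three separate sources of slack in applying Lemma~\ref{lem:eval}: (i) the rescaling of the test-round error from $\eps$ to $\eps/(1-\xi)$, which is what forces the second term in $\eta'$ to depend on $\eps/(1-\xi)$ rather than $\eps$; (ii) the need to average the per-$(S,u,\epsilon)$ error bound over $\pi$ and then convert to an error in $P_b$, which uses only a linear (rather than quadratic) loss because the acceptance predicate is affine in $\Re(e)$; and (iii) ensuring that the ``$\perp$'' outcomes in $\pi$, which are handled asymmetrically (auto-accept in branch (a), auto-reject in branch (b)), contribute consistently so that the prefactor $\Es{}|\alpha_{S,u}|$ appears in $P_b$ and cancels cleanly against the completeness normalization. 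None of these steps should be difficult, but care is needed so that no spurious factor of $1/\xi$ or $1/(1-\xi)$ creeps into the term $\poly(\cdot,\delta_C(\cdot))$ beyond what is already claimed.
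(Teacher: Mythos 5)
Your proof is correct and follows essentially the same route as the paper's: decompose the acceptance probability as $(1-\xi)p_a + \xi p_b$, lower-bound $p_a$ by $1-\eps/(1-\xi)$ to feed into Lemma~\ref{lem:eval}, lower-bound $p_b$ by $1-\eps/\xi$, and then read off $\bra{\psi}H\ket{\psi}$ from the formula for $p_b$. You are somewhat more careful than the paper in two respects that are worth noting: you write out the explicit identity $P_b = \tfrac12\Es{(S,u)}|\alpha_{S,u}| - \tfrac12\bra{\psi}H\ket{\psi} \pm O(\delta)$ and then invoke $|\alpha_{S,u}|\leq 1$ to discard the first term, whereas the paper asserts $p_b$ is ``within $\poly(\eps',\delta_C)$ of $\tfrac12(1-\lambda)$'' (which is literally only a one-sided bound, though the direction used is the correct one); and you flag the asymmetric handling of $\perp$ explicitly. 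The one point you flag but do not fully resolve — that the auto-accept on $\perp$ in the test branch dilutes the effective test error by a factor $(1-\Pr[\perp])^{-1}$ before Lemma~\ref{lem:eval} can be applied — is also glossed over by the paper; it is harmless because $\Pr[\perp]\leq 1-P_b\leq\eps/\xi$, so for $\eps$ bounded away from $\xi$ this only affects the implicit constant inside the $\poly(\cdot)$.
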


\begin{proof}
By definition of the test $\energy_\xi$, the provers' strategy must succeed with
probability at least $1-\frac{\eps}{1-\xi} = 1-\eps'$ in part (a) of
$\eval_\xi$. Applying Lemma~\ref{lem:eval}, it follows that the value returned
by the verifier in part (b) of the test 
is a random variable whose expectation is within $\poly(\eps',\delta_{C})$ of 
$\frac{1}{2}(1-\bra{\psi}H\ket{\psi})$, for some state $\ket{\psi}$. Thus
letting $\lambda =  \bra{\psi}H\ket{\psi}$ and $p_a$, $p_b$ the provers' success probability in parts (a) and (b) of the test respectively we have  
\begin{align*} 
  p_{\text{success}} &= \frac{(1 - \xi)}{2} p_a  + \xi p_b \;,
	\end{align*}
	thus $ p_{\text{success}} \geq 1 - \eps $ implies that 
  $\xi p_b \geq \xi - \eps$. Using that $p_b \leq \frac{1}{2}(1-\lambda + \poly(\eps',\delta_C))$ yields 
$$  \lambda  \,\leq\, -1 + \frac{2\eps}{\xi} + \poly(\eps', \delta_{C})\;,$$
giving the conclusion of the lemma. 
\end{proof}

\begin{corollary}\label{cor:qma-generalizedXZ}
  Assume the Local Hamiltonian problem for qubit Hamiltonians in the $XZ$ model
  with
  promise gap $b - a = \Omega(1)$ is $\QMA$-complete. Then, there is a
  one-round, $7$-prover $\MIP^*$ protocol for the class $\QMA$ with 
  $O(\log(n))$-bits of communication and constant completeness-soundness gap.
\end{corollary}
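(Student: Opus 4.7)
The plan is to assemble four ingredients already established in the paper: (i) the assumed $\QMA$-completeness of the XZ-model local Hamiltonian problem with constant promise gap; (ii) the embedding of a qubit Hamiltonian into a qudit $Y$-free Hamiltonian from Lemma~\ref{lem:embed_qubit_hamiltonian}; (iii) the 7-qudit quadratic-residue code of Example~\ref{ex:quad_res_code}, which works over $\F_q$ for any $q = 2^t$; and (iv) the constant-gap energy test of Theorem~\ref{thm:energy_test}, together with the communication bound of Remark~\ref{rm:eval_bits}. Combining these yields the desired protocol.

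In detail, I would first take an input instance of the XZ-model Local Hamiltonian problem, specified by a qubit Hamiltonian $H_0$ with promise $\lambda_{\min}(H_0) \leq a$ (YES) versus $\lambda_{\min}(H_0) \geq b$ (NO) for constants $a < b$ with $b-a = \Omega(1)$; after an affine rescaling we may assume $H_0$ is expressible as an average of sign-weighted $X/Z$ Pauli tensor products, so that $\|H_0\|\leq 1$ and $-1 \leq a < b \leq 1$. Next, pick $t = \Theta(\log\log n)$ so that $q = 2^t = \Theta(\log^2 n/\log\log n)$ matches the parameter regime of Theorem~\ref{thm:qld}, and apply Lemma~\ref{lem:embed_qubit_hamiltonian} to embed $H_0$ into a qudit Hamiltonian $H$ over $(\C^q)^{\otimes n}$ in $Y$-free form, with the same spectrum (up to multiplicity). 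Choose $C$ to be the 7-qudit quadratic-residue code, which is weakly self-dual and encodes one logical qudit since $p=2$ is a quadratic residue mod~$7$. Finally, run the test $\energy_\xi(C,H)$ from Figure~\ref{fig:energy} with a small constant $\xi$ (e.g.\ $\xi = 1/2$), using 7 provers.

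Completeness and soundness follow from Theorem~\ref{thm:energy_test}. In the YES case there is a strategy that accepts with probability at least $1 - \tfrac12 \xi(a+1)$. In the NO case, any strategy accepting with probability $1-\eps$ would imply $\eta_N \leq b+1 \leq \tfrac{2\eps}{\xi} + \poly(\tfrac{\eps}{1-\xi},\delta_C(\tfrac{\eps}{1-\xi}))$; since $\delta_C$ is a fixed polynomial, for $\eps$ below a suitable universal constant the polynomial correction term is bounded by $(b-a)/8$, forcing $\eps \geq \tfrac{\xi}{2}(b+1) - O(b-a)$. Subtracting, the completeness--soundness gap is at least $\tfrac{\xi}{2}(b-a) - O((b-a)) = \Omega(b-a) = \Omega(1)$; if necessary a further affine rescaling of $H_0$ (shrinking $b+1$ while keeping $b-a$ constant) lets us certify that the regime ``$\eps$ small enough for the polynomial error to be negligible'' is actually attained. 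The bit complexity is $O(\tfrac{\log n}{\log\log n}\cdot \log q) = O(\log n)$ by Remark~\ref{rm:eval_bits}, and the protocol uses exactly $k=7$ provers.

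The only subtle point---and the one I would take the most care with---is the parameter balancing in the last step: we must choose $\xi$ (constant) and re-normalize $H$ so that the $\poly(\eps/(1-\xi),\delta_C(\eps/(1-\xi)))$ error in the soundness bound of Theorem~\ref{thm:energy_test} is small relative to the promise gap $b-a$. This is where the assumption $b-a=\Omega(1)$ is essential: it gives us a constant amount of slack to absorb the polynomial self-testing error from $\code$. Everything else is bookkeeping: the embedding preserves spectra, the quadratic-residue code fixes $k=7$, and the energy test packages the low-degree test, $\code$ test, and $\sumgame$/$\eval$ procedures into the required one-round protocol.
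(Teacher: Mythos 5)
Your proposal follows the same route as the paper: embed the qubit Hamiltonian into a $Y$-free qudit Hamiltonian over $\Fq$ with $q=2^t$, $t=\Theta(\log\log n)$ (Lemma~\ref{lem:embed_qubit_hamiltonian}); fix the $7$-qudit quadratic-residue code from Example~\ref{ex:quad_res_code}; run $\energy_\xi(C,H)$ with constant $\xi$; and invoke Theorem~\ref{thm:energy_test} plus Remark~\ref{rm:eval_bits}. The one place your write-up is looser than the paper is the final gap bookkeeping: you assert the soundness correction term is $O(b-a)$ and then cancel it against $\frac{\xi}{2}(b-a)$, which as written leaves the sign of $\Delta$ unjustified. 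The clean argument (which is the paper's) is that the correction term is $\poly(\eps/(1-\xi),\delta_C)$, where $\delta_C=\max(\poly(\eps/(1-\xi)),\poly(q^{-1}))$; since $q=\poly(\log n)\to\infty$, one can fix constants $\xi$ and $\eps_0>0$ depending only on $b-a$ so that for all sufficiently large $n$ this correction is at most $\frac{1}{2}(b-a)$, and then $\Delta\geq\frac{\xi}{4}(b-a)=\Omega(1)$ with no need to reweight $H$. Your ``further affine rescaling'' remark is an unnecessary escape hatch once this is observed. Everything else is as in the paper.
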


\begin{proof}
  First, note that by the hardness assumption made in the corollary and Lemma~\ref{lem:embed_qubit_hamiltonian}, it
  follows that estimating the ground energy of qudit Hamiltonians with local dimension $2^t$ in
  $Y$-free form with $\Omega(1)$ promise gap is $\QMA$-hard for any choice of
  $t$. Thus, to establish the conclusion, it suffices to show that there
  exists an $\MIP^*$ protocol with the desired parameters for this variant of
  the Local Hamiltonian Problem. 

  Let a Hamiltonian in $Y$-free form be given, scaled such that  the energy threshold in the YES case is $-1 + \eta$, and in the NO case is
  $ -1 + \eta'$ with $\eta' - \eta = \Omega(1)$. Furthermore, let $q = 2^t$
  where $t = \Theta(\log\log(n))$, and let $C$ be the quadratic residue code
  from Example~\ref{ex:quad_res_code}, which has $k=7$. Using
  Theorem~\ref{thm:energy}, the test $\energy_\xi(C,H)$ succeeds with
  probability $p_{\text{YES}} \geq 1 - \frac{\eta \xi}{2}$ in the YES case, and
  in the NO case with probability $p_{\text{NO}} \leq 1- \eps$ for $\eps$
  such that 
  \begin{equation}
    \frac{2\eps}{\xi}  + \poly\Big(\frac{\eps}{1 - \xi}, \delta_C\Big) = \eta'\;, \label{eq:energy_soundness}
  \end{equation}
  where $\delta_C = \max(\poly(\frac{\eps}{1 - \xi}), \poly(q^{-1}))$.
  Denote the difference between these two probabilities by $\Delta$; it is given by
  \begin{align}
    \Delta&=  p_{\text{YES}} - p_{\text{NO}} \notag\\
		&\geq \eps - \frac{\eta \xi}{2} \notag
    \\
    &= \frac{\xi}{2}(\eta' - \eta) - \frac{\xi}{2} \poly(\frac{\eps}{1 - \xi},
      \delta_C)\;. \label{eq:energy_gap}
  \end{align}
  For $\Delta$ to be positive it suffices to ensure that the quantity
  $\poly(\frac{\eps}{1 - \xi}, \delta_C)$ is less than, say,
  $\frac{1}{2}(\eta' - \eta)$, which is a constant. Given the
  definition of $\delta_C$ and our choice of $q \sim n^{\log n}$, this can be ensured
  for some constant $\eps, \xi$ depending only on $\frac{1}{2}(\eta' -
  \eta)$. This results in a constant $\Delta$. Therefore, the energy
  test $\energy_\xi(C,H)$ constitutes a $\MIP^*$
  protocol for $\QMA$ with a constant completeness-soundness
  gap. Regarding the communication cost, by plugging $q = O(\log \log
  n)$ into the bounds for the test $\eval$ given in
  Remark~\ref{rm:eval_bits}, we find that the test $\energy_\xi(C,H)$
  requires $O(\log n)$ bits communication.
\end{proof}

Without making any assumptions, we can show a $\QMA$-hardness result under randomized reductions.  

\begin{corollary}
\label{cor:randomized}
It is $\QMA$-hard under poly-time randomized Karp reductions to
determine whether
the maximum acceptance probability of a one-round $\MIP^*$ protocol with
logarithmic communication is at least $1$ or at most $ \frac{1}{2}$.
\end{corollary}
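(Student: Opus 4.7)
The plan is to reduce from the local Hamiltonian problem for qubit Hamiltonians in the $XZ$ model with inverse-polynomial promise gap, which is $\QMA$-hard from~\cite{CM13}. By reducing further to the perfect-completeness (frustration-free) variant---equivalently, starting from the $\QMA_1$ version of the same problem---we may assume that in YES instances the $n$-qubit Hamiltonian $H_0$ satisfies $\lambda_{\min}(H_0)=-1$ exactly (with $\|H_0\|\leq 1$), while in NO instances $\lambda_{\min}(H_0)\geq -1+\eta$ for some $\eta\geq 1/\poly(n)$. The first (deterministic) reduction step amplifies this inverse-polynomial gap to a constant by passing to the tensor power $H_k := H_0^{\otimes k}$ for an odd integer $k=\Theta(1/\eta)$, which acts on $nk$ qubits. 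Odd-ness ensures $\lambda_{\min}(H_k)=\lambda_{\min}(H_0)^k$ equals $-1$ in the YES case and is at least $-(1-\eta)^k \geq -1+\Omega(1)$ in the NO case, while preserving the $Y$-free form (each term of $H_k$ is a tensor product of $\qp_X,\qp_Z$ Paulis).

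$H_k$ has $m^k = 2^{\poly(n)}$ terms, where $m=\poly(n)$ is the number of terms of $H_0$. To reduce to a polynomial, draw $N=O(nk/\epsilon^2)$ terms $X_1,\ldots,X_N$ independently and uniformly at random from the expansion of $H_k$, and let $\tilde{H}=\frac{1}{N}\sum_{j=1}^N X_j$. Each $X_j$ is Hermitian with $\|X_j\|\leq 1$ and $\mathbb{E}[X_j]=H_k$, so the Ahlswede--Winter matrix Chernoff bound~\cite{AW02} gives
\[
\Pr\big[\|\tilde{H}-H_k\|\geq\epsilon\big]\,\leq\, 2\cdot 2^{nk}\exp\big(-\Omega(N\epsilon^2)\big)\,\leq\, 2^{-n}
\]
for a suitable small constant $\epsilon$. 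Choosing $\epsilon$ less than the amplified gap preserves the promise gap for $\tilde{H}$ except with exponentially small probability---this is where the randomness in the Karp reduction enters. Crucially, perfect completeness is preserved \emph{deterministically}: since $H_0$ is frustration-free, its ground state $\ket{\psi}$ satisfies $h_i\ket{\psi}=-\ket{\psi}$ for every term $h_i$, so $\ket{\psi}^{\otimes k}$ is a $(-1)$-eigenvector of every tensor-product term in the expansion of $H_k$, hence of any empirical average $\tilde H$; thus $\lambda_{\min}(\tilde{H})=-1$ for every possible random choice.

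Finally, embed $\tilde{H}$ into $Y$-free form over $\Fq$ with $q=2^t$, $t=\Theta(\log\log n)$, via Lemma~\ref{lem:embed_qubit_hamiltonian}, and run the procedure $\energy_\xi(C,\tilde H)$ of Figure~\ref{fig:energy} with $C$ the $7$-qudit quadratic residue CSS code of Example~\ref{ex:quad_res_code} and a suitable constant $\xi\in(0,1)$. By Theorem~\ref{thm:energy}, the resulting one-round, $7$-prover $\MIP^*$ protocol accepts with probability exactly $1$ in YES instances and at most $1-\Omega(1)$ in NO instances, and by Remark~\ref{rm:eval_bits} uses $O(\log n)$ bits of communication. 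A parallel-repetition step~\cite{bavarian2017hardness} drives the soundness below $\tfrac{1}{2}$ while preserving perfect completeness and logarithmic communication, giving the conclusion. The main technical delicacy is the balance between tensor-power amplification (which blows up the term count exponentially) and sampling (which must preserve the spectrum); this works only because the matrix Chernoff bound's dimension dependence is $\log\dim=O(nk\log q)=\poly(n)$, so polynomially many samples suffice. A secondary point is the need for a frustration-free starting Hamiltonian in order to maintain perfect completeness, which is what forces us to originate the reduction in $\QMA_1$ rather than $\QMA$ per se.
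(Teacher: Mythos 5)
Your tensor-power amplification step is broken. You claim that for odd $k$, $\lambda_{\min}(H_0^{\otimes k}) = \lambda_{\min}(H_0)^k$, but this identity holds only when $H_0$ is negative (or positive) semidefinite, which the stated normalization $\|H_0\|\leq 1$ does not enforce. For a Hermitian $H_0$ with indefinite spectrum, the eigenvalues of $H_0^{\ot k}$ are all products of $k$ eigenvalues of $H_0$, and the most negative such product need not be $\lambda_{\min}(H_0)^k$. Concretely, if $H_0$ has eigenvalues $-1+\eta$ and $+1$, then $H_0^{\ot k}$ has the eigenvalue $(-1+\eta)\cdot 1^{k-1} = -1+\eta$, so $\lambda_{\min}(H_0^{\ot k}) = -1+\eta$ and the gap is not amplified at all, regardless of the parity of $k$. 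XZ-model Hamiltonians generically have indefinite spectrum, so this is not a corner case. The paper's Lemma~\ref{lem:amplify} (from~\cite{NV17}) is designed to avoid exactly this pitfall: it first shifts and rescales so that $0\leq H\leq \Id$, and then applies the tensor power to the \emph{positive} operator $\Id - (H - a^{-1}\Id)$, not to $H$ itself. To salvage your version you would need a preliminary shift-and-scale (e.g.\ work with $(H_0-\Id)/2$) that forces the starting operator to be negative semidefinite, but as written the NO-case estimate $-(1-\eta)^k$ is simply wrong.

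A second concern is your decision to begin from the ``perfect-completeness (frustration-free) variant,'' which you correctly identify with the $\QMA_1$ version of the problem. Reducing from a $\QMA_1$-hard problem establishes only $\QMA_1$-hardness, not $\QMA$-hardness, and it is not known that $\QMA=\QMA_1$; moreover the frustration-free XZ-model local Hamiltonian problem is not shown to be $\QMA$-hard by~\cite{CM13}, which gives hardness only for the general (non-frustration-free) variant with inverse-polynomial gap. The paper's proof starts from this general $\QMA$-hard instance, does not assume perfect completeness of the source Hamiltonian, and obtains a constant completeness-soundness gap rather than completeness exactly $1$, deferring the ``$1$ vs.\ $\frac12$'' parameters to the parallel-repetition remark in the introduction. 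You are right that obtaining completeness \emph{exactly} $1$ seems to require a frustration-free starting Hamiltonian — this is a genuine subtlety in the corollary's statement — but routing around it via $\QMA_1$ changes what is being proved. The remainder of your proposal (matrix Chernoff subsampling via~\cite{AW02}, embedding via Lemma~\ref{lem:embed_qubit_hamiltonian}, the $\energy$ test, and the parallel-repetition finish) follows the paper's approach.
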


The idea for the proof of Corollary~\ref{cor:randomized} is to start with a $\QMA$-hard instance of
the Local Hamiltonian problem, with inverse-polynomial promise gap, and amplify
this gap by taking a tensor power of the Hamiltonian. Expanding the tensor powers results in a Hamiltonian that is an average of exponentially many terms. We then apply the
Ahlswede-Winter matrix Chernoff bound to randomly sub-sample a set of terms from
the amplified Hamiltonian, yielding a Hamiltonian with polynomially many terms whose ground state energy can be
tested using the $\energy$ test.

\begin{lemma}[Gap amplification, Lemma 26 of~\cite{NV17}]
Let $H$ be an $n$-qudit Hamiltonian with
minimum energy $\lambda_{\min}(H)\geq 0$ and such that $\|H\|\leq 1$. 
  Let $p(n), q(n)$ be polynomials such that $p(n) >
  q(n)$ for all $n$. Let
	$$H' =  \Id^{\otimes a} - (\Id - ( H - a^{-1} \Id ))^{\otimes a},\qquad \text{where}\qquad a = \Big(\frac{1}{q}-\frac{1}{p}\Big)^{-1}.$$
	Then $H'$ is a (non-local) Hamiltonian over $an = O(np(n))$
        qudits with norm $\|H'\|=O(1)$ and with each term having norm $O(1)$, such that if $\lambda_{\min}(H) \leq 1/p$, then $\lambda_{min}(H') \leq 1/2$, whereas if $\lambda_{\min}(H) \geq 1/q$, then $\lambda_{min}(H') \geq 1$.
  \label{lem:amplify}
\end{lemma}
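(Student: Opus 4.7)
The plan is a direct spectral computation that exploits the tensor product structure and reduces everything to elementary calculus on the single-variable function $\mu \mapsto 1 + a^{-1} - \mu$ and its $a$-th power. Set $M := \Id - (H - a^{-1}\Id) = (1 + a^{-1})\Id - H$, so that $H' = \Id^{\otimes a} - M^{\otimes a}$. Since $0 \preceq H \preceq \Id$, the spectrum of $M$ lies in $[a^{-1}, 1 + a^{-1}]$; in particular $M \succeq 0$. Fixing an eigenbasis of $H$ extends to a joint eigenbasis of $M^{\otimes a}$ with eigenvalues exactly the products $\prod_{i=1}^{a}(1 + a^{-1} - \mu_i)$ as $(\mu_1,\ldots,\mu_a)$ ranges over $a$-tuples of eigenvalues of $H$.

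Because each factor is positive and $\mu \mapsto 1 + a^{-1} - \mu$ is strictly decreasing on $[0,1]$, the product is maximised by taking every $\mu_i$ equal to $\lambda_{\min}(H)$, which yields the closed form
\[ \lambda_{\min}(H') \,=\, 1 - \bigl(1 + a^{-1} - \lambda_{\min}(H)\bigr)^a. \]
Substituting $a^{-1} = 1/q - 1/p$ reduces the YES and NO claims to elementary inequalities. In the YES case ($\lambda_{\min}(H) \leq 1/p$), one obtains $\lambda_{\min}(H') \leq 1 - (1 + 1/q - 2/p)^a$; Bernoulli's inequality $(1+x)^a \geq 1 + ax$ (together with a Taylor expansion for the opposite direction when $1/q - 2/p < 0$) drives the right-hand side below $1/2$. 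In the NO case ($\lambda_{\min}(H) \geq 1/q$), one obtains $\lambda_{\min}(H') \geq 1 - (1 - 1/p)^a$, and the exponential estimate $(1 - 1/p)^a \leq e^{-a/p}$ forces the right-hand side above the stated threshold once $a$ is chosen sufficiently large relative to $p$.

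The remaining norm and locality assertions are immediate consequences of the same spectral analysis. From $\|M^{\otimes a}\| = (1 + a^{-1})^a \leq e$ one gets $\|H'\| \leq 1 + e = O(1)$, and expanding $M^{\otimes a}$ multilinearly expresses $H'$ as a signed linear combination of at most $2^a$ tensor products of copies of $H$ and $\Id$, each with scalar coefficient bounded by $(1 + a^{-1})^a = O(1)$, so every individual term has operator norm $O(1)$. The only non-mechanical point in the argument is fixing the constants in the choice of $a$ so that the YES and NO bounds match the thresholds $\tfrac12$ and $1$ exactly; this is a routine calculus exercise that is already carried out in the proof of Lemma~26 of~\cite{NV17}, and any slack is absorbed by enlarging $a$ by a fixed multiplicative constant.
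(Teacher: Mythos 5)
Your reduction to the closed form
\[
\lambda_{\min}(H') \,=\, 1 - \bigl(1 + a^{-1} - \lambda_{\min}(H)\bigr)^a,
\]
obtained by setting $M=(1+a^{-1})\Id - H$, writing $H' = \Id^{\ot a} - M^{\ot a}$, and noting that $M$ has spectrum in $[a^{-1},\,1+a^{-1}]$ so that $\lambda_{\max}(M^{\ot a})=\lambda_{\max}(M)^a$, is clean and correct, as are the norm bounds. But that formula immediately rules out the NO-case argument as you give it. Since $M\succeq a^{-1}\Id\succ 0$ (using $\|H\|\le 1$), we have $\lambda_{\max}(M^{\ot a})\geq (a^{-1})^a>0$ and hence $\lambda_{\min}(H')<1$ \emph{strictly}, for every admissible $H$ and every $a$; the quantity $1-(1-1/p)^a$ never reaches $1$. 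Invoking ``$a$ chosen sufficiently large relative to $p$'' and ``any slack is absorbed by enlarging $a$ by a fixed multiplicative constant'' is a non-argument here: $a$ is not a free parameter, it is pinned by the statement to $a=(1/q-1/p)^{-1}$, so $a/p=q/(p-q)$ is a determined quantity, and in any case no choice of $a$ makes $1-(1-1/p)^a$ exceed $1$.

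The YES case has a real gap too. Bernoulli gives $(1+1/q-2/p)^a\geq 1+a(1/q-2/p)=(2p-3q)/(p-q)$, which is $\geq 1/2$ only when $p\geq 5q/3$, and the first-order estimate $(1+1/q-2/p)^a\approx e^{(p-2q)/(p-q)}\to 0$ as $p\to q^+$ shows that no ``Taylor argument in the opposite direction'' can rescue the bound when $p$ and $q$ are close---concretely, with $p=4n$, $q=3n$ one has $a=12n$ and $(1-1/(6n))^{12n}\to e^{-2}\approx 0.14$, so $\lambda_{\min}(H')$ may be as large as $\approx 0.86$. Your own derivation should therefore have flagged that the two thresholds $1/2$ and $1$ cannot both hold as quoted for arbitrary $p>q$. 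What the formula \emph{does} give, and what Corollary~\ref{cor:randomized} actually uses, is a constant separation once $p/q$ is bounded away from $1$: for fixed $r=p/q>1$ and $p,q\to\infty$, the YES and NO bounds converge to $1-e\cdot e^{-1/(r-1)}$ and $1-e^{-1/(r-1)}$ respectively, which differ by the positive constant $(e-1)e^{-1/(r-1)}$. Working out this parameter regime explicitly---rather than appealing to freedom in $a$ that the statement does not grant---is the content missing from your proof.
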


\begin{proof}[Proof of Corollary~\ref{cor:randomized}]
  We start by recalling that the Local Hamiltonian problem is $\QMA$-complete
  for qubit Hamiltonians in the $XZ$ model, up to inverse-polynomial
  promise gap~\cite{CM13}. Let 
  \[ H = \Es{j \in \{1, \dots, \ell\}} H_j \]
  be a given Hamiltonian on $n$ qubits from the $XZ$ model (also allowing terms
  that are multiples of the identity), with $\ell =
  \poly(n)$ local terms $H_j$, normalized
  such that $0 \leq H \leq \Id$. As can be seen from Definition~\ref{def:gen-h}, this Hamiltonian can be equivalently
  viewed as a Hamiltonian in $Y$-free form acting on qubits. We aim to give a protocol that distinguishes
  between the cases $\lambda_{\min}(H) \leq 1/p$ (YES) or $\lambda_{\min}(H) \geq
  1/q$ (NO),
  where $0 \leq 1/p < 1/q \leq 1$ and $p$ and $q$ are polynomial functions of
  $n$. By applying Lemma~\ref{lem:amplify} to $H$ and scaling down the resulting
  Hamiltonian, we obtain a new Hamiltonian
  \[H'  = c\left(\Id^{\otimes a} - (\Id - ( H - a^{-1}\Id ))^{\otimes
        a}\right)  \]
  acting on $a n = \poly(n)$ qudits with norm $\|H'\| = 1$ and all of whose terms have norms bounded by $1$, such that $\lambda_{\min}(H') \leq c/2$ in the YES case and
  $\lambda_{\min}(H') \geq c$ in the NO case, for some constant $0 < c < 1$. For
  our purposes, it will be useful to express $H'$ as an average
  \[ H' = \Es{J \in \{1, \dots \ell'\}} H'_{J}\;, \]
where each term $H'_J$ is of the form $c \Id^{\ot a} - \alpha_J h_{S_J}(u_J)$ for
some Pauli operator $h_{S_J}(u_J)$ and weight $\alpha_J \in [-1, 1]$. The number
of terms in this decomposition is $\ell' = (\ell+1)^{a}$, which is exponential in $n$. This means that executing the
  test $\energy(C,H')$ would require $\poly(n)$ bits of communication with the
  verifier, just to specify a single  term in the Hamiltonian. To avoid this problem, we
   use randomness to sample a subset of the terms. First,
  rescale $H'$ so that all of the terms are positive and have norm at most $1$:
  \[ H'' = \Es{J \in \{1, \dots, \ell'\}} H''_{J}\;, \qquad H''_J = \frac{1}{2}
    \big(H'_J + \Id - c\big)\;. \]
  This rescaled Hamiltonian satisfies $\lambda_{\min}(H'') = \frac{1}{2}(1
  -c + \lambda_{\min}(H')) \geq \frac{1}{2}(1 - c) $.
  Now, let $H'''$ be a Hamiltonian obtained by
  uniformly sampling $m$ terms at random from $H''$, where $m$ is a parameter
  to be chosen. By the matrix Chernoff Bound~\cite [Theorem 19]{AW02}, for any
  $\eps \in [0, 1/2]$,   
  \[ \Pr[\lambda_{\min}(H''') \notin [(1-\eps) \lambda_{\min}(H''),
  (1+\eps) \lambda_{\min}(H'')]] \leq 2 \cdot \exp\left(an \ln2 -m \frac{\eps^2
      \lambda_{\min}(H'')}{2\ln 2}\right). \]
In particular, taking $\eps \leq c/(4 - 2c)$ and $m = \poly(n)$, we obtain that, with probability exponentially
close to $1$, in the YES case $\lambda_{\min}(H''') \leq c_1$ and in the NO case
$\lambda_{\min}(H''') \geq c_2$ where $c_2 - c_1 \geq c/8$. 
Moreover, $H'''$ is a $Y$-free Hamiltonian with polynomially many terms. Hence,
by the same arguments as in the proof of Corollary~\ref{cor:qma-generalizedXZ}, there exists
a $7$-prover $\MIP^*$ protocol with $O(\log(n))$-bit messages and constant
completeness-soundness gap to estimate the ground energy of $H'''$ up to
precision $c/16$, and hence to solve the Local Hamiltonian problem for $H$.
\end{proof}

\subsection{Energy test for frustration-free Hamiltonians with small gap}
\label{sec:ff}

In this section we show how the procedure $\sumgame$ can be used in a different scenario than the one considered in the previous section: the case of an $n$-qubit Hamiltonian that is either frustration-free, or has ground state energy that is at most an inverse polynomial in $n$. The tests described in this section are more restrictive than those considered in the previous section, but they have the advantage of not relying on a randomized reduction. They apply to the following form of ``linear XZ Hamiltonian''.

\begin{definition}\label{def:linear-xz}
A $n$-qudit Hamiltonian $H$, where each qudit has dimension a prime power $q$, is in \emph{linear XZ form} if it can
be written as
\[ H = \Es{W\in\{X,Z\}, j \in \{1, \dots, \ell\}} \Pi_{W,j}\;, \]
where the expectation is taken under the uniform distribution, and for each $W \in \{X, Z\}$ and $j\in\{1,\ldots,\ell\}$ the term $\Pi_{W,j}$
is a projector that is diagonal in the basis $W$ (for each qubit), and such that
the nullspace of $\Pi_{W,j}$ can be described by a collection of $t_{W,j}$
linear equations $\{s_{W,j,i} \cdot a = b_{W,j,i} ,\, i\in \{1,\ldots,t_{W,j}\}\}$
over $\Fq$, where here $a=(a_1,\ldots,a_n)\in\Fq^n$ specifies a basis state $\ket{a}_W$ in basis $W$ for the $n$ qudits, and $s_{W,j,1},\ldots,s_{W,j,t_j} \in \Fq^n$ and $b_{W,j,i}\in\Fq$ are coefficients of linear equations. 
\end{definition}

Note that a special case of the definition is one in which some of the
$\Pi_{W,j}$ have rank $1$, since any fixed element $a\in\Fq^n$ can be uniquely
specified by a system of $n$ linear equations.  

A Hamiltonian $H$ in linear XZ form is specified by the collection of equations $\{(s_{W,j,i},b_{W,j,i}),\, W\in\{X,Z\}, j\in\{1,\ldots,\ell\},i\in\{1,\ldots,t_j\}$. We will be interested in the problem of deciding whether $H$ has ground state energy $0$, or at least some inverse polynomial in $n$, $\gamma(n)$. Replacing each $\Pi_{W,j}$ in $H$ by an average of $t_{W,j}$ terms, each associated with a single equation $(s_{W,j,i},b_{W,j,i})$, preserves the distinction between these two cases, up to a polynomial multiplicative scaling in $\gamma$. Therefore, for the remainder of this section we assume that $t_{W,j}=1$ for all $W,j$, and write $(s_{W,j},b_{W,j}) $ for $(s_{W,j,1},b_{W,j,1})$. 

The main result of this section is an interactive protocol for deciding between
the cases where a Hamiltonian $H$ in linear XZ form is frustration free, or has
energy at least some inverse polynomial in $n$. The main ingredients for the
protocol are the low-degree test from Theorem~\ref{thm:qld} and the test $\sumgame$. As for the case of the $Y$-free Hamiltonians considered in
Section~\ref{sec:constant_gap}, it would be straightforward to extend the
results of this section to Hamiltonians as in Definition~\ref{def:linear-xz}, but
allowing a polynomial number of possible basis choices for the $n$ qudits, chosen among $\{X,Z\}^n$, instead of only $X^n$ and $Z^n$. For simplicity, we focus on the case of two bases only.


\begin{figure}[H]
\rule[1ex]{16.5cm}{0.5pt}\\
Test~$\XZ_N(H)$: Given as input is a $n$-qudit Hamiltonian in linear $XZ$ form, where each qudit is of dimension a prime power $q$, and an integer $N$. Let $C$
be a $[k,k']$ weakly self-dual linear code over $\Fq$, known to all parties, such that $C$ encodes at least one qudit.  
The verifier performs one of the following, with probability $1/2$ each: 
\begin{enumerate}
\item Select a basis $W\in\{X,Z\}$ uniformly at random. Select an
  equation $(s,b)$ as described in the proof of
  Theorem~\ref{thm:xz-form} (this depends on the parameter $N$).
 Choose $\ol{w} \in \Fq^k$ to be a random vector
  such that $\qp_W(\ol{w})$ is a logical operator for $\mC$, and execute the test
  $\sumgame(C, W, \{ \ol{w}_1 s,  \ol{w}_2 s, \dots, \ol{w}_k s\})$ with the provers.
  Reject if the protocol rejects, or if the linear combination $\sum_{i=1}^{k}
  c_k$ of the claimed values is not equal to $E$. Else,
  accept. 
\item Execute test~$\code(C,n')$ with the provers, where $n'= Nn$ is as in the proof of Theorem~\ref{thm:xz-form}.  
\end{enumerate}
\rule[1ex]{16.5cm}{0.5pt}
\caption{Test $\XZ_N(H)$ for the ground state energy of a Hamiltonian $H$ in linear XZ form.}
\label{fig:xz-test}
\end{figure}

 We state the main result of this section. 

\begin{theorem}\label{thm:xz-form}
Let $n$ be an integer, $q = p^t$ a prime power such that $q=\Theta(\poly\log n)$, and $\gamma(n) = \Omega(\poly^{-1}(n))$. There exists a universal constant $\eps_0 >0$ and $N = O(\poly(n))$ such that the following holds. For any $n$-qudit Hamiltonian $H$ in linear XZ form, 
\begin{itemize}
\item If $H$ has ground state energy $0$, then there is a strategy for the provers that is accepted in the test $\XZ_N(H)$ with probability $1$.
\item If $H$ has ground state energy at least $\gamma(n)$, then no strategy for the provers is accepted with probability more than $1-\eps_0$ in the test $\XZ_N(H)$. 
\end{itemize}
By basing the test on the code from Example~\ref{ex:quad_res_code}, the test can be executed with $7$ provers and a total amount of communication between the verifier and the provers that is $O(\log n)$.
\end{theorem}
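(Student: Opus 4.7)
The proof of Theorem~\ref{thm:xz-form} will follow the same three-part template as Theorem~\ref{thm:energy}, combining the completeness and soundness of the sub-protocols $\code$ and $\sumgame$ with an additional gap-amplification step that uses the parameter $N$ to boost the inverse-polynomial energy gap $\gamma(n)$ into a universal constant soundness gap $\eps_0$.

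For completeness, suppose $\lambda_{\min}(H)=0$ and let $\ket{\psi}\in(\C^q)^{\otimes n}$ be a common null vector of all $\Pi_{W,j}$. The honest strategy has the $k$ provers share a qudit-by-qudit encoding $\ket{\Psi}$ of a suitably padded $\ket{\psi}$ under $\mC$. The $\code(C,n')$ portion of the test in step~2 then succeeds with probability $1$ by Lemma~\ref{lem:code-completeness}. For step~1, because $\ket{\psi}$ lies in the nullspace of every $\Pi_{W,j}$, measurement in basis $W$ produces an outcome $a$ satisfying $s_{W,j}\cdot a = b_{W,j}$ for every $j$, and hence also satisfying any $\Fq$-linear combination $(\sum_k \alpha_k s_{W,j_k})\cdot a = \sum_k \alpha_k b_{W,j_k}$ used by the verifier. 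The completeness of $\sumgame$ (Theorem~\ref{thm:sum-game}) then guarantees that the returned value equals $E=\omega^{\tr(b)}$ with certainty.

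For soundness, suppose a strategy is accepted with probability at least $1-\eps_0$. Since step~2 is performed with probability $\frac{1}{2}$, the strategy must pass $\code(C,n')$ with probability at least $1-2\eps_0$, so by Theorem~\ref{thm:codeword_test} there exist local isometries under which the shared state is within distance $\poly(\eps_0,\delta_C)$ of a qudit-by-qudit encoding of some $n'$-qudit state $\ket{\psi}$. Applying the soundness of $\sumgame$ (Theorem~\ref{thm:sum-game}) to step~1 with this encoded state, the value returned by $\sumgame$ on any fixed equation $(W,s,b)$ has expectation within $\poly(\eps_0,\delta_C)$ of $\bra{\psi}\qp_W(s)\ket{\psi}$. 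Writing $p_W(a)=|\langle a|\psi\rangle_W|^2$ for the $W$-basis measurement distribution on $\ket{\psi}$, the rejection probability in step~1 is therefore close to
$$\frac{1}{2}\Es{(W,s,b)}\sum_a p_W(a)\bigl(1-\Re(\omega^{\tr(s\cdot a - b)})\bigr)\;.$$

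The main step, and the main obstacle, is then to design the distribution on $(W,s,b)$ so that this rejection probability is a universal positive constant whenever $\lambda_{\min}(H)\geq\gamma(n)$. My plan is to sample $N$ equations $(s_{W,j_k},b_{W,j_k})$ independently and uniformly in a common random basis $W$ together with random scalars $\alpha_k$, and use $s=\sum_k\alpha_k s_{W,j_k}$, $b=\sum_k\alpha_k b_{W,j_k}$. The energy assumption $\bra{\psi}H\ket{\psi}\geq\gamma$ forces at least one of the two bases to contribute expected violation rate $\geq\gamma/2$; in that basis, a Schwartz-Zippel-style argument shows that conditioned on the sample including at least one equation violated by the measurement outcome $a$, the combined equation $(s,b)$ is violated with probability at least $1-1/q$ over $\alpha_k$, and a Markov-type argument combined with the concentration of the Bernoulli indicator $\bigl(1-(1-v_a)^N\bigr)$ yields a constant probability that this happens, once $N$ is chosen to be a sufficiently large polynomial in $1/\gamma(n)$. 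The hardest part is keeping the total communication at $O(\log n)$ while sampling from this exponentially-large space of linear combinations: this will require encoding the random indices $(j_1,\ldots,j_N)$ and coefficients $(\alpha_1,\ldots,\alpha_N)$ via a short (logarithmic-length) seed known to both parties, and using the PCPP infrastructure inside $\sumgame$—specifically, the linearity of the proof in the assignment $a$ together with the composite-prover linearity (Lemma~\ref{lem:fq-linear})—to delegate the expansion of the combined $(s,b)$ to the provers without transmitting the full vector $s\in\Fq^n$ explicitly.
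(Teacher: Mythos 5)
Your proposal correctly identifies the overall architecture (completeness from the honest encoding strategy, soundness from composing $\code$ and $\sumgame$, and a need to boost the inverse-polynomial energy gap to a constant via $N$), but the mechanism you propose for the amplification step does not work, and it is precisely at this point that the paper's proof diverges from yours.

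The paper does \emph{not} sample $N$ equations of the original $H$ and hope that the Bernoulli indicator $1-(1-v_a)^N$ is large in expectation. Instead, it replaces $H$ by the tensor-power Hamiltonian
$H' = \Id - (\Id - H)^{\otimes N}$ acting on $n' = Nn$ qudits, with $N = O(1/\gamma(n))$. This turns the inverse-polynomial energy bound $\lambda_{\min}(H)\ge\gamma$ into the \emph{constant} bound $\lambda_{\min}(H')\ge 1-e^{-\delta}$ for a constant $\delta$, and the honest witness becomes $\ket{\psi}^{\otimes N}$ (suitably encoded). One then uses Claim~\ref{claim:commuting-pigeonhole} to split the tensor-product terms into a commuting pure-$X$ factor and a pure-$Z$ factor, so that $\bra{\psi}H'_W\ket{\psi}\ge\delta/4$ for at least one $W$: i.e.\ a single $W$-basis measurement of the \emph{encoded $Nn$-qudit state} yields a string that violates at least one equation with \emph{constant} probability $\ge\delta/4$. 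The verifier then takes a single $(\delta/8)$-biased $\Fq$-linear combination of the \emph{entire} set of equations; the biased-generator seed is only $\poly\log(n)$ bits, and a Schwartz--Zippel-style argument shows that if at least one equation is violated, the combined equation is violated with probability close to $1-1/q$.

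Your version instead keeps $H$ on $n$ qudits and samples $N$ equations from it. The claim that a ``Markov-type argument combined with the concentration of the Bernoulli indicator $1-(1-v_a)^N$ yields a constant probability'' is incorrect as stated. Consider the extremal case where the $W$-basis measurement distribution places mass $\gamma/2$ on strings $a$ with $v_a = 1$ and mass $1-\gamma/2$ on strings with $v_a=0$: then $\Es{a}[v_a]=\gamma/2$ but $\Es{a}\bigl[1-(1-v_a)^N\bigr]=\gamma/2$ for \emph{every} $N$. No choice of $N=\poly(1/\gamma)$ can drive this to a constant, because the fluctuation is in $a$, not in which equations you sample. You need the witness state itself to be ``stretched'' across $N$ independent copies, so that any single measurement outcome has a constant probability of violating something. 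That is exactly what the tensor power of $H$ accomplishes (and it also makes the $\Omega(1/\gamma)$-sized equation set a \emph{fixed} polynomial-sized list, resolving the derandomization question you flagged: the biased linear combination over a fixed list is a $\poly\log$-bit seed, whereas compressing $N$ independently chosen indices and coefficients into $O(\log n)$ bits is information-theoretically impossible). I'd encourage you to rebuild the argument around $H'$ and $\ket{\psi}^{\otimes N}$; your completeness and $\sumgame$-soundness steps will then go through essentially unchanged.
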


\begin{proof}
We start by amplifying the promise gap by taking a tensor product of $N$ copies of $H$, for $N = \lceil\delta\gamma^{-1}(n)\rceil$, for some $0<\delta\leq 1$ to be determined. For any $W\in\{X,Z\}$ and $j\in\{1,\ldots,\ell\}$ let $\tilde{\Pi}_{W,j} = \Id - \Pi_{W,j}$. Define 
\begin{align}
H' &= \Id - (\Id - H)^{\ot N} \notag\\
&= \Id - (\Es{(W,j)}(\Id - \Pi_{W,j}))^{\ot N} \notag\\
 &= \Id - \Es{W_1, j_1, \dots, W_N, j_N} \tilde{\Pi}_{W_1, j_1} \ot \dots \ot
  \tilde{\Pi}_{W_N, j_N}\;,\label{eq:hprime}
	\end{align}
	where all expectation are uniform over the appropriate sets. 
Then $H'\geq 0$. If $H$ is frustration-free then $H'$ is frustration-free as well. If $H$ has ground energy at least $\gamma$, then $H'$ has ground energy at least $1-e^{-\delta} \geq \delta/2$. Note that $H'$ is again a Hamiltonian in linear XZ form such that $H'$ acts on $n' = N n$ qudits.  

For any $\vec{W}=(W_1,\ldots,W_N)$ and $\vec{j}=(j_1,\ldots,j_N)$ let $\tilde{\Pi}^X_{\vec{W},\vec{j}} = \otimes_{i=1}^N \tilde{\Pi}^X_{W_i,j_i}$, where $\tilde{\Pi}^X_{W,j} = \tilde{\Pi}_{W,j}$ if $W=X$ and $\tilde{\Pi}^X_{W,j} = \Id$ otherwise. Define $\tilde{\Pi}^Z_{\vec{W},\vec{j}}$ similarly. From~\eqref{eq:hprime}, we get 
\begin{equation}\label{eq:hprime-2}
H'\,=\, \Id - \Es{\vec{W},\vec{j}} \tilde{\Pi}^X_{\vec{W},\vec{j}}\tilde{\Pi}^Z_{\vec{W},\vec{j}}\;.
\end{equation}
We use the following claim:

\begin{claim}\label{claim:commute-bound}
  Let $ A, B$ be two positive semidefinite operators such that $A, B \leq \Id$ and $AB =
  BA$. Then 
	$$ \Id-AB \,\leq\, \big(\Id-A\big) + \big(\Id-B\big)\;.$$
  \label{claim:commuting-pigeonhole}
\end{claim}
\begin{proof}
  Note that since $B$ commutes with $A$, it must also commute with the positive 
  square root of $A$. Hence $AB = A^{1/2} B A^{1/2} \geq 0$. Likewise, $(\Id
  - A)$ commutes with $(\Id - B)$, so $(\Id - A)(\Id - B) \geq 0$. 
\end{proof}

Starting from~\eqref{eq:hprime-2} and applying Claim~\ref{claim:commute-bound},
\begin{align}
H' &\leq 2\Es{W\in\{X,Z\}}\Big( \Id - \Es{\vec{W},\vec{j}}   \tilde{\Pi}^W_{\vec{W},\vec{j}}\Big)\;.\label{eq:hprime-3}
\end{align}
For $W\in \{X,Z\}$ let $H'_W = \Id - \Es{\vec{W},\vec{j}}   \tilde{\Pi}^W_{\vec{W},\vec{j}}$. If $H'$ has ground energy zero, then both $H'_Z$ and $H'_X$ have ground energy zero as well. If $H'$ has ground energy at least $\delta/2$, then for any vector $\ket{\psi}$, either $\bra{\psi}H'_X\ket{\psi} \geq \delta/4$ or $\bra{\psi}H'_Z\ket{\psi} \geq \delta/4$. 

The goal of the test  $\XZ_N(H)$ described in Figure~\ref{fig:xz-test}
is to distinguish between these two cases. To complete the description
of the test we specify how the linear equation $(s,b)$ considered in
item 1. of the test is obtained. First form the set
$S=\{(s_\ell,b_\ell),\,1\leq\ell\leq t\}$ that is the union of all
equations which specify the $+1$ eigenspace of each individual
$\tilde{\Pi}_{W_i,j_i}$ such that $W_i=W$. Using the notation from
Definition~\ref{def:linear-xz} we have $|S|=t = \sum_{i=1}^N
1_{W_i=W}t_{W,j_i}$, which is polynomial in $n$. Then $(s,b)$ is
obtained by sampling $(\delta/8)$-biased random variables
$(y_1,\ldots,y_t)\in\Fq^t$ and setting $s = \sum_\ell y_\ell s_\ell$
and $b=\sum_\ell y_\ell b_\ell$. We refer to
e.g.~\cite{azar1998approximating} for a construction of such random
variables using $\poly\log( t, q,\delta^{-1})$ random bits; note that
this use of $\delta$-biased random variables is analogous to their use
in the classical exponential PCP for QUADEQ. Thus the amount of
communication required to specify $s$ to a prover is
$\poly\log(t,q,\delta^{-1})=\poly\log \log(n)$.

We show that the test $\XZ_N(H)$ satisfies the requirements of the theorem. We first argue completeness, and then soundness.

\begin{claim}[Completeness]\label{claim:xz-completeness}
Suppose that $H$ has ground energy $0$. Then there is a strategy for the provers that is accepted with probability $1$ in the test $\XZ_N(H)$. 
\end{claim}

\begin{proof}
Let $\ket{\psi}$ be a ground state of $H$. Let $C$ be the linear code used by the verifier. Consider the strategy for test $\code(C,n')$ described in Lemma~\ref{lem:code-completeness}, where the encoded state is the $n'$-qudit state $\ket{\psi}^{\otimes N}$. From the lemma it follows that the strategy succeeds with probability $1$ in item 2. of the test $\XZ_N(H)$.

It remains to describe the behavior of the  provers when elected to perform the
test $\sumgame$ in item 1. of $\XZ_N(H)$. The prover $j$ first measures its share of
each qudit in the basis $W$, obtaining outcomes $a_j\in\Fq^{n'}$. The prover then executes the honest behavior in
test $\sumgame(C, W, \{\ol{w}_1 s, \dots, \ol{w}_k s\} )$, with the
claimed value being $c_j = \ol{w}_j s\cdot
a_j$. Moreover, since $\ket{\psi}$ is a ground state of $H$, the
condition $\sum_{j} c_j = s \cdot \sum_{j} \ol{w}_j a_j = b$ is satisfied with
certainty. Hence, the honest strategy is accepted with probability $1$.
\end{proof}

The next claim shows soundness of the protocol. 

\begin{claim}[Soundness]\label{claim:xz-soundness}
Suppose that $H$ has ground energy at least $\gamma$. Then any strategy for the provers in the test $\XZ_N(H)$ is accepted with probability at most $1-\delta_s$, for some $\delta_s = \max(\poly(\delta),\poly(q^{-1}))$. 
\end{claim}

\begin{proof}
Fix a strategy for the provers that has success probability at least $1-\eps$ in
the test, for some $\eps>0$. This consists of a state $\ket{\Psi}$ and
measurement operators $\{M_s^q\}$ for the special prover. The strategy must
succeed with probability at least $1-2\eps$ in item 2. of test
$\XZ_N(H)$. Applying Theorem~\ref{thm:codeword_test}, it follows that there
exists a state $\ket{\psi}\in(\C^p)^{\otimes n'}$ such that, up to local
isometries, $\ket{\Psi}$ is within distance $\delta_{C}$ of a valid $n'k$-qudit
encoding of some state $\ket{\psi}$ according to the code $C$; under the same
isometry, each prover's measurement upon query $(W,w)$ is $\delta_{C}$-close to
an application of the observable $\qp_W(w_\bij)$ on the prover's share of the encoding. Up to an increase of $\delta_{C}$ in the error we  assume for the remainder of the proof that all provers apply the honest strategy when given queries distributed as in the test $\code$. 

We now analyze item 1. of test $\XZ_N(H)$. By assumption, the honest strategy, based on state $\ket{\psi}$, succeeds with probability at least $1-O(\delta_{C})$ in this part of the test (we can assume $\delta_{C} \geq \eps$ without loss of generality). 
Since $H$ has ground energy at least $\gamma$,
by~\eqref{eq:hprime-3} either $\bra{\psi} H'_X\ket{\psi} \geq \delta/4$, or
$\bra{\psi}H'_Z\ket{\psi}\geq \delta/4$. Assume the former. This means that
whenever each of the $n'$ qudits of $\ket{\psi}$ is measured in the $X$ basis,
the probability that the outcome string $a$ is such that $a$ satisfies all linear equations
$s_\ell\cdot a = b_\ell$, $\ell\in\{1,\ldots,t\}$, considered by the verifier,
when the basis is chosen to be $W=X$, is at most $1-\delta/4$. By definition of
the equation $(s,b)$, the probability (which now includes the verifier's coin
tosses in selecting $(s,b)$) that $s\cdot a =b$ is at most $1-\delta/8$. 
It follows from the soundness part of Theorem~\ref{thm:sum-game} that the provers must be rejected with probability  $\poly(\delta) - O(\delta_{C})$. This gives a contradiction for any $\eps$ small enough such that $O(\delta_{C}) \ll \poly(\delta)$. 
\end{proof}

To conclude the proof of the theorem, we choose the constant $\delta$ to be sufficiently small so that $\delta_s$ in Claim~\ref{claim:xz-soundness} is a positive constant. 
\end{proof}

\bibliography{quantum_pcp}

\notesendofpaper

\end{document}